\documentclass[11pt]{article}

\usepackage[dvipsnames]{xcolor}
\definecolor{Gred}{RGB}{219, 50, 54}
\definecolor{ToCgreen}{RGB}{0, 128, 0}

\usepackage[margin=1in]{geometry}
\usepackage[T1]{fontenc}

\usepackage{newtxtext,newtxmath}

\usepackage{babel}
\usepackage{iftex}
\ifPDFTeX
  \usepackage[spacing=true,kerning=true,babel=true,tracking=true]{microtype}
\else
  \usepackage[tracking=false,expansion=false]{microtype}
\fi

\usepackage{xspace}
\usepackage{makecell}
\usepackage{dsfont}

\usepackage{multirow}
\usepackage{amsmath,amsthm}
\usepackage{bm}
\usepackage{bbm}
\usepackage{textgreek}
\usepackage{mathtools}
\usepackage{enumitem}
\usepackage[numbers,comma,sort&compress]{natbib}
\usepackage{authblk}
\usepackage{graphicx}
\usepackage[font=small]{caption}
\usepackage[labelformat=simple]{subcaption}

\usepackage{float}
\usepackage[linesnumbered,ruled,vlined]{algorithm2e}
\SetKwInput{KwInput}{Input}
\SetKwInput{KwOutput}{Output}
\SetKwInOut{Promise}{Promise}
\SetKwInput{Goal}{Goal}
\SetKwProg{Fn}{function}{}{}
\SetKwFor{RepTimes}{repeat}{times}{}
\SetKwFunction{Ver}{Verify}
\SetKwFunction{Prep}{PrepareState}
\usepackage{algorithmic}
\renewcommand{\algorithmicrequire}{\textbf{Input: }}  
\renewcommand{\algorithmicensure}{\textbf{Output: }}
\usepackage{physics}
\usepackage{footnote}
\usepackage{xcolor}
\usepackage{bbm}
\usepackage{braket}
\usepackage[colorlinks]{hyperref}
\usepackage{cleveref}
\hypersetup{
      colorlinks=true,
  citecolor=ToCgreen,
  linkcolor=Sepia,
  filecolor=Gred,
  urlcolor=Gred
  }
\usepackage{placeins}
\usepackage{booktabs}

\newtheorem{theorem}{Theorem}
\newtheorem{definition}{Definition}
\newtheorem{lemma}{Lemma}

\newtheorem{proposition}{Proposition}

\newtheorem{corollary}{Corollary}
\newtheorem{remark}{Remark}

\newcommand{\algo}[1]{\hyperref[algo:#1]{Algorithm~\ref*{algo:#1}}}
\newcommand{\gl}{g.l.\xspace}
\newcommand{\params}{{\boldsymbol{\alpha}}}

\renewcommand{\arraystretch}{1.25}

\newcommand{\Z}{\mathbb{Z}}
\newcommand{\R}{\mathbb{R}}

\usepackage{textgreek}

\DeclareMathOperator{\poly}{poly}

\DeclareMathOperator{\diag}{diag}

\renewcommand{\tr}{\text{tr}}
\renewcommand{\Tr}{\text{tr}}

\newcommand{\0}{\mathbf{0}}
\renewcommand{\emptyset}{\varnothing}
\def\Tr{\tr}\def\:{\hbox{\bf:}}

\renewcommand{\tr}{\mathrm{tr}}
\renewcommand{\Tr}{\mathrm{tr}}

\renewcommand{\emptyset}{\varnothing}
\def\Tr{\operatorname{tr}}\def\:{\hbox{\bf:}}

\renewcommand{\epsilon}{\varepsilon}

\usepackage[normalem]{ulem}

\renewcommand{\epsilon}{\varepsilon}

\allowdisplaybreaks
\usepackage{bm}
\renewcommand{\vec}[1]{\boldsymbol{\mathbf{#1}}}

\begin{document}

\title{Out-of-control Hamiltonian Learning}
\author{Weiyuan Gong
\thanks{Harvard University. Email: \href{mailto:wgong@g.harvard.edu}{wgong@g.harvard.edu}.}
\qquad
Muzhou Ma
\thanks{Caltech. Email: \href{mailto:mma2@caltech.edu}{mma2@caltech.edu}.}
\qquad
Sitan Chen\textsuperscript{\#}
\thanks{Harvard University. Email: \href{mailto:sitan@seas.harvard.edu}{sitan@seas.harvard.edu}.}
\qquad
Jordan Cotler\textsuperscript{\#}
\thanks{Harvard University. Email: \href{mailto:jcotler@fas.harvard.edu}{jcotler@fas.harvard.edu}.}
\qquad
Hsin-Yuan Huang\textsuperscript{\#}
\thanks{Oratomic. Email: \href{mailto:hhuang@oratomic.com}{hhuang@oratomic.com}.}
}
\date{\today}
\maketitle
\begingroup
\renewcommand{\thefootnote}{}
\footnotetext{\textsuperscript{\#}Senior coauthors.}
\endgroup

\pagenumbering{gobble}

\begin{abstract}
Learning the Hamiltonian of a many-body system from its dynamics is a central task in quantum science, yet the algorithms with the strongest provable guarantees assume some level of quantum control---fast, arbitrary single-qubit gates interleaved with time evolution, and measurements in arbitrary bases---that is beyond the capabilities of near-term analog quantum simulators. 
Motivated by analog atom- and ion-based platforms, we study Hamiltonian learning under minimal access models.
\begin{itemize}[leftmargin=*]
\item \textbf{Uniform state preparation and measurements}: We first consider the setting where in every experiment, one can rotate each qubit to the \emph{same} state,  perform short-time evolution, and measure every qubit in the \emph{same} basis. Surprisingly, we show that for generic 2-local Hamiltonians on any interaction graph, all of the parameters can be reconstructed from such experiments.
\item \textbf{Computational basis state preparation and measurements}: We then consider a similarly constrained setting, but where state preparation and measurement are restricted to the computational basis. For nearest-neighbor Hamiltonians with only Pauli $X/Z$ interactions, a class which captures contemporary Rydberg atom platforms, we show that over 1D and 2D rectangular lattices, all of the parameters can be reconstructed from such experiments up to unavoidable gauges.
\end{itemize}
Our protocols introduce new techniques for solving structured polynomial systems over an \emph{extensive} number of parameters. Taken together, our results suggest that one can learn a great deal from the dynamics of quantum many-body systems even under the most stringent experimental constraints.
\end{abstract}

\clearpage
 
\setcounter{tocdepth}{2}
{\footnotesize \tableofcontents}

\newpage

\pagenumbering{arabic}

\section{Introduction}\label{sec:intro}

In this work we revisit the classic question of learning a quantum many-body Hamiltonian $H$ from its dynamics $e^{-iHt}$. This basic task underlies a wide range of applications across condensed-matter physics, quantum simulation, metrology, and the verification and calibration of quantum devices~\cite{eisert2020quantum,carrasco2021theoretical,georgescu2014quantum,giovannetti2006quantum,giovannetti2011advances}. Yet despite the explosion of recent progress towards ever more powerful and general algorithms with provable guarantees for this problem, essentially all existing protocols take for granted the ability to prepare and measure each qubit in a basis of one's own choosing.\footnote{A complementary line of work studies Hamiltonian identification when only one or a few sites of a spin chain are accessible~\cite{burgarth2009coupling,difranco2009hamiltonian,zhang2014quantum}. There the restriction is on \emph{which} qubits can be addressed; here every qubit is observed, but no qubit's basis can be chosen individually.} This is an assumption underlying not just the most sophisticated Heisenberg-limited protocols, which interleave time evolution with carefully designed control operations~\cite{HuangTongFangSu2023learning,bakshi2024structure,dutkiewicz2024advantage,hu2025ansatz,sinha2025improved}, but even much simpler protocols operating at the standard quantum limit, which merely perform short time evolution in between preparing and measuring every site in some choice of Pauli basis~\cite{li2020hamiltonian,bairey2019learning,stilck2024efficient}.

Yet individually addressable control on all qubits is unavailable, or at best costly, on many of the platforms where characterizing a many-body Hamiltonian is most pressing, namely programmable \emph{analog} quantum simulators, whose atom arrays now reach hundreds to thousands of qubits~\cite{ebadi2021quantum,scholl2021quantum,manetsch2025tweezer}.
In Rydberg atom arrays, a single laser sets a common Rabi frequency and detuning for every atom, and individually addressable gates are very limited~\cite{labuhn2016tunable,bernien2017probing,browaeys2020many,ebadi2021quantum,scholl2021quantum,semeghini2021probing,wurtz2023aquila,kornjavca2024large}. Analog trapped-ion simulators are similar: the native operations are global spin rotations and spin-spin interactions mediated by global laser beams~\cite{kim2010quantum,britton2012engineered,islam2013emergence,jurcevic2014quasiparticle,richerme2014nonlocal,bohnet2016quantum,zhang2017observation,monroe2021programmable,shapira2025programmable}. Learning the Hamiltonian of such devices at scale is by now an active experimental goal.

The experimentalist is, in this sense, \emph{out of control}: the fast, individually-addressed gates assumed by Hamiltonian learning protocols are often not part of the native toolbox. This motivates the question:

\begin{center}
    \emph{What can be learned about Hamiltonians using only the operations that these devices natively support?}
\end{center}

\noindent We formalize this with two different access models (see \Cref{sec:access_model} for formal definitions):
\begin{itemize}
    \item \textbf{Uniform control ($\hat a$-to-$\hat b$ model).} In every experiment, one prepares $\ket{0^n}$, applies the \emph{same} single-qubit rotation $U^{\otimes n}$ to each qubit, applies time evolution, applies a second uniform rotation $V^{\otimes n}$, possibly different from the first, and measures every qubit in the computational basis. Equivalently, every qubit is prepared along a common Bloch direction $\hat a$ and measured along a common direction $\hat b$.

    \item \textbf{$Z$-basis access ($Z$-to-$Z'$ model).} In every experiment, one prepares a computational basis state $\ket z$, applies time evolution, and measures in the computational basis. In this work we also study the even weaker \emph{all-zero} ($0$-to-$Z'$) model, in which $z$ must be $0^n$.
\end{itemize}
\noindent Neither model subsumes the other: uniform control allows an arbitrary common basis but no site resolution, whereas $Z$-basis access allows site-resolved bit flips at preparation. In both models no control is applied during the evolution, the evolution times are short and decrease only polylogarithmically in the target precision, and the data collected from the experiments are bitstrings.

\emph{A priori}, one might expect that under both models much of the Hamiltonian is simply inaccessible. In the $Z$-basis model, conjugating $H$ by any diagonal unitary, or replacing $H$ with $-H^*$, leaves every transition probability unchanged. The uniform-control model has no such exact symmetry, but it comes with a different challenge. Each experiment is specified by just two directions $\hat a,\hat b$, a four-parameter family, from which one hopes to recover every parameter of a 2-local Hamiltonian on an interaction graph $G=(V,E)$, that is, $3|V|+9|E|$ real numbers. Worse, the standard approach of linear response directly fails: the first-order (in $t$) data determine $H$ only up to a linear subspace of dimension at least $|E|$, which is insufficient to recover $3|V|+9|E|$ real numbers in the unknown Hamiltonian.

Our main results show that this pessimism is unwarranted. Under uniform control, for generic 2-local Hamiltonians on \emph{any} known interaction graph, including graphs of unbounded degree with no geometric locality, \emph{all} $3|V|+9|E|$ parameters can be recovered with no gauge ambiguity, by exploiting second- and third-order response in short-time evolutions. Under $Z$-basis access, for Hamiltonians with only $X/Z$ interactions on chains and rectangular lattices, the symmetries stated in the previous paragraph are the \emph{only} obstruction: all parameters can be learned modulo a discrete group of $2^{n+1}$ sign flips. Both protocols use a number of experiments and total evolution time scaling as $\widetilde O(\poly(n)/\epsilon^2)$, whose $1/\epsilon^2$ dependence we show is unavoidable in the absence of interleaved control, and for translation-invariant families the cost is independent of $n$ altogether. At a technical level, our protocols reduce the learning problem to highly structured systems of polynomial equations in the unknown coefficients. We prove that these systems identify the parameters modulo the unavoidable symmetries, and that they are well conditioned for generic (smoothed) instances, by exhibiting explicit algebraic certificates. We expect these tools to be useful for other learning problems under restricted access. Below, we describe these results in greater detail.

\section{Our results}\label{sec:results}

In this work we study learnability of 2-local Hamiltonians in the two restricted access models described above. Following prior work on Hamiltonian learning under experimental restrictions~\cite{chen2025quantum}, we consider \emph{generic} Hamiltonians in order to avoid pathological problem instances: the coefficients of the Hamiltonian are arbitrary up to a small Gaussian perturbation of scale $\varsigma$, and the unknown Hamiltonian we would like to learn is the perturbed Hamiltonian; we review this smoothed analysis model in \Cref{sec:smooth_analysis}. Our results are summarized in \Cref{fig:settings_results} and \Cref{tab:results}.

\begin{table}[!htbp]
\centering
\footnotesize
\begin{tabular}{|c|c|c|c|c|}
\hline
\textbf{Access model} & \textbf{Hamiltonian model} & \textbf{Gauges} & \makecell{ \\\textbf{Evolution time}\\} & \textbf{Classical post-proc.}\\
\hline
\makecell{\\Uniform control ($\hat a$-to-$\hat b$)\\ \Cref{thm:uniform}, \ref{thm:graph_ham_uniform_local}} & General graphs & \emph{No} gauge & $\widetilde O(\poly(n)\epsilon^{-2})$ & $\widetilde O(\poly(n)\epsilon^{-2})$\\
\hline
\makecell{$0$-to-$Z'$\\
\Cref{thm:RI_TI_IXZ}} & Isotropic TI NN IXZ & \makecell{\\$2$ signs\\($Z^{\otimes n}$, time reversal)} & $\widetilde O(\epsilon^{-2})$ & $\widetilde O(\epsilon^{-2})$ \\
\hline
\makecell{$Z$-to-$Z'$\\ \Cref{thm:TI_NN_z_to_zp}} & TI NN IXZ, 1D/2D/3D & \makecell{\\$2$ signs\\($Z^{\otimes n}$, time reversal)} & $\widetilde O(\epsilon^{-2})$ & $\widetilde O(\epsilon^{-2})$ \\
\hline
\makecell{$Z$-to-$Z'$\\\Cref{thm:TI_3_local}} & 1D TI $3$-\gl IXZ & \makecell{\\$2$ signs\\($Z^{\otimes n}$, time reversal)} & $\widetilde O(\epsilon^{-2})$ & $\widetilde O(\epsilon^{-2})$ \\
\hline
\makecell{$Z$-to-$Z'$\\ \Cref{thm:local_NN_z_to_zp}} & 1D/2D NN IXZ & \makecell{\\$2^{n+1}$ signs\\($Z^{S}$, time reversal)} & $\widetilde O(\poly(n)\epsilon^{-2})$ & $\widetilde O(\poly(n)\epsilon^{-2})$\\
\specialrule{1.5pt}{1pt}{1pt}
\makecell{Lower bound \\ \Cref{thm:sql_2local}} & 1D NN IXZ & --- & $\Omega(\epsilon^{-2})$ & ---\\
\hline
\end{tabular}
\caption{Summary of our results. Upper bounds hold in smoothed analysis framework and suppress polynomial factors in $R,1/\varsigma,1/\delta$, where $R$ bounds magnitude of Hamiltonian coefficients and $\varsigma$ is smoothing scale (see \Cref{sec:smooth_analysis}).}
\label{tab:results}
\end{table}

\begin{figure}[t!]
\centering
\includegraphics[width=0.99\textwidth]{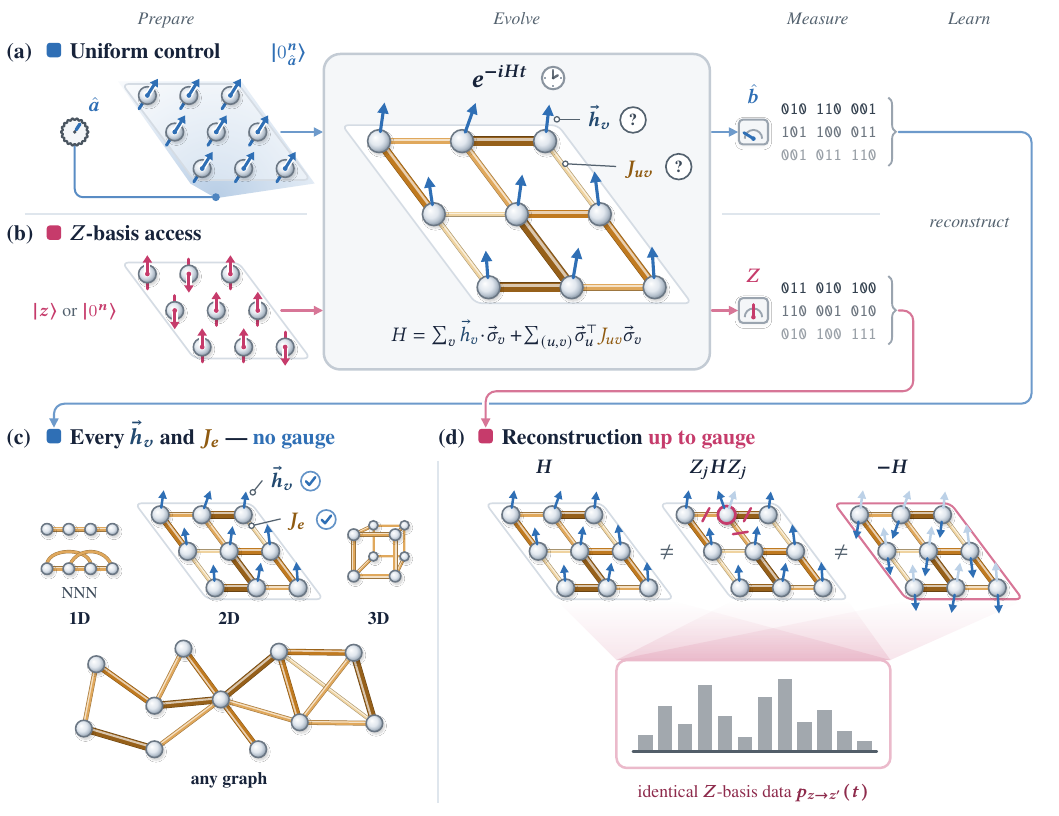}
\caption{
\textbf{(a)~Uniform control ($\hat a$-to-$\hat b$ model):} all qubits are prepared
along one common direction $\hat a$, $|0_{\hat a}^{n}\rangle=(U|0\rangle)^{\otimes n}$, and measured along one common direction $\hat b$.
\textbf{(b)~$Z$-basis access ($Z$-to-$Z'$ model; $0$-to-$Z'$ for $|0^n\rangle$ input):} the qubits start in a computational basis state $|z\rangle$  and are measured in the computational basis.
\textbf{(c)}~From the data of (a), all parameters can be learned, for \emph{any} interaction graph.
\textbf{(d)}~From the data of (b), a generic NN IXZ Hamiltonian on a 1D chain or square lattice can be learned up to unavoidable Pauli-$Z$ conjugation
and time-reversal gauges.}
\label{fig:settings_results}
\end{figure}

\subsection{Uniform control}

In the uniform control model, we will consider NN Hamiltonians on \emph{general} interaction graphs. Given a simple graph $G = (V,E)$ on $n = |V|$ qubits, a general NN Hamiltonian on $G$ can be written as
\begin{equation*}
    H = \sum_{v\in V} \vec h_v \cdot \vec\sigma_v + \sum_{ \{u,v\} \in E } \vec\sigma_u^{\top} J_{uv} \vec\sigma_v\,,
\end{equation*}
where $\vec\sigma_v = (X_v,Y_v,Z_v)^\top$ is the collection of single-site Pauli operators acting on qubit $v$, the vector $\vec h_v \in \mathbb{R}^3$ denotes the single-site fields at $v$, and the matrix $J_{uv}\in \mathbb{R}^{3\times 3}$ encodes the pairwise interactions between qubits $u,v$, with the convention $J_{vu}=J_{uv}^{\top}$, so that $H$ has $3|V|+9|E|$ real parameters in total. Our results apply to \emph{every} graph $G$, which we assume is known to the learner. No geometric locality is assumed.

Our main result in this section is that for generic instances, uniform control suffices to reconstruct all of the parameters of such a Hamiltonian.

\begin{theorem}[Informal, see Theorem~\ref{thm:graph_ham_uniform_local}]\label{thm:uniform}
    Let $H$ be a NN Hamiltonian on any simple graph $G = (V,E)$ which is known to the learner, with coefficients perturbed as in the smoothed analysis model with smoothing parameter $\varsigma$ (Definition~\ref{def:smoothed_hamiltonians}). There is a learning protocol that performs only experiments under the uniform control model and outputs estimates $(\{\widehat{\vec h}_v \}, \{\hat{J}_{uv}\})$, such that with probability at least $1 - \delta$,
    \begin{equation*}
        \max_{v\in V} \| \widehat{\vec h}_v - \vec h_v \|_2 \le \epsilon\,, \qquad \max_{\{u,v\} \in E} \|\hat{J}_{uv} - J_{uv}\|_F \le \epsilon\,.
    \end{equation*}
    The sample complexity, total evolution time, and classical post-processing time scale as
\begin{equation*}
    \widetilde{O}(\epsilon^{-2}\mathrm{poly}(n, R, 1/\delta, 1/\varsigma))\,.
\end{equation*}
The individual queried evolution times are of order $\widetilde O(1/((\Delta+1)R))$, where $\Delta$ is the maximum degree of $G$, and the protocol uses only marginals of the output bitstrings on at most four sites.
\end{theorem}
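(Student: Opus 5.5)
The plan is to expand local marginals of the output distribution in powers of the evolution time $t$. We then recover the parameters order by order, solving a structured polynomial system at each order.

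\textbf{Setup and expansion.} Fix uniform preparation direction $\hat a$ and measurement direction $\hat b$. For a set $S$ of at most four sites, the expectation of $\prod_{v\in S} \hat b\cdot\vec\sigma_v$ in the state $e^{-iHt}\rho_{\hat a}^{\otimes n}e^{iHt}$ is analytic in $t$. Its $k$-th Taylor coefficient is $\frac{(it)^k}{k!}\tr\big(\mathrm{ad}_H^k(\prod_{v\in S}\hat b\cdot\vec\sigma_v)\,\rho_{\hat a}^{\otimes n}\big)$. Since $H$ is 2-local on $G$, $\mathrm{ad}_H^k$ of a $|S|$-local observable involves only the parameters within graph distance $k$ of $S$. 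Product-state expectations factor, so each coefficient is an explicit polynomial in $\{\vec h_v, J_{uv}\}$ and in $\hat a,\hat b$. We estimate these coefficients by querying several times $t_j=\widetilde O(1/((\Delta+1)R))$ and applying polynomial extrapolation (Richardson-type) to the estimated marginals. The condition $t\lesssim 1/((\Delta+1)R)$ makes the Taylor series of a local observable converge geometrically, because the number of terms in $\mathrm{ad}_H^k$ grows like $((\Delta+1)R)^k$. Truncating at order $O(\log(1/\epsilon))$ therefore costs only polylogarithmic factors. The sampling error gives the $\epsilon^{-2}$ dependence, and a union bound over the $\poly(n)$ local marginals and directions gives the $\log(1/\delta)$ and $\poly(n)$ factors.

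\textbf{Order-by-order identification.} At first order, the single-site marginal gives $\hat b\cdot(2\vec h_v\times\hat a)$ plus contributions from the neighbors $u$, of the form $\hat b_v$ contracted with $J_{vu}$ and $\hat a$ after the cross-product structure. Varying $\hat a,\hat b$ determines $\vec h_v+\sum_{u}J_{vu}\hat a$ as a function of $\hat a$ modulo components along $\hat a$. This is the rank-deficient linear response mentioned in the introduction. We next use two-site marginals $\langle(\hat b\cdot\vec\sigma_u)(\hat b\cdot\vec\sigma_v)\rangle$ at first and second order. The connected part is linear in $J_{uv}$ and is contracted against $\hat a$-dependent tensors such as $\hat b\otimes(\hat b\times\hat a)$. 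The symmetric combinations are identified at first order. The antisymmetric combinations, along with the remaining $|E|$-dimensional ambiguity, are fixed using second- and third-order terms. These terms are polynomial (quadratic and cubic) in the unknowns but linear in the parameters of an edge once lower-order information about its neighborhood is known. The three- and four-site marginals capture the products $J_{uv}J_{vw}$ along paths, which decouples individual edges on high-degree vertices.

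\textbf{Conditioning via smoothing (main obstacle).} The hardest step is proving that the resulting local system, viewed as a map from the $3|V|+9|E|$ parameters to the chosen coefficients, is well conditioned for smoothed instances, uniformly over arbitrary graphs with unbounded degree. I would first try to make the system block-triangular: given quantities already recovered, the unknowns of each edge or site enter linearly. The coefficient matrices are then polynomials in the smoothed parameters. To show each matrix is invertible, it suffices to exhibit one explicit algebraic certificate, meaning a specific parameter point where the relevant minor is nonzero. For example, one can pick a point where the neighborhood fields are generic and all other couplings vanish. This shows the determinant is a nonzero polynomial of bounded degree. Carbery--Wright anticoncentration for Gaussian perturbations of scale $\varsigma$ then bounds the smallest singular value below by $\poly(\varsigma,\delta/n)$ with probability $1-\delta$. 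Error propagation through the triangular solve, with the errors of earlier quantities feeding into later linear systems, inflates the error only by a $\poly(n,R,1/\varsigma,1/\delta)$ factor. Setting the estimation precision accordingly yields the stated bounds and the claim that at most four-site marginals are used.
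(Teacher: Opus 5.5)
Your outline matches the paper in its scaffolding: Taylor coefficients as nested commutators, extrapolation from times of order $1/((\Delta+1)R)$, first-order recovery of the fields and symmetric parts, and nonzero-witness certificates with Gaussian anticoncentration. But the actual content of the theorem is asserted rather than derived: that data on at most four sites at orders two and three remove the first-order ambiguity on \emph{every} graph, through constant-size, bounded-depth linear systems.

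You never say what first order leaves open. Writing $J_{uv}=T_{uv}+[\vec d_{uv}]_\times+\lambda_{uv}I$, the fields, all $T_{uv}$, and the divergences $\sum_{w\sim v}\vec d_{vw}$ are fixed. Circulations of $\vec d$ and all traces remain free. You also give no localization mechanism. Raw second-order marginals on a site set involve the unknown skew parts and traces on every edge incident to the measured sites. Their dependence on $(\hat a,\hat b)$ has bounded degree, so each site set yields only boundedly many independent equations. At a vertex of degree $\Delta$ you would therefore face $\Theta(\Delta)$ coupled unknowns and certificates whose degree grows with $\Delta$, for which anticoncentration gives only $(\delta\varsigma)^{\Theta(\Delta)}$.

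The paper's key steps fill exactly this hole. It uses connected cumulants $C_{uvw}$ and $C_S$, which kill every monomial whose Hamiltonian terms do not connect all measured sites. Two algebraic facts then make the equations linear in the right unknowns:
\begin{itemize}
\item On an induced path, $R_{uvw}=\mathcal B(J_{uv},J_{vw})$, and $\mathcal B$ vanishes when either argument is $I$ or both are skew-symmetric. So $R_{uvw}$ is linear in $(\vec d_{uv},\vec d_{vw})$ alone, with coefficients depending only on the known $T$'s.
\item After subtracting $F_S(K_S)$, the four-site third-order cumulant is linear in the traces of edges within $S$, again with coefficients depending only on $T$.
\end{itemize}

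Coverage of arbitrary graphs is also missing, and it is not automatic:
\begin{itemize}
\item An edge lies in an induced path iff $N[u]\ne N[v]$.
\item Edges inside an equal-closed-neighborhood class need triangle completion using an outside neighbor.
\item Complete components need separate systems: $K_2$ from the divergence; $K_{r\ge 4}$ from joint $K_4$ systems; and an isolated triangle, which has neither an induced path nor a four-vertex clique, from an affine system of single-vertex second-order moments. The latter is justified by the fact that first-order-kernel perturbations act as scalars on uniform inputs.
\item For traces, the paper uses four-site systems centered at a vertex of degree at least three, and single-vertex second-order responses when both endpoints have degree at most two. For example, $K_2$ or three-vertex path components have no four-site set at all.
\end{itemize}

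Your example witness (generic fields, all couplings zero) fails precisely for the systems that separate edges. The path, triangle-completion and $K_4$ matrices are linear in $T$, and the four-site trace matrix is quadratic in $T$, so all of them vanish identically at zero couplings. Nonzero witnesses need nonzero traceless couplings, as in the paper's explicit determinants.

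Finally, your claim that error propagation through a triangular solve costs only a $\poly$ factor requires the substitution depth to be bounded independently of $n$. The paper uses first order, path systems, one triangle-completion stage, then traces. If your triangular order had depth growing with the graph, the per-stage condition-number losses would compound multiplicatively, and the amplification would be exponential in that depth.
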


Several features of \Cref{thm:uniform} deserve emphasis. First, the result pins down every parameter, even though first-order data alone leave many parameters undetermined. The missing information is recovered from the connected second- and third-order response of at most four sites, so the protocol only ever uses short-time evolutions, $O(1)$-body marginals of the output bitstrings, and a fixed finite set of direction pairs $(\hat a,\hat b)$. Second, the result holds for every interaction graph and makes no use of geometric locality: the maximum degree $\Delta$ enters the cost only polynomially, and the evolution times scale as $1/((\Delta+1)R)$. Third, the classical post-processing is a sequence of $O(n+|E|)$ constant-size linear system solves. Smoothed analysis enters only to guarantee that these systems are well conditioned.

\subsection{$Z$-basis access}

Next, we turn to the setting where state preparation and measurement are done exclusively in the computational basis. Unfortunately, in this setting it is impossible, even generically, to recover all of the parameters of the Hamiltonian because of the following simple gauges:

\begin{proposition}[Unobservable gauges, informal, see \Cref{thm:unobservable_gauges}]
\label{prop:results_gauge}
Under $Z$-basis access, conjugation
\begin{equation*}
H\longmapsto U_ZHU_Z^\dagger
\end{equation*}
by any unitary $U_Z$ diagonal in the computational basis is unobservable. The transformation
\begin{equation*}
H\longmapsto-H^*
\end{equation*}
is also unobservable, where $H^*$ denotes entrywise complex conjugation. We refer to the latter as the time-reversal gauge.
\end{proposition}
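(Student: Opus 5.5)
The plan is to show that every observable quantity in the $Z$-basis access model, namely every transition probability $p_t(z\to z') = |\bra{z'} e^{-iHt}\ket{z}|^2$ for computational basis states $z,z'$ and times $t\ge 0$, is invariant under both maps. Once this holds, the full output distribution of any experiment is unchanged. This includes the $0$-to-$Z'$ special case, and any post-processing of marginals or repeated experiments, since these are functions of these probabilities alone. So the two Hamiltonians are indistinguishable.

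\textbf{Diagonal conjugation.} Let $U_Z=\sum_z e^{i\phi_z}\ket{z}\bra{z}$ and $H' = U_Z H U_Z^\dagger$. Then $e^{-iH't}=U_Z e^{-iHt}U_Z^\dagger$, because conjugation by a unitary commutes with the matrix exponential (for instance via the power series). Hence
\begin{equation*}
\bra{z'}e^{-iH't}\ket{z} = e^{i(\phi_{z'}-\phi_z)}\bra{z'}e^{-iHt}\ket{z}.
\end{equation*}
Taking the modulus squared removes the phase, so $p'_t(z\to z')=p_t(z\to z')$.

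\textbf{Time reversal.} Let $H'=-H^*$. Entrywise conjugation is a ring homomorphism that is continuous, so it passes through the power series. This gives $e^{-iH't}=e^{iH^*t}=\big(e^{-iHt}\big)^*$. The basis vectors $\ket{z},\ket{z'}$ have real entries, so
\begin{equation*}
\bra{z'}e^{-iH't}\ket{z}=\overline{\bra{z'}e^{-iHt}\ket{z}},
\end{equation*}
and again the modulus squared is unchanged. For Hamiltonians in the paper's class, I would also note that $H^*$ flips the sign of Pauli $Y$ terms while leaving $X,Z$ terms fixed. Thus $-H^*$ stays within the relevant model class, with negated $X/Z$ coefficients and unchanged $Y$ coefficients. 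This justifies calling it a gauge on the parameter space.

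\textbf{Main obstacle.} There is no real technical obstacle. The only point needing care is stating precisely that the experiment's outcome distribution is determined by $\{p_t(z\to z')\}$. The $n$-qubit bitstring distribution is exactly $z'\mapsto p_t(z\to z')$, so all derived statistics are invariant as well. Finally, I would remark that $Z^S$ conjugations are special cases of $U_Z$. They preserve the locality structure by flipping the signs of the $X$-containing terms on $S$, which yields the $2^{n+1}$ discrete gauge group cited in the results.
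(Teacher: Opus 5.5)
Your proof is correct and is essentially the paper's argument: a diagonal conjugation multiplies each amplitude $\bra{z'}e^{-iHt}\ket{z}$ by the phase $e^{i(\phi_{z'}-\phi_z)}$, and $e^{-i(-H^*)t}=(e^{-iHt})^*$ complex-conjugates it, so the modulus squared is unchanged in both cases. One small correction to your side remarks, which the statement does not need: $-H^*$ multiplies a Pauli string's coefficient by $(-1)^{1+n_Y}$, where $n_Y$ is its number of $Y$ factors, so a $Y_uY_v$ coefficient is negated rather than left unchanged; and conjugation by $Z^S$ flips only strings with an odd number of $X$ or $Y$ factors in $S$ (for instance $X_uX_v$ with $u,v\in S$ is not flipped), which in the IXZ case is the rule $\alpha_{s_x,s_z}\mapsto(-1)^{|s_x\cap S|}\alpha_{s_x,s_z}$.
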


\noindent Nevertheless, in this work we show that for a natural family of Hamiltonians and for generic couplings, these simple gauges are the \emph{only} symmetries that $Z$-basis access cannot resolve, and all other information about the Hamiltonian parameters can be learned modulo these symmetries. 

Specifically, we focus on \emph{IXZ} Hamiltonians, that is, Hamiltonians whose terms only consist of $X$ and $Z$ Paulis. This class includes the native Hamiltonians of the platforms motivating this work. The Rydberg Hamiltonian consists of a transverse field (the Rabi drive $\tfrac{\Omega}{2}\sum_i X_i$), a longitudinal field (the detuning term, proportional to $\sum_i n_i$ with $n_i=(I-Z_i)/2$), and $ZZ$ couplings (the van der Waals interaction $\sum_{i<j}V_{ij}\,n_in_j$)~\cite{browaeys2020many,wurtz2023aquila}, and the transverse-field Ising models realized in trapped-ion simulators have the same form~\cite{monroe2021programmable}. Both have power-law interaction tails; we treat their nearest-neighbor truncation and discuss the extension in \Cref{sec:outlook}. The IXZ class is considerably broader, allowing arbitrary spatially varying $X$, $Z$, $XX$, $XZ$, $ZX$, and $ZZ$ terms. For IXZ Hamiltonians $H$ is real, so the time-reversal gauge is the global sign flip $H\mapsto-H$. Among diagonal conjugations, those that map the family of generic NN IXZ Hamiltonians to itself are the Pauli-$Z$ conjugations $H\mapsto Z^SHZ^S$, which flip the sign of every term containing an odd number of $X$ factors on sites in $S$.

Our main result in this section is that for generic instances, $Z$-basis access suffices to reconstruct the parameters of NN IXZ Hamiltonians on chains and rectangular lattices, up to the above gauges.

\begin{theorem}[Informal, see \Cref{thm:local_NN_z_to_zp}]
\label{thm:zbasis}
    Consider a general NN IXZ Hamiltonian on an open or periodic 1D chain or an open or periodic 2D rectangular lattice. There is an algorithm using $Z$-basis access that, with probability at least $1-\delta$ in the smoothed-analysis framework, reconstructs all Hamiltonian coefficients to accuracy $\epsilon$ up to the Pauli-$Z$ conjugation and time-reversal gauges, reading out only constant-size regions of the output bitstrings. The number of experiments, total evolution time, and classical post-processing time are polynomial in $n$, $R$, $1/\varsigma$, and $1/\delta$, and scale quadratically in $1/\varepsilon$.
\end{theorem}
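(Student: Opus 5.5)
The plan is to expand the transition probabilities $p_t(z\to z')=|\langle z'|e^{-iHt}|z\rangle|^2$ in small $t$ and extract local polynomial invariants of the coefficients. Write $H=D+O$, where $D$ is diagonal (the $Z$, $ZZ$ terms) and $O$ is off-diagonal (every term with at least one $X$). Since $H$ is real, $p_t(z\to z')=\sum_k c_k(z,z')t^{k}$, and each coefficient $c_k$ is a polynomial in the matrix elements $\langle z'|H^a|z\rangle$, which are invariant under the two gauges of \Cref{prop:results_gauge}.

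For $z'\neq z$ differing on a set $S$ of at most two adjacent sites, the leading term is $t^2\langle z'|H|z\rangle^2$. This gives the squared off-diagonal elements: $(h^x_i + \sum_j (\text{$XZ$ couplings})\,(\pm1))^2$ for single flips, and analogous expressions for double flips from $XX$. Varying the neighbouring bits of $z$ and using marginals on a constant-size window, I will solve a small linear system in these squared quantities. Because $(a+b)^2-(a-b)^2=4ab$, this yields all magnitudes and the relative signs among the terms flipping the same site. The $t^4$ coefficients involve $\langle z'|H^2|z\rangle$ and $\langle z'|H|z\rangle\langle z'|H^3|z\rangle$. These bring in the diagonal energy differences $(E_{z'}-E_z)$ multiplied by known off-diagonal quantities, so dividing by those (generically nonzero) quantities recovers the differences of diagonal energies. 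Varying $z$ then isolates $h^z_i$ and $J^{zz}_{ij}$ up to an overall sign. That global sign is exactly time reversal, since $H\mapsto -H$ leaves the data invariant.

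Next I fix the remaining sign ambiguity of the off-diagonal parameters. The data determine the products $\langle z|H|w\rangle\langle w|H|w'\rangle\langle w'|H|z\rangle$ around closed cycles in configuration space, i.e.\ the sign of $\langle z|H^3|z\rangle$-type cross terms appearing at order $t^4$ or $t^6$ in return or two-step probabilities. These cycle products are gauge invariant under $Z^S$. The goal is to show that the signs per site (one bit per $X$-carrying site, realized by $Z_i$) plus time reversal are all that remain, so that $2^{n+1}$ is the complete ambiguity. On a chain I propagate: fix the gauge at site 1, then determine the sign of the site-$(i+1)$ flip terms relative to site $i$ using a cycle through two adjacent flips (the $XX_{i,i+1}$ amplitude times the two single-flip amplitudes). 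On a 2D lattice I propagate along a spanning tree. The non-tree edges, and the periodic boundary cycles, give consistency constraints that are automatically satisfied because the data come from a true Hamiltonian. I will handle the periodic cases by noting that these constraints add no new gauge freedom.

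Finally, quantitative robustness. Each local system has constant size, and its coefficients are estimated by empirical marginals at times $t=\widetilde O(1/R)$, with polynomial extrapolation (a Richardson or least-squares fit in $t$) isolating the coefficient $c_k$ at cost $\widetilde O(\epsilon^{-2})$ samples. Divisions by quantities such as $\langle z'|H|z\rangle$ and energy gaps need anti-concentration: under the Gaussian smoothing of \Cref{def:smoothed_hamiltonians} these nonzero low-degree polynomials are at least $\poly(\varsigma\delta/n)$ in magnitude with probability $1-\delta/n$ (Carbery--Wright), and a union bound over the $O(n)$ local solves gives the stated $\poly(n,R,1/\varsigma,1/\delta)\epsilon^{-2}$ cost. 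The main obstacle I expect is the identifiability step: exhibiting, for each local window (especially the 2D plaquette with $XZ$/$ZX$ mixtures), an explicit polynomial certificate that the map from coefficients modulo gauge to the extracted invariants has a nonvanishing Jacobian minor. Once such a certificate exists at one point, genericity and smoothing give well-conditioning. I would first try a computer-algebra search for a nonzero minor at a random rational point, then turn it into a hand-checkable triangular elimination order.
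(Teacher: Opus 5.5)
\textbf{There are genuine gaps in the sign bookkeeping and in the diagonal step.}

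\emph{Sign bookkeeping.} Your sign-fixing step does not work as described. Since $p_{z\to z'}$ depends on $H$ only through $Ht$ and is even in $t$ for real $H$, every coefficient $[t^k]p_{z\to z'}$ is a homogeneous polynomial of even degree $k$ in the Hamiltonian coefficients. A $Z^S$-invariant quantity of odd degree can therefore enter the data only multiplied by another odd quantity. Your cycle $J^{i}_{XX}A_i(z)A_{i+1}(z\oplus e_i)$ is such a quantity, so its sign is not available on its own. Moreover, the sign of the single-flip block at $i+1$ relative to the block at $i$ is pure gauge: $Z_{i+1}$ flips the former but not the latter. So it cannot be ``determined'' and propagated along the chain.

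The correct count is as follows. After fixing every block by $h_X^v>0$, the undetermined bits are the $|E|$ signs of the $J_{XX}$ couplings and the sign of $D$. The only residual gauge ($\tau=-1$ together with $s_v=-1$ at every site) flips all of these simultaneously. In particular, your claim that the overall sign of $D$ is ``exactly time reversal'' is wrong once a representative of $K$ is fixed. With $K$ held fixed, $D\mapsto-D$ changes the term $4A\sum_sK_{b,s}K_{s,a}(E_a+E_b-2E_s)$ of the fourth derivative, which is linear in $D$; this is where your three-step cycle actually appears, weighted by energies. You therefore need two further ingredients:
\begin{itemize}
\item time-reversal-even invariants to align the $XX$ signs;
\item the linear-in-$D$ term, to fix $D$ relative to the chosen common $XX$ sign.
\end{itemize}
For the first, the paper uses the nonadjacent-pair transition $f(4,z,z\oplus e_u\oplus e_w)/24=\bigl(A_u(z)A_w(z)+\sum_{v\in N(u)\cap N(w)}k_{uv}k_{vw}\bigr)^2$. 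Varying $z$ isolates the products $k_{uv}k_{vw}$; on square plaquettes two such paths add, and they are separated generically because $|p|\neq|p'|$. These products are then propagated over a spanning tree of the line graph. As written, your procedure either claims an unobservable sign or leaves one spurious extra bit.

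\emph{Diagonal step.} This is the core of the proof, and it is misdescribed; the tool you propose for it would not suffice. After subtracting the $K$-only terms, the $D$-dependent part of the fourth derivative is $-2A^2(E_a-E_b)^2$ plus the linear term above, which involves energies of intermediate strings. This is a genuinely quadratic system, not ``energy differences times known factors'' that can be divided out. On a window it also involves out-of-window combinations ($h_Z^i+J^{i-1}_{ZZ}$ on a chain, the neighbor-side sums $S_\bullet$ on the lattice). These must be eliminated first and then disentangled using overlapping windows.

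For such a system, a nonvanishing Jacobian minor proves only local injectivity. It does not exclude distant isolated roots: the quadratic part alone is invariant under $D\mapsto-D$, and both roots are isolated. It also bounds the parameter error of a noisy candidate only near the truth. The paper instead uses a Macaulay-type linearization certificate, a nonzero determinant $\Delta$ giving the identity $\Delta(\theta,t)(x-t)=C(\theta,t,x)\bigl(P_\theta(x)-P_\theta(t)\bigr)$. This yields global uniqueness and a global residual-to-error bound. It is followed by a coarse search plus Gauss--Newton refinement, so the classical cost stays $\widetilde O(\epsilon^{-2})$.

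\emph{Smaller omissions.} There are two.
\begin{itemize}
\item At a degree-one endpoint of an open chain, second-order data give only $a^2+b^2$ and $ab$ for the block $(h_X^1,J^1_{XZ})$. The interchange $a\leftrightarrow b$, which is not a gauge, survives, so ``all magnitudes'' are not determined. The paper removes it with the fourth-order $1\to3$ nonadjacent-pair test.
\item Fourth-order coefficients of local marginals differ from the full-string ones by transitions with extra flips outside the window. These cancel only after subtracting the marginal computed for evolution under the recovered $K$.
\end{itemize}
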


\paragraph{Extension 1: fewer gauges with translation-invariant Hamiltonians.}
The toolbox we develop in this setting is flexible, and we apply it to obtain learnability results in several other settings. First, we consider \emph{translation-invariant} (TI) NN IXZ Hamiltonians on a 1D chain, 2D square lattice, or 3D cubic lattice. Here, the set of parameters consists only of site-independent $(h_Z, h_X, J^\alpha_{ZZ}, J^\alpha_{XZ}, J^\alpha_{ZX}, J^\alpha_{XX})$, where $\alpha$ ranges over all standard basis vectors corresponding to directions in the lattice. Translation invariance also collapses the gauge group: among the Pauli-$Z$ conjugations, only $Z^{\otimes n}$ preserves translation invariance, so exactly two signs remain, a common sign on $(h_X,\{J^\alpha_{XZ},J^\alpha_{ZX}\}_\alpha)$ and a common sign on $(h_Z,\{J^\alpha_{ZZ},J^\alpha_{XX}\}_\alpha)$. We reconstruct the parameters up to these two signs using a constant number of constant-size experiments, with cost $\widetilde O(\epsilon^{-2}\poly(R,1/\varsigma,1/\delta))$ \emph{independent of $n$}, see \Cref{thm:TI_NN_z_to_zp} for details. We also show an analogous result for translation-invariant $3$-\gl IXZ Hamiltonians on a 1D chain, which may include next-nearest-neighbor and three-body terms on consecutive sites, see \Cref{thm:TI_3_local} for details.

\paragraph{Extension 2: all-zero input.} Finally, we consider the most stringent model, in which every experiment starts from $\ket{0^n}$, evolves, and measures in the computational basis, so that the only experimental knob is the evolution time. For \emph{isotropic} TI NN IXZ Hamiltonians on open 1D, 2D, and 3D lattices, whose couplings $(h_Z,h_X,J_{ZZ},J_{XX},J_{XZ}=J_{ZX})$ are the same in every lattice direction, we show that all five parameters are determined by second- and fourth-order data up to the same two signs as above (see Table~\ref{tab:1_2and3D_TI_2-local_gauge-configs}), again with cost independent of $n$; see Appendix~\ref{sec:RI_TI_IXZ} and \Cref{thm:RI_TI_IXZ} for details. The protocol makes use of sites at the boundary of the lattice, where translation symmetry is broken, to obtain enough independent equations.

\paragraph{No Heisenberg scaling without interleaved control.} Our protocols have total evolution time scaling as $1/\epsilon^2$, the standard quantum limit, whereas protocols that interleave the evolution with individually addressable qubit control achieve the Heisenberg limit $1/\epsilon$. This gap is inherent to the access models. Prior work has established that in the general setting of Hamiltonian learning from dynamics, any learning protocol whose experiments do not interleave time evolutions with multiple rounds of control cannot achieve Heisenberg scaling~\cite{dutkiewicz2024advantage}. In~\Cref{thm:sql_2local}, we observe that this no-go result carries over to an explicit one-parameter family of 1D NN IXZ Hamiltonians, and that it holds for every protocol in which each experiment is one uninterrupted evolution, even with arbitrary entangled inputs, ancillas, arbitrary measurements, and classical adaptivity. Hence the $O(1/\varepsilon^2)$ scaling of all our upper bounds is unavoidable under the access models that we consider.

\section{Overview of techniques}

The main statistic that we will estimate in all of the experiments that we consider is the \emph{transition probability}
\begin{equation*}
  p^{\hat{a},\hat{b}}_{z\to z'}(t) \triangleq |\bra{z_{\hat{a}}} e^{iHt} \ket{z'_{\hat{b}}}|^2\,,
\end{equation*}
where $\hat{a}, \hat{b}$ are Bloch vectors corresponding to the global rotations applied to the input state and prior to measurement respectively, and $z,z'\in\{0,1\}^n$ are computational basis strings.
Here $z_{\hat a}$ denotes the rotation of the computational basis state $\ket{z}$ into the basis specified by $\hat{a}$.

In the $Z$-basis access setting, we will omit the superscripts $\hat{a}, \hat{b}$ as everything is already in the computational basis.

\subsection{Uniform control}\label{sec:overview_uniform}

In the uniform control setting, we can only consider transition probabilities where $z = 0^n$. It will be convenient to work in the Fourier basis, regarding $p^{\hat{a},\hat{b}}_{0\to z'}(t)$ as a Boolean function of $z'$. Given $S\subseteq[n]$, its $S$-th Fourier coefficient is given by
\begin{equation}\label{eq:def_moment}
m^{\hat{a},\hat{b}}_S(t) \triangleq \sum_{z'\in\{0,1\}^n} \chi_S(z') p^{\hat{a},\hat{b}}_{0^n\to z'}(t) = \bra{0^n_{\hat{a}}} e^{iHt} \bigotimes_{v\in S} (\hat{b}\cdot \vec\sigma_v) e^{-iHt} \ket{0^n_{\hat{a}}}\,,
\end{equation}
where $\chi_S: \{0,1\}^n\to\{\pm 1\}$ denotes the $S$-th Fourier character.
Because $\bigotimes_{v\in S} (\hat{b}\cdot \vec\sigma_v)$ is bounded in operator norm, this observable can be directly estimated to error $\upsilon$ using $O(1/\upsilon^2)$ experiments.

As is standard among many standard quantum limited protocols for Hamiltonian learning~\cite{stilck2024efficient,zubida2021optimal,gu2024practical,haah2024learning}, we aim to extract information about the couplings from the Taylor expansion of such observables around $t = 0$, a trick commonly referred to as ``derivative estimation.''

We will expand the observables to \emph{third order}, obtaining
\begin{align*}
m_S^{\hat a,\hat b}(t)
=m_S^{\hat a,\hat b}(0)+tL_S^{\hat a,\hat b}
+t^2Q_S^{\hat a,\hat b}+t^3\mathcal{W}_S^{\hat a,\hat b}+O(t^4),
\end{align*}
for terms $L_S$, $Q_S$, and $\mathcal{W}_S$ which can be expressed in terms of certain nested commutators involving $H$. By polynomial interpolation of short-time data, we can extract estimates for each of these terms and obtain polynomial constraints on $H$. 

This is where our setting differs from conventional Hamiltonian learning. With site-resolved preparation and measurement, suitably chosen first-order observables can directly reveal the Hamiltonian coefficients. Those observables require the ability to prepare individual qubits in different Pauli eigenstates and measure them in different Pauli bases. Under uniform control, this information is unavailable, and in fact the first-order coefficients $L_S^{\hat a,\hat b}$ do not contain enough information to determine all of the Hamiltonian parameters.

We therefore use the higher-order coefficients $Q_S$ and $\mathcal{W}_S$ in addition to the first-order data $L_S$. A priori, these higher-order constraints are nonlinear and involve an extensive number of Hamiltonian parameters. The key step is to combine them so that the resulting equations become local and, after substituting information recovered at earlier stages, linear in the remaining unknown parameters.

\paragraph{Learning the pairwise couplings in stages.} We first decompose the coupling matrices $J_{uv}$ into a \emph{symmetric traceless part}, a \emph{skew-symmetric part}, and a multiple of the identity. Given a vector $\vec d = (d_1,d_2,d_3)\in\mathbb{R}^3$, define the skew-symmetric matrix $[\vec d]_\times$ by
\begin{equation*}
  [\vec d]_\times = \begin{pmatrix}
    0 & -d_3 & d_2 \\
    d_3 & 0 & -d_1 \\
    -d_2 & d_1 & 0
  \end{pmatrix}
\end{equation*}
so that for any $\vec y\in\mathbb{R}^3$, $\vec d\times \vec y = [\vec d]_\times \vec y$.
Then we can write
\begin{equation*}
  J_{uv} = T_{uv} + [\vec d_{uv}]_\times + \lambda_{uv} I\,,
\end{equation*}
where $T_{uv}$ is symmetric and traceless and $3\lambda_{uv} = \Tr(J_{uv})$. Recalling our convention that $J_{uv} = J_{vu}^\top$, we see that $T_{vu} = T_{uv}$, $\vec d_{vu} = -\vec d_{uv}$, and $\lambda_{vu} = \lambda_{uv}$.

Each component of $J_{uv}$ acts on the input state $\ket{0^n_{\hat{a}}}$ in a particular way. To see this, we say that an operator $O$ can \emph{flip} a set of qubits $S$ if $O\ket{0^n_{\hat{a}}}$ has nonzero amplitude on strings in the $\hat{a}$-basis which are $\ket{1_{\hat{a}}}^{\otimes |S|}$ in the qubits indexed by $S$. For example, a local field $\vec h_v$ can flip qubit $v$. On the other hand, the different components of ${\vec\sigma}^\top_u J_{uv} \vec\sigma_v$ above can flip anywhere between zero and two qubits. For the $\lambda_{uv}I$ component, note that 
\begin{equation*}
  \vec\sigma_u \cdot \vec\sigma_v = 2\mathrm{SWAP}_{uv} - I\,,
\end{equation*}
which simply fixes $\ket{0_{\hat{a}}}_u \otimes \ket{0_{\hat{a}}}_v$ and flips neither qubit. For the skew-symmetric component, note that for any single-qubit state $\ket{\psi}$,
\begin{equation*}
  \vec\sigma_u^\top [\vec d_{uv}]_\times \vec\sigma_v \ket{\psi}^{\otimes n} = -i \vec d_{uv}\cdot (\vec\sigma_u - \vec\sigma_v) \ket{\psi}^{\otimes n}\,.
\end{equation*}
In other words, the skew-symmetric part acts on $\ket{0^n_{\hat{a}}}$ as a sum of 1-local terms, so it can only flip one qubit at a time. Finally, the symmetric traceless component can possibly flip two qubits at once. 

This distinction between the three components will be crucial later, as it will dictate the order at which they become visible: roughly speaking, the less that a component can flip, the higher the order at which it first becomes detectable.
Motivated by this, our protocol will reconstruct these three pieces in rounds. In the first round, it will use first-order information to recover $T_{uv}$ and the local fields $\vec h_v$. In the second round, it will use second-order information to recover $\vec d_{uv}$. In the third round, it will use third-order information to recover $\lambda_{uv}$. \Cref{fig:overview_uniform} summarizes this sequence of steps. 

Throughout, we will only use input and output rotations $\hat{a}$ and $\hat{b}$ which are orthogonal to each other. Let 
\begin{equation*}
  \hat{k} \triangleq \hat{b}\times \hat{a}
\end{equation*}
denote one of the two vectors on the Bloch sphere orthogonal to $\hat{a},\hat{b}$.

\begin{figure}[t!]
\centering
\includegraphics[width=0.99\textwidth]{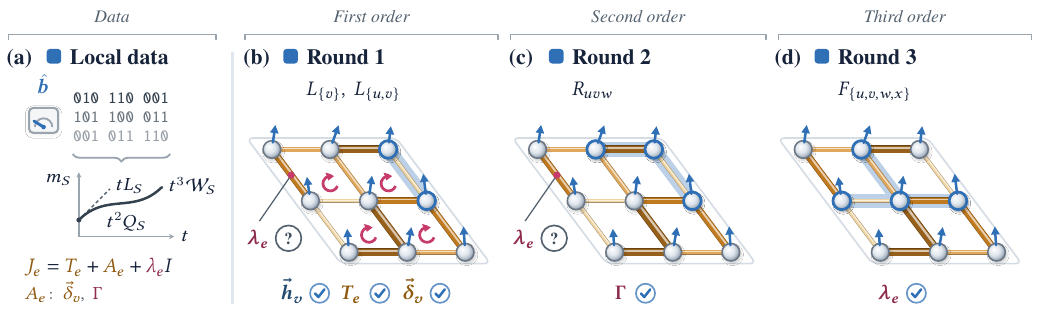}
\caption{\textbf{Three-round reconstruction under uniform control.}
 \textbf{(a)~Local data:} Taylor coefficients $L_S$, $Q_S$ and $\mathcal{W}_S$ (orders one to three) of the Fourier coefficients $m^{\hat a,\hat b}_S(t)$ on sets $S$ of at most four sites, with $J_e=T_e+A_e+\lambda_eI$ for $A_e=[\vec d_e]_\times$.
\textbf{(b)--(d):} checks denote what is learned in that round, crimson and question marks denote the remaining ambiguities.
\textbf{(b)~Round~1:} first-order data identifies $T_e$, $\vec h_v$, as well as $A_e$ up to a global curl gauge.
\textbf{(c)~Round~2:} Second-order terms $R_{uvw}$ of three-site cumulants determine the skew-symmetric components $A_e$ for edges on induced paths. For the remaining edges, we use observables adapted to the local graph structure (see \Cref{fig:graph_solve_curl}).
\textbf{(d)~Round~3:} Third-order terms $F_{\{u,v,w,x\}}$ of connected four-site cumulants determine the edge traces $\lambda_e$ for edges with an endpoint of degree at least three. Second-order single-site responses recover the remaining traces (see \Cref{fig:graph_trace}).}
\label{fig:overview_uniform}
\end{figure}

\paragraph{Round 1: recovering the symmetric part at first order.} 
We can compute the first-order terms to get
\begin{equation*}
  L_S^{\hat a,\hat b}=i\bigl\langle\bigl[H,\bigotimes_{v\in S}(\hat b\cdot\vec\sigma_v)\bigr]\bigr\rangle\,.
\end{equation*}
We will only consider $S$ given by a single edge or a single vertex. In these cases, we have for any $v\in V$ and any edge $\{u,v\}\in E$ that
\begin{equation*}
  L^{\hat{a},\hat{b}}_{\{u,v\}} = -4\hat{k}^\top T_{uv} \hat{b}\,.
\end{equation*}
\begin{equation*}
  L^{\hat{a},\hat{b}}_{\{v\}} = -2\Bigl\langle\hat{k}, \vec h_v + \sum_{w\sim v} J_{vw} \hat{a}\Bigr\rangle = -2\langle\hat{k}, \vec h_v\rangle - 2\hat{k}^\top \Bigl(\sum_{w\sim v} T_{vw}\Bigr) \hat{a} - \Bigl\langle \hat{b}, 2\sum_{w\sim v} \vec d_{vw}\Bigr\rangle\,.
\end{equation*}
By varying the orthonormal Bloch vectors $\hat{a}, \hat{b}, \hat{k}$, we can easily read off from the former equation the symmetric component $T_{uv}$ and then, with this in hand, we can also read off the local fields $\vec h_v$ from the latter equation.

In addition, we can read off $\sum_{w\sim v} \vec d_{vw}$. Because $\vec d_{vw} = -\vec d_{wv}$, this quantity has the illuminating interpretation as the \emph{(discrete) divergence} of the vector field $\vec d: E\to \mathbb{R}^3$ at $v$ (see e.g., the lecture notes of~\cite{mahoney2016lecture}), denoted $(\nabla \cdot \vec d)_v$. A vector field $\vec\Gamma: E\to\mathbb{R}^3$ for which $(\nabla \cdot \vec\Gamma)_v = 0$ for all $v\in V$ is called a \emph{circulation}, and we conclude that using the above information, we can only learn the skew-symmetric part of the couplings up to a global circulation.

To resolve this remaining ambiguity, as well as the trace terms $(\lambda_{uv})$, we consider the higher-order terms $Q_S$ and $\mathcal{W}_S$.

\paragraph{Round 2: recovering the skew-symmetric part at second order.} \emph{A priori}, the second order term $Q_S$ is cumbersome to deal with because it involves products of terms corresponding to pairs of connected edges, thus yielding quadratic constraints on the unknown parameters that are difficult to solve directly. It turns out however that one can use the information learned from the previous round to extract from these quadratic constraints a linear system that can then resolve the aforementioned circulation symmetry.

The general strategy is to carefully design a polynomial in the terms $L_S, Q_S$ which, upon substituting the information obtained in the first round, simplifies appropriately. There are two desiderata for this simplification: (1) the system should become ``localized'' in the sense that it only depends on unknown parameters in a small neighborhood of the graph, and (2) the system should ideally be linear so that the remaining skew-symmetric terms can be solved efficiently.

We achieve this by looking at a series expansion not of the Fourier coefficients $m^{\hat{a},\hat{b}}_S(t)$, but instead of the following cumulant quantity:
\begin{align*}
C_{uvw}^{\hat a,\hat b}(t)
&\triangleq
m_{\{u,v,w\}}^{\hat a,\hat b}(t)
-m_{\{u,v\}}^{\hat a,\hat b}(t)m_{\{w\}}^{\hat a,\hat b}(t)\nonumber\\
&\quad-m_{\{u,w\}}^{\hat a,\hat b}(t)m_{\{v\}}^{\hat a,\hat b}(t)
-m_{\{v,w\}}^{\hat a,\hat b}(t)m_{\{u\}}^{\hat a,\hat b}(t)\nonumber\\
&\quad+2m_{\{u\}}^{\hat a,\hat b}(t)m_{\{v\}}^{\hat a,\hat b}(t)m_{\{w\}}^{\hat a,\hat b}(t)\,.
\end{align*}
Whereas $m^{\hat{a},\hat{b}}_{uvw}(t)$ can be regarded as the joint moment of the three signs in $\{\pm 1\}$ obtained by measuring qubits $u, v, w$ at the end of the experiment each along Bloch vector $\hat{b}$, $C_{uvw}^{\hat a,\hat{b}}(t)$ is the joint \emph{cumulant} of those signs. In particular, it vanishes whenever there is at least one sign among the three which is statistically independent of the other two. Of course, unless the graph has multiple connected components, such independence will not hold exactly, but only weakly, and by inspecting the Taylor expansion of this cumulant in $t$, we can isolate dominant sources of dependence among these signs. Writing the Hamiltonian as a sum of local terms $H=\sum_j g_jH_j$, the $k$th Taylor coefficient of the cumulant is a polynomial $p_k(g)$ in the coefficients $g_j$. A monomial $g_{j_1}\cdots g_{j_k}$ can contribute only if the corresponding Hamiltonian terms collectively connect the three sites $u,v,w$. In particular, if there is a graph cut separating one of the three sites from the other two and none of the terms $H_{j_1},\ldots,H_{j_k}$ crosses that cut, then the monomial does not appear in $p_k$. In this way, we see that $p_0 = p_1 = 0$, and $p_2$ only consists of monomials $g_j g_k$ where terms $j$ and $k$ correspond to distinct edges connecting $\{u,v,w\}$. This is the key mechanism allowing us to ``localize'' to a constraint involving only terms in a small neighborhood.

In particular, in the special case where $u,v,w$ belong in a path but not a triangle, we conclude that the leading quadratic term in the Taylor expansion of $C_{uvw}^{\hat{a},\hat{b}}(t)$ must be a bilinear function of $J_{uv}$ and $J_{vw}$. Denote this quadratic term by
\begin{equation*}
R_{uvw}^{\hat a,\hat b}\triangleq[t^2]\,C_{uvw}^{\hat a,\hat b}(t)\,.
\end{equation*}
Crucially, this bilinear form vanishes when either of the arguments is a multiple of the identity or both are skew-symmetric. The reason, roughly speaking, is that the only contributions to the bilinear form must come from flips of all three qubits $u,v,w$, which would not be possible under either of these two cases. We conclude that $R_{uvw}^{\hat{a},\hat{b}}$ is \emph{linear} in the remaining unknowns $\vec d_{uv}$ and $\vec d_{vw}$, and by varying $\hat{a},\hat{b}$, we can argue that there is enough information to reconstruct both.

This already suffices to recover the skew-symmetric component of the couplings for all edges connecting vertices which do not belong in an isolated clique. We handle this corner case separately, deferring the details to Appendix~\ref{sec:isolated_triangle}.

\paragraph{Round 3: recovering the isotropic part at third order.} It remains to recover the $\lambda_{uv}$ terms, which only become fully visible in a localized fashion at third order.

The overall structure of this round closely parallels that of Round 2, just one degree higher, so we only briefly describe the ideas that go into this round and defer the details to Appendix~\ref{sec:general_third}. Given a set of \emph{four} sites $S$, let $C_S^{\hat a, \hat b}(t)$ denote the joint cumulant of the four signs measured at those sites, in analogy with $C_{uvw}^{\hat{a}, \hat b}(t)$ from Round 2, and let $F_S^{\hat a,\hat b}\triangleq [t^3]\, C_S^{\hat a, \hat b}(t)$. The same graph cut argument from before proves that $F_S$ only has contributions coming from sets of three edges that connect all four sites in $S$. This rules out contributions from local fields $\vec h_v$ as well as from edges not connecting vertices of $S$. Some careful accounting shows that after substituting known terms that were computed in previous rounds, the resulting constraint is linear in the unknown $\lambda_{uv}$'s that connect sites in $S$. By varying $\hat{a},\hat{b}$, we can then solve for all $\lambda_{uv}$'s corresponding to edges containing an endpoint with degree at least $3$. This leaves some corner cases involving edges whose endpoints both have degree at most $2$, which we show how to handle in Appendix~\ref{sec:general_third}.

\paragraph{Propagation of errors from shot noise.} In the above discussion, we assumed that all of the low-order terms in the Taylor expansion of the various cumulant quantities considered were known exactly. In reality, these cumulants are estimated from finitely many experiments, and their Taylor coefficients are further approximated through polynomial interpolation. It still remains to argue that the linear systems that we formed in the preceding discussion are sufficiently well-conditioned that error in estimating these observables does not translate into large error in estimating the unknown parameters. Fortunately, the relevant systems are all of constant size and do not involve long chains of substitutions, and their coefficients are polynomials in the true parameters that were estimated in previous rounds of the algorithm. A standard argument via anti-concentration of Gaussian polynomials thus ensures that these systems have bounded condition number with high probability over the smoothing, proving that the aforementioned error propagation is benign.

\subsection{$Z$-basis access}
\label{sec:overview_zbasis}

Next, we turn to our second access model in which global rotations are not allowed, but the input can be prepared in any computational basis state. As discussed previously, this gives rise to some unavoidable gauges, namely Pauli-$Z$ conjugation and time reversal, which cannot be resolved through such experiments, but we will show that it is nevertheless possible to generically recover all of the parameters modulo these gauges. In this overview, we will focus on our results for general (non-translation-invariant) IXZ NN Hamiltonians.

As in the previous part, we will consider the transition probabilities $p_{z\to z'}(t) = |\bra{z'}e^{-iHt}\ket{z}|^2$, specifically Taylor expansions of them and of their Fourier transforms. Furthermore, unlike in the uniform control setting, here the information about the Hamiltonian is entirely invisible at first order, because Hamiltonians with only $X$ and $Z$ interactions are real symmetric matrices in the computational basis and thus the transition probabilities are even functions of $t$. In going to second order and above, however, we encounter systems of \emph{nonlinear} polynomial equations that cannot be reduced to linear systems as in the previous part.

\paragraph{Learning the diagonal and off-diagonal terms in stages.} In place of the decomposition of the pairwise couplings into symmetric and skew-symmetric parts, here we decompose the entire Hamiltonian into a \emph{diagonal} part and an \emph{off-diagonal} part:
\begin{equation*}
    H = D + K\,,
\end{equation*}
where $D$ consists of all $Z$-only terms, and $K$ consists of all terms with at least one $X$. Similar to the previous part, in this decomposition the different terms can flip different numbers of qubits (none for $D$, and at least one for $K$), so they naturally appear at different orders in the Taylor expansion of the transition probabilities. Indeed, Taylor expanding to fourth order, we have for any $z\neq z'$ that
\begin{equation}
    p_{z\to z'}(t) = t^2 \bra{z'}H\ket{z}^2 + t^4\Bigl(\frac{1}{4} \bra{z'}H^2\ket{z}^2 - \frac{1}{3}\bra{z'}H\ket{z} \cdot\bra{z'}H^3\ket{z}\Bigr) + O(t^6)\,.\label{eq:taylor_p_overview}
\end{equation}
In particular, at second order $\bra{z'}H\ket{z} = \bra{z'}K\ket{z}$ does not depend on $D$ at all, and it is thus necessary to look to the fourth order term to get information about $D$.

\paragraph{Extracting local polynomial systems from the transition probabilities.} As the polynomial constraints given by these second and fourth order terms in the Taylor expansion of $p_{z\to z'}(t)$ are over an extensive number of variables, we need a way to \emph{localize} them to a small neighborhood of variables as in the previous part. To do this, we consider low-degree terms in the Fourier decomposition of the Taylor coefficients. As an illustrative example, consider the case of a nearest-neighbor IXZ Hamiltonian on a 1D chain, and take $z'$ to be given by flipping the $i$-th bit of $z$. The only terms in the Hamiltonian that flip only the $i$-th bit are $h^i_X X_i, J^i_{XZ} X_i Z_{i+1}, J^{i-1}_{ZX} Z_{i-1}X_i$, yielding
\begin{equation*}
    \bra{z'}H\ket{z} = h^i_X + J^i_{XZ} (-1)^{z_{i+1}} + J^{i-1}_{ZX} (-1)^{z_{i-1}}\,.
\end{equation*}
Recalling that the second-order term in Eq.~\eqref{eq:taylor_p_overview} is $\bra{z'}H\ket{z}^2$, we see that the Fourier coefficients of this term allow us to read off the pairwise products $h^i_XJ^i_{XZ}$, $h^i_XJ^{i-1}_{ZX}$, and $J^i_{XZ}J^{i-1}_{ZX}$. Generically, these allow us to recover the triple $(h^i_X, J^i_{XZ}, J^{i-1}_{ZX})$ up to one common sign; this sign ambiguity comes from the gauge symmetry of conjugation by $Z_i$. In addition, at second order we can also look at $z'$ given by flipping \emph{two} adjacent bits of $z$. A similar calculation as above allows us to recover $|J^i_{XX}|$. 

We can resolve these sign ambiguities by using the fact that at fourth order, there are two ways to flip non-adjacent qubits $i-1$ and $i+1$, namely either by flipping them individually, or by flipping the pairs $\{i-1,i\}$ and $\{i,i+1\}$. This can be used to read off the sign of $J^{i-1}_{XX}J^i_{XX}$, for instance, and related arguments can be used to globally resolve the signs of all terms in $K$ up to the unavoidable gauges (Appendix~\ref{sec:local_z_common}).

It remains to recover the diagonal terms of the Hamiltonian, namely $h^i_Z$ and $J^i_{ZZ}$. From the fourth order term in Eq.~\eqref{eq:taylor_p_overview}, we can subtract off the analogous quantity where $H$ is replaced by its off-diagonal component, which is at this stage of the protocol completely known upon fixing a representative under the unavoidable gauges. If $z$ and $z'$ are given by neighboring strings, this subtraction removes contributions from transitions that flip bits outside of the neighborhood over which they disagree, allowing us to obtain a properly localized polynomial system. Specifically, for $E_z \triangleq \bra{z}D\ket{z}$, we get
\begin{align*}
[t^4]\bigl(p_{z\to z'}-p^K_{z\to z'}\bigr)
&=-\frac{1}{12}\bra{z'}K\ket{z}^2\,(E_z-E_{z'})^2\nonumber\\
&\quad+\frac{1}{6}\bra{z'}K\ket{z}\sum_{s}\bra{z'}K\ket{s}\bra{s}K\ket{z}\,(E_z+E_{z'}-2E_s)\,.
\end{align*}
On a chain, we combine the Fourier coefficients of single-site and adjacent-pair equations on each four-site window.
The resulting constant-size quadratic system determines the two middle-site fields, the three internal $ZZ$ couplings, and two endpoint combinations, each consisting of the endpoint field plus its outside $ZZ$ coupling.
As the window moves along the chain, every nonendpoint field appears as a middle-site field, and every edge appears as an internal edge.
At an open endpoint, the outside coupling is absent, so the endpoint combination directly gives the field.
This is how the overlapping windows can be ``stitched'' together to recover all individual coefficients.

Every window uses the same globally fixed representative of $K$, and its diagonal solution is generically unique in that representative, as certified below.
Thus there are no additional local sign choices to align.
If several windows estimate the same coefficient, we retain any one estimate, and numerical errors do not accumulate through successive windows.
On a square lattice, the corresponding systems recover each center field and its incident $ZZ$ couplings, with separate equations for straight sides and a final linear solve for the corners.
The detailed reconstruction and coverage arguments are in Appendices~\ref{sec:proof_1D_local_Z_basis} and~\ref{sec:proof_2D_local_Z_basis}.

\begin{figure}[t!]
\centering
\includegraphics[width=0.99\textwidth]{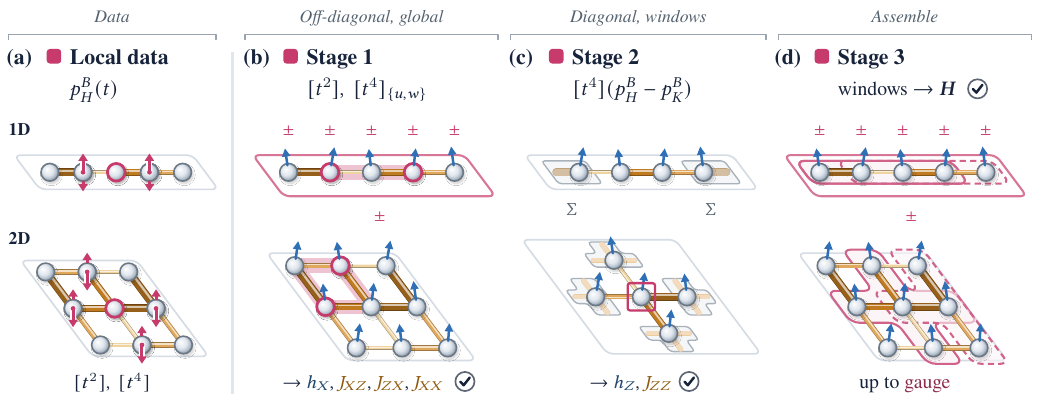}
\caption{\textbf{Local reconstruction of a generic NN IXZ Hamiltonian from $Z$-to-$Z'$ data.}
\textbf{(a)}~For transitions that flip the marked site (ring), vary the neighboring input bits (double arrows) and extract the second- and fourth-order coefficients of the local transition probabilities $p_{z\to z'}(t)$.
\textbf{(b)~Stage~1:} Second-order data recover each transverse field and its incident mixed couplings up to a common site sign ($\pm$). Fourth-order tests on nonadjacent pairs resolve the chain-endpoint ambiguities and determine relative $J_{XX}$ signs along the shaded paths, leaving one common sign.
\textbf{(c)~Stage~2:} Subtracting the fourth-order contribution from evolution under the recovered off-diagonal Hamiltonian $K$ gives local equations for $h_Z$ and $J_{ZZ}$. These are solved on windows, with separate equations for lattice sides and corners. The boundary sums $\Sigma$ combine fields with exterior $ZZ$ couplings.
\textbf{(d)}~Overlapping windows recover every field and coupling. Once a common representative of $K$ is fixed, each local diagonal system has a unique solution. The remaining gauges are site-dependent Pauli-$Z$ conjugation, with signs $s_j=\pm1$, and time reversal, with sign $\tau=\pm1$, as in \Cref{fig:settings_results}(d). In particular, $J_{XX}^j\mapsto\tau s_js_{j+1}J_{XX}^j$.}
\label{fig:overview_zbasis}
\end{figure}

\paragraph{Solving nonlinear systems.}

Unlike in the uniform control setting, the local polynomial systems we need to solve are not all linear. Furthermore, because there are unavoidable gauges, the systems at earlier stages are solved by fixing some globally consistent gauge representative before proceeding with later systems. Not only does this mean that the error propagation arguments from the previous setting no longer apply, but we are even missing a crucial step in justifying why these nonlinear systems provide enough constraints to identify the parameters up to gauge. Similar considerations arose in the recent work~\cite{chen2025quantum}, albeit in a different learning context, and in this work we use a simple algebraic argument to establish identifiability which can be made quantitative using smoothed analysis.

Roughly speaking, the challenge is as follows. We are given an overconstrained, albeit constant-sized, system of polynomial equations of the form $P(x) = P(x^*)$, where $x$ is a vector of indeterminates corresponding to a candidate estimate for the parameters, $P$ is a known vector of polynomials which maps a set of parameters to a set of experimental observables, and $P(x^*)$ is the value of those polynomials under the ground truth parameters $x^*$. We would like to certify that, modulo unavoidable gauges, any (approximate) solution $x$ to this system must be (approximately) equal to $x^*$, under a generic choice of $x^*$ sampled according to the smoothed analysis model. We show that for any such $P$, it suffices to exhibit a simple certificate depending only on the coefficients of the polynomials in $P$ to prove that this behavior holds. Roughly, it suffices to exhibit a collection of products of these polynomials with monomials in $x$ such that the matrix whose rows consist of the indicator vector of the constant monomial and the coefficient vectors of these product polynomials is nonsingular. Although the systems we consider are of constant size, it will be computationally cumbersome to directly exhibit such certificates, and instead we will leverage the structure of the polynomials that arise in our applications to directly solve for some subset of the unknowns in the first stage before invoking generic polynomial system solving for the remaining unknowns in the second stage. It turns out that the aforementioned certification procedure can still be adapted to work in this two-stage setting. Additionally, all of these arguments can be made quantitative, as the certificates ultimately demonstrate that certain polynomials are non-vanishing, and under smoothed analysis, polynomial anti-concentration of Gaussians ensures that these polynomials are not only non-vanishing but bounded away from zero. We defer the details to Appendix~\ref{sec:certificate}.

\section{Related works}\label{sec:related_works}

\paragraph{Hamiltonian learning from real-time dynamics.}
Hamiltonian learning from real-time dynamics has a long history, including methods that infer couplings from short-time derivatives, measured time traces, or correlation matrices~\cite{da2011practical,wiebe2014hamiltonian,bairey2019learning,stilck2024efficient,haah2024learning}. These protocols typically achieve standard-quantum-limit scaling $T=\widetilde O(1/\epsilon^2)$ in the total evolution time.

A more recent line of work achieves Heisenberg scaling $T=\widetilde O(1/\epsilon)$. Huang, Tong, Fang, and Su~\cite{HuangTongFangSu2023learning} first showed this for many-body Hamiltonians with known local interaction structure, and Bakshi, Liu, Moitra, and Tang~\cite{bakshi2024structure} extended it to Hamiltonians whose interaction structure is unknown (see also~\cite{ma2024learningkbodyhamiltonianscompressed}); more recent work, in the so-called \emph{ansatz-free} setting, learns sparse Hamiltonians without any assumptions on their interaction structure~\cite{hu2025ansatz,zhao2025learning,sinha2025improved}. These protocols rely on substantial quantum control, interleaving the evolution with fast, individually addressable control operations in the spirit of dynamical decoupling, or with ancillas and more general coherent operations. Dutkiewicz, O'Brien, and Schuster~\cite{dutkiewicz2024advantage} showed that such control is in fact necessary: without it, Heisenberg scaling is impossible in general. The standard quantum limit is therefore the best one can hope for without control, and Zhou and Gong~\cite{zhou2026optimalansatzfree} recently showed that it can be attained even in the ansatz-free setting, by a control-free and ancilla-free protocol that uses only Pauli product-state preparation and measurement.

In contrast to all of these works, including the control-free ones, which still assume the ability to prepare and measure each qubit in a basis of one's choosing, our access models allow no site-resolved basis changes at all: only a uniform rotation before and after the evolution, or only computational-basis preparation and measurement.

\paragraph{Hamiltonian learning on experimental platforms.}
Our access models are motivated primarily by programmable analog quantum simulators. In Rydberg atom arrays, which realize programmable Ising- and $XY$-type models with tunable geometry, the coherent drive is naturally global and the standard readout is site-resolved fluorescence in the computational basis~\cite{browaeys2020many,wurtz2023aquila}. Trapped-ion simulators similarly realize long-range Ising and $XY$ models with global spin rotations and state-dependent fluorescence~\cite{monroe2021programmable}. Most relevant to our work, Guo et al.~\cite{guo2025hamiltonian} recently learned the all-to-all Ising Hamiltonian of a $300$-ion crystal using only global manipulations and single-qubit-resolved state detection, a setting closely related to our uniform control model.

\paragraph{Learning under restricted access.}
A growing body of recent work asks what can be learned with limited experimental resources. Two recent works of Tong and collaborators show that the Heisenberg limit does not require fast control pulses: it can be reached using only static single-qubit fields during the evolution~\cite{shin2026emergent}, or with control applied only at long time intervals~\cite{shin2026eigenphase}. Relatedly, Flammia et al.~\cite{flammia2026autonomous} certify Hamiltonians using only single-qubit operations. Closer to our access models, Mao et al.~\cite{mao2026certifying} showed that uniform measurements, i.e., a common rotation of all qubits followed by computational-basis readout, suffice to certify certain graph states. Most relevant to our methods is the recent work by a subset of the current authors~\cite{chen2025quantum}, which studies the question of learning a many-body Hamiltonian from time evolution and a single local probe. In particular, it also sets up a system of polynomial constraints to identify the Hamiltonian parameters up to unavoidable gauge within the smoothed analysis model, and it similarly uses algebraic geometric methods to certify identifiability.

\section{Outlook}\label{sec:outlook}

We have developed a method to reconstruct a generic 2-local Hamiltonian $H$ on any known simple graph, using uniform rotations before and after evolution under $H$. With computational-basis preparation and measurement, we reconstruct the nearest-neighbor IXZ families studied here up to Pauli-$Z$ conjugation and time reversal. Both the uniform control and computational basis models use one time evolution of the Hamiltonian per experiment, and \Cref{thm:sql_2local} rules out Heisenberg scaling for a 1D nearest-neighbor IXZ family in the absence of interleaved control. There are a number of interesting questions suggested by our work.

\paragraph{Beyond IXZ Hamiltonians.}
Our $Z$-basis results assume an IXZ Hamiltonian. For general nearest-neighbor Hamiltonians with Pauli $Y$ terms, conjugation by diagonal unitaries and time reversal $H \mapsto -H^*$ are both invisible to solely $Z$-basis measurements. Are these transformations the only generic ambiguities within a fixed local family?  Although independent on-site $Z$ rotations do not generally remove all $Y$ terms simultaneously, it would be interesting to study which Hamiltonians admit an IXZ representative under a diagonal conjugation that preserves locality.

\paragraph{Learning from all-zero input.}
The isotropic translation-invariant nearest-neighbor results only require an all-zero input. Can we reconstruct the spatially varying nearest-neighbor IXZ families studied here from all-zero input alone, up to Pauli-$Z$ conjugation and time reversal, with polynomial resource bounds under Gaussian smoothing?

\paragraph{The role of adaptivity.}
In our reconstruction algorithm, the Hamiltonian coefficients recovered at one stage are substituted into the systems solved at later stages. The input states and measurement observables can be fixed in advance, but we can choose estimation precision adaptively (see Appendix~\ref{sec:certificate_refinement}). Can further adaptation of the input, allowed measurement basis, or evolution time based on previous outcomes reduce the overall sample complexity?  Could it reduce the number of experimental settings, or the amplification of measurement errors?

\paragraph{Heisenberg scaling with interleaved global control.}
The lower bound in \Cref{thm:sql_2local} assumes that each experiment consists of a single uninterrupted time evolution under $H$. But what changes if we allow uniform control during the time evolution? More precisely, suppose that state preparation remains a uniform product state and measurement is still performed in a common basis, but between evolution segments we may apply global pulses $U_j^{\otimes n}$ with $U_j\in\mathrm{SU}(2)$. Perhaps this additional control can be used to reconstruct the generic nearest-neighbor Hamiltonians of \Cref{thm:uniform} with total evolution time $\widetilde O(\epsilon^{-1}\poly(n,R,\varsigma^{-1},\delta^{-1}))$, polynomial classical post-processing time, and only polynomially many pulses.

\paragraph{Toward experimental implementation.}
A main motivation for this work is to develop Hamiltonian learning protocols that remain experimentally viable under severe control restrictions. Even so, arbitrary $Z$-basis preparation requires site-resolved bit flips, which are available on some tweezer and ion-trap platforms but not in a strictly global-control setting. In Appendix~\ref{sec:TI_IXZ}, we additionally provide an analysis of state-preparation and measurement (SPAM) noise which covers Theorems~\ref{thm:TI_NN_z_to_zp}, \ref{thm:TI_3_local}, and~\ref{thm:local_NN_z_to_zp}, assuming known local depolarizing noise and arbitrary computational-basis preparation. However, it does not address more general preparation and measurement errors. An important open question is which calibration assumptions are sufficient to learn the Hamiltonian and SPAM parameters jointly~\cite{stilck2024efficient,hangleiter2024robustly}.

Another experimental complication is that Rydberg and trapped-ion interactions generally have power-law tails~\cite{browaeys2020many,monroe2021programmable}. Our uniform-control results already apply to generic two-body Hamiltonians on arbitrary known interaction graphs, but the $Z$-basis reconstruction results rely on finite-range lattice interactions. Extending these reconstructions to sufficiently rapidly decaying long-range interactions remains an open problem.

Finally, realistic dynamics may include noise during the evolution itself. For time-homogeneous Markovian dynamics, the unknown generator is then a Lindbladian rather than a Hamiltonian. It would be valuable to know if our same basic approach can be used to identify a local Lindbladian using only computational-basis access or uniform control~\cite{ivashkov2026ansatz,arad2026near}.

\section*{Acknowledgements}
We thank Scott Aaronson, Richard Allen, Soonwon Choi, Fangjun Hu, Hong-Ye Hu, Liang Mao, John Preskill, and Yuxin Wang for helpful and stimulating discussions.

\paragraph{AI disclosure.}
The authors developed the original problem formulation and overall mathematical approach. Frontier models were used to assist at several points in implementing this approach. At an earlier stage, the authors devised on their own a protocol for learning under uniform control that held for a large family of graphs including most standard lattices (hypercubic, triangular, kagome, etc.), but GPT-6 subsequently simplified parts of the argument, yielding the result for general graphs. GPT-6 also suggested the simple algebraic certification approach which is used to resolve the diagonal terms in the IXZ setting. Fable 5.1 was used to produce the illustrations, and GPT-6 and Opus 5.5 assisted with literature search and preliminary drafting of the exposition. All proofs, arguments, and statements included in the paper were subsequently checked, digested, and refined by the human authors.

\bibliographystyle{unsrt}
\bibliography{ExpHamLearn}
\clearpage
\appendix
\section{Basic notations}\label{sec:notations}

Throughout, we write $\|B\|$, $\|B\|_1$, and $\norm{B}_F$ for the operator norm, the trace norm, and the Frobenius norm of a matrix $B$, and $\|v\|_p$ for the $\ell^p$ norm of a vector $v$.
We use $\tilde{\mathcal O}$, $\tilde{\Theta}$, and $\tilde{\Omega}$ to suppress polylogarithmic factors. 
We write $\mathfrak D$ for spatial dimension, and abbreviate $\{1,\dots,n\}$ as $[n]$.

For $z,z'\in\{0,1\}^n$, we write $(A)_{zz'}\coloneqq\braket{z|A|z'}$.
For operators $A,B$ and $r\in\mathbb Z_{\ge0}$, we write $\text{ad}_A^r(B)$ for the $r$-th nested commutator, defined recursively by
\begin{align*}
\text{ad}_A^0(B)=B,\qquad
\text{ad}_A^{r+1}(B)=[A,\text{ad}_A^r(B)].
\end{align*}

For $\vec x=(x_1,x_2,x_3)^\top\in\mathbb R^3$, the cross-product matrix $[\vec x]_\times$ is defined by $[\vec x]_\times\vec y=\vec x\times\vec y$ for every $\vec y\in\mathbb R^3$; explicitly,
\begin{equation}\label{eq:cross_matrix}
[\vec x]_\times=
\begin{pmatrix}
0&-x_3&x_2\\
x_3&0&-x_1\\
-x_2&x_1&0
\end{pmatrix}.
\end{equation}

\section{Hamiltonian structures and access models}\label{sec:access_model}
In this section, we specify the Hamiltonian family and experimental access separately. We fix the Hamiltonian to be traceless, as the trace of a Hamiltonian is an unobservable quantity. Therefore, we can write the Hamiltonian in the Pauli basis as:
\begin{equation*}
H=\sum_{s_x,s_y,s_z}\alpha_{s_x,s_y,s_z}X^{s_x}Y^{s_y}Z^{s_z},\quad \alpha_{s_x,s_y,s_z}\in\mathbb{R},
\end{equation*}
where $s_x,s_y,s_z\subseteq[n]$ are pairwise disjoint and their union is not empty. We write $X^{s_x}\coloneqq\prod_{j\in s_x}X_j$, $Y^{s_y}\coloneqq\prod_{j\in s_y}Y_j$, and $Z^{s_z}\coloneqq\prod_{j\in s_z}Z_j$, with $X_j,Y_j,Z_j$ the Pauli matrices acting on qubit $j$ and the identity on all other qubits. We define reconstruction directly in terms of these Pauli coefficients. 
The smoothing model and the coefficient normalization used in its analysis are introduced separately in Appendix~\ref{sec:smooth_analysis}.

\begin{definition}[Hamiltonian reconstruction]\label{def:Reconstruct_Hamiltonian}
We reconstruct $H$ to accuracy $\epsilon$ in the coefficient $\ell^\infty$-norm if the estimates $\hat\alpha_{s_x,s_y,s_z}$ satisfy $|\hat\alpha_{s_x,s_y,s_z}-\tilde\alpha_{s_x,s_y,s_z}|\le\epsilon$ for every Pauli term, where all $\tilde\alpha_{s_x,s_y,s_z}$ are coefficients of one Hamiltonian $\widetilde H$ in the specified family that is related to $H$ by an unavoidable gauge of the access model.
These gauges are described in Appendix~\ref{sec:gauges_general}.
\end{definition}

\noindent Now, we formally define geometric locality, translation invariance, isotropy, and IXZ Hamiltonians that are considered in this work:
\begin{definition}[Geometric locality]\label{def:k-local_Hamiltonian_XYZ}
Let $G=(V,E)$ be a graph with one qubit at each vertex, and let $d_G$ denote its shortest-path distance. We call $H$ geometrically $k$-local ($k$-\gl) if the support $S$ of every Pauli term with nonzero coefficient satisfies
\[
\operatorname{diam}_G(S)\coloneqq\max_{u,v\in S}d_G(u,v)\le k-1.
\]
In other words, any two qubits in the same interaction are at most $k-1$ graph steps apart.
\end{definition}
\noindent Notice that for our results for nearest-neighbor (NN) Hamiltonians, we additionally restrict $H$ to single-site terms and two-site terms on edges. This excludes three-body terms on triangles even though their support has diameter one. We choose an orientation $\vec E$ storing each edge of a simple graph $G=(V,E)$ once. With $\vec\sigma_v=(X_v,Y_v,Z_v)^\top$, the general NN Hamiltonian is
\begin{equation}\label{eq:graph-ham-general}
H=\sum_v\vec h_v^\top\vec\sigma_v+
\sum_{(u\to v)\in\vec E}\vec\sigma_u^\top J_{uv}\vec\sigma_v,
\qquad \vec h_v\in\R^3,\quad J_{uv}\in\R^{3\times3}.
\end{equation}
There are in total $3|V|+9|E|$ parameters and reversing an edge orientation uses the same coupling matrix transposed, i.e. $J_{vu}=J_{uv}^\top$.

\begin{definition}[Translation invariance and isotropy]\label{def:TI-Hamiltonian}\label{def:isotropic}
A Hamiltonian on a finite open lattice $\Lambda\subset\Z^{\mathfrak D}$ is translation-invariant (TI) if it is the restriction of an infinite-lattice interaction whose coefficients are unchanged by translations. Only terms wholly supported in $\Lambda$ are retained. It is isotropic if the coefficients are also invariant under lattice rotations and reflections. 
\end{definition}
\begin{definition}[IXZ Hamiltonian]\label{def:IXZ_Hamiltonian}
An IXZ Hamiltonian contains no Pauli-$Y$ terms:
\begin{equation*}
H=\sum_{s_x,s_z}\alpha_{s_x,s_z}X^{s_x}Z^{s_z},
\qquad s_x\cap s_z=\varnothing,\quad \alpha_{s_x,s_z}\in\R.
\end{equation*}
It is therefore a real symmetric matrix in the computational basis.
\end{definition}

In the following part of this section, we will first define the access models considered in this work (Appendix~\ref{sec:SU2_basis_access_models}), and then we will discuss the unobservable gauges for different access models (Appendix~\ref{sec:gauges_general}); in the end, we will present the lower bound for control-free access models that excludes the possibility of achieving Heisenberg scaling (Appendix~\ref{sec:no_heisenberg_scaling}).

\subsection{Access models}\label{sec:SU2_basis_access_models}
Each experiment consists of state preparation, one uninterrupted evolution $U(t)=e^{-iHt}$ at a chosen positive time $t$, and a final measurement. The allowed preparation and measurement operations depend on the access model; no control is interleaved with the unknown evolution. We count one preparation--evolution--measurement cycle as one experimental shot; sample and query counts refer to these shots. Our protocols use short times whose dependence on precision is given in Appendix~\ref{sec:derivative_estimation}. We consider the following definitions:
\begin{definition}[Uniform-control model]\label{def:uniform-local-control-model}
The $\hat a$-to-$\hat b$ model prepares $\ket{0_{\hat a}}^{\otimes n}$, where $\ket{0_{\hat a}}$ is the $+1$ eigenstate of $\hat a\cdot\vec\sigma$, and measures each qubit along $\hat b$. The unit vectors $\hat a,\hat b$ may be chosen for each experiment. This is equivalent to all-zero preparation and $Z$ readout with uniform single-qubit rotations before and after the evolution, as illustrated in Fig.~\ref{fig:control}(a).
\end{definition}
\begin{definition}[$Z$-basis models]\label{def:all-computational-basis-model}\label{def:all-zero-input-model}
The $Z$-to-$Z'$ model, illustrated in Fig.~\ref{fig:control}(b), prepares an arbitrary computational-basis string $\ket z$ and measures in the same basis. The $0$-to-$Z'$ model restricts the input to $\ket{0^n}$. The corresponding measurement statistics are
\begin{equation*}
p_{z\to z'}(t)=|\bra{z'}e^{-iHt}\ket z|^2.
\end{equation*}
\end{definition}
\begin{remark}[Local preparation and measurement]
\label{rem:local_access_model}
Suppose we measure a set of qubits $S$ under a geometrically
$k$-local Hamiltonian. To estimate the constant $r$th order time derivatives of the observable value at $t=0$, it is enough to prepare the prescribed product input on all sites within graph distance $r(k-1)$
of $S$ with respect to the interaction graph of the Hamiltonian, and measure only $S$. For fixed locality, maximum degree, and derivative order, this preparation region has size independent of the total system size.
\end{remark}

In this work, we define the \textit{time resolution} of a protocol to be the minimum requested evolution duration under the unknown Hamiltonian. In our protocol, the time resolution decreases only polylogarithmically with the target precision $\epsilon$.

\subsection{Unobservable gauges}\label{sec:gauges_general}
\begin{proposition}\label{thm:unobservable_gauges}\label{thm:XY_indistinguishability}\label{thm:Z_gauge}\label{prop:signflip}\label{thm:time-reversal-gauge}
For any Hamiltonian $H$, computational-basis transition probabilities are unchanged under $H\mapsto U_ZHU_Z^\dagger$ and $H\mapsto-H^*$, where $U_Z$ is a diagonal unitary and $H^*$ denotes entrywise complex conjugation in the computational basis.
For an IXZ Hamiltonian, $H$ is real, so the second transformation is simply the overall sign change $H\mapsto-H$. Conjugation by $Z^S=\prod_{j\in S}Z_j$ changes the coefficients according to
\begin{equation}\label{eq:alphaflip}
\alpha_{s_x,s_z}\longmapsto(-1)^{|s_x\cap S|}\alpha_{s_x,s_z}.
\end{equation}
\end{proposition}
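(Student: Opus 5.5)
The plan is to compute the transition amplitude directly under each transformation and observe that only its modulus enters $p_{z\to z'}(t)$.

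\emph{Diagonal conjugation.} Let $U_Z=\sum_z e^{i\phi_z}\ket z\bra z$ and set $H'=U_ZHU_Z^\dagger$. By functional calculus, $e^{-iH't}=U_Z e^{-iHt}U_Z^\dagger$. Therefore
\begin{equation*}
\bra{z'}e^{-iH't}\ket z=e^{i(\phi_{z'}-\phi_z)}\bra{z'}e^{-iHt}\ket z ,
\end{equation*}
which differs from the original amplitude only by a phase of modulus one, so $p_{z\to z'}(t)$ is unchanged for every $z,z',t$.

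\emph{Time reversal.} Entrywise conjugation in the computational basis commutes with products and sums, so it commutes with the exponential series: $(e^{-iHt})^*=e^{i H^* t}=e^{-i(-H^*)t}$. Hence
\begin{equation*}
\bra{z'}e^{-i(-H^*)t}\ket z=\overline{\bra{z'}e^{-iHt}\ket z},
\end{equation*}
because the basis vectors $\ket z,\ket{z'}$ are real. Again the modulus is preserved. For an IXZ Hamiltonian, the Paulis $X$ and $Z$ are real matrices, so $H$ has real entries, $H^*=H$, and the map is just $H\mapsto -H$.

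\emph{Sign rule \eqref{eq:alphaflip}.} We have $Z_jX_jZ_j=-X_j$, $Z_jZ_jZ_j=Z_j$, and $Z^S$ acts trivially off $S$. Conjugating a Pauli string $X^{s_x}Z^{s_z}$ by $Z^S=\prod_{j\in S}Z_j$ therefore factorizes site by site. Each site in $s_x\cap S$ contributes a factor $-1$, and every other site contributes $+1$, giving
\begin{equation*}
Z^S X^{s_x}Z^{s_z}Z^S=(-1)^{|s_x\cap S|}X^{s_x}Z^{s_z}.
\end{equation*}
By linearity this transforms each coefficient $\alpha_{s_x,s_z}$ as in \eqref{eq:alphaflip}. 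Since $Z^S$ is a diagonal unitary, this map is a special case of the first gauge.

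There is no real obstacle. The only point to state explicitly is that complex conjugation commutes with the matrix exponential, which is justified termwise on the absolutely convergent power series.
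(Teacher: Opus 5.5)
Your proof is correct and follows essentially the same route as the paper's proof: the phase $e^{i(\phi_{z'}-\phi_z)}$ for diagonal conjugation, complex conjugation of the amplitude via $e^{-i(-H^*)t}=(e^{-iHt})^*$ for time reversal, and the site-wise identities $ZXZ=-X$, $ZZZ=Z$ for the sign rule. The only addition is that you justify explicitly why entrywise conjugation commutes with the exponential series, which the paper leaves implicit.
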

\begin{proof}
Write $U_Z\ket z=e^{i\phi_z}\ket z$. Diagonal conjugation multiplies the transition amplitude from $z$ to $z'$ by $e^{i(\phi_{z'}-\phi_z)}$. This phase disappears when we take the absolute square, so the transition probability is unchanged. Also $e^{-i(-H^*)t}=(e^{-iHt})^*$, so replacing $H$ by $-H^*$ complex-conjugates each transition amplitude and leaves its absolute square unchanged. Finally, $ZXZ=-X$ and $ZZZ=Z$, so conjugation by $Z^S$ contributes one minus sign for each site in $s_x\cap S$, giving \eqref{eq:alphaflip}.
\end{proof}
Notice that the gauge caused by diagonal conjugations can map the Hamiltonian to leave its local family as arbitrary diagonal conjugations can destroy locality.

\subsection{Evolution time lower bound without interleaved control}\label{sec:no_heisenberg_scaling}
The lower bound below applies to the single-evolution restriction of Fig.~\ref{fig:control}(c), with no interleaved controls.
\begin{figure}[t!]
    \centering
    \includegraphics[width=0.99\textwidth]{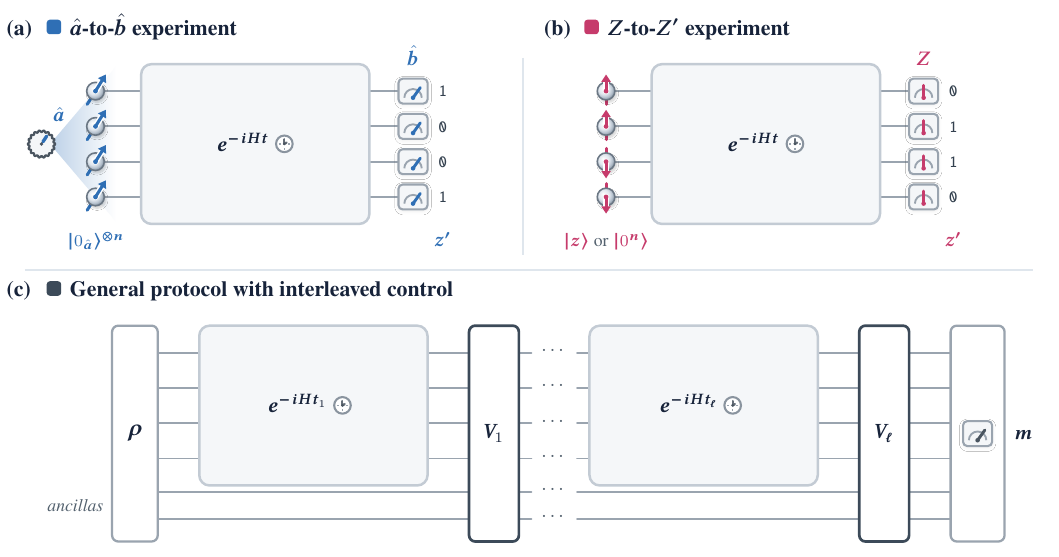}
    \caption{\textbf{Experimental access to the unknown Hamiltonian $H$.}
\textbf{(a)~Uniform control ($\hat a$-to-$\hat b$):}
Prepare every qubit along a common direction $\hat a$, evolve once under $H$, and measure along a common direction $\hat b$. Uniform rotations implement the preparation and readout, and the two directions may vary between experiments.
\textbf{(b)~Computational-basis access ($Z$-to-$Z'$):}
Prepare $\ket z$ in the computational basis, evolve once under $H$, and measure in the computational basis, obtaining $z'$ with probability $p_{z\to z'}(t)$. Arbitrary inputs require site-resolved bit flips, and the all-zero-input model fixes $z=0^n$. Our protocols vary input bits and measure output probabilities on constant-size regions (\Cref{rem:local_access_model}).
Neither (a) nor (b) allows control during evolution.
\textbf{(c)~General coherent control:}
An arbitrary joint state $\rho$ of the system and ancillas undergoes evolution segments $U(t_j)=e^{-iHt_j}$ on the system, alternating with joint control unitaries $V_j$, followed by a collective measurement with outcome $m$. Restricting each experiment to one uninterrupted evolution gives the setting of our lower bound (\Cref{thm:sql_2local}), even with arbitrary inputs, passive ancillas, and measurements.}
    \label{fig:control}
\end{figure}

\begin{theorem}\label{thm:sql_2local}
There is a 1D NN IXZ family for which reconstruction to coefficient accuracy $0<\epsilon\le\pi/48$, with success probability at least $2/3$ uniformly over the family, requires total evolution time $T\ge1/(144\epsilon^2)$. 
This holds for classically adaptive experiments, biased estimators, arbitrary inputs and measurements, and the use of extra ancillary qubits, provided each experiment uses one continuous time evolution and experiments are not coherently combined by interleaved controls.
\end{theorem}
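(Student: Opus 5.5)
I will reduce the theorem to a two-point hypothesis test within a one-parameter commuting family, where the quantum Fisher information accumulated per unit evolution time is bounded by $\|\partial_\theta H\|^2 t^2$ per experiment. The total Fisher information is then quadratic in time, and that is what forces $T\gtrsim\epsilon^{-2}$.

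\textbf{The family.} I would take the simplest possible 1D NN IXZ family, a single-site field on one qubit of the chain:
\begin{equation*}
H_\theta = \theta Z_1 + H_0,
\end{equation*}
with $H_0$ fixed (e.g. $H_0=0$, or any fixed NN IXZ term commuting with $Z_1$) and $\theta\in\{\theta_0,\theta_0+3\epsilon\}$. Both members must not be identified modulo the gauges of \Cref{thm:unobservable_gauges}: $Z^S$ conjugation leaves $Z_1$ invariant, but the sign flip $H\mapsto -H$ sends $\theta\mapsto-\theta$. I would therefore choose $\theta_0\approx \pi/16$-scale positive and use $\epsilon\le\pi/48$, so that $\theta$ and $-\theta'$ stay separated by more than $2\epsilon$. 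Any estimator achieving accuracy $\epsilon$ up to gauge then distinguishes the two hypotheses with probability at least $2/3$. This constraint is presumably where the explicit constant $\pi/48$ comes from.

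\textbf{Information bound.} In one experiment, an arbitrary state $\rho$ on system plus passive ancilla evolves as $(e^{-iH_\theta t}\otimes I)\rho(\cdots)^\dagger$ and is then measured arbitrarily. The output state is a unitary family generated by $t Z_1$. Its quantum Fisher information about $\theta$ is at most
\begin{equation*}
4t^2\operatorname{Var}_\rho(Z_1)\le 4t^2,
\end{equation*}
and the classical Fisher information of any measurement is bounded by this. Alternatively, to avoid Fisher-information regularity issues, I would bound the Bures/fidelity distance directly. Because the family commutes, the two output states differ by the unitary $e^{-i\,3\epsilon t Z_1}$, so the infidelity satisfies $1-F\le \sin^2(3\epsilon t)\le 9\epsilon^2t^2$.

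\textbf{Adaptivity.} To handle classical adaptivity, I would use the chain rule for the squared Hellinger distance, or equivalently subadditivity of $-\log F$ over adaptive sequences of measurement outcomes. Each experiment, conditioned on the past, contributes at most $-\log\cos^2(3\epsilon t_j)$. Using $3\epsilon t_j\le\pi/4$, this is at most $c\,\epsilon^2t_j^2$, and long experiments with $t_j>\pi/(12\epsilon)$ can be replaced by $t_j$ modulo periodicity, which only costs less time. Since $t_j^2\le t_j\cdot(\pi/(12\epsilon))$, the total is at most $c'\epsilon\sum_j t_j = c'\epsilon T$. Combined with the $1/\epsilon$ bound per experiment, this needs care. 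The cleaner route is to note that distinguishing with probability $2/3$ requires total Hellinger affinity loss bounded below by a constant, and to bound the sum directly by $\sum_j 9\epsilon^2t_j^2$.

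\textbf{Main obstacle.} The difficulty is that $\sum_j t_j^2$ is not controlled by $(\sum_j t_j)^2/\#$ in a way that yields $\epsilon^{-2}$: a single long experiment gives Heisenberg-like information $\epsilon^2T^2$. I expect this step to be the real obstacle, and it is exactly where the non-commuting structure used in \cite{dutkiewicz2024advantage} matters. My fix is to replace the trivially commuting family by one where long evolution does not accumulate phase coherently, for example a transverse-field two-level dynamics whose $\theta$-derivative of the state stays bounded in $t$, so that per-experiment Fisher information is $O(t)$ rather than $O(t^2)$. Failing that, I would directly invoke the no-go of \cite{dutkiewicz2024advantage} and verify that its hard family can be realized as a 1D NN IXZ Hamiltonian. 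I would first try the latter, checking that their construction (a qubit with an unknown field plus a coupling to a fast-dephasing partner) embeds with $X$/$Z$ terms only, and then track constants to obtain $1/(144\epsilon^2)$.
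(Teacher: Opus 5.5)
There is a genuine gap: the family you actually construct cannot give the bound, and the repair that would give it is only gestured at. Already for $H_0=0$, prepare $\ket{+}$ on qubit $1$ and evolve once for $t=\pi/(6\epsilon)$. The hypotheses $\theta_0$ and $\theta_0+3\epsilon$ are then sent to the orthogonal states $e^{-i\theta_0tZ_1}\ket{+}$ and $e^{-i\theta_0tZ_1}\ket{-}$ (up to phase). So your two-point test is solved with total time $O(1/\epsilon)$, and standard phase estimation learns $\theta$ at the same cost. Your own estimate $\sum_j\epsilon^2t_j^2\lesssim\epsilon T$ is exactly this Heisenberg behavior, and no bookkeeping of the adaptive rounds turns it into $\epsilon^{-2}$.

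You correctly flag this as the main obstacle, but ``a transverse-field two-level dynamics whose $\theta$-derivative stays bounded in $t$'' is not a construction. Its most natural reading, a field-strength family such as $Z_1+\theta X_1$, fails for the same reason (for $\theta$ away from $0$): its spectrum $\pm\sqrt{1+\theta^2}$ moves with $\theta$, so the phase difference again grows linearly in $t$. Deferring to \cite{dutkiewicz2024advantage} would leave both the IXZ embedding and the constants unverified. Also, the constant $\pi/48$ has nothing to do with separating $\theta$ from $-\theta$.

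The missing idea is to make the spectrum independent of the parameter, so that the parameter only rotates the eigenbasis. Any $H_\phi=\cos\phi\,P+\sin\phi\,Q$ with anticommuting IXZ Pauli strings works. The paper uses $H_\phi=e^{-i\phi Y_1/2}X_1X_2e^{i\phi Y_1/2}=\cos\phi\,X_1X_2-\sin\phi\,Z_1X_2$ with $\phi\in[\pi/6,\pi/4]$. Then $H_\phi^2=I$ and $(H_\phi-H_\psi)^2=4\sin^2(\Delta/2)I$ with $\Delta=|\phi-\psi|$, so $U_\phi(t)=\cos t\,I-i\sin t\,H_\phi$ and
\[
\|U_\phi(t)\otimes I-U_\psi(t)\otimes I\|=2|\sin t|\,|\sin(\Delta/2)|\le\min\{t,1\}\Delta
\]
for every $t$, with or without ancillas. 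Via purification and the POVM reverse-triangle inequality, one experiment of duration $t$ contributes at most $\min\{t^2,1\}\Delta^2\le t\Delta^2$ to the squared Hellinger distance of the transcripts. This is linear in $t$, so your adaptive chain-rule argument now gives $D\le\Delta^2T$ and $\operatorname{TV}\le\Delta\sqrt T$.

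For the constants, take $\phi=\pi/6$ and $\psi=\phi+4\epsilon$. The condition $\psi\le\pi/4$ is precisely $\epsilon\le\pi/48$, and it ensures $\sin\psi-\sin\phi\ge2\sqrt2\,\epsilon>2\epsilon$. This separation persists modulo the Pauli-$Z$ and overall-sign gauges, which are the only diagonal conjugations keeping $H_\phi$ IXZ. Finally, $1/3\le4\epsilon\sqrt T$ yields $T\ge1/(144\epsilon^2)$.
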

\begin{proof}
Consider the 1D NN IXZ family on two qubits, with $\phi\in[\pi/6,\pi/4]$, given by
\begin{equation*}
H_\phi=e^{-i\phi Y_1/2}X_1X_2e^{i\phi Y_1/2}
=\cos\phi\,X_1X_2-\sin\phi\,Z_1X_2,
\end{equation*}
with all other coefficients zero. Include arbitrary passive ancillas from the outset and set $\widetilde H_\theta=H_\theta\otimes I_{\mathrm a}$. Since the Pauli terms anticommute and square to $I$, we have $\widetilde H_\theta^2=I$ and $(\widetilde H_\phi-\widetilde H_\psi)^2=4\sin^2(\Delta/2)I$, where $\Delta=|\phi-\psi|$. Separating even and odd powers in the exponential gives
\begin{align*}
\widetilde U_\theta(t)&=\cos t\,I-i\sin t\,\widetilde H_\theta,\\
\|\widetilde U_\phi(t)-\widetilde U_\psi(t)\|
&=2|\sin t|\,|\sin(\Delta/2)|\le\min\{t,1\}\Delta.
\end{align*}

To handle a mixed input $\rho$ on the system and passive ancillas, choose a purification $\ket\Psi$ with $\rho=\operatorname{Tr}_R\ket\Psi\!\bra\Psi$. Let $\ket{\Psi_\theta}=(\widetilde U_\theta(t)\otimes I_R)\ket\Psi$ and $\rho_\theta=\operatorname{Tr}_R\ket{\Psi_\theta}\!\bra{\Psi_\theta}$. Discarding $R$ cannot increase trace distance; using the pure-state formula and $\|A\otimes I_R\|=\|A\|$, we obtain
\begin{align*}
\frac12\|\rho_\phi-\rho_\psi\|_1
&\le\sqrt{1-|\langle\Psi_\phi|\Psi_\psi\rangle|^2}\\
&\le\|\ket{\Psi_\phi}-\ket{\Psi_\psi}\|
\le\|\widetilde U_\phi(t)-\widetilde U_\psi(t)\|.
\end{align*}
Thus the operator norm bound also applies to mixed inputs. For a joint measurement with POVM elements $E_m$ (extended by the identity on $R$, which preserves the outcome probabilities), let $p_\theta(m)=\|E_m^{1/2}\ket{\Psi_\theta}\|^2$. The reverse triangle inequality and $\sum_mE_m=I$ give
\begin{align*}
\sum_m\bigl(\sqrt{p_\phi(m)}-\sqrt{p_\psi(m)}\bigr)^2
&\le\sum_m\|E_m^{1/2}(\ket{\Psi_\phi}-\ket{\Psi_\psi})\|^2\\
&=\|\ket{\Psi_\phi}-\ket{\Psi_\psi}\|^2\le\min\{t^2,1\}\Delta^2.
\end{align*}

For adaptive experiments, let $P_{\theta,j}(h)$ be the probability of history $h$ after $j$ experiments, and let $d_j(h)$ be the preceding sum for the next experiment conditioned on $h$. The same history determines the same experiment under both hypotheses, so $d_j(h)\le\Delta^2t_{j+1}(h)$, using $\min\{t^2,1\}\le t$. Define $D_j=\sum_h(\sqrt{P_{\phi,j}(h)}-\sqrt{P_{\psi,j}(h)})^2$, with $D_0=0$. Expanding the squares and using $\sqrt{ab}\le(a+b)/2$ yields
\begin{align*}
D_{j+1}-D_j
&=\sum_h\sqrt{P_{\phi,j}(h)P_{\psi,j}(h)}\,d_j(h)\\
&\le\frac{\Delta^2}{2}\bigl(\mathbb E_\phi t_{j+1}+\mathbb E_\psi t_{j+1}\bigr).
\end{align*}
Summing and using $\sum_jt_j\le T$ gives $D\le\Delta^2T$, where $D$ is the same sum for the full transcript distributions $P_\phi,P_\psi$. Padding stopped histories with zero-duration rounds and taking limits also covers adaptive stopping.

Finally, Cauchy--Schwarz and $(\sqrt a+\sqrt b)^2\le2(a+b)$ give
\begin{align*}
\operatorname{TV}(P_\phi,P_\psi)
&=\frac12\sum_h|P_\phi(h)-P_\psi(h)|\\
&\le\frac12\sqrt{D}
\left[\sum_h\bigl(\sqrt{P_\phi(h)}+\sqrt{P_\psi(h)}\bigr)^2\right]^{1/2}\\
&\le\sqrt{D}\le\Delta\sqrt T.
\end{align*}

Any diagonal $U_Z=\diag(u_{00},u_{01},u_{10},u_{11})$ for which $U_ZH_\theta U_Z^\dagger$ is IXZ has, after removal of a common phase, $u_z\in\{\pm1\}$, since the nonzero off-diagonal entries of $H_\theta$ connect all four basis states.
The two matrix elements that flip both qubits must be equal in an IXZ Hamiltonian, forcing $u_{00}u_{11}=u_{01}u_{10}$.
Hence $U_Z$ is a product of Pauli-$Z$ operators up to a phase, so arbitrary diagonal conjugations give no further gauge within this family.

For $0<\epsilon\le\pi/48$, choose $\phi=\pi/6$ and $\psi=\phi+4\epsilon\le\pi/4$.
Then $\sin\psi-\sin\phi\ge2\sqrt2\epsilon>2\epsilon$, so the coefficient vectors remain separated by more than $2\epsilon$ even after allowing diagonal conjugation and overall-sign gauges within the IXZ family.
Reconstruction with success probability $2/3$ on both therefore gives a test requiring $\operatorname{TV}(P_\phi,P_\psi)\ge1/3$.
Thus $1/3\le4\epsilon\sqrt T$, which gives $T\ge1/(144\epsilon^2)$.
\end{proof}
\noindent The related general control-free lower bounds are also discussed in Refs.~\cite{dutkiewicz2024advantage,hu2025ansatz}.

\section{Mathematical tools}\label{sec:preliminaries}
Our reconstruction procedures use low-order time derivatives to obtain local equations for the Hamiltonian coefficients. To turn these equations into accurate estimates, we need both reliable derivative data and control over the amplification of errors during reconstruction. Lemma~\ref{lem:gaussian_polynomial} bounds the conditioning of the local systems under smoothing, and Theorem~\ref{thm:main_complexity4derivative_estimation} gives the cost of estimating the required derivatives. Lemma~\ref{lem:finite_stage_stability} then controls the errors through successive reconstruction steps, using the nonlinear certificates from Appendix~\ref{sec:certificate}.

\subsection{Smoothed analysis}\label{sec:smooth_analysis}
Several reconstruction steps involve division or inversion of a local matrix. These steps can amplify experimental errors when a denominator or matrix minor is small, even if the exact equations have a unique solution. To analyze this instability, we model the free coefficients as small random perturbations of an arbitrary fixed Hamiltonian, including one at which a reconstruction step is singular. Unlike average-case analysis, which draws the entire instance from a prescribed distribution, smoothed analysis allows any fixed base Hamiltonian and bounds the failure probability over the perturbation. The reconstruction target is the perturbed Hamiltonian itself.
\begin{definition}[Smoothed Hamiltonians]\label{def:Ham_family}\label{def:smoothed_hamiltonians}
Let $\params=(\alpha_1,\ldots,\alpha_N)$ be the vector of $N$ free coefficients specifying the chosen family. When translation invariance or isotropy requires several coefficients to be equal, they share one entry and receive one common perturbation. For any fixed base vector $\overline\params$, let
\begin{align*}
\params^*=\overline\params+\varsigma Z,\qquad Z\sim\mathcal N(0,I_N),\quad\varsigma>0.
\end{align*}
We reconstruct $H_{\params^*}$, the Hamiltonian obtained by replacing each free coefficient $\alpha_j$ with its perturbed value $\alpha_j^*$.
\end{definition}
\noindent Definition~\ref{def:smoothed_hamiltonians} follows the smoothed-analysis framework used in Refs.~\cite{spielman2004smoothed,chen2025quantum}. If $\|\overline\params\|_\infty\le R_0$, the known radius
\begin{equation}\label{eq:effective_radius}
R=1+R_0+\varsigma\sqrt{2\log(8N/\delta)}
\end{equation}
bounds $\|\params^*\|_\infty$ with probability at least $1-\delta/4$. On this event, we apply estimates that require coefficients bounded by one to $H_{\params^*}/R$. The normalization $R\ge1$ will be used throughout.

The denominators, minors, and certificates used below are controlled by polynomials in the free Hamiltonian coefficients. Showing that such a polynomial is not identically zero rules out an exact degeneracy with probability one under Gaussian perturbations. Stable reconstruction requires more: we need a quantitative bound on the probability that its value is close to zero. Lemma~\ref{lem:gaussian_polynomial} provides this bound without any restriction on the base coefficients.

\begin{lemma}[Polynomial nondegeneracy]\label{lem:gaussian_polynomial}
For a fixed nonzero polynomial $p$ of degree $s\ge1$ in independently smoothed coordinates, there is a constant $c_p>0$, independent of their means, such that
\begin{align*}
\Pr\bigl[|p(\params^*)|<c_p(\delta'\varsigma)^s\bigr]\le\delta',
\qquad 0<\delta'<1.
\end{align*}
For a nonzero constant polynomial, use its fixed absolute value instead.
\end{lemma}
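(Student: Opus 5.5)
The plan is to reduce to the classical Carbery--Wright anti-concentration inequality for polynomials of Gaussian vectors. Write $\params^*=\overline\params+\varsigma Z$ with $Z\sim\mathcal N(0,I_N)$ and define $q(Z)\coloneqq p(\overline\params+\varsigma Z)$, a polynomial of degree exactly $s$ in $Z$. Carbery--Wright gives an absolute constant $C$ such that
\[
\Pr\bigl[|q(Z)|\le \eta\bigr]\le C s\,\bigl(\eta/\sqrt{\mathbb E[q(Z)^2]}\bigr)^{1/s}.
\]
Choosing $\eta=(\delta'/(Cs))^s\sqrt{\mathbb E q^2}$ makes the right-hand side equal to $\delta'$. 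It therefore suffices to lower bound $\sqrt{\mathbb E q^2}$ by $c\,\varsigma^s$, with $c>0$ depending only on $p$ and not on $\overline\params$.

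To get this lower bound, I will isolate the top-degree part. Let $p_s$ denote the degree-$s$ homogeneous component of $p$, which is nonzero. Expanding $q(Z)$ in $Z$, its degree-$s$ homogeneous component is exactly $\varsigma^s p_s(Z)$; lower-degree pieces of $p$ and all cross terms involving $\overline\params$ contribute only to degrees $<s$. I then expand $q$ in the orthonormal Hermite basis $\{H_\beta(Z)\}$. The component of $q$ in the span of degree-$s$ Hermite polynomials is determined solely by the degree-$s$ homogeneous part, because a monomial $Z^\beta$ with $|\beta|=s$ equals $H_\beta$ plus lower-degree Hermite terms. Lower-degree monomials lie entirely in lower chaos.

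Hence, by orthogonality of the Wiener chaos decomposition,
\[
\mathbb E q^2\ge \|\pi_s q\|^2=\varsigma^{2s}\|\pi_s p_s\|^2,
\]
where $\pi_s$ is the projection onto the $s$-th chaos. The quantity $\|\pi_s p_s\|^2=\sum_{|\beta|=s}\beta!\,[Z^\beta]p_s^2>0$ depends only on $p$. Combining the two steps yields the claim with
\[
c_p=\sqrt{\|\pi_s p_s\|^2}\,\bigl(1/(Cs)\bigr)^s,
\]
independent of the means. The constant-polynomial case is trivial.

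The main obstacle is the mean-independence step. The fix above, using the top chaos component so that lower-order terms shifted by $\overline\params$ cannot cancel it, is where care is needed. In particular, I must verify the claim that shifting by $\overline\params$ never produces new degree-$s$ terms, which holds since the shift is affine.
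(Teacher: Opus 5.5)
Your argument is correct, but it takes a different route from the paper's. Both proofs rest on the same observation: the affine shift by $\overline\params$ leaves the degree-$s$ part $\varsigma^s p_s(Z)$ of $q(Z)=p(\overline\params+\varsigma Z)$ unchanged.

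The paper uses this one-dimensionally. It picks a unit vector $v$ with $p_s(v)\neq0$ and conditions on the component of $Z$ orthogonal to $v$. The resulting univariate polynomial in the standard Gaussian $U=v\cdot Z$ has leading coefficient $\varsigma^s p_s(v)$, independent of the mean and of the conditioned value. Factoring over $\mathbb C$ confines $\{|q|<|a|\xi^s\}$ to at most $s$ intervals of length $2\xi$, and the bound on the Gaussian density finishes the proof. This yields the explicit constant $c_p=|p_s(v)|(\sqrt{2\pi}/(2s))^s$.

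You instead use the top Wiener chaos to obtain the mean-independent bound $\mathbb E q^2\ge\varsigma^{2s}\sum_{|\beta|=s}\beta!\,c_\beta^2$, with $c_\beta$ the coefficients of $p_s$, and feed it into Carbery--Wright. Your route needs no good direction $v$, and its constant is bounded below by any single top-degree coefficient. The cost is that it imports a deep theorem whose absolute constant is usually not stated explicitly. The paper's argument is self-contained and fully explicit, and the paper relies on that later: when it computes $\Delta_0$ for the coarse grid in Appendix~\ref{sec:certificate_refinement}, and when it checks graph-independence of the vertex-trace conditioning bounds.

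One small fix is needed. With an orthonormal Hermite basis, a degree-$s$ monomial $Z^\beta$ equals $\sqrt{\beta!}\,H_\beta$ plus lower chaos. It equals $H_\beta$ plus lower terms only for the monic Hermite polynomials. Likewise the norm should be written $\sum_{|\beta|=s}\beta!\,([Z^\beta]p_s)^2$, since $[Z^\beta]p_s^2$ reads as a coefficient of $p_s^2$. Your final formula is right once read this way.
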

\begin{proof}
Let $p_s$ be the sum of the terms of total degree $s$ in $p$. Since $p_s$ is nonzero, we can choose a real unit vector $v$ with $p_s(v)\ne0$. Decompose the standard Gaussian vector as $Z=Uv+Z_\perp$, where $U=v\cdot Z$ is a standard Gaussian scalar and $Z_\perp=Z-Uv$ is independent of $U$. Conditional on $Z_\perp$, consider the single-variable polynomial
\begin{align*}
q(u)=p(\overline\params+\varsigma Z_\perp+\varsigma uv).
\end{align*}
Its leading coefficient is $a=\varsigma^s p_s(v)$, which depends on neither the base vector $\overline\params$ nor the conditioned value of $Z_\perp$.

Factor $q(u)=a\prod_{j=1}^s(u-r_j)$ over the complex numbers. For $\xi>0$, if $|q(u)|<|a|\xi^s$, then $|u-r_j|<\xi$ for some $j$. For real $u$, this implies $|u-\operatorname{Re}r_j|<\xi$. Thus the set of real $u$ for which $q$ is small enough is contained in at most $s$ intervals, each of length $2\xi$. The standard Gaussian density is at most $1/\sqrt{2\pi}$, so
\begin{align*}
\Pr\bigl[|q(U)|<|a|\xi^s\mid Z_\perp\bigr]
\le\frac{2s\xi}{\sqrt{2\pi}}.
\end{align*}
Choose $\xi=\sqrt{2\pi}\,\delta'/(2s)$ and set
\begin{align*}
c_p=|p_s(v)|\left(\frac{\sqrt{2\pi}}{2s}\right)^s.
\end{align*}
Then $|a|\xi^s=c_p(\delta'\varsigma)^s$. Averaging the conditional bound over $Z_\perp$ proves the claim.
\end{proof}
In the reconstruction proofs, we establish polynomial nonzeroness by exhibiting a single exact assignment at which the relevant denominator, minor, or certificate is nonzero. Lemma~\ref{lem:gaussian_polynomial} turns this algebraic check into a conditioning bound for every fixed base Hamiltonian. The proof works uniformly over the base coefficients because they do not affect the leading coefficient of $q$. For $O(n)$ instances of a fixed finite collection of local polynomials, we assign each a failure probability of order $\delta'/n$ and take a union bound. Their values are then bounded away from zero simultaneously, with constants determined by that finite collection.

The linear systems used below inherit errors from two sources. Their right-hand sides use estimated derivatives, and their matrices may depend on coefficients recovered earlier. Lemma~\ref{lem:perturbed_least_squares} bounds the reconstruction error caused by both perturbations.
\begin{lemma}[Perturbed least squares]\label{lem:perturbed_least_squares}
Let $M\in\mathbb R^{m\times d}$ have full column rank, with smallest singular value $\sigma_{\min}(M)=\kappa>0$, and suppose $Mx=y$. Write $+$ for the Moore--Penrose pseudoinverse and use the operator norm for matrices. Then the least-squares solution $\widehat x=\widehat M^+\widehat y$ satisfies
\begin{align*}
\|\widehat x-x\|_2\le
\frac{2}{\kappa}\bigl(\|\widehat y-y\|_2+\|\widehat M-M\|\,\|x\|_2\bigr)
\end{align*}
whenever $\|\widehat M-M\|\le\kappa/2$.
\end{lemma}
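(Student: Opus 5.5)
The plan is a standard perturbation argument that separates the two sources of error. Write $\widehat x-x=(\widehat M^+\widehat y-\widehat M^+ y)+(\widehat M^+ y-x)$. The first piece is bounded by $\|\widehat M^+\|\,\|\widehat y-y\|_2$. For the second, use $y=Mx$ to get
\[
\widehat M^+ y-x=\widehat M^+ M x-x=\widehat M^+(M-\widehat M)x+(\widehat M^+\widehat M-I)x .
\]
The last term vanishes once we know that $\widehat M$ has full column rank, because then $\widehat M^+\widehat M=I_d$.

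The two remaining facts come from Weyl's inequality for singular values. First, $\sigma_{\min}(\widehat M)\ge\sigma_{\min}(M)-\|\widehat M-M\|\ge\kappa-\kappa/2=\kappa/2>0$. Hence $\widehat M$ has full column rank, so $\widehat M^+=(\widehat M^\top\widehat M)^{-1}\widehat M^\top$ is a left inverse. Second, $\|\widehat M^+\|=1/\sigma_{\min}(\widehat M)\le 2/\kappa$.

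Combining these gives
\[
\|\widehat x-x\|_2\le\|\widehat M^+\|\bigl(\|\widehat y-y\|_2+\|M-\widehat M\|\,\|x\|_2\bigr)\le\frac{2}{\kappa}\bigl(\|\widehat y-y\|_2+\|\widehat M-M\|\,\|x\|_2\bigr),
\]
which is the claimed bound.

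Nothing here is really an obstacle. The only point that needs care is confirming that $\widehat M^+\widehat M=I_d$, which is exactly where full column rank of $\widehat M$ is used. That rank condition is guaranteed by the hypothesis $\|\widehat M-M\|\le\kappa/2$ through Weyl's inequality. Also note that no assumption that $\widehat y$ lies in the range of $\widehat M$ is needed, since the least-squares solution $\widehat M^+\widehat y$ is defined for every $\widehat y$.
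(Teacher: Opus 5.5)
Your proof is correct and follows the same route as the paper's: Weyl's inequality gives $\sigma_{\min}(\widehat M)\ge\kappa/2$, hence full column rank and $\|\widehat M^+\|\le 2/\kappa$. Then $\widehat M^+\widehat M=I$ and $Mx=y$ yield the identity $\widehat x-x=\widehat M^+\bigl(\widehat y-y-(\widehat M-M)x\bigr)$, from which the bound follows immediately.
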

\begin{proof}
Weyl's inequality gives $\|\widehat M^+\|\le2/\kappa$ and full column rank. Thus $\widehat x-x=\widehat M^+(\widehat y-y-(\widehat M-M)x)$.
\end{proof}

\subsection{Derivative estimation}\label{sec:derivative_estimation}
The polynomial system in our reconstruction algorithm involves estimating the derivatives of local moments and transition probabilities at $t=0$, while experiments give noisy samples at some positive nonzero evolution times. Generalizing the polynomial-fitting approach of Ref.~\cite{stilck2024efficient}, we estimate these derivatives from short-time data. Locality bounds the approximation error independently of the number of qubits, allowing a fixed-order derivative to be estimated with an inverse-square dependence on the desired precision, up to logarithmic factors. For fixed interaction strength, the time resolution is $\widetilde\Theta(1)$: the shortest queried evolution time decreases only as an inverse polylogarithm of the precision.

For a density matrix $\rho$ and a Hermitian observable $O$ with $\|O\|\le1$, write
\begin{align*}
p_{\rho,O}(t)=\tr(\rho e^{iHt}Oe^{-iHt}),\qquad
f(r,\rho,O)=p_{\rho,O}^{(r)}(0)=i^r\tr(\rho\,\operatorname{ad}_H^r(O)).
\end{align*}
We use the convention
\[
  [t^r]p_{\rho,O}(t)=\frac{f(r,\rho,O)}{r!}.
\]
For computational-basis input and output, we abbreviate
\begin{align*}
f(r,z,z')\coloneqq f(r,\ketbra{z}{z},\ketbra{z'}{z'}).
\end{align*}
For a local output probability, the observable is instead the projector on the measured bits, tensored with the identity on the unmeasured sites.
\begin{theorem}[Local derivative estimation]\label{thm:main_complexity4derivative_estimation}
Let $H=\sum_A H_A$ be a Hamiltonian on $n$ qubits, where each Hermitian interaction term $H_A$ acts on a nonempty subset $A\subseteq\{1,\ldots,n\}$ of at most $m$ qubits. Assume a known local energy bound $g\ge0$ satisfying
\[
  \max_{1\le v\le n}\sum_{A\ni v}\|H_A\|\le g,
\]
where $v$ indexes the qubits and $\|\cdot\|$ denotes the operator norm.
Let $\rho$ be a density matrix that can be prepared repeatedly, and let $O$ be a measurable Hermitian observable supported on at most $\chi$ qubits with $\|O\|\le1$.
For evolution time $t$, define the expected measurement outcome
\[
  p_{\rho,O}(t)=\tr(\rho e^{iHt}Oe^{-iHt}),
  \qquad G_{\mathrm{loc}}=\max\{1,\max\{1,m-1\}g\}.
\]
Here $\tr$ denotes the matrix trace.
For fixed positive integers $m,\chi,r$ and parameters $0<\eta,\delta'<1$, the derivatives $p_{\rho,O}^{(s)}(0)$ for $s=0,\ldots,r$ can be estimated with additive error at most $\eta$, simultaneously with probability at least $1-\delta'$, using
\[
  N=\widetilde O\!\left(G_{\mathrm{loc}}^{2r}\eta^{-2}\log(2/\delta')\right)
\]
independent measurements of $O$ after evolution under $H$ from input state $\rho$.
The estimator uses a polynomial interpolant of integer degree $d\ge\max\{2,r\}$, which can be chosen with $d=O(1+\log(2+G_{\mathrm{loc}}^r/\eta))$, and requires $O(N+d^2)$ classical arithmetic operations.
All queried evolution times lie in $[c/(G_{\mathrm{loc}}d^2),C/G_{\mathrm{loc}}]$, where $c=1/12$ and $C=1/4$ are universal positive constants.
The hidden constants depend only on $m,\chi,r$, and $\widetilde O$ suppresses factors polynomial in $\log(2+G_{\mathrm{loc}}^r/\eta)$.
In particular, for fixed $g$, the measurement bound is $\widetilde O(\eta^{-2}\log(2/\delta'))$, with constants depending only on $m,g,\chi,r$, and is independent of $n$.

For a two-body Hamiltonian on a graph of maximum degree $\Delta$, if $R\ge1$ bounds the absolute values of all one- and two-qubit Pauli coefficients, one may take $m=2$ and $g=G_{\mathrm{loc}}=C_0(\Delta+1)R$, where $C_0\ge1$ is a universal constant.
\end{theorem}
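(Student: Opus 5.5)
Since $p_{\rho,O}$ is an entire function of $t$, we will estimate its derivatives by sampling on a short interval near $t=0$, fitting a polynomial, and differentiating the fit. This follows the polynomial-fitting approach of Ref.~\cite{stilck2024efficient}, with locality supplying a system-size-independent Taylor tail.

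The first step is a locality-based derivative bound. Expanding $\operatorname{ad}_H^s(O)=\sum_{A_1,\dots,A_s}\operatorname{ad}_{H_{A_s}}\cdots\operatorname{ad}_{H_{A_1}}(O)$, a term is nonzero only if each $A_j$ meets the support of $O$ or of an earlier $A_i$. At step $j$ the accumulated support has size at most $\chi+j(m-1)$. The number of weighted choices for $A_{j}$ is therefore at most $(\chi+(j-1)(m-1))g$, and each commutator contributes a factor $2$. This gives
\[
\|\operatorname{ad}_H^s(O)\|\le 2^s\prod_{j=0}^{s-1}(\chi+j(m-1))g\le C_\chi\, s!\,(2G_{\mathrm{loc}})^s s^{\chi}.
\]
Hence $|p^{(s)}(0)|\le C_\chi s!\,(2G_{\mathrm{loc}})^s s^\chi$, uniformly in $n$. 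Consequently $p$ is analytic in a disk of radius $\asymp 1/G_{\mathrm{loc}}$. On $[0,C/G_{\mathrm{loc}}]$ with $C=1/4$, the degree-$d$ Taylor remainder is at most $C'_\chi d^\chi 2^{-d}$.

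The second step is the estimator. Rescale time as $t=(C/G_{\mathrm{loc}})x$ with $x\in[0,1]$. Sample at the $d+1$ Chebyshev nodes of a subinterval $[x_0,1]$, chosen so that the smallest node is $\gtrsim 1/d^2$; this gives the stated lower bound $c/(G_{\mathrm{loc}}d^2)$. At each node, estimate $p$ by the empirical mean of $N/(d+1)$ measurements of $O$. Hoeffding's inequality with a union bound over nodes gives error $\upsilon$ at every node with probability $1-\delta'$, using $N=O(d\,\upsilon^{-2}\log(d/\delta'))$ shots. Let $\widehat P$ be the degree-$d$ interpolant and output $\widehat P^{(s)}(0)\,(G_{\mathrm{loc}}/C)^s$. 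Computing it is an $O(d^2)$ Lagrange or Newton computation plus $O(N)$ averaging.

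The third step, error propagation, is the main obstacle. The total error is (remainder error)$\,+\,$(shot noise), each passed through the linear map sending node values to $s$-th derivatives at $0$. I would bound this map using Markov brothers' inequality, $\|P^{(s)}\|_{[-1,1]}\le d^{2s}\|P\|_{[-1,1]}$, together with the Lebesgue constant $O(\log d)$ of Chebyshev interpolation. The endpoint $0$ lies just outside $[x_0,1]$. Since $x_0\lesssim 1/d^2$, the Chebyshev polynomial $T_d$ is bounded by $O(1)$ there, so the extrapolation costs only a constant factor. The amplification is therefore $\operatorname{poly}(d)\,G_{\mathrm{loc}}^s$, which is polylogarithmic in $1/\eta$ apart from the $G_{\mathrm{loc}}^r$ factor. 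The delicate part is checking that this holds uniformly, and that the remainder contributes only $\operatorname{poly}(d)\,2^{-d}$ after amplification. Choosing $d=O(\log(2+G_{\mathrm{loc}}^r/\eta))$ makes the remainder term at most $\eta/2$. Choosing $\upsilon=\eta/(2\operatorname{poly}(d)G_{\mathrm{loc}}^r)$ makes the noise term at most $\eta/2$, which yields $N=\widetilde O(G_{\mathrm{loc}}^{2r}\eta^{-2}\log(2/\delta'))$.

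For the final claim, group the Pauli terms by support. Each vertex lies in one single-site term of norm at most $3R$ and in at most $\Delta$ edge terms of norm at most $9R$. Hence $g\le 9(\Delta+1)R$, and $G_{\mathrm{loc}}=g$ because $m=2$ and $R\ge1$.
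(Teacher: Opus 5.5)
Your proposal is correct and follows essentially the same route as the paper: the locality bound $\|\operatorname{ad}_H^s(O)\|\le(2qg)^s s!\binom{\chi+s-1}{s}$, Chebyshev interpolation on an interval offset from $0$ by $\Theta(d^{-2})$ with Lebesgue constant $O(\log d)$, Markov's inequality, and Hoeffding with a union bound over nodes. The one step to write out is the extrapolation to $0$: the paper Taylor-expands $v^{(s)}$ from the endpoint $a=d^{-2}$ and applies Markov to each $v^{(j)}(a)$, giving a factor $e\,d^{2s}$, whereas your Chebyshev-growth fact controls only values of $v$ at $0$. To cover derivatives, use that fact to bound $v$ by $O(1)\|v\|_{[x_0,1]}$ on all of $[0,1]$, and then apply Markov on $[0,1]$.
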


In Theorem~\ref{thm:main_complexity4derivative_estimation}, the degree $d$ is chosen to keep both the Taylor remainder and the sampling error under control. The remainder decreases exponentially with $d$, while extracting a fixed-order derivative from the fitted polynomial amplifies measurement errors by only a polynomial in $d$. A logarithmic degree therefore suffices to make the approximation accurate while preserving the inverse-square dependence of the shot count on precision. The shortest queried time is of order $d^{-2}$ for fixed $g$, which gives the $\widetilde\Theta(1)$ time resolution as stated above.

We defer the proof of Theorem~\ref{thm:main_complexity4derivative_estimation} to Appendix~\ref{sec:defer_derivative_estimation}, where Theorem~\ref{thm:complexity4derivative_estimation_formal} also keeps track of the interaction strength. This dependence matters when the graph degree or the coefficient radius grows. With $G_{\mathrm{loc}}=\max\{1,\max\{1,m-1\}g\}$, the shot count acquires a factor $G_{\mathrm{loc}}^{2r}$, up to logarithmic factors, and we can take $d=O_r(1+\chi+\log(2+G_{\mathrm{loc}}^r/\eta))$. The queried time interval is scaled by $G_{\mathrm{loc}}^{-1}$. For a nearest-neighbor Hamiltonian on a graph of maximum degree $\Delta$ with coefficient radius $R$, one may take $G_{\mathrm{loc}}=C_0(\Delta+1)R$.

We may also apply the estimator to the normalized Hamiltonian $H/R$. Under this normalization, a derivative of order $s$ is divided by $R^s$. Since $R\ge1$, choosing normalized precision $\eta/R^r$ controls the physical derivatives through order $r$ to accuracy $\eta$; a normalized evolution time $\tau$ is implemented by physical time $\tau/R$.

In the main text, this procedure supplies the moment coefficients through third order for uniform-control reconstruction and the transition-probability coefficients through fourth order for the $Z$-basis protocols. Each application estimates the full dynamics. Uniform-control experiments use the full uniform product input; in the local $Z$-basis constructions, only the specified local input region must be varied. When several observables are accessible from the same measured bitstrings, the union bound in Appendix~\ref{sec:defer_derivative_estimation} gives simultaneous guarantees with only logarithmic dependence of the sample count on their number.

\subsection{Constant-size reconstruction and stability}\label{sec:constant_size_stability}
The derivative estimates from Theorem~\ref{thm:main_complexity4derivative_estimation} supply the data for our reconstruction equations. To recover the Hamiltonian coefficients, we must also control how errors in those data propagate through the algebraic solves. Uniqueness alone does not provide this control: a small denominator or a nearly singular system can amplify even a small measurement error. Our protocols reduce reconstruction to local systems with a bounded number of unknowns and bounded algebraic degree, independently of the total number of qubits. They may solve many such systems, but each estimate passes through only a bounded number of reconstruction stages. This prevents successive conditioning losses from accumulating along a chain whose length grows with the system size.

Each nonlinear step solves a polynomial system $P_\theta(x) = P_\theta(t)$, where $\theta$ collects the quantities recovered in earlier stages, $t$ collects the coefficients that remain, and $x$ is a candidate for $t$.  See \Cref{sec:certificate} for details. \Cref{lem:exactcertificate} certifies that $t$ is the only solution, and \Cref{thm:quantitative} converts the residual of any candidate into a parameter error, including errors in the estimates of $\theta$ and of the right-hand side. \Cref{cor:certificate_partial_reconstruction} supplies the lower bound on the certificate under smoothing, even when $\theta$ and $t$ are correlated polynomial functions of the original coefficients. \Cref{prop:coarse_refinement} computes the solution without increasing the soft $\epsilon^{-2}$ dependence of the resource bounds.

Lemma~\ref{lem:finite_stage_stability} combines these nonlinear guarantees with Lemma~\ref{lem:perturbed_least_squares} and the elementary operations and branch tests used elsewhere in the protocols. On the conditioning events supplied by Lemma~\ref{lem:gaussian_polynomial}, a fixed number of stages amplifies the input error by at most a polynomial factor. We can therefore choose the derivative precision to be the desired coefficient precision divided by that factor, also below the stability threshold, and apply Theorem~\ref{thm:main_complexity4derivative_estimation} to obtain the main resource bounds. The individual reconstruction proofs verify the required nonzero polynomials and the bounded stage count.

\begin{lemma}[Finite-stage stability]\label{lem:finite_stage_stability}
Consider a local reconstruction with a fixed number of unknowns and stages. At each stage, the reconstruction may evaluate polynomials, perform arithmetic operations, recover a nonzero magnitude from its square, select among finitely many candidates separated by a positive gap, or solve a linear or polynomial system. Every division has a nonzero denominator, every linear system has a coefficient matrix of full column rank, and every polynomial system has the form treated in \Cref{sec:certificate}, with $\theta$ and the unknowns given by fixed polynomial maps of the original coefficients. All polynomial maps, matrix dimensions, and numbers of candidates are fixed.

Let $\params^*$ be the underlying coefficient vector, with $\|\params^*\|_\infty\le R$ and $R\ge1$. Suppose each denominator magnitude, nonzero radicand, branch separation, and certificate or matrix-minor magnitude used in the reconstruction is bounded below by
\begin{align*}
\frac{|p_j(\params^*)|}{Q_j(R)}
\end{align*}
for a fixed nonzero polynomial $p_j$ and a fixed polynomial $Q_j$ positive on $[1,\infty)$. Assume $|p_j(\params^*)|\ge\rho$ for every such polynomial, where $0<\rho\le1$. There are constants $C,a,b>0$, depending only on the reconstruction, such that input errors at most
\begin{align*}
\eta\le C^{-1}R^{-a}\rho^b
\end{align*}
give output error at most $CR^a\rho^{-b}\eta$. Errors are measured in the maximum norm.
\end{lemma}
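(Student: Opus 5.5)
The proof is an induction over the fixed, finite sequence of stages. At each stage we track two quantities: a bound on the error of every quantity computed so far, and a bound on its magnitude. The inductive claim at stage $k$ is that, provided the input error satisfies $\eta\le C_k^{-1}R^{-a_k}\rho^{b_k}$, every computed quantity has magnitude at most $C_kR^{a_k}$ and error at most $C_kR^{a_k}\rho^{-b_k}\eta$. The base case is the raw derivative data and the known coefficients, all bounded by a polynomial in $R$ since $\|\params^*\|_\infty\le R$. Because the number of stages is fixed, at the end we take $C=\max_k C_k$, $a=\max_k a_k$, $b=\max_k b_k$. The hypothesis $\rho\le1$ and $R\ge1$ let us absorb lower powers into higher ones at each step.

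For the inductive step we check each allowed operation separately.
\begin{itemize}
\item \textbf{Polynomial evaluation and arithmetic without division.} These are locally Lipschitz on bounded sets, with Lipschitz constant polynomial in the magnitude bound $C_kR^{a_k}$. The error is multiplied by a factor $\poly(R)$, provided the error is at most $1$, which the threshold guarantees.
\item \textbf{Division.} Suppose the denominator $q$ satisfies $|q|\ge\rho/Q_j(R)$. If the error in $q$ is at most half of this, then $|\hat q|\ge|q|/2$, so $|1/\hat q-1/q|\le 2|\hat q-q|Q_j(R)^2\rho^{-2}$. This costs a factor $\rho^{-2}\poly(R)$ and requires the threshold condition $\text{error}\le\rho/(2Q_j(R))$.
\item \textbf{Recovering a magnitude from its square.} Write $|\sqrt{\hat s}-\sqrt s|\le|\hat s-s|/\sqrt s$, using the lower bound on the nonzero radicand.
\item \textbf{Branch selection.} If every computed candidate lies within less than half the separation gap $\rho/Q_j(R)$ of its true value, the correct candidate is selected exactly. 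Thereafter the error is that of the chosen candidate, so the only cost is a new threshold condition.
\item \textbf{Linear solves.} Apply Lemma~\ref{lem:perturbed_least_squares}. A minor bounded below by $\rho/Q_j(R)$, together with entries bounded by $\poly(R)$, gives $\sigma_{\min}\ge\rho^{c}/\poly(R)$, because the smallest singular value is at least $|\det|$ divided by the product of the other singular values. The lemma's hypothesis $\|\widehat M-M\|\le\kappa/2$ becomes another threshold condition, and the output error gains a factor $\rho^{-c}\poly(R)$.
\item \textbf{Polynomial systems.} Invoke \Cref{thm:quantitative}, with the certificate lower bound as hypothesis. It turns residual and $\theta$-errors into a parameter error amplified by a polynomial in $R$ and the inverse certificate magnitude. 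Its validity radius is again a threshold of the same form.
\end{itemize}

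Every threshold and every amplification factor collected above has the form $\poly(R)^{\pm1}\rho^{\pm O(1)}$. Since there are finitely many stages, the minimum of the thresholds and the product of the amplifications retain this form, which yields constants $C,a,b$ depending only on the reconstruction.

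The main obstacle is the polynomial-system stage. \Cref{thm:quantitative} must be applied with $\theta$ only approximately known, so we need to check two things. First, its error bound must be polynomial in $R$ and in the certificate magnitude. Second, the estimated $\theta$ must remain in the regime where the certificate stays bounded below; we plan to ensure this by continuity of the certificate polynomial, shrinking the threshold further if necessary. We also need that the computed solution stays bounded, so that the induction's magnitude bound propagates. This follows because the solution is within small error of the true $t$, whose coordinates are fixed polynomials in $\params^*$ and hence bounded by $\poly(R)$.
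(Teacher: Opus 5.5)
Your proposal is correct and follows essentially the same stage-by-stage argument as the paper: half-denominator thresholds for division, the identity $|\sqrt u-\sqrt v|\le|u-v|/\sqrt v$ for square roots, half-gap thresholds for branch tests, $\sigma_{\min}(M)\ge|\det M_0|/\|M_0\|_2^{d-1}$ combined with Lemma~\ref{lem:perturbed_least_squares} for linear solves, and \Cref{thm:quantitative} for polynomial solves, all composed over the fixed number of stages (the paper additionally routes the polynomial solves through the coarse search and Gauss--Newton refinement of \Cref{prop:coarse_refinement}). Two small remarks: your worry about keeping the certificate bounded below at the estimated $\theta$ is unnecessary, because \Cref{thm:quantitative} only requires $|\Delta(\theta^*,t^*)|\ge\Delta_0$ at the true values; and your inductive magnitude bound should carry a factor $\rho^{-b_k}$, since an exact quotient is in general only bounded by $\poly(R)\rho^{-1}$, though this changes nothing else in the argument.
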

\begin{proof}
We bound the error at each stage. On $\|\params^*\|_\infty\le R$, every fixed polynomial in the original coefficients and its derivatives is bounded by a polynomial in $R$. The lower bounds in the statement then give polynomial bounds in $R$ and $\rho^{-1}$ for the exact intermediate quantities: division uses a denominator of magnitude at least $\rho/Q_j(R)$, and the other operations preserve such bounds. For sufficiently small input error, the estimated quantities stay within distance one of their exact values, so polynomial evaluations and products also have polynomial error bounds.

For division, choose the propagated error small enough that the estimated denominator differs from the true one by less than half its magnitude. For a square root, the same choice keeps the estimated radicand positive, and
\begin{align*}
|\sqrt u-\sqrt v|=\frac{|u-v|}{\sqrt u+\sqrt v},\qquad u,v>0,
\end{align*}
bounds the error. A branch test makes the same choice as in the noiseless reconstruction once the errors in its comparisons are less than half the branch separation. Each of these steps therefore has a polynomial error amplification factor and an inverse-polynomial error threshold in $R$ and $\rho^{-1}$.

For a linear solve with matrix $M$ and $d$ unknowns, let $M_0$ be a nonsingular $d\times d$ submatrix obtained by selecting rows. Then
\begin{align*}
\sigma_{\min}(M)\ge\frac{|\det M_0|}{\|M_0\|_2^{d-1}},
\end{align*}
where $\|M_0\|_2$ is the operator norm. Since $d$ is fixed, Lemma~\ref{lem:perturbed_least_squares} gives the required polynomial error bound.

For a polynomial solve, apply \Cref{thm:quantitative} with $\Delta_0=\rho/Q_j(R)$ and with $\bar R\ge1$ a fixed polynomial in $R$ bounding the true parameters.
Let $\kappa\ge1$ be the resulting bound on the amplification of residual error into parameter error, after converting to Euclidean norms, so that for the true $\theta,t$
\begin{align*}
\|x-t\|_2\le\kappa\|P_\theta(x)-P_\theta(t)\|_2
\qquad\text{for }\|x\|_\infty\le2\bar R.
\end{align*}
Let $L\ge1$ bound the Jacobian of $P_\theta$, its Lipschitz constant in $x$, and the Lipschitz constants of $P_\theta$ and its Jacobian in $\theta$, on the box where all coordinates have magnitude at most $2\bar R$.
Define the refinement radius
\begin{align*}
r=\min\{1/2,\,1/(16\kappa L)\}.
\end{align*}
Both $\kappa$ and $L$ are polynomially bounded in $R$ and $\rho^{-1}$, and hence so is $r^{-1}$.
These are the error-amplification bound and refinement radius used in \Cref{sec:certificate_refinement}.
If the errors in the residual map and its Jacobian are at most a positive bound $0<\eta'\le r/(16\kappa)$, the coarse search and refinement of \Cref{prop:coarse_refinement}, with target precision $8\kappa\eta'$, returns a solution with error at most $8\kappa\eta'$. 
When the residual map is exact, the exact solve has zero reconstruction error by \Cref{lem:exactcertificate}. 
Finite-precision computation uses a positive accuracy budget as described in \Cref{sec:certificate_refinement}. 
Here $\eta'$ is at most a polynomial in $R$ times the errors in the previously recovered quantities and the right-hand side. 
Since the number of stages is fixed, composing these estimates gives the stated bounds.
\end{proof}

A local reconstruction may leave several allowed sign choices. We apply Lemma~\ref{lem:gaussian_polynomial} to the conditioning polynomials for every choice and use a union bound on their failure probabilities. Lemma~\ref{lem:finite_stage_stability} then applies whichever representative the reconstruction selects. These sign changes are invertible linear transformations of the original coefficients, so they cannot turn a nonzero polynomial into the zero polynomial.

\section{Learning NN Hamiltonians on graphs with uniform control and all-zero input}\label{sec:general_graph}
In this section, we give the detailed protocol for reconstructing generic NN Hamiltonians on every known simple graph $G=(V,E)$ with maximum degree $\Delta$.
We use the orientation convention $J_{vu}=J_{uv}^\top$ and decompose each coupling matrix as
\begin{align*}
J_{uv}=T_{uv}+[\vec d_{uv}]_\times+\lambda_{uv}I,
\quad T_{uv}=T_{uv}^\top,\quad \tr(T_{uv})=0,\quad
T_{vu}=T_{uv},\quad \vec d_{vu}=-\vec d_{uv},
\end{align*}
where $T_{uv}$ represents the traceless symmetric part, $[\vec d_{uv}]_\times$ represents the skew-symmetric part in the cross-product matrix form defined by Eq.~\eqref{eq:cross_matrix}, and $\lambda_{uv}$ represents the trace divided by three.
For a site set $S\subseteq[n]$ and a setting (input-output basis pair) $(\hat a, \hat b)$, as defined in Eq.~\eqref{eq:def_moment}, we let $N_S^{\hat b}=\prod_{v\in S}\hat b\cdot\vec\sigma_v$ and $m_S^{\hat a,\hat b}(t)=\bra{0^n_{\hat a}}e^{iHt}N_S^{\hat b}e^{-iHt}\ket{0^n_{\hat a}}$ be the corresponding moment of the measured $\pm1$ outcomes.
Expectations without a subscript are in $\ket\psi=\ket{0_{\hat a}}^{\otimes n}$. 
Its Taylor expansion is
\begin{align}\label{eq:def_L_Q}
m_S^{\hat a,\hat b}(t)=\mu^{|S|}+tL_S+t^2Q_S+t^3\mathcal{W}_S+O(t^4),
\end{align}
where $L_S$, $Q_S$, and $\mathcal{W}_S$ are the first-, second-, and third-order Taylor coefficients, given by
\begin{align*}
L_S=i\langle[H,N_S^{\hat b}]\rangle,\qquad
Q_S=-\tfrac12\langle[H,[H,N_S^{\hat b}]]\rangle,\qquad
\mathcal{W}_S=-\tfrac i6\langle[H,[H,[H,N_S^{\hat b}]]]\rangle.
\end{align*}
We write $\mu=\hat a\cdot\hat b$, $\hat k=\hat b\times\hat a$, and $\hat l=\hat b-\mu\hat a$.
One global bitstring gives a bounded sample of every required local moment.
In Figures~\ref{fig:overview_uniform}, \ref{fig:graph_first_order}, and \ref{fig:graph_solve_curl}, $A_e=[\vec d_e]_\times$ denotes the skew-symmetric part of an edge coupling, and $\Gamma$ denotes a skew-symmetric matrix added to every edge of a cycle, with its sign reversed on edges oriented against the direction of traversal.

The protocol has three rounds.
First-order Taylor coefficients determine the single-site fields $\vec h_v$, the symmetric traceless couplings $T_{uv}$, and the divergences $\vec\delta_v$.
We then recover the skew-symmetric edge vectors $\vec d_{uv}$ using connected three-site second-order responses and the special solvers for complete connected components.
Finally, connected four-site third-order responses, supplemented by single-vertex second-order responses at vertices of degree at most two, determine the edge traces $\lambda_{uv}$.

\subsection{First-order reconstruction and its complete kernel}\label{sec:general_first}\label{sec:general_curl}
The first round reconstructs the single-site fields $\vec h_v$, the symmetric traceless couplings $T_{uv}$, and the divergences $\vec\delta_v$ from first-order Taylor coefficients.
The following identities give the required linear systems.
\begin{proposition}\label{lem:first_graph}
The first-order Taylor coefficients determine every $T_{uv}$, every $\vec h_v$, and the divergence $\vec\delta_v=\sum_{w\sim v}\vec d_{vw}$. Specifically,
\begin{equation}\label{eq:first_order_compact}
L_v=-2\hat k\cdot\left(\vec h_v+\sum_{w\sim v}J_{vw}\hat a\right),
\qquad L_{uv}-\mu(L_u+L_v)=-4\hat k^\top T_{uv}\hat l.
\end{equation}
When every queried $L$ has error at most $\eta$, fixed settings $\{(\hat a_i,\hat b_i)\}_i$ give linear systems that reconstruct each symmetric traceless coupling $T_{uv}$ with error at most $C\eta$, and each single-site field $\vec h_v$ and divergence $\vec\delta_v$ with error at most $C(\Delta+1)\eta$.
\end{proposition}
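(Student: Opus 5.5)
The plan is to first compute the two first-order identities in \eqref{eq:first_order_compact}, and then choose a fixed finite family of orthogonal settings so that the resulting linear systems are invertible with constant-size inverses.

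\textbf{Single-site coefficient.} Since $L_S=i\langle[H,N_S^{\hat b}]\rangle$ and $\ket\psi$ is a product state, only terms of $H$ overlapping $S$ contribute. For $S=\{v\}$ these are $\vec h_v\cdot\vec\sigma_v$ and the edge terms $\vec\sigma_v^\top J_{vw}\vec\sigma_w$ with $w\sim v$. I will use two facts:
\begin{itemize}
\item $i[\vec x\cdot\vec\sigma,\hat b\cdot\vec\sigma]=-2(\vec x\times\hat b)\cdot\vec\sigma$;
\item $\langle\vec\sigma\rangle=\hat a$ on each site.
\end{itemize}
The field term then gives $-2(\vec h_v\times\hat b)\cdot\hat a=-2\hat k\cdot\vec h_v$. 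For the edge term, the site-$w$ expectation factorizes, so it acts as an effective field $J_{vw}\hat a$ on site $v$. This gives the first identity.

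\textbf{Pair coefficient.} For $S=\{u,v\}$, $[H,N_u N_v]=[H,N_u]N_v+N_u[H,N_v]$. The single-site and off-$S$ edge contributions factorize as $L_u\mu+\mu L_v$, up to corrections from the edge $uv$ itself. For the $uv$ edge term, I will expand $[\sigma^\alpha_u\sigma^\beta_v,N_uN_v]$ using $\sigma^\alpha\sigma^\gamma=\delta_{\alpha\gamma}+i\epsilon_{\alpha\gamma\delta}\sigma^\delta$. Its expectation has the form $J_{\alpha\beta}$ contracted with bilinears in $\hat a,\hat b$. The key step is to verify that after subtracting $\mu(L_u+L_v)$, in which each $L_u$ contains the edge contribution $-2\hat k^\top J_{uv}\hat a$, the remainder is $-2\hat k^\top(J_{uv}+J_{uv}^\top)\hat l=-4\hat k^\top T_{uv}\hat l$. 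This requires showing that the antisymmetric part and the trace part cancel. For the antisymmetric part, $[\vec d]_\times$ acts on the product state as $-i\vec d\cdot(\vec\sigma_u-\vec\sigma_v)$, which is 1-local and symmetric in $u,v$ up to sign, so it is absorbed by the $\mu(L_u+L_v)$ subtraction. For the trace part, $\lambda\,\vec\sigma_u\cdot\vec\sigma_v$ commutes with $N_uN_v$. This computation is the main obstacle, since it is sign-sensitive bookkeeping. I would check it concretely in the frame $\hat a=e_3$, $\hat b=e_1$, $\hat k=e_2$, and then extend by $SO(3)$ covariance: both sides transform covariantly under simultaneous rotation of $\hat a,\hat b$, $J$, and $H$.

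\textbf{Inversion.} With orthogonal $\hat a\perp\hat b$, we have $\mu=0$ and $\hat l=\hat b$. The functionals $\hat k^\top T\hat b$ then span the symmetric bilinear forms as $(\hat k,\hat b)$ ranges over ordered orthonormal pairs, i.e.\ $(e_i,e_j)$ and $(e_i\pm e_j)/\sqrt2$ combinations. A traceless symmetric $T$ has 5 parameters; I will use a fixed set such as $\{(e_1,e_2),(e_2,e_3),(e_3,e_1)\}$ together with the rotated-by-$45^\circ$ frames, which give diagonal differences like $T_{11}-T_{22}$, with tracelessness closing the system. This yields $T_{uv}$ with error $C\eta$, by Lemma~\ref{lem:perturbed_least_squares} with a setting-independent $\kappa$.

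Next, I rewrite $L_v$ using $J_{vw}\hat a=T_{vw}\hat a+\vec d_{vw}\times\hat a+\lambda_{vw}\hat a$. The $\lambda$ term vanishes because $\hat k\perp\hat a$, and $\hat k\cdot(\vec d\times\hat a)=\vec d\cdot(\hat a\times\hat k)=\vec d\cdot\hat b$. After subtracting the known $T$ contributions, I obtain
\[
-2\hat k\cdot\vec h_v-2\hat b\cdot\vec\delta_v .
\]
This expression is linear in the six unknowns $(\vec h_v,\vec\delta_v)$. Taking the six ordered coordinate pairs $(\hat a,\hat b)$ gives an invertible system: swapping $\hat a$ and $\hat b$ flips the sign of $\hat k$, which separates $\vec h_v$ from $\vec\delta_v$ component-wise. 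The subtraction of up to $\Delta$ estimated $T_{vw}$'s contributes error $C\Delta\eta$, giving the stated $C(\Delta+1)\eta$ bound.
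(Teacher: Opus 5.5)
Your overall route matches the paper's. Both identities come from $[\vec x\cdot\vec\sigma,\vec y\cdot\vec\sigma]=2i(\vec x\times\vec y)\cdot\vec\sigma$ together with product-state factorization, and reconstruction inverts a fixed set of orthogonal ($\mu=0$) settings. Your $T_{uv}$ settings (three coordinate pairs plus two $45^\circ$-rotated pairs, closed by tracelessness) are a valid alternative to the paper's five.

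The step that fails as written is your justification for the $(\vec h_v,\vec\delta_v)$ inversion. Swapping $\hat a\leftrightarrow\hat b$ flips $\hat k$, but it also changes $\hat b$, so it does not separate the two vectors componentwise. Take $(i,j,l)$ a cyclic permutation of $(x,y,z)$ and the residual $r_v(\hat a,\hat b)=\hat k\cdot\vec h_v+\hat b\cdot\vec\delta_v$. At $(\hat e_j,\hat e_l)$ it equals $-h_{v,i}+\delta_{v,l}$, but at $(\hat e_l,\hat e_j)$ it equals $h_{v,i}+\delta_{v,j}$, which involves a different component of $\vec\delta_v$.

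Your six settings are nevertheless invertible, for a different reason. The two settings sharing $\hat b=\hat e_l$ give $r_v(\hat e_i,\hat e_l)-r_v(\hat e_j,\hat e_l)=h_{v,i}+h_{v,j}$. The three pairwise sums then fix $\vec h_v$, and hence $\vec\delta_v$, with $O(1)$ inverse entries, so the $C(\Delta+1)\eta$ bound still holds. The paper avoids this detour by pairing $(\hat e_j,\hat e_l)$ with $(-\hat e_j,\hat e_l)$. That flips $\hat k$ at fixed $\hat b$, so $h_{v,i}$ and $\delta_{v,l}$ come out directly as a half-difference and a half-sum.

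Your frame check for the pair identity also needs adjusting. With $\hat a=\hat e_3$ and $\hat b=\hat e_1$ one has $\hat k=\hat b\times\hat a=-\hat e_2$, not $\hat e_2$. More importantly, rotations preserve $\mu=\hat a\cdot\hat b$, so covariance from a $\mu=0$ frame only proves the orthogonal case. That is enough for the reconstruction, but the proposition states the identity for all $\mu$.

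No frame is needed. The Pauli expansion you describe gives, for any $\mu$,
$i\langle[\vec\sigma_u^\top J_{uv}\vec\sigma_v,N^{\hat b}_{\{u,v\}}]\rangle=-2\hat k^\top(J_{uv}+J_{uv}^\top)\hat b$.
The $uv$-edge parts of $L_u$ and $L_v$ are $-2\hat k^\top J_{uv}\hat a$ and $-2\hat k^\top J_{uv}^\top\hat a$. Subtracting $\mu$ times these yields $-2\hat k^\top(J_{uv}+J_{uv}^\top)\hat l=-4\hat k^\top T_{uv}\hat l$. The skew part cancels because only $J_{uv}+J_{uv}^\top$ appears, and the trace drops out because $\hat k\cdot\hat l=0$.
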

\begin{proof}
Eq.~\eqref{eq:first_order_compact} can be derived from the definition in Eq.~\eqref{eq:def_L_Q} using $[\vec x\cdot\vec\sigma,\vec y\cdot\vec\sigma]=2i(\vec x\times\vec y)\cdot\vec\sigma$ with $\vec\sigma=(X,Y,Z)^\top$ and the factorization property of product states. 
For $L_{uv}$ with $u\sim v$, local terms for each site give $\mu(L_u+L_v)$, and the remaining internal-edge correction is $-4\hat k^\top T_{uv}\hat l$.

We then consider the five orthogonal settings
\begin{align*}
(\hat e_x,\hat e_y),\ (\hat e_x,(\hat e_y+\hat e_z)/\sqrt2),\
(\hat e_y,\hat e_x),\ (\hat e_y,(\hat e_x+\hat e_z)/\sqrt2),\ (\hat e_z,\hat e_x).
\end{align*}
If $\mathtt z_1,\ldots,\mathtt z_5$ are the edge coefficients in that order, then
\begin{align*}
T_{yz}=\mathtt z_1/4,\quad 
T_{xz}=-\mathtt z_3/4,\quad 
T_{xy}=\mathtt z_5/4,\quad
T_{xx}=(\mathtt z_2+2\mathtt z_4)/6,\quad 
T_{yy}=-(2\mathtt z_2+\mathtt z_4)/6,\quad
T_{zz}=(\mathtt z_2-\mathtt z_4)/6.
\end{align*}
Let $e_j=\widehat{\mathtt z}_j-\mathtt z_j$ for $j=1,\ldots,5$.
Since these settings are orthogonal, $\mu=0$ and each $\mathtt z_j$ is a queried $L_{uv}$, so $|e_j|\le\eta$.
The reconstruction formulas therefore give:
\begin{align*}
\|\widehat T-T\|_F^2=(e_2^2+e_4^2+e_2e_4)/6+(e_1^2+e_3^2+e_5^2)/8\le7\eta^2/8.
\end{align*}
At an orthogonal setting, subtract this known symmetric part and consider the residual
\begin{align*}
r_v(\hat a,\hat b)=-L_v/2-\sum_{w\sim v}\hat k^\top T_{vw}\hat a=\hat k\cdot\vec h_v+\hat b\cdot\vec\delta_v.
\end{align*}
For each cyclic triple $(i,j,l)$ of $(x,y,z)$, use $(\hat e_j,\hat e_l)$ and $(-\hat e_j,\hat e_l)$. 
Their residuals $r^+,r^-$ satisfy
\begin{align*}
h_{v,i}=(r^--r^+)/2,\qquad \delta_{v,l}=(r^-+r^+)/2.
\end{align*}
Each residual error is at most $(1/2+\deg v)\eta$ given the error on each estimate on $L$ is $\eta$, proving the claim.
\end{proof}

We now describe which Hamiltonian parameters remain undetermined after the first round. 
First, measuring larger sets of sites does not give additional first-order information. For any $|S|\ge2$, the same calculation gives
\begin{align*}
L_S=\mu^{|S|-1}\sum_{v\in S}L_v+\mu^{|S|-2}\sum_{\{u,v\}\in E(S)}\bigl(L_{uv}-\mu(L_u+L_v)\bigr).
\end{align*}
Here $E(S)$ is the set of edges with both endpoints in $S$. 
Thus every first-order moment is determined by the single-site and edge moments at the same setting.

By Proposition~\ref{lem:first_graph}, two Hamiltonians with identical first-order data must have the same fields $\vec h_v$, symmetric traceless couplings $T_{uv}$, and divergences $\vec\delta_v$. 
Conversely, keeping these quantities fixed leaves Eq.~\eqref{eq:first_order_compact} unchanged: the skew-symmetric couplings enter the vertex expression only through $\sum_{w\sim v}\vec d_{vw}=\vec\delta_v$, and the trace terms vanish because $\hat k\cdot\hat a=0$. 
Together with the identity above, this shows that equality of these three collections is both necessary and sufficient for equality of all first-order data.
To express the remaining freedom in $\vec d$, we choose one orientation for each edge as in Eq.~\eqref{eq:graph-ham-general}. 
Define the incidence matrix $B_G\in\mathbb R^{|V|\times|E|}$ to have one row per vertex and one column per oriented edge, with
\begin{align*}
(B_G)_{w,e}=\begin{cases}
+1,& e=(u\to v)\text{ and }w=u,\\
-1,& e=(u\to v)\text{ and }w=v,\\
0,& \text{otherwise}.
\end{cases}
\end{align*}
Stack the edge vectors $\vec d_e$ in the column order of $B_G$ into $\vec d\in\mathbb R^{3|E|}$, and stack the vertex divergences into $\vec\delta\in\mathbb R^{3|V|}$. 
Since $\vec d_{vu}=-\vec d_{uv}$, the divergence equations can be written as
\begin{align*}
\vec\delta=(B_G\otimes I_3)\vec d,
\end{align*}
where $\otimes$ denotes the Kronecker product and $I_3$ is the $3\times3$ identity matrix. 
Write $\Delta\vec h$, $\Delta T$, $\Delta\vec d$, and $\Delta\lambda$ for the differences between the corresponding parameter collections of two Hamiltonians. 
The two Hamiltonians therefore have identical first-order data if and only if
\begin{align*}
\Delta\vec h=0,\qquad \Delta T=0,\qquad
(B_G\otimes I_3)\Delta\vec d=0,\qquad \Delta\lambda\ \hbox{arbitrary}.
\end{align*}
The above equation describes the first-order kernel, namely the space of parameter changes invisible to all first-order measurements.
The condition on $\Delta\vec d$ means that the changes in the axial vectors sum to zero at every vertex. 
For example, adding the same vector around a cycle, with a minus sign on any edge whose chosen orientation opposes the traversal, leaves every divergence unchanged.  To count their dimension, let $c$ be the number of connected components of $G$. 
The equation $B_G^\top x=0$ says that $x$ has equal values at the endpoints of every edge, so $x$ is constant on each connected component and $\operatorname{rank}B_G=|V|-c$. 
Hence the ambiguity has dimension $3(|E|-|V|+c)$, or three parameters per independent cycle. 
Each edge also has an arbitrary trace parameter $\Delta\lambda_e$, giving a total first-order kernel dimension of $|E|+3(|E|-|V|+c)$.
For NN Hamiltonians on any graph $G$, the following two rounds remove these first-order ambiguities.

\begin{figure}[t!]
    \centering
\includegraphics[width=0.99\textwidth]{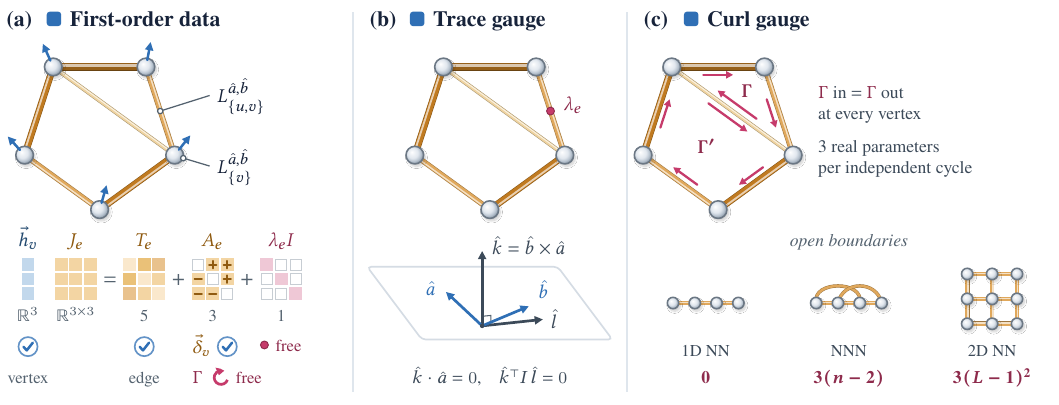}
\caption{\textbf{First-order reconstruction under uniform control.}
\textbf{(a)}~Single-site and two-site responses determine the local fields $\vec h_v$ and the symmetric traceless parts $T_e$ of the couplings. The skew-symmetric parts $A_e$ enter only through their divergence at each vertex.
\textbf{(b)}~The trace terms $\lambda_e I$ are invisible at first order, leaving one undetermined parameter per edge.
\textbf{(c)}~Adding a skew-symmetric matrix around a cycle, with signs following the edge orientations, preserves these divergences: the incoming and outgoing contributions cancel at every vertex. The illustrated graph has two independent cycles, supporting the circulations $\Gamma$ and $\Gamma'$.
Together, the trace and circulation ambiguities account for $|E|+3(|E|-|V|+c)$ undetermined parameters on a graph with $c$ connected components. For the edge couplings on the plot, darker colors indicating stronger interactions, and blue arrows show the local fields.}
\label{fig:graph_first_order}
\end{figure}

\subsection{Removing curl gauges with second-order data}\label{sec:general_second}\label{sec:isolated_triangle}
After the first round, the fields $\vec h_v$, symmetric traceless couplings $T_{uv}$, and divergences $\vec\delta_v$ are known, but the individual skew-symmetric edge vectors $\vec d_{uv}$ are constrained only through their sums at the vertices.
We now show how second-order data remove this ambiguity for generic Hamiltonians on any graph.
We first derive a connected three-site second-order moment that gives linear equations in the skew-symmetric edge vectors. 
We then use these equations to reconstruct the edges of induced paths, triangles, and four-vertex cliques, and show that these local systems cover every component except isolated triangles. 
A separate affine system then handles those components.

\paragraph{Connected path second-order moment calculations.}
For three distinct sites $u,v,w$, take their measured $\pm1$ outcomes at the same setting $(\hat a,\hat b)$ and evolution time $t$.
In terms of the moments defined in Eq.~\eqref{eq:def_L_Q}, we consider the following correlation and its quadratic Taylor coefficient
\begin{align*}
C_{uvw}^{\hat a,\hat b}(t)&=m_{uvw}^{\hat a,\hat b}(t)-m_{uv}^{\hat a,\hat b}(t)m_w^{\hat a,\hat b}(t)-m_{uw}^{\hat a,\hat b}(t)m_v^{\hat a,\hat b}(t)-m_{vw}^{\hat a,\hat b}(t)m_u^{\hat a,\hat b}(t)+2m_u^{\hat a,\hat b}(t)m_v^{\hat a,\hat b}(t)m_w^{\hat a,\hat b}(t),\\
R_{uvw}&=[t^2]C_{uvw}^{\hat a,\hat b}(t).
\end{align*}
The connected correlation subtracts products of lower-order moments, so it vanishes whenever the outcomes split into two independent groups.
Substituting the Taylor expansions from Eq.~\eqref{eq:def_L_Q} gives the following expression for $R_{uvw}$ in terms of the measured first- and second-order coefficients:
\begin{align}\label{eq:connected_residual}
R_{uvw}={}&Q_{uvw}-\mu(Q_{uv}+Q_{uw}+Q_{vw})+\mu^2(Q_u+Q_v+Q_w)\notag\\
&-L_{uv}L_w-L_{uw}L_v-L_{vw}L_u+
2\mu(L_uL_v+L_uL_w+L_vL_w),
\end{align}
which has the following property.
\begin{lemma}[Connected locality and the path identity]\label{lem:Quvw_path}\label{lem:Quvw_triangle}
Without loss of generality, we compute the contribution of the ordered path $u$-$v$-$w$.
Let $s=\sqrt{1-\mu^2}$ and assume $s>0$.
We express $M=J_{uv}$ and $N=J_{vw}$ in the right-handed frame $\hat e_1=\hat l/s$, $\hat e_2=-\hat k/s$, and $\hat e_3=\hat a$, and write
\begin{align*}
\alpha=M_{11}-M_{22}+i(M_{12}+M_{21}),\quad
\beta=N_{11}-N_{22}+i(N_{12}+N_{21}),\quad
p=M_{13}+iM_{23},\quad r=N_{31}+iN_{32}.
\end{align*}
The contribution of these two edges is
\begin{equation}\label{eq:path_B_compact}
\mathcal B(M,N)=2s^3\Re(\alpha r+p\beta)-4\mu s^2\Re(\overline\alpha\beta).
\end{equation}
For $s=0$, this contribution is zero.
If $u$-$v$-$w$ is an induced path, meaning that $u$ and $w$ are not adjacent, then $R_{uvw}=\mathcal B(J_{uv},J_{vw})$.
If the three vertices form a triangle, $R_{uvw}$ is the sum of the three cyclic path contributions.
\end{lemma}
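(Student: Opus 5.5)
The proof has two parts: a locality reduction showing that only pairs of distinct edges joining all of $\{u,v,w\}$ contribute to $R_{uvw}$, followed by an explicit computation of the bilinear contribution of one ordered path $u$-$v$-$w$.

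\emph{Locality.} Write $H=\sum_j H_j$ as a sum of single-site field terms and edge terms. The second-order coefficient of the joint cumulant of three commuting $\pm1$ observables measured on a product state is a quadratic form in the $H_j$. Fix a pair of terms $(H_j,H_k)$ and suppose some site $x\in\{u,v,w\}$ is separated from the other two by a bipartition that neither term crosses. Then the time evolution restricted to these two terms factorizes across the cut. The initial state is a product state, so the outcome at $x$ is independent of the other two, and the cumulant vanishes identically in those couplings. By multilinearity of the $t^2$ coefficient in the pair of terms (polarization), that pair contributes nothing to $R_{uvw}$.

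This kills the following pairs: any pair involving a field term; any pair involving an edge with no endpoint in the triple; and any pair using a single edge twice, since one edge joins at most two of the three sites. The same argument kills pairs with one edge outside the triple together with an internal edge, because they leave a cut. The surviving contributions are therefore pairs of distinct edges inside the triple that connect all three sites. For an induced path these are exactly $J_{uv},J_{vw}$. For a triangle they are the three pairs sharing a vertex, which gives the sum of the three cyclic path contributions. Equation~\eqref{eq:connected_residual} is then just the Taylor expansion of the cumulant, so $R_{uvw}$ equals this bilinear form $\mathcal B(J_{uv},J_{vw})$ computed on three qubits.

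\emph{Computing $\mathcal B$.} Work in the frame $\hat e_1=\hat l/s$, $\hat e_2=-\hat k/s$, $\hat e_3=\hat a$. In this frame the input state is $\ket{0}^{\otimes3}$ along $\hat e_3$ and $\hat b=\mu\hat e_3+s\hat e_1$. Introduce ladder operators $\sigma^\pm=(\sigma_1\pm i\sigma_2)/2$ relative to $\hat e_3$. The cross term in $-\tfrac12\langle[H,[H,N]]\rangle$ that is bilinear in $M$ and $N$ is
\[
-\tfrac12\langle\{H_M,H_N\}N-2H_MNH_N\text{-type terms}\rangle,
\]
restricted to the connected part. Only components that change the $\hat e_3$-configuration in a way that flips all three qubits across the two actions survive.

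Expanding $\vec\sigma_u^\top M\vec\sigma_v$ in ladder operators gives three relevant pieces: the coefficient $\alpha$ multiplies the double-raising part $\sigma_u^-\sigma_v^-$ (up to conjugation conventions), $p$ multiplies $\sigma_u^{\pm}Z_v$, and similarly $\beta$ and $r$ for $N$. Measuring $\hat b\cdot\vec\sigma=\mu Z+s(\sigma^++\sigma^-)$ on each site produces terms with weights $s^3$ (all three sites measured transversally) and $\mu s^2$ (one site measured longitudinally). This yields the two displayed terms $2s^3\Re(\alpha r+p\beta)$ and $-4\mu s^2\Re(\bar\alpha\beta)$.

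When $s=0$ we have $\hat b=\pm\hat a$. The state is then an eigenstate of the measured operator at $t=0$, and every surviving term carries at least one transverse factor $s$, so the contribution is zero.

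The main obstacle is the bookkeeping in the ladder expansion. One has to track which Pauli components create which flip patterns, and make sure the cumulant subtraction removes exactly the disconnected contributions, including the $\mu$-weighted subtractions in Eq.~\eqref{eq:connected_residual}. This should be checked against the three-qubit formula, either symbolically or by testing a few explicit $M,N$, to confirm the constants $2$ and $-4$.
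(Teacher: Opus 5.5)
Your locality half is essentially the paper's argument and is fine. You polarize with $H=xK_1+yK_2$. Unless $K_1,K_2$ are two distinct edges inside $\{u,v,w\}$ sharing a vertex, some cut separating one measured site from the other two is crossed by neither term. That outcome is then independent, and the cumulant vanishes for all $x,y,t$.

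The gap is the second half. Formula \eqref{eq:path_B_compact} is asserted (``this yields the two displayed terms'') rather than derived. The mechanism you describe, sorting terms by how many sites are measured transversally, does not by itself produce it. With the raw observables $\hat b\cdot\vec\sigma_j=\mu Z_j+sX_j$ you must cancel many $\mu$-weighted terms against the subtractions in Eq.~\eqref{eq:connected_residual}. For instance, the $X_uZ_vX_w$ component of the triple product contributes $2\mu s^2\Re(\overline p r)$. That term is absent from the final answer and is removed only by $-\mu Q_{uw}$. This is exactly the obstacle you flag, and the proposal does not resolve it.

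The missing idea is shift invariance of the joint cumulant. Replace each outcome by $O_j=sX_j+\mu(Z_j-I)$, so that $O_j\ket{0}=s\ket{1}$ and $O_j\ket{1}=s\ket{0}-2\mu\ket{1}$. Every nonempty product of the $O_j$ then has zero mean at $t=0$. The only subtractions that survive at order $t^2$ are products of two first-order coefficients, and
\[
[xyt^2]\langle O_uO_vO_w\rangle_t=2\Re\langle H_M\psi|O_uO_vO_w|H_N\psi\rangle-\Re\langle\psi|O_uO_vO_w\{H_M,H_N\}|\psi\rangle .
\]
The second term vanishes because $\{H_M,H_N\}$ acts trivially on $v$ and $\langle0|O_v|0\rangle=0$.

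Expanding $H_M\ket{\psi}$ and $H_N\ket{\psi}$ on three qubits, the sandwich term gives $2s^3\Re(\overline p\beta+\overline\alpha r)-4\mu s^2\Re(\overline\alpha\beta)$, with conjugates. The unconjugated combination $\Re(\alpha r+p\beta)$ in the lemma appears only after subtracting the disconnected products, which are nonzero here. The terms $[t]\langle O_uO_v\rangle_t\,[t]\langle O_w\rangle_t$ and $[t]\langle O_vO_w\rangle_t\,[t]\langle O_u\rangle_t$ contribute $-4s^3(\Im\alpha\,\Im r+\Im p\,\Im\beta)$. The identity $\Re(\overline zw)-2\Im z\,\Im w=\Re(zw)$ then finishes the computation. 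So your sketch leaves unchecked not only the constants $2$ and $-4$ but the conjugation structure itself.

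The same centering also settles $s=0$ cleanly; your argument there leans on the $s>0$ frame and formula. At $s=0$, $O_j=\mu(Z_j-I)$ annihilates $\ket{0}$, so every term vanishes in any frame with $\hat e_3=\hat a$.
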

\begin{proof}
We first show why only pairs of distinct edge interactions spanning $\{u,v,w\}$ can contribute to $R_{uvw}$.
Each $Q$ coefficient is quadratic in the Hamiltonian, and each $L$ coefficient is linear. 
Thus Eq.~\eqref{eq:connected_residual} expresses $R_{uvw}$ as a sum of contributions involving two local Hamiltonian terms, possibly the square of the same term.
To isolate the contribution of two distinct terms $K_1,K_2$, consider the scaled Hamiltonian $H_{x,y}=xK_1+yK_2$, with arbitrary real coefficients $x,y$. 
The coefficient of $xy$ in $R_{uvw}$ is the contribution of this pair in the full Hamiltonian. 
For the square of the term, we can consider $xK_1$ and take the coefficient of $x^2$. 
Because $K_1$ and $K_2$ each act on at most two sites, they can connect all three sites $u,v,w$ only if they act on two distinct edges spanning $\{u,v,w\}$.
Otherwise, we can partition the qubits into two sets separating one of $u,v,w$ from the other two, with neither $K_1$ nor $K_2$ acting across the partition.
Evolution under $H_{x,y}$ preserves the product state across this partition, so one measured outcome is independent of the other two.
The correlation $C_{uvw}^{\hat a,\hat b}(t)$ is the mean of the product of the three outcomes after subtracting their respective means, and the above independence makes it zero for all $x,y,t$.
Hence the coefficient of $xy$ in $R_{uvw}$ vanishes. The same argument for $xK_1$ eliminates the coefficient of $x^2$.

Therefore, for the path $u$-$v$-$w$, we can set $H=xH_M+yH_N$, where $H_M=\vec\sigma_u^\top M\vec\sigma_v$ and $H_N=\vec\sigma_v^\top N\vec\sigma_w$.
In the stated frame, the input is $\ket\psi=\ket{000}$ and the measured observable is $sX+\mu Z$.
Subtracting $\mu$ from each outcome leaves $C_{uvw}^{\hat a,\hat b}(t)$ unchanged, so we use $O_j=sX_j+\mu(Z_j-I)$ for $j\in\{u,v,w\}$.
Writing $\langle A\rangle_t$ for the expectation of $A$ under evolution by $H$ and then expanding $e^{-iHt}$, we can obtain
\begin{align*}
[xyt^2]\langle O_uO_vO_w\rangle_t=2\Re\langle H_M\psi|O_uO_vO_w|H_N\psi\rangle-\Re\langle\psi|O_uO_vO_w\{H_M,H_N\}|\psi\rangle.
\end{align*}
Here $\{H_M,H_N\}=2\sum_{i,l}(MN)_{il}\sigma_u^i\sigma_w^l$, so the second term vanishes because $\langle0|O_v|0\rangle=0$.
The edge states are
\begin{align*}
H_M\ket\psi&=M_{33}\ket{000}+p\ket{100}+(M_{31}+iM_{32})\ket{010}+\alpha\ket{110},\\
H_N\ket\psi&=N_{33}\ket{000}+(N_{13}+iN_{23})\ket{010}+r\ket{001}+\beta\ket{011}.
\end{align*}
Using $O_j\ket0=s\ket1$ and $O_j\ket1=s\ket0-2\mu\ket1$ gives
\begin{align*}
\langle H_M\psi|O_uO_vO_w|H_N\psi\rangle=s^3(\overline p\beta+\overline\alpha r)-2\mu s^2\overline\alpha\beta.
\end{align*}

We must also subtract the products of lower-order moments in $C_{uvw}^{\hat a,\hat b}(t)$.
Every nonempty product of the $O_j$'s at distinct sites has zero expectation at $t=0$, so only products of two first-order coefficients contribute at order $t^2$.
The identity $[t]\langle A\rangle_t=2\Im\langle\psi|AH|\psi\rangle$ gives
\begin{align*}
[t]\langle O_u\rangle_t=2xs\Im p,\
[t]\langle O_w\rangle_t=2ys\Im r,\
[t]\langle O_uO_v\rangle_t=2xs^2\Im\alpha,\
[t]\langle O_vO_w\rangle_t=2ys^2\Im\beta,\
[t]\langle O_uO_w\rangle_t=0.
\end{align*}
Subtracting $\langle O_uO_v\rangle_t\langle O_w\rangle_t$ and $\langle O_vO_w\rangle_t\langle O_u\rangle_t$ therefore yields
\begin{align*}
\mathcal B(M,N)&=2s^3\Re(\overline\alpha r+\overline p\beta)-4\mu s^2\Re(\overline\alpha\beta)-4s^3(\Im\alpha\,\Im r+\Im p\,\Im\beta)\\
&=2s^3\Re(\alpha r+p\beta)-4\mu s^2\Re(\overline\alpha\beta),
\end{align*}
where we used the fact that $\Re(\overline z w)-2\Im z\,\Im w=\Re(zw)$.
For $s=0$, we choose any right-handed frame with $\hat e_3=\hat a$, and the same calculation gives zero.
An induced path has one contributing edge pair, and a triangle has three, one for each middle vertex.
\end{proof}

We next use Eq.~\eqref{eq:path_B_compact} to separate the known and unknown coupling parameters.
Note that
\begin{align*}
\mathcal B(I,N)=\mathcal B(M,I)=0,\qquad
\mathcal B(A_1,A_2)=0
\end{align*}
for skew-symmetric matrices $A_1,A_2$. 
Thus the trace terms and the terms involving two skew-symmetric couplings vanish.
For a symmetric traceless matrix $T$, define
\begin{align*}
\Phi_T=2\bigl[(\hat l^\top T\hat l-\hat k^\top T\hat k)\hat k+2(\hat l^\top T\hat k)\hat l\bigr].
\end{align*}
Substitution gives $\mathcal B([\vec d]_\times,T)=-\Phi_T\cdot\vec d$ and $\mathcal B(T,[\vec d]_\times)=\Phi_T\cdot\vec d$.
Since $T$ is known from the first round, subtracting its contribution on an induced path gives the linear equation
\begin{align*}
R_{uvw}-\mathcal B(T_{uv},T_{vw})=-\Phi_{T_{vw}}\cdot\vec d_{uv}+\Phi_{T_{uv}}\cdot\vec d_{vw}.
\end{align*}
For an oriented triangle $u\to v\to w\to u$, the part involving the skew-symmetric edge vectors is
\begin{equation}\label{eq:graph_triangle_linear}
(\Phi_{T_{wu}}-\Phi_{T_{vw}})\cdot\vec d_{uv}
+(\Phi_{T_{uv}}-\Phi_{T_{wu}})\cdot\vec d_{vw}
+(\Phi_{T_{vw}}-\Phi_{T_{uv}})\cdot\vec d_{wu}.
\end{equation}
All remaining terms depend only on the known symmetric traceless couplings and can be subtracted. We therefore obtain linear equations for the skew-symmetric edge vectors without knowing the traces.

\paragraph{Local reconstruction and coverage of the graph.}
We now choose fixed measurement settings for these equations and explain how the resulting systems recover every edge outside isolated triangles.
For each selected path or triangle, use the six settings
\begin{equation}\label{eq:graph_six_path_bases}
(\hat e_x,\hat e_y),\ (\hat e_x,\hat e_z),\ (\hat e_y,\hat e_x),\ (\hat e_y,\hat e_z),\ (\hat e_z,\hat e_x),\ (\hat e_z,\hat e_y).
\end{equation}
An induced path gives a $6\times6$ system in the three coordinates of each of its two skew-symmetric edge vectors.
The coefficients depend linearly on the known $T$'s. To show that the system is generically invertible, it suffices to give one choice of $T$ for which its determinant is nonzero. 
At
\begin{align*}
T_{uv}=\diag(1,2,-3),\qquad
T_{vw}=\begin{pmatrix}2&1&3\\1&-1&4\\3&4&-1\end{pmatrix},
\end{align*}
its matrix and determinant are
\begin{align*}
\begin{pmatrix}
0&16&0&0&0&-10\\0&0&-16&0&-10&0\\
-12&0&-6&0&0&8\\-6&0&12&8&0&0\\
4&6&0&0&2&0\\6&-4&0&2&0&0
\end{pmatrix},\qquad -465600.
\end{align*}
A triangle with two known skew-symmetric edge vectors gives a $6\times3$ system for the third by Eq.~\eqref{eq:graph_triangle_linear}.
If $\vec d_{uv}$ is the unknown vector, its coefficient is $\Phi_{T_{wu}}-\Phi_{T_{vw}}$. At $T_{wu}-T_{vw}=\diag(1,2,-3)$, rows $1,2,4$ have determinant $-800$, so this system also has full column rank generically.

For a four-vertex clique $K_4$, all six edge vectors can be recovered together.
Query its triangles in the order $123,124,134,234$, each in the six settings above. 
Orient the edges from smaller to larger indices and order their variables lexicographically.
This gives a $24\times18$ system, with a nonzero minor at
\begin{align*}
T_{12}=\diag(1,-1,0),\ \ T_{13}=\diag(0,1,-1),\ \
T_{14}=E_{12}+E_{21},\ \ T_{23}=E_{13}+E_{31},\ \
T_{24}=E_{23}+E_{32},\ \ T_{34}=0,
\end{align*}
where $E_{ij}$ has a single $1$ in entry $(i,j)$ and zeros elsewhere.
The $18\times18$ submatrix formed by rows $1,\ldots,14,16,17,18,22$ has determinant $-2^{30}\cdot3^3\cdot5$.
These three explicit specializations show that the path, triangle-completion, and clique systems have full column rank for generic $T$.

It remains to show that these local equation systems recover every edge of any simple graph.
For a vertex $v$, define its closed neighborhood by $N[v]=\{v\}\cup\{w:w\sim v\}$.
An edge $\{u,v\}$ lies in an induced three-site path if and only if $N[u]\ne N[v]$: a vertex in the difference of these sets is adjacent to exactly one endpoint, and conversely the third vertex of such a path witnesses this difference.
For each such edge, select one induced path containing it and solve the corresponding system, and all these systems can be solved in parallel.
The remaining edges have endpoints with the same closed neighborhood. 
Group vertices according to equality of their closed neighborhoods. 
Each such group is a clique.
If a group has a neighbor $w$ outside it, then $w$ is adjacent to every vertex of the group.
For any edge $\{u,v\}$ within the group, $N[w]\ne N[u]=N[v]$, so the path systems have already recovered $\vec d_{uw}$ and $\vec d_{vw}$.
The triangle equation system on $u,v,w$ therefore recovers $\vec d_{uv}$.
These triangle systems can all be solved in parallel using only the estimates from the path systems.

After these triangle solves, only complete connected components of the original graph remain. A single vertex has no edge variables. On a two-vertex component $K_2$, the divergence equation directly gives $\vec d_{uv}=\vec\delta_u$.  A three-vertex component $K_3$ uses the affine system below, while every edge of a larger complete component belongs to a $K_4$ and can be recovered by the $K_4$ clique equation system. Selecting at most one path, triangle-completion, or $K_4$ equation system for each edge, together with one affine system for each isolated triangle, requires $O(n+|E|)$ equation systems. The whole procedure is summarized in \Cref{fig:graph_solve_curl}.

\begin{figure}[t!]
\centering
\includegraphics[width=0.99\textwidth]{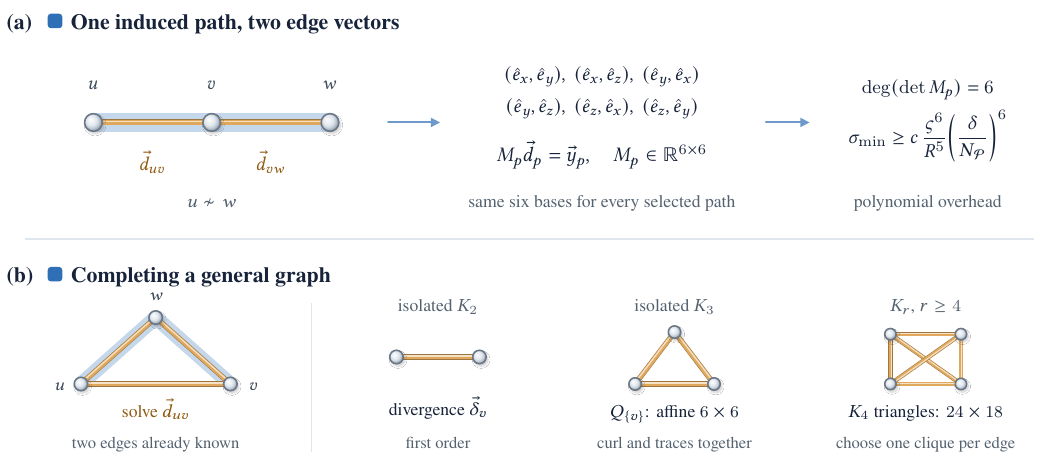}
\caption{\textbf{Local reconstruction of the skew-symmetric edge vectors.}
\textbf{(a)}~For an induced path $u$-$v$-$w$, six fixed settings (basis pairs) give a $6\times6$ system in $(\vec d_{uv},\vec d_{vw})$. 
Its determinant is a nonzero polynomial of degree six, so independent Gaussian smoothing gives the simultaneous lower bound in Eq.~\eqref{eq:graph_path_smoothed_bound}. 
Each selected path is solved independently. 
In particular, every edge of a triangle-free component with at least three vertices has a path.
\textbf{(b)}~For general graphs, first solve every edge lying in an induced path, then recover each remaining edge in a non-complete connected component from a triangle whose other two edges are already known.
Complete connected components use the divergence equation for $K_2$, the six-variable affine vertex system for $K_3$, and four-vertex clique systems for $K_r$, $r\ge4$.}
\label{fig:graph_solve_curl}
\end{figure}

\paragraph{Affine reconstruction on an isolated triangle.}
An isolated triangle has neither an induced three-site path nor a four-vertex clique.
After the first round, its unknown parameters consist of the curl gauge and three edge traces.
We recover these six parameters together from single-vertex second-order moments. 
The following lemma explains why these moments are affine in the remaining unknowns.

\begin{lemma}[The first-order kernel fixes every uniform input]\label{lem:kernel_eigenstate}
Let $G$ be a Hamiltonian perturbation belonging to the first-order kernel described in Appendix~\ref{sec:general_first}.
Then $G\ket\psi=c\ket\psi$ for every uniform product input $\ket\psi$, where $c$ is a scalar, since the skew-symmetric contributions cancel and the swaps in the trace terms leave the uniform input unchanged.
For any reference Hamiltonian $H_0$ and observable $O$, let $Q_O(H)$ denote the quadratic Taylor coefficient of its expectation under evolution by $H$ from this input. Then
\begin{align*}
Q_O(H_0+G)-Q_O(H_0)=-\tfrac12\langle[H_0,[G,O]]\rangle.
\end{align*}
\end{lemma}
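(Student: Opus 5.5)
The plan has two parts: first show that every kernel element acts as a scalar on any uniform product input, then expand the second-order Taylor coefficient and use that eigenvalue relation to kill the terms that do not involve $H_0$.

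\textbf{Eigenvalue property.} By the characterization in Appendix~\ref{sec:general_first}, a kernel element has the form
\begin{align*}
G=\sum_{(u\to v)\in\vec E}\vec\sigma_u^\top\bigl([\Delta\vec d_{uv}]_\times+\Delta\lambda_{uv}I\bigr)\vec\sigma_v,
\qquad (B_G\otimes I_3)\Delta\vec d=0,
\end{align*}
with no field or symmetric traceless part. For the trace terms we use $\vec\sigma_u\cdot\vec\sigma_v=2\,\mathrm{SWAP}_{uv}-I$. Since $\ket\psi=\ket\phi^{\otimes n}$ is swap-invariant, each such term contributes the scalar $\Delta\lambda_{uv}$.

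For the skew part we use the identity from the overview, $\vec\sigma_u^\top[\vec d]_\times\vec\sigma_v\ket\psi=-i\,\vec d\cdot(\vec\sigma_u-\vec\sigma_v)\ket\psi$. I would check this briefly: $\vec\sigma_u^\top[\vec d]_\times\vec\sigma_v=\vec d\cdot(\vec\sigma_u\times\vec\sigma_v)$, and on a symmetric two-qubit state $(\vec\sigma_u\times\vec\sigma_v)\ket{\phi\phi}=-i(\vec\sigma_u-\vec\sigma_v)\ket{\phi\phi}$. This follows by symmetrization together with $\sigma^j\sigma^k=\delta_{jk}+i\epsilon_{jkl}\sigma^l$, or by verifying it directly for $\ket\phi=\ket0$ and using rotation covariance.

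Summing over edges, the coefficient of $\vec\sigma_w$ is $-i\sum_e (B_G)_{w,e}\Delta\vec d_e=-i(\Delta\vec\delta)_w=0$. Hence $G\ket\psi=c\ket\psi$ with $c=\sum_e\Delta\lambda_e$ real. The same then holds for $\bra\psi G=c\bra\psi$, since $G$ is Hermitian.

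\textbf{Second-order coefficient.} We have $Q_O(H)=-\tfrac12\langle[H,[H,O]]\rangle$. Setting $H=H_0+G$ and expanding bilinearly gives
\begin{align*}
Q_O(H_0+G)-Q_O(H_0)=-\tfrac12\Bigl(\langle[H_0,[G,O]]\rangle+\langle[G,[H_0,O]]\rangle+\langle[G,[G,O]]\rangle\Bigr).
\end{align*}
For any operator $X$, the eigenvalue relation gives $\langle[G,X]\rangle=c\langle X\rangle-c\langle X\rangle=0$. This kills the second and third terms, leaving exactly the claimed expression $-\tfrac12\langle[H_0,[G,O]]\rangle$.

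The main obstacle is the skew-part identity, specifically getting the sign and orientation conventions right so that the vertex sums match the divergence $B_G$. Everything else is routine. Note that the surviving cross term is linear in $G$, which is what makes the single-vertex second-order moments on the isolated triangle affine in the six unknowns.
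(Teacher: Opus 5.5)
Your proposal is correct and follows the paper's proof. You show that a kernel element acts as the scalar $\sum_e\Delta\lambda_e$ on a uniform input, with the trace terms handled via $2\mathrm{SWAP}_{uv}-I$ and the skew terms reducing to one-body terms whose vertex coefficients are the vanishing divergences, and you then use $\langle[G,X]\rangle=0$ to kill every term with $G$ in the outer commutator.

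On the step you flagged, your two intermediate signs are each flipped. In fact $\vec\sigma_u^\top[\vec d]_\times\vec\sigma_v=-\vec d\cdot(\vec\sigma_u\times\vec\sigma_v)$ and $(\vec\sigma_u\times\vec\sigma_v)\ket{\phi\phi}=+i(\vec\sigma_u-\vec\sigma_v)\ket{\phi\phi}$; the paper obtains the latter from $(\vec\sigma_u\times\vec\sigma_v)_j=i[\sigma_u^j,\mathrm{SWAP}_{uv}]$. The two errors cancel, so the identity you actually use is right. In any case, only proportionality of each edge term to $\vec d\cdot(\vec\sigma_u-\vec\sigma_v)$ matters for the divergence cancellation.
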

\begin{proof}
The perturbation $G$ consists of skew-symmetric couplings whose edge vectors sum to zero at each vertex and arbitrary edge traces. To evaluate its action, let $\mathrm{SWAP}_{uv}$ swap qubits $u$ and $v$. The identities $(\vec\sigma_u\times\vec\sigma_v)_j=i[\sigma_u^j,\mathrm{SWAP}_{uv}]$ and $\mathrm{SWAP}_{uv}\ket\psi=\ket\psi$ give
\begin{align*}
\vec\sigma_u^\top[\vec d_{uv}]_\times\vec\sigma_v\ket\psi =-i\vec d_{uv}\cdot(\vec\sigma_u-\vec\sigma_v)\ket\psi.
\end{align*}
When we sum over the edges, the coefficient at each vertex is its divergence, which is zero by the defining constraint on $G$. 
Thus the skew-symmetric part of $G$ annihilates the input.
Each trace term $\lambda_{uv}\vec\sigma_u\cdot\vec\sigma_v=\lambda_{uv}(2\mathrm{SWAP}_{uv}-I)$ acts as a scalar, proving $G\ket\psi=c\ket\psi$.
It follows that $\langle[G,B]\rangle=0$ for every operator $B$.
Expanding the double commutator that defines $Q_O(H_0+G)$, all terms with $G$ in the outer commutator therefore vanish. 
The remaining mixed term is the displayed expression.
\end{proof}
We apply the lemma to an isolated triangle with vertices $1,2,3$, orienting its edges as $1\to2$, $2\to3$, and $1\to3$.
The known divergence equations are satisfied by the parametrization
\begin{align*}
\vec d_{12}=\vec z,\qquad \vec d_{23}=\vec z+\vec\delta_2,\qquad
\vec d_{13}=\vec\delta_1-\vec z.
\end{align*}
The third divergence equation follows from $\vec\delta_1+\vec\delta_2+\vec\delta_3=0$.
The six unknowns are $\vec x=(z_1,z_2,z_3,\lambda_{12},\lambda_{23},\lambda_{13})$.
Setting these six variables to zero, while retaining the recovered fields, $T$'s, and divergences, defines a known Hamiltonian $H_0$ with the same first-order data.
Thus $H-H_0$ belongs to the first-order kernel, and Lemma~\ref{lem:kernel_eigenstate} makes each difference $Q_O(H)-Q_O(H_0)$ linear in $\vec x$.

Use the six queries $(1;y,x),(1;z,x),(2;x,y),(2;x,z),(3;x,y),(3;z,y)$, where $(v;i,j)$ means measuring the single-vertex coefficient $Q_{\{v\}}$ with input direction $\hat e_i$ and measurement direction $\hat e_j$.
Let $q$ collect these measured coefficients and let $q_0$ collect the corresponding coefficients computed from $H_0$, in the same order.
They give the affine system $B_\triangle \vec x=q-q_0$. The entries of $B_\triangle$ are linear in the known first-order parameters, and those of $q_0$ are quadratic.
To verify generic invertibility, set $T=0$, $\vec\delta=0$, and choose the fields
$\vec h_1=(-1,-1,-1)$, $\vec h_2=(-1,-1,0)$, and $\vec h_3=(-1,0,-1)$. 
The system is then
\begin{align*}
B_\triangle=\begin{pmatrix}
0&-2&0&0&0&0\\0&0&-2&0&0&0\\0&0&0&0&-2&0\\
-2&0&0&2&2&0\\-2&0&0&0&2&2\\0&0&0&0&2&2
\end{pmatrix},
\quad \det B_\triangle=-64,\quad q_0=(2,2,2,0,0,0)^\top.
\end{align*}
The nonzero determinant proves that the system is generically invertible, so these six measurements recover $\vec z$ and all three traces of the isolated triangle.
With noisy data, first project the three estimated divergences onto their zero-sum subspace by subtracting their average from each estimate. 
This makes the parametrization consistent and changes the error by at most a constant factor.

\subsection{Trace recovery from four-site third-order moments and single-site second moments}\label{sec:general_third}
After the second round, the fields and the traceless parts of every edge coupling are known, and the remaining unknowns are the edge traces $\lambda_{uv}$.
We recover them from connected four-site third-order moments when an endpoint has degree at least three, and from single-vertex second-order moments for the remaining edges (see \Cref{fig:graph_trace}).
Each system has at most six unknowns, which is independent of the maximal degree $\Delta$ of the graph $G$.

\begin{figure}[t!]
\centering
\includegraphics[width=0.99\textwidth]{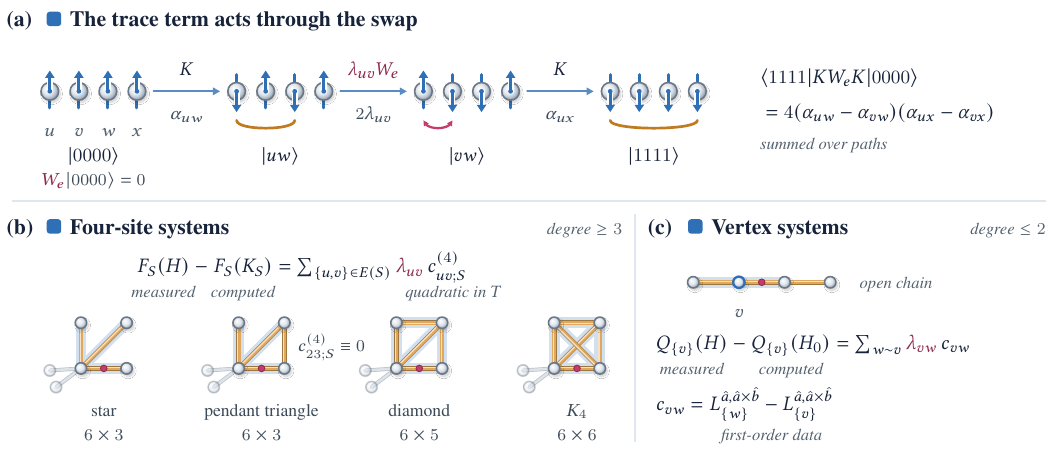}
\caption{\textbf{Recovering the edge traces from local responses.}
\textbf{(a)}~The trace term (crimson) exchanges two spins between pair flips generated by the known couplings (amber), making it visible in the four-site third-order response.
\textbf{(b)}~A vertex and three of its neighbors give a linear system for at most six edge traces. Subtracting the known traceless contribution from the connected third-order response isolates these traces and exterior neighbors do not contribute.
\textbf{(c)}~At vertices with one or two neighbors, single-site second-order responses recover the incident traces after the known contribution is subtracted.
Blue halos mark the edges whose traces are recovered, and the blue rim marks the measured vertex. Crimson dots indicate representative trace terms. See \Cref{lem:graph_four_trace} and Eq.~\eqref{eq:third-round-equations}.}
\label{fig:graph_trace}
\end{figure}

\paragraph{Connected four-site third-order moments.}
Fix four sites $S$ and measure their $\pm1$ outcomes at setting $(\hat a,\hat b)$ at time $t$.
Let $\Pi(S)$ be the set of partitions of $S$ into nonempty blocks, and let $\Pi_j(S)$ contain those partitions with exactly $j$ blocks.
We define the connected four-site correlation and its third-order response by
\begin{equation}\begin{gathered}\label{eq:graph_four_cumulant}
C_S^{\hat a,\hat b}(t)=\sum_{\pi\in\Pi(S)}(-1)^{|\pi|-1}(|\pi|-1)!\prod_{B\in\pi}m_B^{\hat a,\hat b}(t),\qquad
F_S(H)=\frac{(C_S^{\hat a,\hat b})'''(0)}{32}.
\end{gathered}\end{equation}
As in the three-site case, this correlation vanishes whenever the outcomes split into two independent groups.
We use orthogonal settings throughout this subsection such that every nonempty moment is zero at $t=0$.
Substituting the Taylor expansions into Eq.~\eqref{eq:graph_four_cumulant} gives
\begin{equation}\label{eq:graph_four_data}
[t^3]C_S^{\hat a,\hat b}(t)=\mathcal{W}_S-\sum_{\{B,C\}\in\Pi_2(S)}(L_BQ_C+Q_BL_C)
+2\sum_{\{B,C,D\}\in\Pi_3(S)}L_BL_CL_D,
\quad F_S=\frac3{16}[t^3]C_S^{\hat a,\hat b}(t).
\end{equation}
Each unordered partition is counted once.
Thus $F_S(H)$ is computed from the first-, second-, and third-order coefficients on subsets of $S$, obtained from the same measured bitstrings.

\begin{lemma}[Connected four-site moment]\label{lem:graph_four_trace}
Only triples of distinct edges forming a tree on $S$ contribute to $[t^3]C_S^{\hat a,\hat b}(t)$.
For each pair of distinct sites $i,j\in S$, define
\begin{equation*}
\beta_{ij}=\hat b^\top T_{ij}\hat b-\hat k^\top T_{ij}\hat k,
\qquad\gamma_{ij}=-\hat b^\top T_{ij}\hat k,
\qquad\alpha_{ij}=\beta_{ij}+2i\gamma_{ij},
\end{equation*}
with $T_{ij}=0$ when $\{i,j\}$ is not an edge.
Let $K_S$ be the known Hamiltonian consisting of the traceless edge couplings within $S$:
\begin{align*}
K_S=\sum_{\{i,j\}\in E(S)}\vec\sigma_i^\top(T_{ij}+[\vec d_{ij}]_\times)\vec\sigma_j.
\end{align*}
Then
\begin{equation}\label{eq:graph_four_trace}
F_S(H)-F_S(K_S)=\sum_{\{u,v\}\in E(S)}\lambda_{uv}c_{uv;S}^{(4)},
\end{equation}
where, for $\{w,x\}=S\setminus\{u,v\}$,
\begin{align}\label{eq:graph_four_coeff}
c_{uv;S}^{(4)}={}&(\beta_{uw}-\beta_{vw})(\gamma_{ux}-\gamma_{vx})+(\gamma_{uw}-\gamma_{vw})(\beta_{ux}-\beta_{vx})=\tfrac12\Im\bigl[(\alpha_{uw}-\alpha_{vw})(\alpha_{ux}-\alpha_{vx})\bigr].
\end{align}
The coefficients $c_{uv;S}^{(4)}$ depend only on the known symmetric traceless couplings $T$.
The subtraction $F_S(K_S)$ can be computed classically given the previous two rounds.
\end{lemma}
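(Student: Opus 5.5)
Two ingredients mirror the proof of Lemma~\ref{lem:Quvw_path}: a cut-based locality argument, then an explicit computation of the trace-dependent part of each tree contribution.

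\textbf{Locality.} $[t^3]C_S^{\hat a,\hat b}(t)$ is a homogeneous cubic polynomial in the Hamiltonian coefficients. Its contribution from a multiset $\{K_1,K_2,K_3\}$ of local terms is read off as the coefficient of $xyz$ (or $x^2y$, $x^3$) after replacing $H$ by $xK_1+yK_2+zK_3$. If these terms do not connect all four sites of $S$, there is a bipartition of the qubits separating $S$ into two nonempty groups with no term crossing it. Evolution then preserves the product structure across the cut, the outcomes split into independent groups, and the full cumulant $C_S$ vanishes identically in $t$. Three terms, each supported on at most two sites, connect four sites only if they are three distinct edges forming a spanning tree of $S$. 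So fields, exterior edges and repeated terms all drop out, which proves the first claim and shows $F_S$ depends only on $\{J_{ij}\}_{\{i,j\}\in E(S)}$.

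\textbf{Linearity in the traces.} Write $J_{ij}=T_{ij}+[\vec d_{ij}]_\times+\lambda_{ij}I$. By Lemma~\ref{lem:kernel_eigenstate}, the skew part acts on the uniform input only through its divergence. More directly, $\vec\sigma_i\cdot\vec\sigma_j=2\mathrm{SWAP}_{ij}-I$ fixes $\ket\psi$. My first claim is that a tree contribution containing two or three trace edges vanishes. A plausible reason: $\mathrm{SWAP}$ terms commute with $N_S^{\hat b}$ restricted to their pair and annihilate the input up to scalars, so at most one swap can sit "inside" a nonvanishing nested commutator; I would check this by direct expansion of the three-fold commutator. Granting this, $F_S(H)-F_S(K_S)$ is linear in the $\lambda_{uv}$, giving the form \eqref{eq:graph_four_trace}.

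\textbf{Computing $c^{(4)}_{uv;S}$.} For a tree containing edge $\{u,v\}$ with trace $\lambda$, the other two tree edges attach $w$ and $x$ to $\{u,v\}$. The picture of Fig.~\ref{fig:graph_trace}(a) suggests the mechanism: a known coupling creates a pair flip on $\{u,w\}$ or $\{v,w\}$, the swap exchanges $u$ and $v$, and a second pair flip reaches $x$. I would work in the frame $\hat e_1=\hat b$, $\hat e_2=-\hat k$, $\hat e_3=\hat a$ with $\mu=0$, so that $O_j=X_j$ and $\ket\psi=\ket{0000}$. The amplitudes for $\vec\sigma_i^\top T\vec\sigma_j\ket{00}\to\ket{11}$ are then the combinations $\alpha_{ij}=\beta_{ij}+2i\gamma_{ij}$. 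Expanding $\langle X_uX_vX_wX_x\rangle_t$ to $t^3$, keeping only terms linear in $\lambda$, and subtracting the $L\cdot Q$ products from \eqref{eq:graph_four_data} should leave an imaginary part of a product of two pair-flip amplitudes. The swap antisymmetrizes in $u\leftrightarrow v$, producing $(\alpha_{uw}-\alpha_{vw})(\alpha_{ux}-\alpha_{vx})$. Skew parts contribute only single flips and so cancel against the subtracted lower-order products, matching the claim that $c^{(4)}$ depends only on $T$. Summing over the trees that contain $\{u,v\}$ as a swap edge (the star and path shapes) must reproduce exactly \eqref{eq:graph_four_coeff}; the $\tfrac{3}{16}$ normalization is fixed by this computation.

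\textbf{Main obstacle.} The hard step is this bookkeeping: verifying that every tree shape and every ordering in the triple commutator combine, after the cumulant subtractions, into the single antisymmetrized product, with the prefactor matching the normalization $F_S=\tfrac{3}{16}[t^3]C_S$. I would first do the path $w$--$u$--$v$--$x$ and the star centered at $u$ symbolically, then confirm the general formula on a random numerical instance. The final assertion is immediate: $K_S$ is known after the first two rounds, so $F_S(K_S)$ is a classical four-qubit computation.
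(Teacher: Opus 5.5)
Your locality step is exactly the paper's, and your frame $(\hat b,-\hat k,\hat a)$ and mechanism (pair flip, swap of $u,v$, pair flip) are the right ones. However, the two claims that carry the content of the lemma are asserted rather than derived. The first is that no products of traces survive, which you leave as ``plausible''. The second is that the coefficient of $\lambda_{uv}$ is exactly $\tfrac12\Im[(\alpha_{uw}-\alpha_{vw})(\alpha_{ux}-\alpha_{vx})]$; you say only that the bookkeeping ``should'' and ``must'' give this, to be confirmed numerically. A numerical check is not a proof, and the constant is not free. By definition $F_S=(C_S^{\hat a,\hat b})'''(0)/32=\tfrac3{16}[t^3]C_S^{\hat a,\hat b}$, so you must show that the trace part of $[t^3]C_S^{\hat a,\hat b}$ is exactly $\tfrac83\lambda_{uv}\Im[(A-B)(C-D')]$, where $A=\alpha_{uw}$, $B=\alpha_{vw}$, $C=\alpha_{ux}$, $D'=\alpha_{vx}$. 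Your explanation of skew-independence is also not the real mechanism: nothing cancels against the lower-order products. Finally, your star/path enumeration omits the four trees containing both $\{u,v\}$ and $\{w,x\}$; these also contain $\{u,v\}$ and must be shown to contribute nothing.

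The missing idea closes all of this at once and avoids any tree-by-tree expansion of triple commutators. Perturb one trace at a time, $K\mapsto K+\lambda W_e$, with $W_e=2(\mathrm{SWAP}_{uv}-I)$, which differs from $\vec\sigma_u\cdot\vec\sigma_v$ by a multiple of $I$. Then use the state form
\begin{align*}
\mathcal W_S(K)=\Im\bra{\psi}KN_S^{\hat b}K^2\ket{\psi}-\tfrac13\Im\bra{\psi}N_S^{\hat b}K^3\ket{\psi},
\end{align*}
together with $W_e\ket{\psi}=0$ and $[W_e,N_S^{\hat b}]=0$. Every term in which $W_e$ is adjacent to $\ket{\psi}$, $\bra{\psi}$ or $\bra{\psi}N_S^{\hat b}$ vanishes. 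Hence the first term changes by $\lambda\Im\langle K\psi|N_S^{\hat b}W_e|K\psi\rangle=0$, because $N_S^{\hat b}W_e$ is Hermitian. The second term changes by exactly $-\tfrac\lambda3\Im\bra{1111}KW_eK\ket{0000}$, with no $\lambda^2$ term. Only two-excitation intermediates contribute. Using $\bra{f}K\ket{0000}=\alpha_f$, $\bra{1111}K\ket{f}=\alpha_{S\setminus f}$ and $W_e\ket{f}=2(\ket{\tau_e(f)}-\ket{f})$, this matrix element equals $4(A-B)(C-D')$. For the subtractions, all $L$'s are trace-independent. For pairs, $L_f=2\Im\alpha_f$ and $Q_f(K+\lambda W_e)-Q_f(K)=2\lambda\Re(\alpha_f-\alpha_{\tau_e(f)})$; for triples, both vanish at an orthogonal setting. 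Consequently the singleton--triple partitions, the three-block partitions, and the pairing $uv\mid wx$ contribute nothing. The pairings $uw\mid vx$ and $ux\mid vw$ together give $-4\lambda\Im[(A-B)(C-D')]$. The net trace part is $\tfrac83\lambda\Im[(A-B)(C-D')]$, which yields the stated $c^{(4)}_{uv;S}$. Only double-flip amplitudes survive: single flips enter only through the real quantity $\langle K\psi|N_S^{\hat b}W_e|K\psi\rangle$, and through $Q_{\{u\}}$, which multiplies $L$ of a three-site block and so vanishes. Therefore the coefficient depends on $T$ alone, not on the skew parts or the other traces. Adding traces one edge at a time then gives Eq.~\eqref{eq:graph_four_trace} with no products. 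Since $\alpha_{wx}$ never appears, the trees containing $\{w,x\}$ are handled automatically.
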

\begin{proof}
As in the proof of Lemma~\ref{lem:Quvw_path}, we consider multiplying each local Hamiltonian term by an independent scalar.
Eq.~\eqref{eq:graph_four_data} is cubic in these scalars, so each coefficient involves at most three local Hamiltonian terms and can be computed with all other terms set to zero.
If those terms do not connect all four sites, the evolution preserves the product input across a partition separating the measured sites into two nonempty groups.
The groups of outcomes are independent, so $C_S^{\hat a,\hat b}(t)$ and the coefficient both vanish.
Three one- or two-site terms can connect all four sites only if they act on three distinct edges within $S$ forming a tree. 
Fields, repeated edges, and interactions leaving $S$ therefore do not contribute.

We can therefore restrict the calculation to the four measured sites.
In the right-handed frame $(\hat b,-\hat k,\hat a)$, the input is $\psi=\ket{0000}$ and the four-site observable is $N_S^{\hat b}=X^{\otimes4}$.
Write $\ket f$ for the computational-basis state with qubits in $f$ in state $1$ and all other qubits in state $0$.
For a two-body Hamiltonian $K$ with symmetric traceless couplings $T$, the amplitude for flipping qubits $i,j$ is $\bra{ij}K\ket\varnothing=\alpha_{ij}$.
The skew-symmetric and trace parts of the edge coupling do not contribute to this amplitude.

We vary the trace on $e=\{u,v\}$.
Subtracting a scalar multiple of the identity gives the interaction $W_e=2(\mathrm{SWAP}_{uv}-I)$, where $\mathrm{SWAP}_{uv}$ swaps qubits $u,v$.
It satisfies
\begin{align*}
W_e\psi=0,\qquad[W_e,N_S^{\hat b}]=0,\qquad W_eN_S^{\hat b}\psi=0.
\end{align*}
Expansion of the state through third order gives
\begin{align*}
\mathcal{W}_S(K)=\Im\bra\psi KN_S^{\hat b}K^2\ket\psi-\tfrac13\Im\bra\psi N_S^{\hat b}K^3\ket\psi.
\end{align*}
Under $K\mapsto K+\lambda W_e$, the first term changes by
$\lambda\Im\bra\psi KN_S^{\hat b}W_eK\ket\psi=0$, since $N_S^{\hat b}W_e$ is Hermitian.
In the second term, every term containing $W_e$ vanishes except $\lambda KW_eK$.
Thus
\begin{align*}
\mathcal{W}_S(K+\lambda W_e)-\mathcal{W}_S(K)
=-\tfrac\lambda3\Im\bra{1111}KW_eK\ket{0000}.
\end{align*}
The swap preserves excitation number, so only intermediate states with two qubits in state $1$ contribute.
Let $\tau_e$ exchange $u,v$, and put
$A=\alpha_{uw}$, $B=\alpha_{vw}$, $C=\alpha_{ux}$, $D'=\alpha_{vx}$.
Since $W_e\ket f=2(\ket{\tau_e(f)}-\ket f)$ and
$\bra{1111}K\ket f=\alpha_{S\setminus f}$ for $|f|=2$, a direct summation gives
\begin{align*}
\bra{1111}KW_eK\ket{0000}=2\{A(C-D')+B(D'-C)+C(A-B)+D'(B-A)\}=4(A-B)(C-D').
\end{align*}
The coefficient of $\lambda$ in $\mathcal{W}_S$ is therefore $-\frac43\Im[(A-B)(C-D')]$.

We must also subtract the lower-order products in Eq.~\eqref{eq:graph_four_data}.
For a two-site set $f$, their coefficients satisfy
\begin{align*}
L_f=2\Im\alpha_f,\qquad
Q_f(K+\lambda W_e)-Q_f(K)=2\lambda\Re(\alpha_f-\alpha_{\tau_e(f)}),
\end{align*}
where the second identity follows from
$Q_f=\langle K\psi|N_f^{\hat b}|K\psi\rangle-\Re\bra\psi N_f^{\hat b}K^2\ket\psi$.
All first-order coefficients are independent of the traces.
For $|f|=3$, both $L_f$ and the trace-dependent part of $Q_f$ vanish because the required three flips cannot occur in these terms.
Therefore, partitions into a singleton and a three-site block, and partitions into three blocks, give no trace contribution in Eq.~\eqref{eq:graph_four_data}.
The pairing $uv\mid wx$ also gives none, since $\tau_e$ leaves both pairs unchanged.
For the two remaining pairings, the coefficient of $\lambda$ in the sum of the pair products is
\begin{align*}
4\{\Im A\Re(D'-C)+\Im D'\Re(A-B)+\Im C\Re(B-A)+\Im B\Re(C-D')\}=-4\Im[(A-B)(C-D')].
\end{align*}
Subtracting these pair products leaves the trace-dependent part of $[t^3]C_S^{\hat a,\hat b}(t)$ equal to $\frac83\lambda\Im[(A-B)(C-D')]$.
Multiplying by $3/16$ gives Eq.~\eqref{eq:graph_four_coeff}.
The other traces in $K$ were arbitrary, but the coefficient depends only on $T$.
Adding the trace terms to $K_S$ one edge at a time proves Eq.~\eqref{eq:graph_four_trace}, with no products of trace variables.
\end{proof}

\paragraph{Local reconstruction and nonzero determinants.}
We use Eq.~\eqref{eq:graph_four_trace} at the six orthogonal settings
\begin{equation}\label{eq:graph_four_bases}
\begin{array}{c|c|c}
r&\hat a&\hat b\\\hline
1&\hat e_z&\hat e_x\\2&-\hat e_y&\hat e_x\\3&\hat e_x&-\hat e_z\\
4&\hat e_z&(3\hat e_x+4\hat e_y)/5\\
5&(-4\hat e_y+3\hat e_z)/5&\hat e_x\\
6&(4\hat e_x+3\hat e_z)/5&(3\hat e_x-4\hat e_z)/5
\end{array}
\end{equation}
Label a vertex of degree at least three by $1$ and three of its neighbors by $2,3,4$.
Up to relabeling the neighbors, the induced graph on these four sites is a star, a triangle with a pendant edge, a diamond, or a $K_4$, with respective edge sets
\begin{align*}
\{12,13,14\},\quad\{12,13,14,23\},\quad
\{12,13,14,23,24\},\quad\{12,13,14,23,24,34\}.
\end{align*}
For the triangle with a pendant edge, $c_{23;S}^{(4)}=0$ identically because $T_{24}=T_{34}=0$.
Thus the equations involve only the three traces on edges incident to vertex $1$.
For each of the other three graphs, we solve for all edge traces within the four-site set.

To prove generic recovery, we exhibit a square submatrix with nonzero determinant.
Use the following symmetric traceless matrices, omitting absent edges:
\begin{align*}
T_{12}&=\diag(1,-1,0),&T_{13}&=\diag(0,1,-1),&T_{14}&=E_{12}+E_{21},\\
T_{23}&=E_{13}+E_{31},&T_{24}&=E_{23}+E_{32},&
T_{34}&=\begin{pmatrix}1&1&1\\1&0&1\\1&1&-1\end{pmatrix}.
\end{align*}
Order the columns lexicographically by edge labels.
For the star and the triangle with a pendant edge, selecting rows $1,4,5$ and columns $12,13,14$ gives
\begin{align*}
M_\star=\begin{pmatrix}
-1&2&0\\
\frac{527}{625}&-\frac{1054}{625}&-\frac{336}{625}\\
\frac{21}{125}&\frac{102}{125}&0
\end{pmatrix},
\qquad\det M_\star=-\frac{48384}{78125}.
\end{align*}
For the diamond, rows $1,\ldots,5$ give
\begin{align*}
M_\diamond=\begin{pmatrix}
-1&2&0&0&0\\0&0&0&0&1\\-2&0&-2&-3&0\\
\frac{527}{625}&-\frac{1054}{625}&-\frac{336}{625}&0&0\\
-\frac35&\frac{102}{125}&0&0&\frac{136}{125}
\end{pmatrix},\qquad\det M_\diamond=\frac{48384}{78125}.
\end{align*}
For the $K_4$, all six rows give
\begin{align*}
M_{K_4}=\begin{pmatrix}
-1&0&-2&-3&-1&0\\0&1&-1&0&0&1\\-2&1&0&0&1&-2\\
\frac{527}{625}&-\frac{336}{625}&\frac{382}{625}&\frac{1077}{625}&-\frac{481}{625}&0\\
-\frac35&-\frac{68}{125}&-\frac{406}{125}&-\frac{189}{125}&-\frac{51}{125}&-\frac{44}{125}\\
-\frac{119}{125}&\frac{232}{125}&-\frac6{125}&-\frac95&\frac{13}{125}&-\frac{92}{125}
\end{pmatrix},\qquad\det M_{K_4}=\frac{349235712}{9765625}.
\end{align*}
These nonzero determinants prove generic recovery in all four cases.
For polynomials in the original independent Pauli coefficients, the selected determinants have degrees $6,6,10,12$, respectively.

\paragraph{Single-vertex responses at degree at most two.}
For the remaining edges, second-order moments at a single vertex suffice.
Remove all trace terms from $H$ to obtain the known Hamiltonian $H_0$, so $H=H_0+V$ with
$V=\sum_{\{v,w\}}\lambda_{vw}(2\mathrm{SWAP}_{vw}-I)$.
The perturbation $V$ acts as a scalar on the uniform input, so Lemma~\ref{lem:kernel_eigenstate} gives
\begin{equation}\label{eq:third-round-equations}
Q_v(H)-Q_v(H_0)=\sum_{w\sim v}\lambda_{vw}c_{vw}.
\end{equation}
Only traces on edges incident to $v$ appear, because all other trace terms commute with $N_{\{v\}}^{\hat b}$.
To compute the coefficient for $\{v,w\}$, we note that $[2\mathrm{SWAP}_{vw}-I,N_{\{v\}}^{\hat b}]=-2(N_{\{v\}}^{\hat b}-N_{\{w\}}^{\hat b})\mathrm{SWAP}_{vw}$ and $(N_{\{v\}}^{\hat b}-N_{\{w\}}^{\hat b})\mathrm{SWAP}_{vw}=-\mathrm{SWAP}_{vw}(N_{\{v\}}^{\hat b}-N_{\{w\}}^{\hat b})$.
We thus have
\begin{align*}
c_{vw}=\langle\{H_0,N_{\{v\}}^{\hat b}-N_{\{w\}}^{\hat b}\}\rangle.
\end{align*}
Writing $S_v=\sum_{t\sim v}T_{vt}$ and
$\vec f_v=\vec h_v+\vec\delta_v\times\hat a+S_v\hat a$, product-state factorization gives
$\langle\{H_0,N_{\{v\}}^{\hat b}\}\rangle=2\mu\langle H_0\rangle+2\hat l\cdot\vec f_v$.
Consequently
\begin{align*}
c_{vw}=2(\vec h_v-\vec h_w)\cdot\hat l+
2(\vec\delta_w-\vec\delta_v)\cdot\hat k+
2\hat l^\top(S_v-S_w)\hat a.
\end{align*}
In an orthogonal setting, this coefficient is a difference of two measured first-order coefficients:
\begin{equation}\label{eq:trace_direct}
c_{vw}(\hat a,\hat b)=L_w^{\hat a,\hat a\times\hat b}
-L_v^{\hat a,\hat a\times\hat b}.
\end{equation}
Thus, first-order data directly determine the trace matrix, and errors in the reconstructed skew-symmetric edge vectors enter only through the subtraction $Q_v(H_0)$.

At degree one, use $(\hat a,\hat b)=(\hat e_z,\hat e_x)$ to solve for the incident trace.
At degree two, use $(\hat e_z,\hat e_x)$ and $(\hat e_z,\hat e_y)$ to solve for both incident traces.
To verify generic invertibility, label the neighbors $w_1$ and, when present, $w_2$.
Set $T=\vec d=0$, $\vec h_v=0$, $\vec h_{w_1}=-\hat e_x/2$, and $\vec h_{w_2}=-\hat e_y/2$ when $w_2$ is present.
The resulting matrix is the identity, regardless of the edges outside this set.
Its determinant is therefore a nonzero polynomial of degree one or two in the original coefficients.

\paragraph{Coverage of all traces.}
For an edge with an endpoint of degree at least three, choose that endpoint as the center and include two of its other neighbors.
The resulting four-site system recovers the edge trace in each of the four graph cases above.
If both endpoints have degree at most two, the single-vertex systems recover the trace instead.
For isolated triangles, we may keep the trace estimates already obtained in the second round.
Selecting at most one system for each edge gives $O(n+|E|)$ equation systems.

\subsection{Conditioning, precision, and complexity}\label{sec:general_full}
We now analyze the propagation of measurement error on the reconstruction.
Under Gaussian smoothing, we first bound from below the smallest singular values of the coefficient matrices in the linear systems for the skew-symmetric edge vectors and edge traces, then choose the measurement precision needed for final error $\epsilon$.
\begin{theorem}\label{thm:graph_ham_uniform_local}
Let $H$ be the NN Hamiltonian \eqref{eq:graph-ham-general} on a known simple graph of maximum degree $\Delta$, with independently Gaussian-smoothed Pauli coefficients.
For any $\varsigma>0$ and $0<\epsilon,\delta<1$, there is a uniform-control protocol that reconstructs every field and edge matrix to error $\epsilon$, with probability at least $1-\delta$.
Its sample complexity, total evolution time, and classical post-processing are
\begin{align*}
\widetilde O\bigl(\epsilon^{-2}\poly(n,\Delta,R,\varsigma^{-1},\delta^{-1})\bigr),
\end{align*}
where $R$ is the coefficient radius from Eq.~\eqref{eq:effective_radius}.
The protocol uses a fixed number of global basis pairs, moments estimated from measuring local observables on at most four sites, and derivatives through order three.
The queried times range from $c/((\Delta+1)Rd^2)$ to $C/((\Delta+1)R)$, where the interpolation degree $d$ is logarithmic in $(\Delta+1)R$ and the inverse measurement precision.
Since $(\Delta+1)\le n$, the polynomial dependence on $\Delta$ may also be absorbed into the dependence on $n$.
\end{theorem}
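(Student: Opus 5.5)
The plan is to assemble the three rounds of Appendix~\ref{sec:general_graph} into one pipeline and control the errors with Lemma~\ref{lem:finite_stage_stability}. Every reconstructed quantity passes through a bounded number of constant-size stages:
\begin{itemize}
\item Round~1 is the first-order solve of Proposition~\ref{lem:first_graph}, giving $T_{uv}$, $\vec h_v$ and $\vec\delta_v$.
\item Round~2 solves one path, triangle-completion, $K_4$, $K_2$ or isolated-triangle affine system per edge, with at most two layers: a path solve feeding a triangle completion.
\item Round~3 solves one four-site or single-vertex system per edge, using subtraction terms $F_S(K_S)$ and $Q_v(H_0)$ computed classically from earlier estimates.
\end{itemize}
Thus each output coefficient is a fixed composition of at most about five local operations. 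These are linear least squares solves and polynomial evaluations, with no nonlinear certificate solves. The only dependence on $\Delta$ and $n$ enters through the number of such systems, $O(n+|E|)$, and through the vertex sums in Round~1. The vertex sums give the extra $(\Delta+1)$ factor in Proposition~\ref{lem:first_graph} and in the subtraction polynomials, which involve sums over at most $\Delta$ neighbors and are therefore bounded by $\poly(\Delta,R)$.

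\textbf{Conditioning.} Each coefficient matrix is a fixed polynomial matrix in the smoothed Pauli coefficients. For each, I take a fixed square submatrix whose determinant was shown nonzero at an explicit specialization: the $6\times6$ path determinant $-465600$, the triangle minor $-800$, the $K_4$ minor, $B_\triangle$, and $M_\star,M_\diamond,M_{K_4}$ together with the identity matrices for degree at most two. These determinants are nonzero polynomials of bounded degree (at most $18$ for the $K_4$ system) in independent Gaussian coordinates, since the $T$'s and $\vec h$'s are linear images of the Pauli coefficients. Lemma~\ref{lem:gaussian_polynomial} with failure probability $\delta'=\delta/(4C(n+|E|))$ per system, followed by a union bound, gives $|\det|\ge\rho=\poly(\varsigma\delta/n)$ for all systems simultaneously.

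For the isolated-triangle system and the degree $\le2$ systems, the matrix entries are linear in known first-order parameters that include vertex-dependent sums. The determinant is still a fixed polynomial in the local coefficients, because neighbors outside the local set enter only through $S_v$ and $\vec\delta_v$. For these I would note that, conditioning on the outside coefficients, the polynomial remains nonzero: the specialization sets them to zero and the leading form is unaffected. The lower bound $\sigma_{\min}\ge|\det M_0|/\|M_0\|^{d-1}$ from the proof of Lemma~\ref{lem:finite_stage_stability} then bounds the condition numbers by $\poly(\Delta,R,\rho^{-1})$. The event $\|\params^*\|_\infty\le R$ costs another $\delta/4$.

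\textbf{Precision and resources.} Lemma~\ref{lem:finite_stage_stability} applies on this event, combined with Lemma~\ref{lem:perturbed_least_squares} at each stage. Input derivative errors $\eta\le\epsilon/(C\,\poly(\Delta,R)\rho^{-b})$ then yield output error $\epsilon$ for every $\vec h_v$ and $J_{uv}=T_{uv}+[\vec d_{uv}]_\times+\lambda_{uv}I$; the Frobenius error is a constant times the componentwise error. The data are the Taylor coefficients through order three of the moments $m_S$ with $|S|\le4$, at the fixed finite set of settings listed, from which the cumulant combinations \eqref{eq:connected_residual} and \eqref{eq:graph_four_data} are formed. Each cumulant combination is a fixed polynomial in these moments of bounded magnitude, so errors there grow only by constants.

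Theorem~\ref{thm:main_complexity4derivative_estimation} with $m=2$, $\chi\le4$, $r=3$ and $G_{\mathrm{loc}}=C_0(\Delta+1)R$ gives
\[
\widetilde O\bigl((\Delta+1)^6R^6\eta^{-2}\log(n/\delta)\bigr)
\]
shots per setting. Queried times lie in $[c/(G_{\mathrm{loc}}d^2),C/G_{\mathrm{loc}}]$ with $d$ logarithmic, as claimed. Since one global bitstring supplies all local moments at a given setting, a union bound over the $O(n^4)$ (in fact $O(n\Delta^3)$) local observables costs only a logarithm. Total evolution time is at most the number of shots times $C/G_{\mathrm{loc}}$. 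Post-processing is $O(n+|E|)$ constant-size solves plus the subtraction evaluations, each $\poly(\Delta)$, plus interpolation.

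\textbf{Main obstacle.} The delicate step is verifying the hypotheses of Lemma~\ref{lem:finite_stage_stability} uniformly in $\Delta$. The Round~2 and Round~3 subtractions, $\mathcal B(T_{uv},T_{vw})$, $F_S(K_S)$ and $Q_v(H_0)$, depend on earlier estimates, including $\vec d$'s whose errors already carry $(\Delta+1)$ factors. I must check that their Lipschitz constants are $\poly(\Delta,R)$ rather than exponential, which holds since each is a bounded-degree polynomial. The other point to check is that the degree-$\le2$ and isolated-triangle determinants remain bounded below after conditioning on exterior coefficients.
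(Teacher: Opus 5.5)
Your proposal is correct and follows essentially the same route as the paper's proof. Both use Gaussian anti-concentration with a union bound over the explicitly certified local determinants, handling the graph-dependent degree-$\le 2$ vertex determinants through a leading form that does not depend on exterior couplings. Both then propagate errors through boundedly many perturbed least-squares stages with $\poly(\Delta,R)$ amplification, and finish with derivative estimation at $G_{\mathrm{loc}}=C(\Delta+1)R$.

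The main difference is that the paper tracks the error powers explicitly rather than invoking Lemma~\ref{lem:finite_stage_stability}. That lemma's fixed-polynomial-map hypothesis fails for the $\Delta$-dependent vertex sums, exactly as you anticipated. One small correction: the cumulant combinations amplify errors by factors $\poly((\Delta+1)R)$, not by constants, because the Taylor coefficients themselves scale as $((\Delta+1)R)^j$. Your precision rule already absorbs this factor, so the conclusion is unaffected.
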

\begin{proof}
\emph{Conditioning of the linear systems.}
The event that all coefficients are bounded by $R$ has failure probability at most $\delta/4$, by Eq.~\eqref{eq:effective_radius}.
We allocate $\delta/4$ to the conditioning bounds and $\delta/2$ to the measurement estimates, as specified below.
Let $N_0$ be the number of selected equation systems, or $1$ if none are needed, so $N_0\le C(n+|E|+1)$ for some constant $C$, and put $s=\min\{1,\varsigma\}$.
For each equation system, use the square submatrix specified in the preceding subsections.
The determinants for paths, triangle completion, four-vertex cliques, and isolated triangles have degrees $6,3,18,6$, respectively.
The four-site trace determinants have degrees $6,6,10,12$, and the remaining vertex trace determinants have degree at most two.

For each reconstruction system, we have displayed a choice of the symmetric traceless matrices $T$ and fields $\vec h$ at which the selected determinant is nonzero (see Appendices~\ref{sec:general_second} and~\ref{sec:general_third}).
These examples show that none of the determinant polynomials is identically zero.
Except for the vertex trace systems at degree one or two in Appendix~\ref{sec:general_third}, these polynomials form a fixed finite collection up to relabeling sites, so Lemma~\ref{lem:gaussian_polynomial} gives a common positive constant $c_p$ for them.
For the vertex trace systems at degree one or two, the neighbors may have arbitrarily many exterior edges, so we verify this uniformity directly.
Let $w_1$ and, at degree two, $w_2$ be the neighbors of the measured vertex $v$, and write $h_{w,x},h_{w,y}$ for the $x$- and $y$-components of $\vec h_w$.
The formula for $c_{vw}$ in Appendix~\ref{sec:general_third} gives field contributions $2(h_{v,x}-h_{w,x})$ and $2(h_{v,y}-h_{w,y})$ in the settings $(\hat e_z,\hat e_x)$ and $(\hat e_z,\hat e_y)$, respectively.
Set all Pauli coefficients to zero except $h_{w_1,x}$ at degree one, or $h_{w_1,x}$ and $h_{w_2,y}$ at degree two.
The corresponding determinants are
\begin{align*}
\det\begin{pmatrix}-2h_{w_1,x}\end{pmatrix}&=-2h_{w_1,x},\qquad\text{(degree one)},\\
\det\begin{pmatrix}-2h_{w_1,x}&0\\0&-2h_{w_2,y}\end{pmatrix}&=4h_{w_1,x}h_{w_2,y},\qquad\text{(degree two)}.
\end{align*}
Thus the coefficients $-2$ and $4$ of these terms in the full determinant polynomials do not depend on the exterior edges.
These determinants are homogeneous of degrees one and two, respectively.
Taking $h_{w_1,x}=1$ in the first case, or $h_{w_1,x}=h_{w_2,y}=1/\sqrt2$ in the second, gives a unit vector in the original coefficient space at which the absolute value of the determinant is $2$.
The construction of $c_p$ in the proof of Lemma~\ref{lem:gaussian_polynomial} therefore gives a positive lower bound independent of the graph for these systems as well.

Apply Lemma~\ref{lem:gaussian_polynomial} to each selected determinant with failure probability $\delta/(8N_0)$.
On the coefficient-radius event, matrix entries are bounded by $CR$ for the systems recovering the skew-symmetric edge vectors, $CR^2$ for the four-site trace systems, and $C(\Delta+1)R$ for the vertex trace systems.
Combining the determinant bounds with the matrix-norm inequality in Appendix~\ref{sec:constant_size_stability} gives the bound on the smallest singular value over all equation systems mentioned so far of the template $M\vec x=\vec y$:
\begin{equation}\label{eq:graph_conditioning_degree_free}
\sigma_{\min}(M)\ge\kappa,\qquad
\kappa=c\,\frac{s^{18}}{(\Delta+1)R^{17}}\left(\frac{\delta}{N_0}\right)^{18}\le1.
\end{equation}
For a rectangular matrix, the smallest singular value is at least that of the selected square submatrix formed from its rows.
The determinant bounds fail with total probability at most $\delta/8$, by a union bound that does not require independence between overlapping systems.

If any path systems are used, let $N_{\mathcal P}\ge1$ be their number.
Their determinants have degree six and their entries are bounded by $CR$ on the coefficient-radius event.
Applying Lemma~\ref{lem:gaussian_polynomial} with failure probability $\delta/(8N_{\mathcal P})$ per path determinant gives, on that event, the sharper bound
\begin{equation}\label{eq:graph_path_smoothed_bound}
\sigma_{\min}(M_p)\ge\kappa_{\mathcal P},\qquad
\kappa_{\mathcal P}=c\,\frac{\varsigma^6}{R^5}\left(\frac{\delta}{N_{\mathcal P}}\right)^6.
\end{equation}
The additional path determinant bounds fail with total probability at most $\delta/8$.
Thus the two collections of determinant bounds fail with total probability at most $\delta/4$.
Intersecting their success events with the coefficient-radius event gives both displayed singular-value bounds simultaneously with probability at least $1-\delta/2$.

\emph{Propagation of measurement errors.}
Suppose each queried $L,Q,\mathcal{W}$ has error at most $\eta\le1$.
By Proposition~\ref{lem:first_graph}, the first-round errors are at most $C\eta$ in $T$ and $C(\Delta+1)\eta$ in the fields and divergences.
The sum of interaction norms incident to a vertex is at most $C(\Delta+1)R$, so each order-$j$ coefficient used by the protocol has magnitude at most $C_j((\Delta+1)R)^j$.
Eq.~\eqref{eq:connected_residual} therefore has error at most $C(\Delta+1)R\eta$.
The path and clique matrices have error at most $C\eta$, while the isolated-triangle matrix has error at most $C(\Delta+1)\eta$.
The known quadratic terms subtracted from their right-hand sides have error at most $C(\Delta+1)R\eta$.
Lemma~\ref{lem:perturbed_least_squares} then bounds the initial error in the reconstructed skew-symmetric edge vectors by
\begin{align*}
e_1\le C(\Delta+1)^2R\eta/\kappa.
\end{align*}
Each triangle-completion system uses only two of these recovered edge vectors, so the same perturbation bound gives
\begin{align*}
e_2\le C(\Delta+1)^2R^2\eta/\kappa^2.
\end{align*}
This also bounds the field errors and the initial edge-vector errors.
Only these two successive stages are needed, regardless of the graph size.

For a four-site system, the measured expression in Eq.~\eqref{eq:graph_four_data} has error at most $C(\Delta+1)^2R^2\eta$, and its matrix from Eq.~\eqref{eq:graph_four_coeff} has error at most $CR\eta$.
The subtraction $F_S(K_S)$ is a fixed cubic polynomial in the traceless edge matrices within $S$, so its error is at most $CR^2e_2$ when $e_2\le1$.
Lemma~\ref{lem:perturbed_least_squares} therefore bounds the trace error by
\begin{align*}
C\kappa^{-1}((\Delta+1)^2R^2\eta+R^2e_2+R^2\eta).
\end{align*}
At a vertex of degree one or two, Eq.~\eqref{eq:trace_direct} gives matrix error at most $C\eta$.
The double commutator defining $Q_v(H_0)$ contains at most $C(\Delta+1)^2$ ordered pairs of local Hamiltonian terms, so the subtraction has error at most $C(\Delta+1)^2(Re_2+e_2^2)$.
For $e_2\le1$, solving this system gives error at most
$C\kappa^{-1}(\eta+(\Delta+1)^2Re_2+R\eta)$.
Substituting the bound on $e_2$ gives final coefficient error at most $C(\Delta+1)^4R^4\eta/\kappa^3$.
It therefore suffices to choose
\begin{equation}\label{eq:graph_precision_rule}
\eta\le c\,\epsilon\kappa^3/((\Delta+1)^4R^4).
\end{equation}
For a sufficiently small universal $c$, this choice also ensures $e_2\le1$ and bounds every matrix perturbation by $\kappa/2$, as required by Lemma~\ref{lem:perturbed_least_squares}.
Reconstructing $J=T+[\vec d]_\times+\lambda I$ changes the error by at most a fixed norm factor.

\emph{Experimental and classical costs.}
Apply Theorem~\ref{thm:complexity4derivative_estimation_formal}, the explicit derivative-estimation bound underlying Theorem~\ref{thm:main_complexity4derivative_estimation}, proved in Appendix~\ref{sec:dev_2_body_const_deg_graph}.
Here the interaction size is $m=2$, each observable has support $\chi\le4$, the highest derivative order is $r=3$, and the local energy bound is $G_{\mathrm{loc}}=C(\Delta+1)R$.
Estimating the required coefficients to the precision in Eq.~\eqref{eq:graph_precision_rule} takes
\begin{align*}
N_{\rm shot}=\widetilde O(G_{\mathrm{loc}}^6\eta^{-2})
=\widetilde O(\epsilon^{-2}(\Delta+1)^{14}R^{14}\kappa^{-6}).
\end{align*}
The interpolation degree is $d=O(1+\log(2+((\Delta+1)R)^3/\eta))$, and the queried times lie in the stated interval.
Substituting the common conditioning bound from Eq.~\eqref{eq:graph_conditioning_degree_free} gives the conservative explicit shot bound
\begin{align*}
\widetilde O\!\left(\epsilon^{-2}(\Delta+1)^{20}R^{116}s^{-108}(N_0/\delta)^{108}\right).
\end{align*}
The polynomial exponents are therefore independent of the graph degree.
Since each experiment lasts at most $C/((\Delta+1)R)$, the total evolution time is at most $CN_{\rm shot}/((\Delta+1)R)$.

The settings used in all three rounds form a fixed finite collection: the first-order settings, the path and isolated-triangle settings, the six four-site settings, and the degree-one/two vertex settings together with the rotated measurement directions in Eq.~\eqref{eq:trace_direct}.
Let $N_{\mathrm{bas}}$ be the number of these basis pairs.
For each pair, apply Theorem~\ref{thm:complexity4derivative_estimation_formal} to all required moments with $\delta'=\delta/(2N_{\mathrm{bas}})$ and derivative error at most $\eta$.
Dividing by the corresponding factorials gives the same error bound for $L,Q,\mathcal W$.
For every fixed Hamiltonian in the coefficient-radius event, a union bound over the basis pairs bounds the probability of any inaccurate coefficient estimate by $\delta/2$.
At a fixed setting and time, one measured bitstring provides a sample of every required moment.
Simultaneous accuracy therefore costs only a logarithmic factor in $n+|E|$, included in $\widetilde O$, without separate experimental batches for different site sets.

The graph operations take polynomial time.
The constant-size linear solves and four-qubit subtractions require $O(n+|E|)$ operations, the vertex subtractions require at most $O(n(\Delta+1)^2)$, and accumulating the measured moments requires $O((n+|E|+1)N_{\rm shot})$.
The interpolation work is also polynomial, giving the stated classical cost.
Combining the radius, conditioning, and measurement guarantees bounds the overall failure probability by $\delta/4+\delta/4+\delta/2=\delta$.

We summarize the uniform-control reconstruction in Algorithm~\ref{algo:graph_ham_uniform_local}.
\end{proof}

\begin{algorithm}[t!]
\caption{Hamiltonian learning with uniform control and all-zero input}
\label{algo:graph_ham_uniform_local}
\algorithmicrequire{Accuracy $\epsilon$, failure probability $\delta$, smoothing parameter $\varsigma>0$, a graph and Hamiltonian as in Theorem~\ref{thm:graph_ham_uniform_local}, and uniform-control query access.}

\algorithmicensure{Estimates $\{\widehat{\vec h}_v\}_{v\in V}$ and $\{\widehat J_e\}_{e\in\vec E}$ with maximum error at most $\epsilon$.}

\textbf{Choose precision and bases. }Use Eq.~\eqref{eq:graph_precision_rule}, with the effective radius and simultaneous conditioning bounds from the theorem. 
Allocate failure probabilities $\delta/4$, $\delta/4$, and $\delta/2$ to the radius, conditioning, and sampling events, respectively, as in the proof.
Use the first-order bases, the six path and triangle bases in Eq.~\eqref{eq:graph_six_path_bases}, the six vertex queries for each isolated triangle, the six four-site settings in Eq.~\eqref{eq:graph_four_bases}, and the degree-one/two vertex settings with their rotated-output partners from Appendix~\ref{sec:general_third}.

\textbf{First round. }Estimate the required first-order coefficients and recover $\widehat T_e$, $\widehat{\vec h}_v$, and $\widehat{\vec\delta}_v$ using \Cref{lem:first_graph}.

\textbf{Second round. }Estimate the required second-order coefficients and form the path and triangle residuals. 
Solve one induced-path system for every edge lying in such a path. 
For a remaining edge whose equal-closed-neighborhood class has an outside neighbor, use its triangle with that neighbor and the two already solved edges. 
Treat complete connected components by the divergence equation for two vertices, the affine system in Appendix~\ref{sec:isolated_triangle} for three vertices, and the four-vertex clique systems for larger components.

\textbf{Third round. }For each edge with an endpoint of degree at least three, select a four-site set consisting of that endpoint and three neighbors, including the edge. 
Estimate the required coefficients through third order, form Eq.~\eqref{eq:graph_four_data}, subtract $F_S(K_S)$, and solve the selected system in Eq.~\eqref{eq:graph_four_trace}. 
At degree-one/two vertices use the first-order differences in Eq.~\eqref{eq:trace_direct}, subtract $Q_v(H_0)$, and solve Eq.~\eqref{eq:third-round-equations}.
Where more than one local system estimates an edge, retain any one estimate. 
Output $\widehat J_e=\widehat T_e+[\widehat{\vec d}_e]_\times+\widehat\lambda_e I$ and the reconstructed fields.
\end{algorithm}

\section{Learning IXZ Hamiltonians on lattices with \texorpdfstring{$Z$}{Z}-basis access}\label{sec:zbasis}
In this section, we reconstruct IXZ Hamiltonians on chains, square lattices, and cubic lattices using computational-basis preparation and measurement.
We write $H=D+K$, where $D$ is the diagonal part and $K$ is the off-diagonal part in the computational basis.
With arbitrary $Z$-basis input, we first recover the coefficients of $K$ up to their gauge signs, choose one representative, and then recover the coefficients of $D$ in the same representative.
With all-zero input, the isotropic TI protocol instead solves jointly for the parameters left undetermined by second-order data.
The detailed equations and proofs are given in Appendix~\ref{sec:defer_proof}.

The TI protocols use open lattices large enough to contain the bulk and boundary measurement regions specified below.
For spatially varying Hamiltonians, we describe how to complete the reconstruction at the endpoints of an open chain and along the sides and corners of an open rectangle.
Periodic lattices use the same interior equations.
Throughout, we use the smoothing model of Definition~\ref{def:smoothed_hamiltonians} and the coefficient radius $R$ from Eq.~\eqref{eq:effective_radius}.

\paragraph{Local transition probabilities.}\label{sec:local_transition_bridge}
We first explain how to obtain the required equations by measuring only a constant-size region.
The equations are written for transitions between full computational-basis strings, whereas a local measurement also counts transitions with additional flips outside the measured region.
At the fourth order, these additional transitions can be removed by subtracting a quantity computed from $K$.

Fix an input string $z$ and a nonempty flip set $\sigma$.
Choose a measured region $B$ containing every site reached from $\sigma$ by at most two overlapping Hamiltonian terms, and let $p_H^B(t)$ be the probability that the measured bits equal $(z\oplus1_\sigma)|_B$.
Then
\begin{align}\label{eq:local_transition_bridge}
[t^2]p_H^B(t)&=[t^2]p_{H,z\to z\oplus1_\sigma}(t),\nonumber\\
[t^4]\bigl(p_H^B(t)-p_K^B(t)\bigr)
&=[t^4]\bigl(p_{H,z\to z\oplus1_\sigma}(t)-p_{K,z\to z\oplus1_\sigma}(t)\bigr).
\end{align}
To prove these identities, we sum the full-string transition probabilities over the flip sets $\sigma\cup\tau$, where $\tau\subseteq B^c$.
For $\tau\ne\emptyset$, no single Hamiltonian term can produce all these flips: every term meeting $\sigma$ is supported in $B$.
The direct transition amplitude is therefore zero.
Moreover,
$(H^2)_{z\oplus1_{\sigma\cup\tau},z}=(K^2)_{z\oplus1_{\sigma\cup\tau},z}$, because the terms containing $D$ multiply that zero direct amplitude.
Thus these extra transitions contribute nothing at order $t^2$ and contribute equally for $H$ and $K$ at order $t^4$.
This gives both identities.

To implement the measurement, enlarge $B$ by four interaction layers to a preparation region $S$.
Prepare the required input bits on $S$ and measure the output on $B$.
The nested commutators of the output projector are supported in $S$, so their expectations do not depend on the state outside $S$.
They are also unchanged when the Hamiltonian is restricted to terms supported in $S$.
We can therefore compute $[t^4]p_K^B$ using only the recovered couplings on this constant-size region.
When the coefficients have magnitude at most $R\ge1$, errors at most $\eta\le1$ in the recovered couplings change the subtraction by at most $CR^3\eta$, since it is a fixed polynomial of degree four.
Some sign tests below use fourth-order data without this subtraction.
For the four-site flips in the three-local TI IXZ Hamiltonian learning protocol, both terms in a two-step transition must meet $\sigma$, since one term cannot flip all four sites.
Both terms are then supported in $B$, so no additional flip outside $B$ is possible at second order in the amplitude.
The local and full-string fourth derivatives therefore again agree.
For the $XX$ sign correlators in learning TI NN IXZ Hamiltonians, we also include in $B$ one interaction layer around every varied input bit.
A two-step transition with an additional flip outside $B$ must consist of a direct flip of $\sigma$ and a disjoint outside flip.
Its amplitude is twice the product of these two direct amplitudes.
The outside-flip amplitude does not depend on the varied input bits, and the single-site flip amplitude does not depend on the second-neighbor bits used in the correlator.
The Fourier average over these bits therefore removes every extra contribution.
Thus the sign correlators can be again computed from local probabilities.
For the nonadjacent-pair tests in Appendix~\ref{sec:local_z_common}, take $\sigma=\{u,w\}$ with $u,w$ nonadjacent and include both sites and their neighbors in $B$.
Any two NN terms that produce both flips must each meet $\sigma$; otherwise one term would have to flip both nonadjacent sites.
Both terms are supported in $B$, so a transition with an additional flip outside $B$ has neither a direct nor a second-order amplitude.
Its fourth-order probability is therefore zero, and the raw local and full-string fourth derivatives agree for these tests as well.
The diagonal reconstruction equations use the subtracted quantity in Eq.~\eqref{eq:local_transition_bridge}.

\subsection{Isotropic TI NN Hamiltonians from all-zero input}\label{sec:RI_TI_IXZ}

We first consider an isotropic TI NN IXZ Hamiltonian on an open cubic lattice in dimension $\mathfrak D\in\{1,2,3\}$:
\begin{equation}\label{eq:1D_TI_RI_IXZ}
H=\sum_v(h_ZZ_v+h_XX_v)+
\sum_{\{v,w\}}\bigl[J_{ZZ}Z_vZ_w+J_{XX}X_vX_w+
J_{XZ}(X_vZ_w+Z_vX_w)\bigr].
\end{equation}
\begin{theorem}\label{thm:RI_TI_IXZ}
Let $H$ be the Hamiltonian in Eq.~\eqref{eq:1D_TI_RI_IXZ} on an open chain, square lattice, or cubic lattice containing the bulk and boundary measurement regions specified below.
For generic coefficients, second- and fourth-order derivatives of local $Z$ moments from the all-zero input determine the five parameters up to independent common signs on $(h_X,J_{XZ})$ and $(h_Z,J_{ZZ},J_{XX})$.
For any $\varsigma>0$ and $0<\epsilon,\delta<1$, under independent Gaussian smoothing of the five coefficients, there is a protocol that estimates every parameter to error $\epsilon$ up to these signs with probability at least $1-\delta$.
Its sample complexity, total evolution time, and classical post-processing complexity are
\begin{align*}
\widetilde O\bigl(\epsilon^{-2}\poly(R,\varsigma^{-1},\delta^{-1})\bigr).
\end{align*}
The protocol uses only all-zero preparation and computational-basis measurements on constant-size regions, and the bound is independent of the number of sites.
\end{theorem}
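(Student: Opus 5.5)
We will work with the all-zero input and the local output probabilities $p^B_{0\to\sigma}(t)$ for small flip sets $\sigma$. These are made rigorous by the local transition identities in Eq.~\eqref{eq:local_transition_bridge} and Remark~\ref{rem:local_access_model}. Since $H$ is real, only even orders appear, and we use the expansion in Eq.~\eqref{eq:taylor_p_overview}.

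\textbf{Second-order data.} Take $z=0^n$ and flip a single site $v$. The second-order coefficient is
\[
\langle 1_v|H|0\rangle^2=\bigl(h_X+\deg(v)\,J_{XZ}\bigr)^2 .
\]
In the bulk we have $\deg(v)=2\mathfrak D$. At a boundary or corner site the degree is strictly smaller, so we obtain two distinct affine combinations of $(h_X,J_{XZ})$, each known up to sign. Choosing one bulk and one boundary site $v$ gives $|h_X+2\mathfrak D J_{XZ}|$ and $|h_X+\deg(v)J_{XZ}|$.

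The relative sign of these two quantities is undetermined at this order. We will fix it later using fourth-order data, or equivalently use a third degree (for example a corner, or the endpoint versus the next site in 1D). Flipping an adjacent pair gives $J_{XX}^2$. So second order determines $|J_{XX}|$ and $(h_X,J_{XZ})$ up to a common sign and a finite branch choice. Flipping the common sign is exactly the $Z^{\otimes n}$ gauge.

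\textbf{Fourth-order data.} Next we use fourth-order coefficients of single-site, adjacent-pair, and nonadjacent-pair flips at bulk and boundary sites. These supply equations in the remaining unknowns: the diagonal parameters $(h_Z,J_{ZZ})$, the sign of $J_{XX}$, and the residual branch choice. The relevant energy differences $E_z-E_{z'}$ are affine in $(h_Z,J_{ZZ})$ with degree-dependent coefficients. For a single flip at a site of degree $d$ this difference is $2h_Z+2dJ_{ZZ}$, so boundary sites again break degeneracy.

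The sign of $J_{XX}$ enters through cross terms. Examples are the two paths $0\to\{u\}\to\{u,w\}$ versus the direct pair flip, and the interference of a nonadjacent pair flip with two adjacent pair flips. These cross terms are odd in $J_{XX}$ times odd products of the $X$-type coefficients.

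Together this yields a constant-size polynomial system $P(x)=P(x^*)$ in the five parameters. We solve it jointly by the certificate method of Appendix~\ref{sec:certificate}. Lemma~\ref{lem:exactcertificate} is applied to the system modulo the gauge group generated by the two sign flips: $(h_X,J_{XZ})\mapsto-(h_X,J_{XZ})$ and time reversal $H\mapsto -H$. Composing these two flips gives independent signs on the two groups, matching Proposition~\ref{thm:unobservable_gauges}.

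\textbf{Main obstacle.} The hard step is certifying that the fourth-order system has, generically, exactly the gauge orbit as its solution set. Constructing an explicit nonsingular certificate matrix for five unknowns with quartic equations is heavy. To make it tractable, I would first eliminate linearly: once the $K$-part is fixed up to its signs, the fourth-order differences are quadratic in $(h_Z,J_{ZZ})$. Taking differences between bulk and boundary sites cancels the quadratic pieces and leaves linear equations. I would then verify a nonzero determinant at a rational specialization, and treat the remaining finite sign branches by a separation test whose gap is a nonzero polynomial.

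\textbf{Quantitative bounds.} Lemma~\ref{lem:gaussian_polynomial} bounds the certificates, denominators and branch gaps away from zero with polynomial dependence on $\varsigma\delta$. Lemma~\ref{lem:finite_stage_stability} then gives polynomial error amplification. Theorem~\ref{thm:main_complexity4derivative_estimation}, with $r=4$ and constant $G_{\mathrm{loc}}=O(\mathfrak D R)$ on constant-size regions, gives $\widetilde O(\epsilon^{-2}\poly(R,\varsigma^{-1},\delta^{-1}))$ shots, independent of $n$.
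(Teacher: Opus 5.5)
\textbf{Verdict: genuine gap.} The proposal is a plan whose central step, generic identifiability at fourth order, is never proved.

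\textbf{What is right, and what is missing.} Your second-order analysis is correct, and it extracts more than the paper does at that order. The boundary single-flip amplitude $h_X+\deg(v)J_{XZ}$ already reduces $(h_X,J_{XZ})$ to finitely many candidates. The paper uses only the bulk quantities $S_{\mathfrak D}^2$ and $J_{XX}^2$ and keeps $J_{XZ}$ as an unknown of the fourth-order system.

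The gap is that the theorem's substantive claim---that for generic coefficients the data leave exactly the two gauge signs---is the step you call the ``main obstacle'' and leave undone. You never fix which fourth-order equations you use. You never show that the wrong relative-sign branch is generically inconsistent with the data. You never exhibit a nonzero determinant for the diagonal solve.

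The paper supplies this step as follows. It fixes one of the four sign choices of $(S_{\mathfrak D},J_{XX})$ and substitutes $h_X=S_{\mathfrak D}-2\mathfrak D J_{XZ}$ into the four listed fourth derivatives. It then verifies explicit certificate minors ($239\times165$ in 1D and 3D, $116\times84$ in 2D) that are nonzero modulo $1000003$ at the witness $(1,2,3,4,5)$. The four sign choices are then exactly the gauge orbit.

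Also, \Cref{lem:exactcertificate} cannot be applied ``modulo the gauge group''. Every gauge image of $x^*$ solves the five-parameter system, so the linearized system has several solutions and $\det\mathcal M\equiv0$. The gauge must be broken first.

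\textbf{Mechanisms that fail as stated.}
\begin{itemize}
\item \emph{Nonadjacent-pair interference.} By Eq.~\eqref{eq:nonadjacent_pair}, from $0^n$ one has $Q_{uw}=(A_uA_w+T_{uw})^2$ with $T_{uw}$ a multiple of $J_{XX}^2$. This is even in $J_{XX}$ and contains no diagonal coefficient. More generally, gauge invariance forces every monomial odd in $J_{XX}$ to be odd in $(h_Z,J_{ZZ})$ and even in $(h_X,J_{XZ})$. So the sign of $J_{XX}$ is visible only relative to the diagonal part.
\item \emph{The branch choice.} The nonadjacent test is instead the right tool for your relative-sign branch. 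For example, $\{1,3\}$ in 1D gives $Q_{13}=(A_1A_3+J_{XX}^2)^2$, which fixes $\operatorname{sgn}(A_1A_3)$ whenever $A_1A_3J_{XX}\neq0$. In 1D this fourth-order step is necessary, since the neighbor of an endpoint has degree $2$ like the bulk, so there is no third degree.
\item \emph{The elimination step.} By Eq.~\eqref{eq:fourth-order-expansion}, the quadratic part of $f(4,0^n,1_v)$ at a degree-$d$ site is $-8A_d^2(h_Z+dJ_{ZZ})^2$. These rank-one forms differ for different $d$, so bulk-minus-boundary differences do not cancel them. You would need, after fixing the sign of $J_{XX}$, a linearization in $(h_Z^2,h_ZJ_{ZZ},J_{ZZ}^2,h_Z,J_{ZZ})$. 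It would use enough single-flip and adjacent-pair equations (the latter with $\Delta=4h_Z+2(\deg u+\deg w-2)J_{ZZ}$) together with a nonzero determinant at a witness.
\end{itemize}

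With these pieces supplied and actually computed, your route would be a legitimate and lighter alternative. It needs only square roots, one branch test and a small linear solve, all covered by \Cref{lem:finite_stage_stability}, without \Cref{prop:coarse_refinement}. As written, however, generic uniqueness is asserted rather than proved.
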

We now specify the measured moments.
For a site set $s$, write $\widetilde f(r,0^n,s)$ for the $r$th derivative at zero of $\langle Z^s\rangle$, with $Z^s=\prod_{v\in s}Z_v$ and all-zero input.
Each output bitstring gives a sample of this product, so its derivatives are Fourier combinations of the local output-probability derivatives.
It suffices to prepare zeros on the region supporting the required nested commutators.
The second derivatives for $s$ a bulk vertex and for $s$ an adjacent pair determine $(h_X+2\mathfrak D J_{XZ})^2$ and $J_{XX}^2$.
We supplement them with the four fourth derivatives listed below for different site sets $s$:
\begin{center}
\begin{tabular}{c|llll}
$\mathfrak D$ & bulk 1 & bulk 2 & fourth derivative 3 & fourth derivative 4\\ \hline
1 & $\{i\}$ & $\{i,i+1\}$ & boundary $\{1,3\}$ & boundary $\{1,4\}$\\
2 & $\{r\}$ & $\{r,r+\hat e_x\}$ & bulk $\{r,r+\hat e_x+\hat e_y\}$ & corner $\{(1,1)\}$\\
3 & $\{r\}$ & $\{r,r+\hat e_y\}$ & corner $\{(1,1,1)\}$ & corner unit plaquette
\end{tabular}
\end{center}
The corner plaquette consists of the four sites $(1,1,1),(1,1,2),(1,2,1),(1,2,2)$.
For the bulk measurements, choose sites whose required two-step neighborhoods do not meet the boundary.
We assume that the lattice contains all these measurement regions.
The two second derivatives and four fourth derivatives give the polynomial system
\begin{equation*}
\mathcal P_{\mathfrak D}(h_Z,h_X,J_{ZZ},J_{XZ},J_{XX})
=\hbox{the two second and four fourth derivatives above}.
\end{equation*}
Appendix~\ref{sec:equation_system_TIRI_allzero} gives the coefficients of these equations and proves that, after choosing the signs of $h_X+2\mathfrak D J_{XZ}$ and $J_{XX}$, the remaining parameters have a unique solution for generic coefficients.
Thus there are exactly four generic solutions.
They differ by independent common sign changes of $(h_X,J_{XZ})$ and $(h_Z,J_{ZZ},J_{XX})$, as listed in Table~\ref{tab:1_2and3D_TI_2-local_gauge-configs}.
\begin{table}[htbp]
\centering
\renewcommand{\arraystretch}{1.2}
\begin{tabular}{c|ccccc|c}
\hline
Index & $h_Z$ & $h_X$ & $J_{ZZ}$ & $J_{XZ}$ & $J_{XX}$ & Gauge Type \\
\hline
1 & $\tilde{h}_Z$ & $\tilde{h}_X$ & $\tilde{J}_{ZZ}$ & $\tilde{J}_{XZ}$ & $\tilde{J}_{XX}$ & --- \\
2 & $-\tilde{h}_Z$ & $-\tilde{h}_X$ & $-\tilde{J}_{ZZ}$ & $-\tilde{J}_{XZ}$ & $-\tilde{J}_{XX}$ & (1) \\
3 & $\tilde{h}_Z$ & $-\tilde{h}_X$ & $\tilde{J}_{ZZ}$ & $-\tilde{J}_{XZ}$ & $\tilde{J}_{XX}$ & (2) \\
4 & $-\tilde{h}_Z$ & $\tilde{h}_X$ & $-\tilde{J}_{ZZ}$ & $\tilde{J}_{XZ}$ & $-\tilde{J}_{XX}$ & (1)(2) \\
\hline
\end{tabular}
\caption{The four generic gauge-transformed parameter configurations for 1D, 2D, and 3D isotropic translationally invariant nearest-neighbor IXZ Hamiltonians. The tilded coefficients denote one fixed representative. 
The  gauges are labeled by: (1) time-reversal gauge as stated in \Cref{thm:time-reversal-gauge}, and 
(2) conjugation by $Z^{\otimes n}$, a special case of \Cref{thm:Z_gauge}.}
\label{tab:1_2and3D_TI_2-local_gauge-configs}
\end{table}
The same proof controls errors in the two quantities recovered from second-order data and in the remaining fourth-order equations by Corollary~\ref{cor:certificate_partial_reconstruction}.
Combining this bound with the derivative estimates of Theorem~\ref{thm:main_complexity4derivative_estimation} and the refinement procedure in Appendix~\ref{sec:certificate_refinement} gives the stated $\widetilde O(\epsilon^{-2}\poly(R,\varsigma^{-1},\delta^{-1}))$ cost.

\subsection{TI NN Hamiltonians from arbitrary \texorpdfstring{$Z$}{Z} input}\label{sec:TI_IXZ}
We next allow arbitrary computational-basis input and drop the isotropy assumption.
On a chain, square lattice, or cubic lattice, orient the bonds along the positive coordinate directions and write
\begin{align}\label{eq:2D_TI_z_to_zp_DK}
D&=\sum_r h_ZZ_r+\sum_{r,\alpha}J^\alpha_{ZZ}Z_rZ_{r+\hat e_\alpha},\notag\\
K&=\sum_r h_XX_r+\sum_{r,\alpha}
\bigl(J^\alpha_{XZ}X_rZ_{r+\hat e_\alpha}
+J^\alpha_{ZX}Z_rX_{r+\hat e_\alpha}
+J^\alpha_{XX}X_rX_{r+\hat e_\alpha}\bigr).
\end{align}
Here $\alpha$ ranges over the lattice axes.
In one dimension, we omit the superscript $\alpha$.

\begin{theorem}\label{thm:TI_NN_z_to_zp}\label{thm:1D_TI_z_to_zp}\label{thm:2D3D_TI_z_to_zp}
Let $H=D+K$ be the TI NN IXZ Hamiltonian in Eq.~\eqref{eq:2D_TI_z_to_zp_DK} on an open chain, square lattice, or cubic lattice containing the bulk measurement regions used in Appendix~\ref{sec:TI_Zbasis_proofs}.
For generic coefficients, constant-size $Z$-basis experiments determine the parameters up to independent common signs on
\begin{align*}
(h_X,\{J^\alpha_{XZ},J^\alpha_{ZX}\}_\alpha),\qquad
(h_Z,\{J^\alpha_{ZZ},J^\alpha_{XX}\}_\alpha).
\end{align*}
For any $\varsigma>0$ and $0<\epsilon,\delta<1$, under independent Gaussian smoothing of the TI coefficients, there is a protocol that estimates every parameter to error $\epsilon$ up to these signs with probability at least $1-\delta$.
Its sample complexity, total evolution time, and classical post-processing complexity are
\begin{align*}
\widetilde O\bigl(\epsilon^{-2}\poly(R,\varsigma^{-1},\delta^{-1})\bigr).
\end{align*}
The protocol uses arbitrary $Z$-basis input and derivatives through order four on constant-size regions, and the bound is independent of the number of sites.
\end{theorem}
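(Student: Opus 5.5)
We follow the two-stage strategy of Appendix~\ref{sec:overview_zbasis}: first recover the off-diagonal part $K$ from second-order data and fix a gauge representative, then recover the diagonal part $D$ from the subtracted fourth-order data. Translation invariance means every equation is evaluated at one bulk window, so all costs are independent of $n$.

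\emph{Stage 1: the off-diagonal part at second order.} By Eq.~\eqref{eq:taylor_p_overview} and Eq.~\eqref{eq:local_transition_bridge}, $[t^2]p_{z\to z'}=\bra{z'}K\ket{z}^2$. We first take $z'=z\oplus 1_{\{r\}}$ at a bulk site $r$. Then
\[
\bra{z'}K\ket z=h_X+\sum_\alpha\bigl(J^\alpha_{XZ}(-1)^{z_{r+\hat e_\alpha}}+J^\alpha_{ZX}(-1)^{z_{r-\hat e_\alpha}}\bigr).
\]
Its square is a multilinear polynomial in the $2\mathfrak D$ neighbouring signs. Varying these neighbouring input bits and taking Fourier coefficients gives the sum of squares and every pairwise product among the $2\mathfrak D+1$ numbers $(h_X,\{J^\alpha_{XZ},J^\alpha_{ZX}\})$. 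Generically all of them are nonzero, so we divide pairwise products to fix this vector up to one common sign. For example,
\[
h_X^2=\frac{(h_XJ^1_{XZ})(h_XJ^1_{ZX})}{J^1_{XZ}J^1_{ZX}}.
\]
This single sign is the $Z^{\otimes n}$ gauge. Flipping the adjacent pair $\{r,r+\hat e_\alpha\}$ gives $(J^\alpha_{XX})^2$, so each $|J^\alpha_{XX}|$ is known.

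Next we fix the relative signs of the $J^\alpha_{XX}$ across directions, which is needed when $\mathfrak D\ge2$. We use the fourth-order term $-\tfrac13\bra{z'}H\ket z\bra{z'}H^3\ket z$ together with $\tfrac14\bra{z'}H^2\ket z^2$ on the flip set $\{r+\hat e_\alpha,r+\hat e_\beta\}$. This pair can be reached through the two-step paths via $\{r,r+\hat e_\alpha\}$ and $\{r,r+\hat e_\beta\}$. The $\bra{z'}H^2\ket z$ amplitude then contains $J^\alpha_{XX}J^\beta_{XX}$ together with products of single-flip amplitudes, which are already known. As the paper notes, these ``$XX$ sign correlators'' are Fourier averages chosen to kill the extra contributions. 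We therefore read off the sign of $J^\alpha_{XX}J^\beta_{XX}$, leaving one common sign on $\{J^\alpha_{XX}\}$. We then fix representatives: positive $h_X$ and positive $J^1_{XX}$.

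\emph{Stage 2: the diagonal part at fourth order.} With $K$ fixed, we compute $[t^4]p^B_K$ on a constant region and subtract it, which gives the equation displayed in Appendix~\ref{sec:overview_zbasis}. In it the only unknowns are the energy differences $E_z-E_{z'}$ and $E_z+E_{z'}-2E_s$, and these are linear in $(h_Z,J^\alpha_{ZZ})$ for local flips. Taking $z'$ a single-site flip and Fourier-expanding in the neighbouring bits gives the following. The first term is quadratic in $(h_Z,J^\alpha_{ZZ})$ weighted by the known $\bra{z'}K\ket z^2$. The second term is linear with known coefficients. Including adjacent-pair flips as well, we obtain a constant-size polynomial system $P_\theta(x)=P_\theta(t)$, where $\theta$ is the recovered $K$ and $x=(h_Z,\{J^\alpha_{ZZ}\})$.

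The key structural point concerns the second term. Under $K\mapsto -K$ with $D$ fixed, it changes sign. Hence, for our fixed representative of $K$, the true $D$ and $-D$ are distinguished, and the remaining global sign of $(h_Z,J_{ZZ},J_{XX})$ is exactly time reversal combined with our $J_{XX}$ choice. We expect identifiability to be the main obstacle. We will handle it as in Appendix~\ref{sec:certificate}. We exhibit an explicit certificate, namely monomial multiples of the equations whose coefficient matrix is nonsingular at one rational specialization. This is verified by computer algebra, per dimension $\mathfrak D\in\{1,2,3\}$. If the full certificate is too large, we first solve the linear-in-$x$ combinations (for instance, Fourier coefficients where only the second term survives). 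These should determine $x$ linearly with a generically nonsingular matrix, in which case only Lemma~\ref{lem:perturbed_least_squares} is needed.

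\emph{Stability and cost.} All divisions, square roots and sign tests use nonzero polynomials in the TI coefficients. Lemma~\ref{lem:gaussian_polynomial} bounds these away from zero under smoothing, and there are only a constant number of stages. Lemma~\ref{lem:finite_stage_stability} together with Corollary~\ref{cor:certificate_partial_reconstruction} gives polynomial error amplification in $R,\varsigma^{-1},\delta^{-1}$. Finally, Theorem~\ref{thm:main_complexity4derivative_estimation} with $r=4$, constant support and $n$-independent $g$ yields the $\widetilde O(\epsilon^{-2}\poly(R,\varsigma^{-1},\delta^{-1}))$ bounds. The refinement step of Proposition~\ref{prop:coarse_refinement} handles the nonlinear solve.
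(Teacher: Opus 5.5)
Your two-stage architecture matches the paper's, and Stage~1 is fine. For the relative $XX$ signs you use the nonadjacent-pair identity \eqref{eq:nonadjacent_pair} on $\{r+\hat e_\alpha,r+\hat e_\beta\}$ instead of the paper's single-flip correlator \eqref{eq:fourthordertransition}. That substitute works, because $A_uA_w$ is invariant under the single-$X$ sign and \eqref{eq:nonadjacent_T_recovery} returns $T_{uw}$ together with its sign. Note, however, that this pair has two common neighbours, $r$ and $r+\hat e_\alpha+\hat e_\beta$, so $T_{uw}=2J^\alpha_{XX}J^\beta_{XX}$ rather than a single product.

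\textbf{The gap is Stage~2, which is the substance of the theorem.} You never write down the diagonal equations or show that they have a generically unique solution. Appealing to a computer-algebra certificate that is never exhibited is a plan, not a proof. Your parity argument only excludes the one alternative $-D$, not other solutions of a quadratic system.

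Your linear fallback also fails as described. Over the nearest-neighbour bits, no Fourier coefficient isolates the linear-in-$D$ term. In 1D both terms have degree at most two in $(s_{i-1},s_{i+1})$. In 2D/3D the only separating characters isolate the \emph{quadratic} term.

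The missing idea is the paper's degree count. Each two-step path uses one neighbour $j$ of $r$ and the edge $\{r,j\}$, and distinct neighbours are nonadjacent. Hence the linear-in-$D$ part has degree at most two in the neighbour signs, while $-8A^2E^2$ reaches degree $2\mathfrak D$.
\begin{itemize}
\item Characters of degree three and four therefore isolate $-8A^2E^2$, which is linear in the monomials $h_ZJ^\alpha_{ZZ}$, $J^\alpha_{ZZ}J^\beta_{ZZ}$, $(J^\alpha_{ZZ})^2$.
\item In 2D, five such characters give a square system with determinant $3a^2bcde(b-c)(b+c)(d-e)(d+e)$. In 3D, nine give determinant $777600$ at a witness.
\item The linear part $\mathcal L$ at the all-zero input, which is nonzero at a witness, then ties the diagonal sign to the chosen $XX$ sign.
\item In 1D the paper instead uses the patterns $(z_{i-1},z_{i+1})=(0,1),(1,0)$, where $E$ reduces to $h_Z$, to get a $2\times2$ linear system in $(h_Z,h_Z^2)$. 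The other two patterns give the system for $(J_{ZZ},J_{ZZ}^2)$, with explicit nonzero determinants.
\end{itemize}

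Your linear idea can be rescued by varying \emph{second}-neighbour bits, on which $A$ and $E_z-E_{z'}$ do not depend. In 1D, with $z_i=0$ and $s_{i\pm1}=+1$, the $s_{i+2}$ and $s_{i-2}$ Fourier coefficients of the $K$-subtracted $f(4,z,z\oplus e_i)$ equal $32J_{XX}(h_X+J_{ZX}+J_{XZ})$ times $h_XJ_{ZZ}+J_{XZ}h_Z$ and $h_XJ_{ZZ}+J_{ZX}h_Z$, respectively. This linear system in $(h_Z,J_{ZZ})$ has determinant $h_X(J_{XZ}-J_{ZX})$, and it fixes the diagonal sign automatically. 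In 2D/3D the corresponding rank check would still have to be carried out before your stability and cost paragraph applies.
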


The proof in Appendix~\ref{sec:TI_Zbasis_proofs} specifies a constant number of local input patterns and linear systems.
Second-order data determine the single-$X$ coefficients and the magnitudes of the $XX$ coefficients, and in two and three dimensions, fourth-order data also fix the relative signs of the $XX$ coefficients along different axes.
After choosing representatives for these signs, we subtract the $K$-only fourth-order coefficients and solve the resulting equations for the diagonal parameters.
The two remaining common signs are generated by conjugation by $Z^{\otimes n}$ and time reversal.

We summarize the 1D protocol in Algorithm~\ref{algo:1D_TI_z_basis} and the 2D/3D protocol in Algorithm~\ref{algo:2D3D_TI_z_basis}.

\begin{algorithm}[t!]
\caption{Reconstructing a 1D TI NN IXZ Hamiltonian with $Z$-to-$Z'$ access}
\label{algo:1D_TI_z_basis}
\algorithmicrequire{Accuracy $\epsilon$, failure probability $\delta$, and smoothing parameter $\varsigma>0$; query access to $U(t)=e^{-iHt}$ for an unknown 1D TI NN IXZ Hamiltonian in the one-dimensional specialization of Eq.~\eqref{eq:2D_TI_z_to_zp_DK}.}

\algorithmicensure{$\epsilon$-accurate estimates of the tuples $(h_X,J_{XZ},J_{ZX})$ and $(h_Z,J_{ZZ},J_{XX})$, up to a global sign for each tuple.}

\textbf{Choose precision. }Use derivative error $\epsilon/\poly(R,\varsigma^{-1},\delta^{-1})$, also below the stability threshold specified in Appendix~\ref{sec:TI_Zbasis_proofs}, with the effective radius from Eq.~\eqref{eq:effective_radius}.

\textbf{Choose the local regions. }Fix neighboring sites $\{i^*,i^*+1\}$ such that the radius-two neighborhood of $i^*$ lies in the chain. 
Let $B$ contain the sites within distance $4$ of this pair, intersected with the finite chain.
Enlarge $B$ by four interaction layers to obtain $S$. Prepare $z_{\mathrm{loc}}=z|_S$ and measure $z'_{\mathrm{loc}}=z'|_B$, as in Eq.~\eqref{eq:local_transition_bridge}.

\textbf{Define measured Taylor coefficients. }For the local output probability $p_H^B$ of Eq.~\eqref{eq:local_transition_bridge},
\begin{align*}
[t^2]p_H^B(t)=\frac{(p_H^B)''(0)}{2!},\qquad
[t^4]p_H^B(t)=\frac{(p_H^B)^{(4)}(0)}{4!}.
\end{align*}

\textbf{Estimate second-order and fourth-order terms. }For the single-site flip $\sigma=\{i^*\}$, estimate $[t^2]p_H^B(t)$ and $[t^4]p_H^B(t)$ for all four choices of $(z_{i^*-1},z_{i^*+1})$, with all other input bits zero.
For the adjacent-pair flip $\sigma=\{i^*,i^*+1\}$, estimate $[t^2]p_H^B(t)$ from the all-zero input.
Here $\sigma$ is the set of bit flips from $z|_B$ to $z'|_B$.

\textbf{Estimate parameters. }Recover $(h_X,J_{XZ},J_{ZX})$ from the four single-flip second-order equations and $|J_{XX}|$ from the adjacent-pair equation. 
Choose one common sign for the former tuple and one sign for $J_{XX}$.
Compute the fourth-order $K$-only local probabilities on $S$ and subtract them from the measured fourth-order coefficients. 
By Eq.~\eqref{eq:local_transition_bridge}, these differences give the displayed diagonal equations. 
Solve them to obtain $(\hat h_X,\hat J_{XZ},\hat J_{ZX})$ and $(\hat h_Z,\hat J_{ZZ},\hat J_{XX})$ up to a global sign for each tuple.
\end{algorithm}

\begin{algorithm}[t!]
\caption{Reconstructing a 2D/3D TI NN IXZ Hamiltonian with $Z$-to-$Z'$ access}
\label{algo:2D3D_TI_z_basis}
\algorithmicrequire{Accuracy $\epsilon$, failure probability $\delta$, and smoothing parameter $\varsigma>0$; query access to $U(t)=e^{-iHt}$ for an unknown 2D TI NN IXZ Hamiltonian in Eq.~\eqref{eq:2D_TI_z_to_zp_DK}, or its 3D TI NN IXZ extension with an additional axis $\hat e_z$.}

\algorithmicensure{$\epsilon$-accurate estimates of the tuples $(h_X,\{J_{XZ}^\alpha,J_{ZX}^\alpha\}_\alpha)$ and $(h_Z,\{J_{ZZ}^\alpha,J_{XX}^\alpha\}_\alpha)$, with $\alpha\in\{x,y\}$ in 2D or $\alpha\in\{x,y,z\}$ in 3D, up to a global sign for each tuple.}

\textbf{Choose precision. }Use derivative error $\epsilon/\poly(R,\varsigma^{-1},\delta^{-1})$, also below the stability threshold specified in Appendix~\ref{sec:TI_Zbasis_proofs}, with the effective radius from Eq.~\eqref{eq:effective_radius}.

\textbf{Choose the local regions. }Fix a bulk site $r^*$ whose radius-two neighborhood is contained in the lattice, and include its nearest neighbors along every axis, including the $z$ axis in 3D.
Let $B$ contain the sites within distance $4$ of this set, intersected with the finite lattice. 
Enlarge $B$ by four interaction layers to obtain $S$. Prepare $z_{\mathrm{loc}}=z|_S$ and measure $z'_{\mathrm{loc}}=z'|_B$.

\textbf{Define measured Taylor coefficients. }For the local output probability $p_H^B$ of Eq.~\eqref{eq:local_transition_bridge},
\begin{align*}
[t^2]p_H^B(t)=\frac{(p_H^B)''(0)}{2!},\qquad
[t^4]p_H^B(t)=\frac{(p_H^B)^{(4)}(0)}{4!}.
\end{align*}

\textbf{Estimate second-order and fourth-order terms. }For the single-site flip $\sigma=\{r^*\}$, estimate $[t^2]p_H^B(t)$ and $[t^4]p_H^B(t)$ over all $16$ nearest-neighbor input patterns in 2D, or all $64$ in 3D, with all other input bits zero.
For each adjacent-pair flip $\sigma=\{r^*,r^*+\hat e_\alpha\}$, estimate $[t^2]p_H^B(t)$ from the all-zero input.
For each pair of distinct axes $\alpha,\beta$, also estimate the single-flip fourth-order correlator in Eq.~\eqref{eq:fourthordertransition}: vary the two input bits at $r^*+2\hat e_\alpha$ and $r^*+2\hat e_\beta$ over their four choices, keep all other input bits zero, and flip only $r^*$ in the output.

\textbf{Estimate parameters. }Use the second-order equations and the correlators above to recover the off-diagonal coefficients, choosing one common sign for the single-$X$ coefficients and one for the $XX$ coefficients.
Subtract the fourth-order $K$-only local coefficients computed on $S$, as in Eq.~\eqref{eq:local_transition_bridge}, and form the five diagonal-product equations in 2D or the nine in 3D from Appendix~\ref{sec:TI_Zbasis_proofs}.
Their solutions determine the diagonal coefficients up to a common sign.
Fix this sign relative to the chosen $XX$ sign by comparing the two predictions for the single-flip fourth-order coefficient at the all-zero input, already included among the measured patterns.
Output $(h_X,\{J_{XZ}^\alpha,J_{ZX}^\alpha\}_\alpha)$ and $(h_Z,\{J_{ZZ}^\alpha,J_{XX}^\alpha\}_\alpha)$ up to a global sign for each tuple.
\end{algorithm}

\paragraph{Known local depolarizing SPAM.}
\label{par:known_depolarizing_spam}
We can also account for known local depolarizing noise in preparation and measurement when arbitrary $Z$-basis input is available.
For depolarizing strength $\lambda<1$, the effect on a computational-basis bit is the classical channel
\begin{align*}
B_\lambda=\begin{pmatrix}1-\lambda/2&\lambda/2\\\lambda/2&1-\lambda/2\end{pmatrix}.
\end{align*}
Tensor products of this channel act on the table of local transition probabilities, with one factor for each noisy input or output bit.
The eigenvalues of $B_\lambda$ are $1$ and $1-\lambda$, and the sum of the absolute values in each row of its inverse is $(1-\lambda)^{-1}$.
If $q$ input and output bits are involved in total, inverting these channels therefore amplifies the error in each table entry by at most $(1-\lambda)^{-q}$ and the sample cost by at most $(1-\lambda)^{-2q}$.
The number $q$ is constant for the local regions used here.
To invert the preparation channel, we use all input choices on the entire region supporting the required nested commutators.

\subsection{TI three-local chains}\label{sec:3TI_IXZ}
We now allow interactions supported within three consecutive sites of a TI chain.
The Hamiltonian $H=D+K$ with $D$ and $K$ representing the diagonal part and the off-diagonal part has
\begin{align}\label{eq:ham_3_local_TI}
\begin{split}
&D=\sum_i \left(h_Z Z_i+J_{ZZ}Z_iZ_{i+1}+J_{ZIZ}Z_iZ_{i+2}+J_{ZZZ}Z_iZ_{i+1}Z_{i+2}\right),\\
&K=\sum_i\Big(h_X X_i+J_{XZ}X_iZ_{i+1}+J_{ZX}Z_iX_{i+1}+J_{XX}X_iX_{i+1}+J_{XIZ}X_iZ_{i+2}+J_{ZIX}Z_iX_{i+2}+J_{ZZX}Z_iZ_{i+1}X_{i+2}\\
&\qquad\qquad+J_{ZXZ}Z_iX_{i+1}Z_{i+2}+J_{XZZ}X_iZ_{i+1}Z_{i+2}+J_{XIX}X_iX_{i+2}+J_{XXZ}X_iX_{i+1}Z_{i+2}+J_{XZX}X_iZ_{i+1}X_{i+2}\\
&\qquad\qquad+J_{ZXX}Z_iX_{i+1}X_{i+2}+J_{XXX}X_iX_{i+1}X_{i+2}\Big).
\end{split}
\end{align}
There are four diagonal parameters and fourteen off-diagonal parameters.
We group them according to whether their Pauli strings contain an odd or even number of $X$ operators:
\begin{align*}
\Xi_o=(h_X,J_{XZ},J_{ZX},J_{XIZ},J_{ZIX},J_{ZZX},J_{ZXZ},J_{XZZ},J_{XXX}),\\
\Xi_e=(J_{XX},J_{XIX},J_{XXZ},J_{XZX},J_{ZXX},h_Z,J_{ZZ},J_{ZIZ},J_{ZZZ}).
\end{align*}
The two gauge transformations allow an independent common sign on each of these two families.
\begin{theorem}\label{thm:TI_3_local}
Let $H$ be the 1D $n$-qubit TI $3$-\gl IXZ Hamiltonian in Eq.~\eqref{eq:ham_3_local_TI} on an open chain containing the bulk measurement regions used in Appendix~\ref{sec:proof_TI_3_local}, with parameters $\Xi=\Xi_e\cup\Xi_o$.
Under independent Gaussian smoothing, there is a protocol that estimates every parameter in $\Xi_o$ and $\Xi_e$ to error $\epsilon$, up to an independent common sign on each tuple, with probability at least $1-\delta$.
Its sample complexity, total evolution time, and classical post-processing complexity are each $\widetilde O(\epsilon^{-2}\poly(R,1/\varsigma,1/\delta))$, using constant-size local $Z$-basis experiments.
\end{theorem}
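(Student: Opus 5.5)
We follow the same two-stage template as \Cref{thm:TI_NN_z_to_zp}: first recover the off-diagonal part $K$ from second-order local transition probabilities, fix a gauge representative, and then recover the diagonal part $D$ from fourth-order data after subtracting the $K$-only contribution via Eq.~\eqref{eq:local_transition_bridge}.

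\textbf{Stage 1: second-order data.} By Eq.~\eqref{eq:taylor_p_overview}, $[t^2]p_{z\to z'}=\bra{z'}K\ket{z}^2$. We will use bulk flip sets $\sigma\in\{\{i\},\{i,i+1\},\{i,i+2\},\{i,i+1,i+2\}\}$ and vary the input bits in the radius-two neighborhood of $\sigma$. For each $\sigma$, the amplitude $\bra{z\oplus 1_\sigma}K\ket{z}$ is a multilinear polynomial in the signs $(-1)^{z_j}$. For instance, for $\sigma=\{i\}$ it is
\[
h_X+J_{XZ}s_{i+1}+J_{ZX}s_{i-1}+J_{XIZ}s_{i+2}+J_{ZIX}s_{i-2}+J_{ZZX}s_{i-2}s_{i-1}+J_{ZXZ}s_{i-1}s_{i+1}+J_{XZZ}s_{i+1}s_{i+2},
\]
with $s_j=(-1)^{z_j}$. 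Its square is measured for all $16$ sign patterns. The Fourier coefficients of this square give all pairwise products of the eight coefficients, which determine the tuple up to a common sign whenever some coefficient is nonzero. The single-site flip does not see $J_{XXX}$, so $J_{XXX}$ comes from $\sigma=\{i,i+1,i+2\}$, which sees it alone; this gives $|J_{XXX}|$. Similarly, $\sigma=\{i,i+1\}$ gives $J_{XX}+J_{XXZ}s_{i+2}+J_{ZXX}s_{i-1}$ up to sign, and $\sigma=\{i,i+2\}$ gives $J_{XIX}+J_{XZX}s_{i+1}$ up to sign.

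\textbf{Resolving the residual sign ambiguities.} After Stage 1, the remaining freedom is at most four independent signs: one for the single-flip tuple, and one each for $J_{XXX}$, the adjacent-pair tuple, and the next-nearest pair tuple. The gauges of \Cref{thm:unobservable_gauges} leave exactly two of these free, namely $Z^{\otimes n}$ and time reversal. We therefore need two relative-sign tests from fourth-order raw data, analogous to the nonadjacent-pair tests, in which a target flip set can be reached in two steps through distinct intermediate sets.
\begin{itemize}
\item The flip $\{i,i+1,i+2\}$ is reachable directly via $J_{XXX}$, and in two steps via $\{i\}$ followed by $\{i+1,i+2\}$. The cross term in $-\tfrac13\bra{z'}H\ket z\bra{z'}H^3\ket z$ therefore fixes the sign of $J_{XXX}$ times (single-flip)$\times$(pair) products.
\item The flip $\{i,i+2\}$ is reachable through $\{i\}\cup\{i+2\}$ or $\{i,i+1\}\cup\{i+1,i+2\}$, which ties the signs of the next-nearest tuple to those of the adjacent tuple.
\end{itemize}
Since $\{i,i+2\}$ is directly flippable, these are genuinely fourth-order interference terms. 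Their diagonal-dependent parts would be handled by folding them into the Stage 2 system. After this step, only the two gauge signs remain.

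\textbf{Stage 2: diagonal parameters.} With $K$ fixed, the subtracted quantity $[t^4](p-p^K)$ in the displayed overview formula is polynomial in the energy differences $E_z-E_{z'}$ and $E_z+E_{z'}-2E_s$. These are linear in $(h_Z,J_{ZZ},J_{ZIZ},J_{ZZZ})$, with known $K$-dependent prefactors. Varying the input patterns around single-site flips and taking Fourier coefficients will yield a constant-size polynomial system, mostly quadratic and partly linear, in the four unknowns.

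We will prove generic uniqueness by exhibiting a certificate in the sense of \Cref{lem:exactcertificate}, with Stage 1 quantities as $\theta$. Quantitative stability then follows from \Cref{cor:certificate_partial_reconstruction} together with \Cref{lem:finite_stage_stability}. The denominators and minors are bounded below via \Cref{lem:gaussian_polynomial}, where it suffices to display one nondegenerate numerical assignment for each. Costs then follow from \Cref{thm:main_complexity4derivative_estimation}, with $n$-independence because only constant-size regions are used.

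\textbf{Main obstacle.} The hardest step will be the sign-resolution stage together with the certificate for Stage 2. With fourteen off-diagonal couplings, the fourth-order interference terms mix many products. The work is to identify Fourier coefficients that isolate a single relative sign with a generically nonzero coefficient; I would search these symbolically. The fallback is to solve the sign choices jointly with the diagonal unknowns, selecting among finitely many candidate branches by residual with a positive gap, as permitted by \Cref{lem:finite_stage_stability}.
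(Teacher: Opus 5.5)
\textbf{The residual ambiguity after Stage 1 is misidentified.} For $\sigma=\{i,i+2\}$, the amplitude $A(s)=J_{XIX}+J_{XZX}s_{i+1}$ and $s_{i+1}A(s)=J_{XZX}+J_{XIX}s_{i+1}$ have the same square at every input. Second-order data therefore leave a common sign \emph{and} the interchange $J_{XIX}\leftrightarrow J_{XZX}$. The support $\{\varnothing,\{i+1\}\}$ is invariant under symmetric difference with $\{i+1\}$, which is exactly the failure of the hypothesis of \Cref{lem:boolean_amplitude}; it is the same mechanism as the chain-endpoint ambiguity. No other second-order datum involves these two coefficients. So after fixing the two gauge signs there are eight candidates, not four: the interchange, the sign of the pair, and the sign of $J_{XXX}$. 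Your sign-resolution stage has no test for the interchange, so your protocol can output the swapped pair.

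Relatedly, your claim that the Fourier coefficients of the squared single-flip amplitude give all pairwise products is false for this support. Several products share a character: the coefficient of $s_{i-2}s_{i-1}$ is $2(h_XJ_{ZZX}+J_{ZIX}J_{ZX})$, and there are only $16$ coefficients for $28$ products. The conclusion (recovery up to a common sign) still holds generically, but it needs the sign-assignment argument of \Cref{lem:boolean_amplitude} and a check of its support condition, not pairwise division. Your condition ``whenever some coefficient is nonzero'' is also not the right one.

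\textbf{The proposed sign tests do not work as described.} Since $D$ and $K$ are real, the $O(t)$ and $O(t^2)$ amplitude terms are out of phase. In Eq.~\eqref{eq:fourth-order-expansion}, a direct amplitude $A$ therefore appears linearly only in $4A\sum_sK_{b,s}K_{s,a}(E_a+E_b-2E_s)-8A(K^3)_{b,a}$. Consequently:
\begin{itemize}
\item For the triple flip, the product $J_{XXX}\cdot(\text{single})\cdot(\text{pair})$ you cite is multiplied by unknown energy combinations.
\item For $\{i,i+2\}$, no two-step route in $(K^2)_{b,a}$ uses $J_{XIX}$ or $J_{XZX}$. The diagonal-free interference in $6(K^2)_{b,a}^2$ thus says nothing about that pair.
\end{itemize}
Your fallback of solving the signs jointly with $D$ would need a certificate that every wrong branch, including the missed interchange branches, admits no consistent diagonal solution. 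You do not supply one.

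\textbf{The missing idea} is to use target flip sets that no single term can produce. From the all-zero input, $\sigma=\{i,i+1,i+2,i+3\}$ and $\{i,i+1,i+2,i+4\}$ have $A=0$, so $f(4,0^n,1_\sigma)=6\langle 1_\sigma|K^2|0^n\rangle^2$ is independent of $D$. Their two-step routes combine $J_{XXX}$ with single-flip amplitudes, and the distance-two amplitudes $J_{XIX}\pm J_{XZX}$ with adjacent-pair or distance-two amplitudes. The paper checks at one rational point that the eight candidates give eight distinct value pairs. Hence, for each pair of candidates, one measurement difference is a nonzero polynomial, and \Cref{lem:gaussian_polynomial} supplies the separation gap.

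\textbf{Stage 2 is only a plan.} The paper makes it concrete as follows. It passes to the four all-zero-input energy differences $\Delta_1,\dots,\Delta_4$, an invertible linear image of the diagonal parameters. It then uses all four flip types, so that the subtracted equations are linear in the $\Delta$'s and their ten quadratic products. Finally, it exhibits a $14\times14$ matrix with nonzero determinant, which turns this step into a linear solve rather than a polynomial certificate.
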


The proof in Appendix~\ref{sec:proof_TI_3_local} first reconstructs the coefficients that produce a single flip or an adjacent pair of flips, up to a common sign for each group.
After choosing these signs, second-order data leave eight choices for the coefficients that produce a distance-two pair or a triple flip.
Two fourth-order moments distinguish these choices for generic coefficients.
The remaining fourth-order equations are linear in four energy differences and their ten quadratic products, where each energy difference is a linear combination of the four diagonal parameters.
A $14\times14$ linear system determines these quantities and hence the diagonal parameters.
The proof gives explicit coefficient choices at which the signs separate all candidates and the determinant of this system is nonzero.
We summarize the TI three-local reconstruction in Algorithm~\ref{algo:TI_3_local_z_basis}.

\begin{algorithm}[t!]
\caption{Reconstructing a 1D TI $3$-\gl IXZ Hamiltonian with $Z$-to-$Z'$ access}
\label{algo:TI_3_local_z_basis}
\algorithmicrequire{Accuracy $\epsilon$, failure probability $\delta$, and smoothing parameter $\varsigma>0$; query access to $U(t)=e^{-iHt}$ for an unknown 1D TI $3$-\gl IXZ Hamiltonian in Eq.~\eqref{eq:ham_3_local_TI}.}

\algorithmicensure{$\epsilon$-accurate estimates of the tuples $\Xi_o$ and $\Xi_e$, up to a global sign for each tuple.}

\textbf{Choose precision. }Use derivative error $\epsilon/\poly(R,\varsigma^{-1},\delta^{-1})$, also below the stability threshold specified in Appendix~\ref{sec:proof_TI_3_local}, with the effective radius from Eq.~\eqref{eq:effective_radius}.

\textbf{Choose the local regions. }For the flip sets and input patterns below, choose $B$ to contain their two-interaction-layer neighborhoods, intersected with the finite chain. 
Enlarge $B$ by four more interaction layers to obtain $S$. 
Prepare $z_{\mathrm{loc}}=z|_S$ and measure $z'_{\mathrm{loc}}=z'|_B$. 
Both regions have constant size.

\textbf{Define measured Taylor coefficients. }For the local output probability $p_H^B$ of Eq.~\eqref{eq:local_transition_bridge},
\begin{align*}
[t^2]p_H^B(t)=\frac{(p_H^B)''(0)}{2!},\qquad
[t^4]p_H^B(t)=\frac{(p_H^B)^{(4)}(0)}{4!}.
\end{align*}

\textbf{Estimate second-order and fourth-order terms. }Estimate the second-order terms for the four flip sets $\{i^*\}$, $\{i^*,i^*+1\}$, $\{i^*,i^*+2\}$, and $\{i^*,i^*+1,i^*+2\}$, varying the input bits appearing in their transition amplitudes. 
Estimate the fourth-order terms for these flip sets and the input choices in Appendix~\ref{sec:proof_TI_3_local}. 
Also estimate the fourth-order terms from the all-zero input for the two flip sets $\{i^*,i^*+1,i^*+2,i^*+3\}$ and $\{i^*,i^*+1,i^*+2,i^*+4\}$.

\textbf{Estimate parameters. }Recover the off-diagonal candidates from the second-order equations and distinguish them with the two local four-site fourth-order tests. 
Subtract the $K$-only fourth-order local coefficients computed on $S$ from the remaining fourth-order data. Eq.~\eqref{eq:local_transition_bridge} gives the diagonal system in Appendix~\ref{sec:proof_TI_3_local}. 
Solve it and output $\Xi_o$ and $\Xi_e$ up to a global sign for each tuple.
\end{algorithm}

\paragraph{Higher fixed locality.}\label{sec:higher_locality_criterion}
For higher fixed locality, the same method applies if a fixed collection of local derivatives determines the parameters up to finitely many gauge choices and has a certificate that is nonzero in the original coefficient space, as required by Corollary~\ref{cor:certificate_partial_reconstruction}.
We have not established this condition for every higher locality.

\subsection{Spatially varying NN Hamiltonians}\label{sec:2gl_IXZ}
Finally, we allow the fields and edge couplings on a chain or square lattice to vary from site to site.
Orient each nearest-neighbor edge once, with $\vec E$ denoting the resulting set of oriented edges. 
We again write $H=D+K$ with
\begin{equation}\label{eq:local_IXZ_DK}
\begin{split}
D=\sum_{v\in V}h_Z^vZ_v+\sum_{(u\to v)\in\vec E}J_{ZZ}^{uv}Z_uZ_v,\
K=\sum_{v\in V}h_X^vX_v+\sum_{(u\to v)\in\vec E}
\bigl(J_{XZ}^{uv}X_uZ_v+J_{ZX}^{uv}Z_uX_v+J_{XX}^{uv}X_uX_v\bigr).
\end{split}
\end{equation}
All coefficients are real, and each independent coefficient is smoothed separately. 
On a chain, the edge $i\to i+1$ is represented as $J_{AB}^{i,i+1}=J_{AB}^i$, in the notation of Appendix~\ref{sec:proof_1D_local_Z_basis}.

There are two sources of unavoidable signs: Pauli-$Z$ conjugation at each site and the common sign from time reversal. 
Explicitly, choose independent signs $s_v\in\{\pm1\}$ for the sites and one sign $\tau\in\{\pm1\}$ for the whole Hamiltonian. 
All $Z$-basis transition probabilities remain unchanged under
\begin{equation}\label{eq:local_IXZ_signs}
\begin{split}
(h_X^v,h_Z^v)\longmapsto(\tau s_vh_X^v,\tau h_Z^v),
(J_{XZ}^{uv},J_{ZX}^{uv})\longmapsto(\tau s_uJ_{XZ}^{uv},\tau s_vJ_{ZX}^{uv}),
(J_{XX}^{uv},J_{ZZ}^{uv})\longmapsto(\tau s_us_vJ_{XX}^{uv},\tau J_{ZZ}^{uv}).
\end{split}
\end{equation}
Indeed, these transformations give $H\mapsto\tau U_ZHU_Z^\dagger$, where $U_Z=\prod_{v:s_v=-1}Z_v$: conjugation by $Z_v$ changes the sign of each term containing $X_v$, while $H\mapsto-H$ leaves the transition probabilities unchanged because $H$ is real (see Proposition~\ref{thm:unobservable_gauges}). 
Thus the signs of different coefficients cannot be chosen independently: they must obey Eq.~\eqref{eq:local_IXZ_signs}. 
For generic coefficients, their $n+1$ sign choices give $2^{n+1}$ distinct Hamiltonians with identical data.

\begin{theorem}\label{thm:local_NN_z_to_zp}\label{thm:1D_local_z_to_zp}\label{thm:2D_local_z_to_zp}
Let $H$ be a general spatially varying NN Hamiltonian of the form in Eq.~\eqref{eq:local_IXZ_DK} on a chain or square lattice. 
For a chain, assume an open chain of length at least $4$ or a periodic chain of length at least $9$. 
For a square lattice, assume an open rectangle with both side lengths at least $3$ or a periodic lattice with each period at least $9$.
For generic coefficients, local $Z$-basis experiments determine $H$ up to exactly the sign choices in Eq.~\eqref{eq:local_IXZ_signs}.
Under independent Gaussian smoothing, there is a protocol that estimates every coefficient to error $\epsilon$ up to these signs with probability at least $1-\delta$.
It uses constant-size local experiments, and its sample complexity, total evolution time, and classical post-processing complexity are
\begin{align*}
\widetilde O\bigl(\epsilon^{-2}\poly(n,R,\varsigma^{-1},\delta^{-1})\bigr).
\end{align*}
\end{theorem}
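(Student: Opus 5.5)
The proof has three stages: recover $K$ up to sign choices, fix one global gauge representative, and then solve local systems for $D$ in that representative. It then applies the stability machinery of \Cref{sec:constant_size_stability} to bound errors and cost.

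\emph{Stage 1: off-diagonal part.} For each site $v$ we take the single-flip transitions $\sigma=\{v\}$ and vary the neighboring input bits. By \eqref{eq:local_transition_bridge}, the second-order coefficient equals $\bra{z'}K\ket z^2$, where $\bra{z'}K\ket z=h_X^v+\sum_{u\sim v}J_{XZ/ZX}^{vu}(-1)^{z_u}$. Its Fourier coefficients over the neighbor bits give all pairwise products among $h_X^v$ and the incident mixed couplings, together with the sum of their squares.
\begin{itemize}
\item If $h_X^v\neq 0$ (a polynomial condition), this determines the whole tuple up to one common sign. That sign is exactly the sign $s_v$ in \eqref{eq:local_IXZ_signs}.
\item At open-chain endpoints, where only one coupling is present, the tuple has just two entries. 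The products still fix it up to sign, although the nonadjacent fourth-order tests may also be needed for the degenerate orderings.
\end{itemize}
Adjacent-pair flips at second order give $(J_{XX}^{uv})^2$, hence $|J_{XX}^{uv}|$.

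To relate signs across sites, we use the nonadjacent-pair fourth-order tests of Appendix~\ref{sec:local_z_common}. For $u,w$ at distance two with common neighbor $v$, the raw fourth-order coefficient contains an interference term between two paths: flipping $u$ and $w$ individually, and flipping the pairs $\{u,v\},\{v,w\}$ through $v$. This term reveals the sign of $J_{XX}^{uv}J_{XX}^{vw}$ relative to the known single-flip amplitudes. Products of this kind around plaquettes and along paths are gauge invariant. We would show that they fix every $J_{XX}$ sign once the $s_v$ and $\tau$ are chosen. Concretely, we fix $s_v$ by making the $h_X^v$ tuple positive and fix $\tau$ by the sign of one reference $J_{XX}$; each remaining $J_{XX}$ sign is then forced by a local test along a spanning tree of the edge graph. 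A counting argument shows the remaining freedom is exactly $2^{n+1}$.

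\emph{Stage 2: diagonal part.} With this representative $\widehat K$, we subtract $[t^4]p^B_{\widehat K}$ to get the localized equation for the energies $E_z$ in terms of $h_Z$ and $J_{ZZ}$. On the chain we take four-site windows and combine single-flip and adjacent-pair Fourier coefficients. This yields a constant-size quadratic system in the two middle fields, the three internal $ZZ$ couplings, and two endpoint sums. We certify uniqueness through \Cref{lem:exactcertificate} by exhibiting a nonsingular certificate matrix at one explicit coefficient assignment, and we would first try a computer-verified specialization. The window stitching is as described in \Cref{sec:overview_zbasis}.
\begin{itemize}
\item On an open chain, endpoint sums give the end fields; periodic chains need length at least $9$ so that the windows' supports do not wrap.
\item On the square lattice we run analogous center-site systems. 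Sides use their own equations, and corners use a final linear solve. The side-length hypotheses guarantee that the windows fit.
\end{itemize}

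\emph{Error and cost.} Every window uses the same fixed $\widehat K$, so each coefficient passes through a bounded number of stages. \Cref{lem:gaussian_polynomial} with a union bound over $O(n)$ windows, \Cref{cor:certificate_partial_reconstruction}, and \Cref{lem:finite_stage_stability} give polynomial error amplification. \Cref{thm:main_complexity4derivative_estimation} then yields the $\widetilde O(\epsilon^{-2}\poly(n,R,\varsigma^{-1},\delta^{-1}))$ bound.

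The main obstacle is the nonlinear diagonal system on the lattice. Its $K$-dependent coefficients enter the certificate, and producing an explicit nonzero certificate for the 2D center, side, and corner cases is heavy; partially eliminating linear unknowns first would make this tractable. A secondary obstacle is proving that the fourth-order sign tests resolve every non-gauge sign, including at boundaries.
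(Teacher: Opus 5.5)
\textbf{Verdict: genuine gap.} Your overall structure is the paper's: second-order blocks, nonadjacent-pair fourth-order tests propagated along a spanning tree of the edge graph, one global representative of $K$, four-site chain windows with two endpoint sums, center/side/corner systems on the rectangle, then the smoothing and stability lemmas. However, the open-chain endpoint step is wrong as stated.

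\emph{The endpoint interchange.} At a degree-one site, $A_1(s)=a+bs$ with $a=h_X^1$ and $b=J_{XZ}^{12}$. Since $sA_1(s)=b+as$ has the same square at both inputs, second-order data give only $a^2+b^2$ and $ab$. The interchange $(a,b)\leftrightarrow(b,a)$ is therefore a \emph{generic} ambiguity, not a degenerate case, and it is not among the gauges in Eq.~\eqref{eq:local_IXZ_signs}.
\begin{itemize}
\item Your criterion $h_X^v\neq0$ is not what identifies a block. The recovery $a^2=(ab_1)(ab_2)/(b_1b_2)$ needs two nonzero mixed couplings, and at a degree-one vertex no product $b_jb_l$ exists.
\item You cannot repair this afterwards with a sign test ``relative to the known single-flip amplitudes,'' because $A_1$ is exactly what is unknown there.
\item The paper resolves the endpoint jointly with the $XX$ sign. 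In $Q_{13}(z)=(A_1A_3+T)^2$ with $T=k_{12}k_{23}$, varying $s_4$ and using only the known $A_1^2$ and the interior block at site $3$ isolates $W(s_2)=A_1T$. Its $s_2$-Fourier components are $aT$ and $bT$. The known $|T|$ together with the gauge $a>0$ then fixes $a$, $b$, and $\operatorname{sgn}T$ at once.
\end{itemize}

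\emph{Steps that would not go through verbatim.}
\begin{itemize}
\item \textbf{Perpendicular sign tests.} On the square lattice, perpendicular neighbors $u,w$ of $v$ have two common neighbors $v,v'$. The test therefore returns $T_{uw}=p+p'$ with $p=k_{uv}k_{vw}$ and $p'=k_{uv'}k_{v'w}$, not the sign of a single product. You need the separately known magnitudes and the generic condition $p^2\neq(p')^2$ to read off both signs. You cannot avoid these pairs, because collinear tests never link horizontal to vertical edges.
\item \textbf{Lattice diagonal systems.} What makes them tractable is that the diagonal coefficients beyond each neighbor enter only through sums $S_\bullet$, which the combinations $G_\emptyset$ and $G_{\bullet,\circ}$ eliminate.
\item \textbf{Side vertices.} At a degree-three side vertex, this elimination leaves four equations in four unknowns. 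The paper supplements them with the degree-three Fourier coefficient $H$ of Eq.~\eqref{eq:side_triple_mask}, which contains no term linear in the diagonal parameters, to obtain a certifiable system.
\item \textbf{Missing certificates.} The explicit nonzero certificates are the identifiability content of the theorem. These are the $56\times56$ chain-window matrix, re-checked at the boundary specialization $b_0=c_3=0$, and the 2D center and side minors. Leaving them as your ``main obstacle'' leaves that part unproved.
\end{itemize}
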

The proofs first recover the off-diagonal coefficients throughout the lattice and align their signs using Appendix~\ref{sec:local_z_common}.
They then use this same representative of $K$ in every local subtraction and diagonal solve: four-site windows on a chain in Appendix~\ref{sec:proof_1D_local_Z_basis}, and a vertex with its neighbors on a square lattice in Appendix~\ref{sec:proof_2D_local_Z_basis}.
On a four-site chain window of the 1D chain, eliminating two boundary sums leaves ten polynomial equations in five diagonal variables.
The certificate proves that these five variables are uniquely determined, after which the two eliminated sums are recovered by division.
At an interior square-lattice vertex, eliminating four sums associated with its neighbors leaves seven polynomial equations for the field at that vertex and its four incident $ZZ$ couplings.
The corresponding certificate proves uniqueness for these five parameters.
As the window or center moves through the lattice, the local solutions determine all individual Hamiltonian coefficients.
Arguments similar to those in Appendix~\ref{sec:general_full} bound the propagation of measurement errors and establish the stated resource costs.

\paragraph{Open-boundary reconstruction.}
At a chain endpoint, second-order data do not distinguish the two single-flip coefficients from their interchange.
Fourth-order transitions that flip a nonadjacent pair distinguish these choices and align the signs of all $XX$ couplings, including those incident to boundary vertices.
For the diagonal step, the chain certificate remains nonzero when either outside edge of a window is absent.
On a rectangle, a separate system using three neighbors recovers each vertex on a straight side.
Once those coefficients are known, each corner is recovered from a linear equation.
Appendices~\ref{sec:local_z_common}, \ref{sec:proof_1D_local_Z_basis}, and~\ref{sec:proof_2D_local_Z_basis} give these boundary equations and their conditioning bounds.

The lower bounds of $9$ on the system size in 1D are sufficient to preserve all local equations used in this reconstruction.
The second-order equations use a vertex and its neighbors, or an edge.
The nonadjacent-pair sign tests use two neighbors of a common center and their neighbors, so every site entering those equations is within distance two of that center.
For the diagonal equations, Eq.~\eqref{eq:fourth-order-expansion} shows that subtracting the $K$-only terms leaves only the direct transition and two-step intermediate transitions.
For a single-site or adjacent-pair flip set $\sigma$, every intermediate flip is within distance one of $\sigma$, and the amplitudes and energy differences then involve only coefficients within distance two of $\sigma$.
Thus all equations for a four-site chain window lie within distance three of one of its middle sites, and all equations at a square-lattice center lie within distance two of it.
In a radius-three neighborhood, any coordinate difference has magnitude at most $6$.
Reducing coordinates modulo periods of at least $9$ therefore neither identifies distinct sites nor introduces new nearest-neighbor edges: such an identification requires a nonzero coordinate difference of magnitude at least $9$, and a new edge requires one of magnitude at least $8$.
The local coefficient polynomials and their nonzero certificates are consequently the same as in the interior calculations.
The measurement and preparation regions $B,S$ are formed in the actual periodic graph and may wrap around it, and as a result, Eq.~\eqref{eq:local_transition_bridge} still applies there.

\section{Deferred proofs}\label{sec:defer_proof}
The Hamiltonian reconstruction protocols in this work use short-time derivatives to obtain equations for the unknown Hamiltonian coefficients. In this section, we begin by proving that these derivatives can be estimated accurately from measurements at evolution times independent of the system size, with the cost controlled by the local interaction strength rather than the norm of the full Hamiltonian. This gives the derivative-estimation bounds used in both access models.

We then turn to the $Z$-basis reconstruction proofs from Appendix~\ref{sec:zbasis}. The transition identities in Appendix~\ref{sec:local_z_common} explain how second- and fourth-order data determine the off-diagonal coefficients up to gauge and provide equations for the diagonal coefficients. The subsequent proofs apply these identities to each Hamiltonian family, specify the local measurements, and verify the algebraic conditions needed for uniqueness and stability. For the nonlinear systems, the explicit certificates established here allow us to apply the general error bounds of Appendix~\ref{sec:certificate}.

\subsection{Derivative estimation by stable polynomial interpolation}\label{sec:defer_derivative_estimation}\label{sec:dev_2_body_const_deg_graph}
Here we prove Theorem~\ref{thm:main_complexity4derivative_estimation} by fitting a polynomial to measurements at a finite set of short evolution times. Locality bounds the Taylor remainder independently of the total system size. A suitable choice of sampling times controls how measurement errors affect the derivatives extracted from the fitted polynomial. We keep the local interaction strength explicit, since it sets both the evolution times and the sampling cost. The argument applies to all the Hamiltonian families considered here, with interaction size $m=2$ for the NN models and $m=3$ for the three-local chain.

\begin{theorem}[Simultaneous local derivative estimation]\label{thm:complexity4derivative_estimation_formal}\label{thm:bound_d_poly}\label{thm:main_bound_d_poly}
Let $m,\chi$ and $r$ be fixed positive integers, and let $H=\sum_AH_A$ be a Hamiltonian on $n$ qubits, where each Hermitian term $H_A$ is supported on a nonempty set $A\subseteq\{1,\ldots,n\}$ with $|A|\le m$. Suppose a known bound $g\ge0$ satisfies
\begin{equation}\label{eq:ham_local_bounded}
\sum_{A:A\cap B\ne\varnothing}\|H_A\|\le g|B|
\qquad\text{for every }B\subseteq\{1,\ldots,n\},
\end{equation}
where $\|\cdot\|$ is the operator norm. Equivalently, $\max_v\sum_{A\ni v}\|H_A\|\le g$.
Let $\rho$ be a density matrix that can be prepared repeatedly. For an integer $K\ge1$, let $O_1,\ldots,O_K$ be Hermitian observables, each supported on at most $\chi$ qubits and satisfying $\|O_k\|\le1$. Define
\begin{align*}
p_k(t)=\tr(\rho e^{iHt}O_ke^{-iHt}),\qquad
q=\max\{1,m-1\},\qquad G_{\mathrm{loc}}=\max\{1,qg\}.
\end{align*}
Assume that, after evolution for any requested time $t$, one measurement setting supplies real outcomes $Y_1,\ldots,Y_K\in[-1,1]$ with $\mathbb E[Y_k]=p_k(t)$. Repetitions of this experiment are independent, while the $K$ outcomes within one repetition need not be independent.

For $0<\eta,\delta'<1$, there is an estimator satisfying
\begin{align*}
\Pr\!\left\{
\max_{\substack{1\le k\le K\\0\le s\le r}}
\bigl|\widehat f_{k,s}-p_k^{(s)}(0)\bigr|\le\eta
\right\}\ge1-\delta',
\end{align*}
where the probability is over the measurement outcomes. The estimator uses a polynomial interpolant of integer degree
\begin{align*}
d\ge\max\{2,r\},\qquad
d=O_r\bigl(1+\chi+\log(2+G_{\mathrm{loc}}^r/\eta)\bigr),
\end{align*}
and a total of at most
\begin{align*}
N\le C_rG_{\mathrm{loc}}^{2r}\eta^{-2}d^{4r+1}\log^2(d+1)
\log\frac{2K(d+1)}{\delta'}
\end{align*}
shots in this setting, summed over all queried times. Here $C_r$ and the constant in $O_r$ depend only on $r$. Every queried time lies in
\begin{align*}
\left[\frac{1}{12G_{\mathrm{loc}}d^2},\frac{2+d^{-2}}{12G_{\mathrm{loc}}}\right].
\end{align*}
Accumulating the supplied outcomes and computing the interpolants requires $O(KN+Kd^2)$ classical arithmetic operations. For a single observable, take $K=1$.
\end{theorem}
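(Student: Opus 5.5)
The plan has three parts: a Taylor bound for $p_k$ that sees only local interaction strength, a sampling design on Chebyshev-type nodes in a short interval $[0,\tau]$ with $\tau\asymp 1/G_{\mathrm{loc}}$, and sample-mean estimates at those nodes, interpolated and differentiated, with a union bound over $k$ and nodes.

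\emph{Step 1: local Taylor bound.} Since $p_k^{(s)}(0)=i^s\tr(\rho\,\mathrm{ad}_H^s(O_k))$, I bound $\|\mathrm{ad}_H^s(O_k)\|$ by expanding $\mathrm{ad}_H^s(O)=\sum_{A_1,\dots,A_s}\mathrm{ad}_{H_{A_s}}\cdots\mathrm{ad}_{H_{A_1}}(O)$. Only sequences where each $A_j$ meets the current support contribute. After $j$ steps that support has size at most $\chi+jq$, because each term adds at most $m-1\le q$ new sites. Condition~\eqref{eq:ham_local_bounded} bounds the sum over admissible $A_{j+1}$ by $g(\chi+jq)$, giving
\[
\|\mathrm{ad}_H^s(O)\|\le 2^s\prod_{j=0}^{s-1}g(\chi+jq)\le (2G_{\mathrm{loc}})^s\,\frac{\Gamma(\chi+s)}{\Gamma(\chi)}\lesssim (2G_{\mathrm{loc}})^s s!\binom{s+\chi}{\chi}.
\]
So $p_k$ is analytic on a disc of radius $\asymp 1/G_{\mathrm{loc}}$. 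On $[0,\tau]$ with $\tau=c/G_{\mathrm{loc}}$ and $c$ small (the stated interval endpoint $(2+d^{-2})/(12G_{\mathrm{loc}})$ suggests $\tau=1/(6G_{\mathrm{loc}})$), the degree-$d$ Taylor remainder is then at most $\binom{d+\chi}{\chi}(1/3)^{d}$. Its contribution to the $s$-th derivative of the interpolant is at most this times the Markov-type amplification below. Choosing $d=O_r(1+\chi+\log(2+G_{\mathrm{loc}}^r/\eta))$ makes this bias at most $\eta/2$; the $\chi$ term in $d$ absorbs the binomial factor.

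\emph{Step 2: stable interpolation on Chebyshev-type nodes.} I would sample at $d+1$ nodes $t_j=\tfrac{\tau}{2}(1-\cos\theta_j)$, shifted slightly off $0$, which is why the smallest time is $\asymp 1/(G_{\mathrm{loc}}d^2)$. Writing the interpolant in Lagrange form, $\widehat f_{k,s}=\sum_j \ell_j^{(s)}(0)\,\hat y_{k,j}$. Chebyshev Lebesgue constants are $O(\log d)$, and Markov's inequality on $[0,\tau]$ gives $\|P^{(s)}\|_\infty\le (2d^2/\tau)^s\|P\|_\infty$. Together these yield $\sum_j|\ell_j^{(s)}(0)|\lesssim (G_{\mathrm{loc}}d^2)^s\log d$. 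The same bound controls how the Taylor remainder (a function of sup norm at most the bias bound) propagates to the derivatives.

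\emph{Step 3: sampling and union bound.} Allocate $N/(d+1)$ shots per node. By Hoeffding with failure $\delta'/(K(d+1))$, each node mean has error
\[
\upsilon\lesssim\sqrt{(d+1)\log(2K(d+1)/\delta')/N}.
\]
This bound is valid even though the $K$ outcomes within a shot are correlated, since the union bound handles that. Requiring $(G_{\mathrm{loc}}d^2)^r\log d\cdot\upsilon\le\eta/2$ gives the stated $N\le C_rG_{\mathrm{loc}}^{2r}\eta^{-2}d^{4r+1}\log^2(d+1)\log(2K(d+1)/\delta')$. Accumulation costs $O(KN)$ and computing the weights $\ell_j^{(s)}(0)$ costs $O(d^2)$ per observable.

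\emph{Main obstacle.} I expect Step 2 to be the hardest: obtaining the polynomial-in-$d$ amplification with the nodes kept strictly away from $t=0$, and evaluating derivatives at the endpoint $0$ rather than in the interior. I would first try to treat the interval $[\tau/(2d^2),\tau]$ as an affine image of $[-1,1]$ containing $0$ within $O(d^{-2})$ of its endpoint. Then I would use the Bernstein–Markov bound for polynomials slightly outside the interval, which costs only a constant factor at distance $O(d^{-2})$, to transfer the endpoint estimate to $0$.
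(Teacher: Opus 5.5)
Your proposal is correct and follows the paper's proof essentially step for step: the same local commutator bound $(2qg)^j j!\binom{\chi+j-1}{j}$, the same Chebyshev--Lobatto nodes on an interval of length $1/(6G_{\mathrm{loc}})$ shifted up by $1/(12G_{\mathrm{loc}}d^2)$, an $O(\log d)$ Lebesgue constant combined with Markov amplification, and Hoeffding plus a union bound over the $K(d+1)$ node means. The one difference is the step you flagged as the main obstacle: in units where the nodes lie in $[a,a+2]$ with $a=d^{-2}$, the paper bounds each $|v^{(j)}(a)|\le d^{2j}\|v\|_{[a,a+2]}$ by Markov and Taylor-expands $v^{(s)}$ from $a$ to $0$, getting $|v^{(s)}(0)|\le\|v\|_{[a,a+2]}\sum_{j\ge s}a^{j-s}d^{2j}/(j-s)!\le e\,d^{2s}\|v\|_{[a,a+2]}$, which is more elementary than your Chebyshev-growth argument, though both work.
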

\begin{proof}
We first bound the Taylor coefficients. Write $\operatorname{ad}_H^j(O_k)$ for the $j$-fold nested commutator with $H$, with $\operatorname{ad}_H^0(O_k)=O_k$. In its expansion into ordered interaction terms, a nonzero commutator can only add an interaction that meets the support already present. Each such interaction adds at most $m-1$ sites. Thus, after $j$ commutators, each nonzero term has support size at most $\chi+qj$, and Eq.~\eqref{eq:ham_local_bounded} bounds the sum of the norms of its possible next interactions by $g(\chi+qj)$. Iterating $\|[A,B]\|\le2\|A\|\|B\|$ and using $\chi+qj\le q(\chi+j)$ gives
\begin{align*}
\|\operatorname{ad}_H^j(O_k)\|
\le(2qg)^j\chi(\chi+1)\cdots(\chi+j-1)
=(2qg)^j j!\binom{\chi+j-1}{j},
\end{align*}
with the empty product interpreted as $1$ when $j=0$.

Rescale time by setting $u=12G_{\mathrm{loc}}t$, and let $P_{k,d}$ be the degree-$d$ Taylor polynomial at zero of $F_k(u)=p_k(u/(12G_{\mathrm{loc}}))$. Since $p_k^{(j)}(0)=i^j\tr(\rho\operatorname{ad}_H^j(O_k))$, the commutator bound implies
\begin{align*}
\|F_k-P_{k,d}\|_{[0,3]}
&\le\sum_{j>d}2^{-j}\binom{\chi+j-1}{j}\\
&\le(3/4)^{d+1}\sum_{j\ge0}(2/3)^j\binom{\chi+j-1}{j}
=3^\chi(3/4)^{d+1}.
\end{align*}
Here $\|\cdot\|_I$ denotes the supremum norm on the interval $I$, and the last equality uses the negative-binomial series. The estimate holds uniformly in the system size. Moreover, for $0\le s\le d$,
\begin{align*}
P_{k,d}^{(s)}(0)=(12G_{\mathrm{loc}})^{-s}p_k^{(s)}(0).
\end{align*}

We next specify the positive evolution times at which to estimate the expectations. For $d\ge\max\{2,r\}$ and $a=d^{-2}$, choose the rescaled sampling times
\begin{align*}
u_j=a+1-\cos(j\pi/d),\qquad j=0,\ldots,d,
\end{align*}
and perform the experiments at $t_j=u_j/(12G_{\mathrm{loc}})$. The points $u_j$ lie in $[a,a+2]\subseteq[0,3]$ and are the standard Chebyshev points on this interval, including its endpoints. Their closer spacing near the endpoints keeps the amplification of errors in the sampled values logarithmic in $d$. To state this precisely, let $\ell_j$ be the degree-$d$ polynomial that is $1$ at $u_j$ and $0$ at the other sampling times. The factor controlling this amplification, called the Lebesgue constant, satisfies
\begin{align*}
L_d\coloneqq\max_{u\in[a,a+2]}\sum_{j=0}^d|\ell_j(u)|
\le C\log(d+1)
\end{align*}
for an absolute constant $C$. Indeed, interpolating errors $e_j$ gives the polynomial $\sum_j e_j\ell_j(u)$, whose absolute value on the interval is at most $L_d\max_j|e_j|$. Let $\widehat P_{k,d}$ interpolate the empirical means at the sampling times $u_j$. If the Taylor remainder and every empirical mean error are at most $\nu$, then each interpolated value differs from $P_{k,d}$ by at most $2\nu$. Consequently,
\begin{align*}
\|\widehat P_{k,d}-P_{k,d}\|_{[a,a+2]}\le2L_d\nu.
\end{align*}

To control the derivatives at zero, consider any polynomial $v$ of degree at most $d$. Repeated application of Markov's inequality on the length-two interval $[a,a+2]$ gives
$|v^{(j)}(a)|\le d^{2j}\|v\|_{[a,a+2]}$.
Taylor expansion of $v^{(s)}$ from $a$ to zero therefore yields
\begin{align*}
|v^{(s)}(0)|
&\le\|v\|_{[a,a+2]}
\sum_{j=s}^d\frac{a^{j-s}d^{2j}}{(j-s)!}\\
&\le e d^{2s}\|v\|_{[a,a+2]},\qquad 0\le s\le r.
\end{align*}
The choice $a=d^{-2}$ keeps the extrapolation factor bounded by $e$; the remaining factor $d^{2s}$ is the cost of taking $s$ derivatives.

Set
\begin{align*}
\nu=\frac{\eta}{4e(12G_{\mathrm{loc}})^r d^{2r}L_d}
\end{align*}
and choose $d$ large enough that $3^\chi(3/4)^{d+1}\le\nu$. The exponential decay of the remainder dominates the factor $d^{2r}L_d$, so this condition holds for
$d=O_r(1+\chi+\log(2+G_{\mathrm{loc}}^r/\eta))$.
Define the derivative estimates by
\begin{align*}
\widehat f_{k,s}=(12G_{\mathrm{loc}})^s\widehat P_{k,d}^{(s)}(0).
\end{align*}
On the event that all empirical mean errors are at most $\nu$, applying the preceding derivative bound to $\widehat P_{k,d}-P_{k,d}$ shows that, simultaneously for every $k$ and $0\le s\le r$,
\begin{align*}
|\widehat f_{k,s}-p_k^{(s)}(0)|
\le2e(12G_{\mathrm{loc}})^s d^{2s}L_d\nu\le\eta.
\end{align*}

It remains to ensure that the empirical means are accurate. At each evolution time $t_j$, take
\begin{align*}
M=\left\lceil2\nu^{-2}\log\frac{2K(d+1)}{\delta'}\right\rceil
\end{align*}
independent shots. Hoeffding's inequality and a union bound over the $K(d+1)$ means give the required event with probability at least $1-\delta'$. No independence between observables in the same shot is needed, and this one event controls all derivative orders. Substituting the choice of $\nu$ and the bound on $L_d$ into $N=(d+1)M$ proves the stated shot count.

Finally, the chosen evolution times $t_j=u_j/(12G_{\mathrm{loc}})$ lie in the stated interval and are all at most $1/(4G_{\mathrm{loc}})$. The total evolution time is therefore at most $N/(4G_{\mathrm{loc}})$. Accumulating the $K$ outcomes costs $O(KN)$ arithmetic operations, and direct polynomial interpolation and extraction of the fixed number of derivatives cost $O(Kd^2)$.
\end{proof}

In our reconstruction protocols, the observables are constant-size parities or local output projectors, whose outcomes can be computed from a measured bitstring in constant time. Within each preparation and readout setting, the same shots therefore supply many local observables, with only a logarithmic dependence of the shot count on their number. Taking $r=3$ provides the moment coefficients used in uniform-control reconstruction, while $r=4$ provides the transition-probability coefficients used in the $Z$-basis protocols; in both cases the Taylor coefficient is $p_k^{(s)}(0)/(s!)$. The logarithmic interpolation degree also explains the time-resolution bound in Appendix~\ref{sec:derivative_estimation}: at fixed local interaction strength and support size, the shortest evolution time decreases only as an inverse polylogarithm of the required precision. These estimates concern the full Hamiltonian evolution, even though the observables are local.

\subsection{Transition identities and gauge fixing}\label{sec:local_z_common}
The $Z$-basis protocols in Appendix~\ref{sec:zbasis} use short-time transitions to recover both off-diagonal couplings and diagonal energies.
Second-order data give the squared amplitudes of direct transitions, leaving signs that must be resolved or fixed by a gauge choice.
Fourth-order data also contain interference between different transition paths and therefore provide information about these signs and the diagonal coefficients.
We first derive an identity that separates the two contributions, then describe the sign choices and boundary tests used for the NN models.
The same identity underlies the model-specific reconstruction equations in the remaining subsections.

Write $H=D+K$ in the computational basis, with $D$ diagonal and $K$ having zero diagonal, and let $E_s=\langle s|D|s\rangle$ be the energy of a bitstring $s\in\{0,1\}^n$ under $D$.
For distinct input and output strings $a,b\in\{0,1\}^n$, define
\begin{align*}
p_{H,a\to b}(t)&=|\langle b|e^{-iHt}|a\rangle|^2,\\
A&=K_{b,a}=\langle b|K|a\rangle,\qquad \Delta=E_a-E_b.
\end{align*}
Thus $A$ is the direct transition amplitude, while $\Delta$ is the diagonal energy difference between the two strings.
We use the derivative notation $f(r,a,b)=p_{H,a\to b}^{(r)}(0)$; the corresponding Taylor coefficient is $[t^r]p_{H,a\to b}(t)=f(r,a,b)/(r!)$.

An IXZ Hamiltonian is real symmetric, so all matrix elements of its powers are real.
Expanding the amplitude through third order gives
\begin{align*}
\langle b|e^{-iHt}|a\rangle
=-itA-\frac{t^2}{2}(H^2)_{b,a}+\frac{it^3}{6}(H^3)_{b,a}+O(t^4).
\end{align*}
Its squared modulus has second-order coefficient $A^2$, so $f(2,a,b)=2A^2$.
At fourth order, the squared second-order amplitude and the interference between the first- and third-order amplitudes give
\begin{align*}
[t^4]p_{H,a\to b}(t)
&=\frac14(H^2)_{b,a}^2-\frac13A(H^3)_{b,a},\\
f(4,a,b)&=6(H^2)_{b,a}^2-8A(H^3)_{b,a}.
\end{align*}
To expose the diagonal dependence, expand the two matrix powers using $H=D+K$:
\begin{align*}
(H^2)_{b,a}&=(E_a+E_b)A+(K^2)_{b,a},\\
(H^3)_{b,a}&=A(E_a^2+E_aE_b+E_b^2)
+(E_a+E_b)(K^2)_{b,a}\\
&\quad +(KDK)_{b,a}+(K^3)_{b,a}.
\end{align*}
Here $(K^2)_{b,a}=\sum_s K_{b,s}K_{s,a}$ and $(KDK)_{b,a}=\sum_s K_{b,s}E_sK_{s,a}$, where the sums run over computational-basis strings.
Only intermediate strings connected to both $a$ and $b$ by nonzero matrix elements of $K$ contribute.
Substituting these expressions and collecting the energy terms yields
\begin{equation}\label{eq:fourth-order-expansion}
f(4,a,b)=-2A^2\Delta^2+
4A\sum_sK_{b,s}K_{s,a}(E_a+E_b-2E_s)
+6(K^2)_{b,a}^2-8A(K^3)_{b,a}.
\end{equation}
The first two terms contain all dependence on $D$: a squared energy difference for the direct transition, and an energy-weighted sum over two-step paths.
The last two terms are exactly the fourth derivative for evolution under $K$ alone.
Once a representative of the off-diagonal coefficients has been recovered, we can compute and subtract this $K$-only contribution.
The residual is at most quadratic in the diagonal coefficients and gives the diagonal reconstruction equations below.
Experimentally, this subtraction uses local probabilities: Eq.~\eqref{eq:local_transition_bridge} identifies $[t^4](p_H^B-p_K^B)$ with the full-string residual divided by $4!$.
Thus the identity supplies the required equations without measuring the entire output string.

Before using this subtraction to recover the diagonal coefficients, we must choose a consistent representative of $K$. For the NN models, we first recover the coefficients that flip a single site and the magnitudes of the $XX$ couplings from second-order data. Fourth-order interference then determines the relative signs of the $XX$ couplings. We give one test for the bulk TI models and a second test using nonadjacent sites that also works at open boundaries. The latter resolves the additional ambiguity at a chain endpoint and allows the local sign choices to be combined across the lattice. We conclude with a general statement about recovering an amplitude from its square, which supplies the corresponding second-order reconstruction step for the three-local chain.

\paragraph{Single-flip blocks.}
Fix a vertex $v$ with neighbors $v_1,\ldots,v_q$. Write $e_v$ for the bitstring with a single $1$ at $v$, so that $z\oplus e_v$ differs from the input $z$ only at that site. The terms producing this flip are the transverse field $X_v$ and the mixed couplings $X_vZ_{v_j}$. If $a=h_X^v$, $b_j$ is the coefficient of $X_vZ_{v_j}$, and $s_j=(-1)^{z_{v_j}}$, their combined amplitude is
\begin{align*}
A_v(z)\coloneqq K_{z\oplus e_v,z}=a+\sum_{j=1}^q b_js_j.
\end{align*}
We call $(a,b_1,\ldots,b_q)$ the single-flip block at $v$. Second-order data give $A_v(z)^2$. Varying the neighboring input bits determines its Boolean Fourier coefficients, which can be read directly from
\begin{align*}
A_v(z)^2=a^2+\sum_jb_j^2+2\sum_jab_js_j
+2\sum_{j<l}b_jb_ls_js_l.
\end{align*}
Thus we know $a^2+\sum_jb_j^2$ and all products $ab_j$ and $b_jb_l$. When $q\ge2$ and these coefficients are nonzero, the products determine the block up to a common sign: for example,
\begin{align*}
a^2=\frac{(ab_1)(ab_2)}{b_1b_2},\qquad
b_j=\frac{ab_j}{a}.
\end{align*}
Conjugation by $Z_v$ changes the sign of the entire block, so we fix this gauge by choosing $a=h_X^v>0$. At a degree-one endpoint, the data give only $a^2+b_1^2$ and $ab_1$, leaving an interchange of $a$ and $b_1$ as well as the common sign. We resolve this extra ambiguity below.

Flipping the two endpoints of an edge directly gives the squared $XX$ coupling, and hence its magnitude. To determine relative signs in the TI models, choose a bulk site $v$ and two distinct lattice directions $j,l$. Let $v_j,v_l$ be the neighboring sites in those directions, and let $w_j,w_l$ be the next sites along the same respective directions. All four sites must be present. Denote by $k_j$ the coefficient of $X_vX_{v_j}$ and by $c_j^{\rm out}$ the coefficient of $X_{v_j}Z_{w_j}$, and define $k_l,c_l^{\rm out}$ similarly. The two $c^{\rm out}$ coefficients are already known from the neighboring single-flip blocks. Let $z(s,t)$ have $Z$ signs $s,t\in\{\pm1\}$ at $w_j,w_l$ and all other input bits zero. The two-variable Fourier coefficient of the single-flip fourth derivative is
\begin{equation}\label{eq:fourthordertransition}
\frac14\sum_{s,t=\pm1}st\,f(4,z(s,t),z(s,t)\oplus e_v)
=48k_jk_l c_j^{\rm out}c_l^{\rm out}.
\end{equation}
To see why this average isolates the desired product, consider Eq.~\eqref{eq:fourth-order-expansion}. On the chain, square lattice, and cubic lattice, the diagonal-dependent terms and $(K^3)_{z\oplus e_v,z}$ contain no term involving both outer signs $s$ and $t$, so the Fourier average removes them. In contrast, $(K^2)_{z\oplus e_v,z}$ contains $2k_jc_j^{\rm out}s$ and $2k_lc_l^{\rm out}t$, from flipping an incident pair and then its neighboring site, in either order. Their cross term in $6(K^2)_{z\oplus e_v,z}^2$ gives the factor $48$. Appendix~\ref{sec:local_transition_bridge} shows that the same Fourier coefficient is obtained from local output probabilities.

Dividing by the known, generically nonzero $c_j^{\rm out}c_l^{\rm out}$ determines $k_jk_l$, fixing the relative $XX$ signs among the TI axes. Near an open boundary, one of the required outer sites may be absent. The next identity provides a sign test that does not require these extensions of the incident edges.

\paragraph{Nonadjacent-pair identity.}\label{sec:nonadjacent_pair_boundary}
Consider two distinct, nonadjacent vertices $u,w$, and the transition from $z$ to $z\oplus e_u\oplus e_w$. A single NN term cannot flip both sites, so this transition first appears through two applications of $K$. There are two ways to produce it: flip $u$ and $w$ separately, or flip the two edges of a path $u-v-w$. The two mechanisms interfere, making this transition sensitive to products of $XX$ couplings.

Let $N(u)$ denote the neighbors of $u$, and write $k_{uv}=J_{XX}^{uv}$ for the coefficient of $X_uX_v$ on an edge. Using the single-flip amplitudes $A_v(z)$ defined above, set
\begin{align*}
T_{uw}\coloneqq\sum_{v\in N(u)\cap N(w)}k_{uv}k_{vw},\qquad
Q_{uw}(z)\coloneqq\frac{f(4,z,z\oplus e_u\oplus e_w)}{24}.
\end{align*}
Here $T_{uw}$ sums the edge products over all common neighbors, and $Q_{uw}(z)$ is the measured fourth-order Taylor coefficient. They satisfy
\begin{equation}\label{eq:nonadjacent_pair}
Q_{uw}(z)=\bigl(A_u(z)A_w(z)+T_{uw}\bigr)^2.
\end{equation}
Indeed, flipping $u$ does not change $A_w(z)$, and flipping $w$ does not change $A_u(z)$, because the sites are not adjacent. The two orders of the single flips therefore contribute $2A_u(z)A_w(z)$ to $(K^2)_{z\oplus e_u\oplus e_w,z}$. For each common neighbor $v$, the two orders of the pair flips contribute $2k_{uv}k_{vw}$. These exhaust the two-step paths, giving
\begin{align*}
(K^2)_{z\oplus e_u\oplus e_w,z}=2\bigl(A_u(z)A_w(z)+T_{uw}\bigr).
\end{align*}
Since the direct amplitude vanishes, Eq.~\eqref{eq:fourth-order-expansion} reduces to $f(4,z,z\oplus e_u\oplus e_w)=6(K^2)_{z\oplus e_u\oplus e_w,z}^2$, proving Eq.~\eqref{eq:nonadjacent_pair}. In particular, no diagonal coefficient enters this identity.

The quantity $Q_{uw}(z)$ is also accessible from a local measurement. Take $B\supseteq\{u,w\}\cup N(u)\cup N(w)$ and measure the probability that only $u,w$ flip within $B$. Any two NN terms producing both flips must each meet $\{u,w\}$ and are therefore supported in $B$. They cannot produce additional flips outside $B$, so the local and full-string fourth-order coefficients agree, as explained in Appendix~\ref{sec:local_transition_bridge}. Varying the input bits changes $A_u(z)A_w(z)$ while leaving $T_{uw}$ fixed. This is how we extract the edge products from the squared quantity in Eq.~\eqref{eq:nonadjacent_pair}.

\paragraph{Chain endpoints.}
We first use this identity to resolve the interchange in an endpoint block. Number the sites $1,\ldots,n$ of an open chain, with $n\ge4$, and write $s_j=(-1)^{z_j}$. The single-flip block at site $3$ is already known because it has degree two, whereas the block at site $1$ is still ambiguous. Write
\begin{align*}
A_1=a+bs_2,\qquad A_3=c+ds_2+es_4,\qquad T=k_{12}k_{23}.
\end{align*}
For each fixed $s_2=\pm1$, set $B_\pm=c+ds_2\pm e$, and let $Q_\pm$ denote $Q_{13}(z)$ for the choices $s_4=\pm1$, respectively. Thus Eq.~\eqref{eq:nonadjacent_pair} gives $Q_\pm=(A_1B_\pm+T)^2$. Second-order data already give $A_1^2$, so subtracting these two equations isolates the product $A_1T$:
\begin{equation*}
W(s_2)=\frac{Q_+-Q_--A_1^2(B_+^2-B_-^2)}{2(B_+-B_-)}=A_1T.
\end{equation*}
The denominator is $4e$, and $|T|=|k_{12}||k_{23}|$ is known. Let $W_0=(W(+1)+W(-1))/2=aT$ and $W_1=(W(+1)-W(-1))/2=bT$. Choose the endpoint gauge $a>0$, set $T=\operatorname{sgn}(W_0)|T|$, and recover $a=W_0/T$, $b=W_1/T$. This removes the interchange left by second-order data. Reflection gives the right endpoint, using the recovered block at $n-2$ and varying $s_{n-3}$. This works also for $n=4$: both blocks used in these tests are interior blocks, and all conditioning polynomials are nonzero generically.

\paragraph{Global $XX$ sign alignment.}
With every single-flip block fixed in the positive-field gauge, we can use Eq.~\eqref{eq:nonadjacent_pair} to align the $XX$ couplings throughout the lattice. For a nonadjacent pair with a common neighbor, choose two local inputs $z,z'$ for which the known products $B=A_u(z)A_w(z)$ and $B'=A_u(z')A_w(z')$ differ. Subtracting the two measured equations cancels $T_{uw}^2$ and gives
\begin{equation}\label{eq:nonadjacent_T_recovery}
T_{uw}=\frac{Q_{uw}(z)-Q_{uw}(z')-(B^2-B'^2)}{2(B-B')}.
\end{equation}
One may fix all input bits to zero and flip any one neighbor of $u$ to obtain $z'$. The denominator is a nonzero polynomial: specialize $A_w=1$ and give $A_u$ a nonzero coefficient on that varied bit. Thus each fixed choice works generically. On a chain, every pair of edges at a common center has exactly one such common neighbor, so \eqref{eq:nonadjacent_T_recovery} gives their product, including the two endpoint bonds.

On a rectangular square lattice, opposite neighbors of a center have one common neighbor. Perpendicular neighbors $u,w$ have exactly two, the center $v$ and the opposite plaquette corner $v'$, even at an open boundary. Consequently
\begin{align*}
T_{uw}=p+p',\qquad p=k_{uv}k_{vw},\quad p'=k_{uv'}k_{v'w}.
\end{align*}
Their magnitudes $m=|p|$, $m'=|p'|$ are known. The four possibilities $\pm m\pm m'$ are distinct when $mm'\ne0$ and $m^2\ne(m')^2$, so $T_{uw}$ uniquely specifies both signs. The last inequality is a nonzero degree-four polynomial in the four distinct plaquette-edge coefficients. These tests therefore recover every incident-edge product, including those at sides and corners.

These products determine the relative signs of all edges meeting at a vertex. Because the lattice is connected, they determine all edge signs once the sign of one reference edge is chosen. Formally, the graph whose vertices are lattice edges, with two vertices adjacent when their edges share an endpoint, is connected; a spanning tree in this graph specifies an order for assigning the signs. We propagate only these discrete sign choices and keep the magnitude of each coupling from its own second-order estimate. The one remaining common $XX$ sign corresponds to time reversal combined with $Z$ conjugation at every site, which preserves the chosen positive single-flip fields.

To make these choices reliably from noisy data, we need lower bounds on the quantities used in the divisions and sign tests. The required local conditioning list consists of the single-flip block products, the endpoint quantities $e,aT,T$, the denominators $B-B'$, and the plaquette polynomials $p,p',p^2-(p')^2$. For example, $|p^2-(p')^2|\ge\rho>0$ implies $|m-m'|\ge\rho/(m+m')$, where $m+m'$ is polynomially bounded by the coefficient radius. The four plaquette candidates are then separated by at least $2\min\{m,m',|m-m'|\}$. Sufficiently accurate data therefore select the correct discrete signs. The remaining square roots, products, and divisions involve only a fixed number of stages, so their errors are controlled by the same local bounds. A union bound over the finitely many sign choices in each constant-size region and over the $O(n)$ regions gives the simultaneous conditioning needed for the spatially varying models.

\paragraph{Amplitudes involving several neighboring signs.}
The NN single-flip amplitude is affine in the neighboring $Z$ signs, which allowed us to recover its coefficients from pairwise products. For the three-local chain, a transition amplitude can also contain products of two neighboring signs. We therefore need a more general way to recover an amplitude from the second-order data giving its square. The following lemma uses the known set of allowed products of signs to rule out spurious choices of the amplitude's sign at different inputs. In Appendix~\ref{sec:proof_TI_3_local}, it recovers the eight coefficients in the single-site-flip amplitude and the three coefficients in the adjacent-pair-flip amplitude, each up to a common sign.
\begin{lemma}[Boolean amplitude reconstruction]\label{lem:boolean_amplitude}
Let $q\ge1$ be a fixed integer, write $[q]=\{1,\ldots,q\}$, and let $s=(s_1,\ldots,s_q)\in\{\pm1\}^q$ denote the $Z$ signs of the input bits on which a real transition amplitude depends. For $\alpha\subseteq[q]$, define the Boolean character
\begin{align*}
\chi_\alpha(s)=\prod_{j\in\alpha}s_j,\qquad \chi_\varnothing(s)=1.
\end{align*}
Let $\mathcal T\subseteq2^{[q]}$ be a known, nonempty collection of allowed subsets, and consider amplitudes of the form
\begin{align*}
A(s)=\sum_{\alpha\in\mathcal T}u_\alpha\chi_\alpha(s),\qquad u_\alpha\in\mathbb R.
\end{align*}
For subsets $\alpha,\beta\subseteq[q]$, write $\alpha\triangle\beta$ for their symmetric difference, namely the elements belonging to exactly one of the two sets. Suppose that
\begin{align*}
\{\alpha\triangle\beta:\alpha\in\mathcal T\}=\mathcal T
\quad\Longrightarrow\quad\beta=\varnothing.
\end{align*}
Then, outside a finite union of proper linear subspaces of the coefficient space $\mathbb R^{\mathcal T}$, the values $A(s)^2$ for all $s\in\{\pm1\}^q$ determine $(u_\alpha)_{\alpha\in\mathcal T}$ up to one common sign. More precisely, any real amplitude $\widetilde A(s)=\sum_{\alpha\in\mathcal T}\widetilde u_\alpha\chi_\alpha(s)$ satisfying $\widetilde A(s)^2=A(s)^2$ at every input has either $\widetilde u_\alpha=u_\alpha$ for all $\alpha$, or $\widetilde u_\alpha=-u_\alpha$ for all $\alpha$.
\end{lemma}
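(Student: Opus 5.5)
Since $\widetilde A(s)^2=A(s)^2$ for every $s$, we have $(\widetilde A(s)-A(s))(\widetilde A(s)+A(s))=0$ pointwise on $\{\pm1\}^q$. Hence there is a sign function $\epsilon:\{\pm1\}^q\to\{\pm1\}$ with $\widetilde A(s)=\epsilon(s)A(s)$ at every input. We choose the excluded set so that $A(s)\neq0$ for all $s$; this removes the $2^q$ proper hyperplanes $\{u:\sum_\alpha u_\alpha\chi_\alpha(s)=0\}$, each proper because $\mathcal T$ is nonempty. With this choice, $\epsilon$ is uniquely determined by $\widetilde u$. Expand $\epsilon=\sum_{\beta\subseteq[q]}\hat\epsilon_\beta\chi_\beta$ in the Boolean Fourier basis. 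Using $\chi_\alpha\chi_\beta=\chi_{\alpha\triangle\beta}$, we get
\begin{align*}
\widetilde A=\epsilon A=\sum_{\beta}\sum_{\alpha\in\mathcal T}\hat\epsilon_\beta u_\alpha\chi_{\alpha\triangle\beta}.
\end{align*}
The condition that $\widetilde A$ is supported on $\mathcal T$ becomes a linear condition on $u$ for each fixed $\epsilon$: for every $\gamma\notin\mathcal T$,
\begin{align*}
\sum_{\beta}\hat\epsilon_\beta\,u_{\gamma\triangle\beta}\,\mathbb 1[\gamma\triangle\beta\in\mathcal T]=0.
\end{align*}
There are only $2^{2^q}$ sign functions $\epsilon$. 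For each $\epsilon\neq\pm1$, we want to show that at least one of these linear equations is a nonzero functional of $u$. Its zero set is then a proper linear subspace, and taking the union over the finitely many $\epsilon$ gives the excluded set. Outside that set, the only admissible $\epsilon$ are the constants $\pm1$, which yield $\widetilde u=\pm u$.

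The main step is to show that for a nonconstant $\epsilon$ some coefficient functional is nonzero. The functional for $\gamma$ has coefficient $\hat\epsilon_{\gamma\triangle\alpha}$ on the variable $u_\alpha$. It is therefore identically zero exactly when $\hat\epsilon_\beta=0$ for every $\beta$ with $\alpha\triangle\beta\notin\mathcal T$ for some $\alpha\in\mathcal T$. Suppose this holds for every $\gamma\notin\mathcal T$. Then every $\beta$ in the support of $\hat\epsilon$ satisfies $\mathcal T\triangle\beta\subseteq\mathcal T$. Since $\alpha\mapsto\alpha\triangle\beta$ is a bijection and $\mathcal T$ is finite, this means $\mathcal T\triangle\beta=\mathcal T$. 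The hypothesis then forces $\beta=\varnothing$, so $\epsilon=\hat\epsilon_\varnothing$ is constant, which means $\epsilon=\pm1$.

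Consequently, every nonconstant $\epsilon$ has some $\beta\neq\varnothing$ with $\hat\epsilon_\beta\neq0$ and some $\alpha\in\mathcal T$ with $\alpha\triangle\beta\notin\mathcal T$. Taking $\gamma=\alpha\triangle\beta$, the corresponding equation has the nonzero coefficient $\hat\epsilon_\beta$ on $u_\alpha$, so it is a nonzero linear form. Its kernel is a proper subspace, which completes the argument.

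For noisy applications, the conditioning polynomials are the products of these nonzero linear forms together with the values $A(s)$, to be used with \Cref{lem:gaussian_polynomial}. I expect this bookkeeping, and checking that the hypothesis holds for the specific collections $\mathcal T$ used in Appendix~\ref{sec:proof_TI_3_local}, to be the only remaining work.
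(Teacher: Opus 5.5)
Your proposal is correct and takes essentially the same route as the paper: you write $\widetilde A=\epsilon A$ off the hyperplanes $A(s)=0$, expand $\epsilon$ in characters, and use the bijectivity of $\alpha\mapsto\alpha\triangle\beta$ together with the hypothesis to show that every nonconstant $\epsilon$ cuts out a proper linear subspace. Your explicit nonzero coefficient $\hat\epsilon_\beta$ on $u_\alpha$ in the $\gamma=\alpha\triangle\beta$ Fourier coefficient is a concrete form of the paper's observation that $\omega\chi_\alpha$ has no cancellations; one wording fix is needed, since a single functional for $\gamma$ vanishes exactly when $\hat\epsilon_{\gamma\triangle\alpha}=0$ for all $\alpha\in\mathcal T$, and the union condition you state arises only once $\gamma$ ranges over all of $\gamma\notin\mathcal T$.
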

\begin{proof}
The data give $|A(s)|$ at every input, so the ambiguity is a choice of sign at each of the $2^q$ inputs. Each equation $A(s)=0$ defines a proper hyperplane in the coefficient space. Away from these hyperplanes, a competing amplitude with the same square has the form
\begin{align*}
\widetilde A(s)=\omega(s)A(s),\qquad \omega(s)\in\{\pm1\}.
\end{align*}
We show that every nonconstant choice of $\omega$ is inconsistent with the allowed characters for generic coefficients.

Let $\mathcal V_{\mathcal T}$ be the real linear span of the characters $\chi_\alpha$ with $\alpha\in\mathcal T$. For a fixed function $\omega$, the condition $\omega A\in\mathcal V_{\mathcal T}$ is a set of homogeneous linear constraints on $(u_\alpha)$: all Fourier coefficients outside $\mathcal T$ must vanish. Its solutions therefore form a linear subspace. This subspace can equal the whole coefficient space only if multiplication by $\omega$ maps every element of $\mathcal V_{\mathcal T}$ back into $\mathcal V_{\mathcal T}$.

To examine that possibility, expand
\begin{align*}
\omega(s)=\sum_{\beta\subseteq[q]}\widehat\omega(\beta)\chi_\beta(s),\qquad
\widehat\omega(\beta)=2^{-q}\sum_{s\in\{\pm1\}^q}\omega(s)\chi_\beta(s).
\end{align*}
Since $\chi_\alpha\chi_\beta=\chi_{\alpha\triangle\beta}$, multiplying a single allowed character by $\omega$ gives
\begin{align*}
\omega\chi_\alpha=\sum_{\beta\subseteq[q]}\widehat\omega(\beta)\chi_{\alpha\triangle\beta}.
\end{align*}
For fixed $\alpha$, distinct $\beta$ give distinct characters, so no terms in this sum cancel. Thus, if multiplication by $\omega$ preserves $\mathcal V_{\mathcal T}$, every $\beta$ with $\widehat\omega(\beta)\ne0$ satisfies $\alpha\triangle\beta\in\mathcal T$ for every $\alpha\in\mathcal T$. Symmetric difference with $\beta$ is a bijection, so this implies $\{\alpha\triangle\beta:\alpha\in\mathcal T\}=\mathcal T$. By hypothesis, only $\beta=\varnothing$ is possible. Hence $\omega$ is constant and is either $+1$ or $-1$.

Every nonconstant sign function therefore gives a proper linear subspace of exceptional coefficients. There are only finitely many such functions, since $q$ is fixed. Excluding their subspaces and the amplitude-zero hyperplanes leaves exactly the two amplitudes $A$ and $-A$, proving the claim.

The argument also gives a reconstruction procedure: assign signs to the measured magnitudes $|A(s)|$, compute the Boolean Fourier coefficients of each assignment, and retain those with all coefficients outside $\mathcal T$ equal to zero. This requires a number of operations depending only on $q$. For stability under noisy data, the relevant nonzero linear forms are the amplitudes $A(s)$ and, for each excluded sign function, one Fourier coefficient of $\omega A$ outside $\mathcal T$ that is not identically zero. Lower bounds on the former control the square roots; lower bounds on the latter separate the rejected sign assignments from the allowed ones. These are precisely the finite polynomial conditioning bounds supplied by Lemma~\ref{lem:gaussian_polynomial} in the reconstruction proofs.
\end{proof}
The support condition also explains why the endpoint amplitudes required fourth-order data. For $A(s)=a+bs$, multiplication by $s$ gives $sA(s)=b+as$ without changing the square, leaving $a$ and $b$ interchangeable at second order. More generally, multiplication by a character $\chi_\beta$ preserves $A^2$ and replaces each allowed character $\chi_\alpha$ by $\chi_{\alpha\triangle\beta}$. If a nonempty $\beta$ preserves $\mathcal T$, the resulting amplitude is still admissible, and the squared data cannot distinguish it from the original one. The two character supports used in the three-local reconstruction have no such symmetry, as verified in Appendix~\ref{sec:proof_TI_3_local}. Lemma~\ref{lem:boolean_amplitude} therefore recovers each coefficient block from second-order data up to a common sign for generic coefficients.

\subsection{Learning TI Hamiltonians}

We now prove the reconstruction guarantees for the TI IXZ models in Appendix~\ref{sec:zbasis}.
We begin with isotropic NN Hamiltonians in one, two, and three dimensions, using only the all-zero input, and give explicit certificates for the resulting polynomial systems.
We then allow arbitrary computational-basis inputs and treat general TI NN Hamiltonians in these dimensions, followed by TI three-local chains.
For each family, we derive the second- and fourth-order reconstruction equations, identify the remaining sign gauges, and verify the polynomial conditions that ensure generic recovery and stability under Gaussian smoothing.

\subsubsection{Isotropic TI systems and their certificates}\label{sec:equation_system_TIRI_allzero}
We first derive the equations for the isotropic models in Appendix~\ref{sec:RI_TI_IXZ}.
For the all-zero input, flipping a set $\sigma$ changes the measured parity $Z^s$ precisely when $|\sigma\cap s|$ is odd.
Thus the second and fourth derivatives of this parity expectation are
\begin{align*}
\widetilde f(2,0^n,s)=-4\sum_{\sigma:\,|\sigma\cap s|\ {\rm odd}}|K_{1_\sigma,0}|^2,\qquad
\widetilde f(4,0^n,s)=-48\sum_{\sigma:\,|\sigma\cap s|\ {\rm odd}}[t^4]p_{0\to1_\sigma}.
\end{align*}
In spatial dimension $\mathfrak D\in\{1,2,3\}$, the single-flip amplitude at a bulk site is $S_{\mathfrak D}=h_X+2\mathfrak D J_{XZ}$.
Writing $u$ and $v$ for the second derivatives measured on a single site and an adjacent pair, respectively, we obtain
\begin{align*}
u=-4S_{\mathfrak D}^2-8\mathfrak D J_{XX}^2,\qquad
v=-8S_{\mathfrak D}^2-8(2\mathfrak D-1)J_{XX}^2.
\end{align*}
These equations determine $J_{XX}^2=(v-2u)/8$ and $S_{\mathfrak D}^2=-u/4-2\mathfrak D J_{XX}^2$.
To determine the remaining parameters, we use fourth-order data.
For each dimension, let $F_1,\ldots,F_4$ denote the four fourth derivatives in the order listed in Appendix~\ref{sec:RI_TI_IXZ}. Table~\ref{tab:isotropic_fourth_order_coefficients} gives their monomial coefficients, with columns ordered as $F_1,\ldots,F_4$ within each dimensional block.
\begin{table}[t]
\centering
\begingroup\scriptsize\setlength{\tabcolsep}{3pt}\renewcommand{\arraystretch}{1.12}
\begin{tabular}{lrrrrrrrrrrrr}
\toprule
&\multicolumn{4}{c}{1D}&\multicolumn{4}{c}{2D}&\multicolumn{4}{c}{3D}\\
Monomial &1&2&3&4&1&2&3&4&1&2&3&4\\ \midrule
$J_{XX}^{4}$ & 128 & 128 & 336 & 336 & 640 & 1536 & 3200 & 128 & 1536 & 4480 & 336 & 2816 \\
$J_{XZ}^{2} J_{XX}^{2}$ & 512 & 768 & 1136 & 1136 & 4992 & 12800 & 29184 & 512 & 18048 & 54272 & 1968 & 43360 \\
$h_X J_{XZ} J_{XX}^{2}$ & 512 & 1024 & 1728 & 1344 & 2560 & 7424 & 15872 & 512 & 6144 & 19968 & 1344 & 22528 \\
$J_{ZZ}^{2} J_{XX}^{2}$ & 128 & 128 & 144 & 144 & 2304 & 3456 & 4608 & 288 & 9600 & 16000 & 1200 & 6464 \\
$h_Z J_{ZZ} J_{XX}^{2}$ & 256 & 256 & 320 & 320 & 1536 & 2304 & 3072 & 384 & 3840 & 6400 & 960 & 3584 \\
$h_X^{2} J_{XX}^{2}$ & 192 & 384 & 960 & 576 & 384 & 1152 & 2304 & 192 & 576 & 1920 & 288 & 3072 \\
$h_Z^{2} J_{XX}^{2}$ & 128 & 128 & 192 & 192 & 256 & 384 & 512 & 128 & 384 & 640 & 192 & 512 \\
$J_{ZZ} J_{XZ}^{2} J_{XX}$ & -256 & -512 & -256 & -256 & -10240 & -19456 & -20480 & -1088 & -62208 & -121344 & -5760 & -53248 \\
$h_Z J_{XZ}^{2} J_{XX}$ & -128 & -256 & -160 & -160 & -2560 & -5120 & -5120 & -384 & -10368 & -20736 & -1440 & -12352 \\
$h_X J_{ZZ} J_{XZ} J_{XX}$ & -640 & -1024 & -864 & -864 & -6656 & -12544 & -13312 & -1408 & -24192 & -47104 & -4320 & -29248 \\
$h_Z h_X J_{XZ} J_{XX}$ & -320 & -640 & -416 & -416 & -1664 & -3328 & -3328 & -448 & -4032 & -8064 & -1056 & -7168 \\
$h_X^{2} J_{ZZ} J_{XX}$ & -256 & -384 & -384 & -384 & -1024 & -1920 & -2048 & -384 & -2304 & -4480 & -768 & -3968 \\
$h_Z h_X^{2} J_{XX}$ & -128 & -256 & -192 & -192 & -256 & -512 & -512 & -128 & -384 & -768 & -192 & -1024 \\
$J_{XZ}^{4}$ & 768 & 1536 & 1184 & 1184 & 16384 & 38912 & 57344 & 1440 & 89856 & 235008 & 7488 & 159040 \\
$h_X J_{XZ}^{3}$ & 1536 & 3072 & 3008 & 3008 & 14336 & 37888 & 53248 & 2432 & 50688 & 147456 & 7872 & 153344 \\
$J_{ZZ}^{2} J_{XZ}^{2}$ & 256 & 512 & 272 & 272 & 4096 & 8192 & 8192 & 256 & 20736 & 41472 & 1296 & 19488 \\
$h_Z J_{ZZ} J_{XZ}^{2}$ & 256 & 512 & 288 & 288 & 2048 & 4096 & 4096 & 256 & 6912 & 13824 & 864 & 8960 \\
$h_X^{2} J_{XZ}^{2}$ & 1024 & 2432 & 2624 & 2624 & 4352 & 13696 & 17920 & 1280 & 9984 & 34176 & 2784 & 57088 \\
$h_Z^{2} J_{XZ}^{2}$ & 64 & 128 & 80 & 80 & 256 & 512 & 512 & 64 & 576 & 1152 & 144 & 1056 \\
$h_X J_{ZZ}^{2} J_{XZ}$ & 256 & 512 & 288 & 288 & 2048 & 4096 & 4096 & 256 & 6912 & 13824 & 864 & 8960 \\
$h_Z h_X J_{ZZ} J_{XZ}$ & 256 & 512 & 320 & 320 & 1024 & 2048 & 2048 & 256 & 2304 & 4608 & 576 & 4224 \\
$h_X^{3} J_{XZ}$ & 256 & 896 & 960 & 960 & 512 & 2176 & 2560 & 256 & 768 & 3456 & 384 & 9728 \\
$h_Z^{2} h_X J_{XZ}$ & 64 & 128 & 96 & 96 & 128 & 256 & 256 & 64 & 192 & 384 & 96 & 512 \\
$h_X^{2} J_{ZZ}^{2}$ & 64 & 128 & 80 & 80 & 256 & 512 & 512 & 64 & 576 & 1152 & 144 & 1056 \\
$h_Z h_X^{2} J_{ZZ}$ & 64 & 128 & 96 & 96 & 128 & 256 & 256 & 64 & 192 & 384 & 96 & 512 \\
$h_X^{4}$ & 16 & 128 & 128 & 128 & 16 & 128 & 128 & 16 & 16 & 128 & 16 & 640 \\
$h_Z^{2} h_X^{2}$ & 16 & 32 & 32 & 32 & 16 & 32 & 32 & 16 & 16 & 32 & 16 & 64 \\
\bottomrule
\end{tabular}
\endgroup
\caption{Monomial coefficients of the fourth derivatives $F_1,\ldots,F_4$ for the isotropic TI IXZ models; the observables are ordered as in Appendix~\ref{sec:RI_TI_IXZ}.}
\label{tab:isotropic_fourth_order_coefficients}
\end{table}
These coefficients can be computed from four nested commutators of $Z^s$, or equivalently from the transition expansion \eqref{eq:fourth-order-expansion}.

\paragraph{Uniqueness of the four solutions.}
We now verify that these equations have only the four solutions allowed by the sign gauges.
The two second-order equations are linear in $S_{\mathfrak D}^2$ and $J_{XX}^2$, with coefficient matrix
  \begin{align*}
\begin{pmatrix}-4&-8\mathfrak{D}\\-8&-8(2\mathfrak{D}-1)\end{pmatrix},
\qquad \det=-32.
\end{align*} 
For each of the four sign choices for $S_{\mathfrak D}$ and $J_{XX}$, we substitute $h_X=S_{\mathfrak D}-2\mathfrak D J_{XZ}$ into the fourth-order equations.
We apply the polynomial certificate in Appendix~\ref{sec:certificate} to the resulting map $P_\theta$, with
\begin{align*}
\theta=(S_{\mathfrak D},J_{XX}),\qquad t=(h_Z,J_{ZZ},J_{XZ}).
\end{align*}
Here $\theta$ contains the chosen second-order parameters, $t$ is the true remaining parameter tuple, and $x$ denotes a candidate tuple.
The candidate satisfies the measured equations exactly when $E_j(x):=P_{\theta,j}(x)-P_{\theta,j}(t)=0$ for every $j$.
To form the certificate matrix, we use all monomials of degree at most $L$ as columns.
Each row lists the coefficients of $x^\alpha E_j(x)$ for $|\alpha|\leq L-\deg_xE_j$. 
We also include the row for the constant polynomial $1$.
The following table records exact computations at
\begin{align*}
(h_Z,h_X,J_{ZZ},J_{XZ},J_{XX})=(1,2,3,4,5),
\qquad t=(1,3,4),\quad \theta=(2+8\mathfrak{D},5).
\end{align*} 
\begin{center}
\begin{tabular}{c|c|c|c|c}
$\mathfrak{D}$  & $\deg_x(E_1,E_2,E_3,E_4)$ & $L$ & matrix size & selected determinant modulo $1000003$\\
\hline
1 & $(2,2,4,4)$ & 8 & $239\times165$ & $79624$\\
2 & $(2,2,2,4)$ & 6 & $116\times84$ & $124413$\\
3 & $(2,2,4,4)$ & 8 & $239\times165$ & $571887$
\end{tabular}
\end{center}
For reproducibility, we specify the square submatrices whose determinants appear in the table.
We order monomials first by total degree and then lexicographically by $(\alpha_1,\alpha_2,\alpha_3)$.
We order the rows for $x^\alpha E_j$ first by $j$ and then by this monomial order, placing the constant row last.
At the displayed parameter point, we perform elimination modulo $1000003$ from left to right, taking the first available nonzero pivot.
We select the original pivot rows, retain all columns, and restore the original row order.
The selected rows include the constant row, which remains last, as required by Lemma~\ref{lem:exactcertificate}.

The nonzero values in the table show that the selected determinants are nonzero polynomials in $(\theta,t)$.
Lemma~\ref{lem:exactcertificate} therefore gives a unique solution for the three remaining parameters generically, even among complex candidates.
Time reversal and $Z^{\otimes n}$ conjugation relate all four sign choices for $S_{\mathfrak D}$ and $J_{XX}$, giving exactly the four solutions in Appendix~\ref{sec:RI_TI_IXZ}.

\paragraph{Stability under Gaussian smoothing.}
The change of variables from the original coefficients to $(S_{\mathfrak D},J_{XX},t)$ is invertible and linear, and the substituted determinants are therefore nonzero.
We include $S_{\mathfrak D}$ and $J_{XX}$ in the conditioning event so that taking square roots of the second-order estimates is stable.
Lemmas~\ref{lem:gaussian_polynomial} and~\ref{lem:finite_stage_stability} then bound the errors in $\hat S_{\mathfrak D}$ and $\hat J_{XX}$ by a polynomial conditioning factor times the derivative error.
Because $S_{\mathfrak D}=h_X+2\mathfrak D J_{XZ}$ is correlated with $J_{XZ}$, we apply Corollary~\ref{cor:certificate_partial_reconstruction} in the original five independently smoothed coordinates.
This controls the errors in the remaining parameters, including the effects of estimating $S_{\mathfrak D}$, $J_{XX}$, and the fourth derivatives.
Appendix~\ref{sec:certificate_refinement} gives the classical cost for each sign choice.

\subsubsection{Learning TI NN IXZ Hamiltonians on lattices with \texorpdfstring{$Z$}{Z}-basis access}\label{sec:TI_Zbasis_proofs}
We now allow arbitrary computational-basis inputs and drop the isotropy assumption.
We prove Theorem~\ref{thm:TI_NN_z_to_zp} by first reconstructing the off-diagonal coefficients and then using fourth-order data to determine the diagonal coefficients and auxiliary relative signs except the unavoidable gauges.

\begin{proof}[Proof in one dimension]
As earlier, we write $E_z=\langle z|D|z\rangle$ and $f(r,z,z')=\tfrac{d^r}{dt^r}p_{z\to z'}(t)\bigg|_{t=0}$.
Fix an interior site $i$ and consider four inputs $z^{ab}$, with neighboring bits $z_{i-1}=a$ and $z_{i+1}=b$ and all other bits fixed.
The single-flip amplitude and second derivative are
\begin{align*}
\langle z^{ab}\oplus e_i|K|z^{ab}\rangle
&=h_X+(-1)^aJ_{ZX}+(-1)^bJ_{XZ},\\
f(2,z^{ab},z^{ab}\oplus e_i)
&=2\left(h_X+(-1)^aJ_{ZX}+(-1)^bJ_{XZ}\right)^2.
\end{align*}
Taking the four Fourier coefficients over $a,b\in\{0,1\}$ gives
\begin{align*}
&h_X^2+J_{ZX}^2+J_{XZ}^2=\tfrac18\sum_{a,b}f(2,z^{ab},z^{ab}\oplus e_i),\qquad
h_XJ_{ZX}=\tfrac1{16}\sum_{a,b}(-1)^a f(2,z^{ab},z^{ab}\oplus e_i),\\
&h_XJ_{XZ}=\tfrac1{16}\sum_{a,b}(-1)^b f(2,z^{ab},z^{ab}\oplus e_i),\qquad
J_{ZX}J_{XZ}=\tfrac1{16}\sum_{a,b}(-1)^{a+b}f(2,z^{ab},z^{ab}\oplus e_i).
\end{align*}
When $h_XJ_{ZX}J_{XZ}\ne0$, these three pairwise products determine $(h_X^2,J_{ZX}^2,J_{XZ}^2)$ and all relative signs.
Hence $(h_X,J_{ZX},J_{XZ})$ is determined up to the simultaneous sign flip
\begin{align*}
(h_X,J_{ZX},J_{XZ})\mapsto (-h_X,-J_{ZX},-J_{XZ}).
\end{align*}
Flipping an adjacent pair determines the remaining off-diagonal magnitude, since
\begin{align*}
\langle z\oplus e_i\oplus e_{i+1}|K|z\rangle=J_{XX},\quad\Rightarrow\quad f(2,z,z\oplus e_i\oplus e_{i+1})=2J_{XX}^2.
\end{align*} 
We choose one common sign for $(h_X,J_{ZX},J_{XZ})$ and one sign for $J_{XX}$.
The remaining unknowns are $h_Z$ and $J_{ZZ}$.

\paragraph{Recovering the diagonal coefficients.}
In Eq.~\eqref{eq:fourth-order-expansion}, the diagonal energy $E_z=\langle z|D|z\rangle=h_Z\sum_{j=1}^n(-1)^{z_j}
+J_{ZZ}\sum_{j=1}^{n-1}(-1)^{z_j+z_{j+1}}$ is linear in these two unknowns, while $K$, $K^2$, and $K^3$ are known.
Each fourth-order equation is therefore at most quadratic in $h_Z$ and $J_{ZZ}$.
We use four single-flip equations at an interior site $i$, setting $z_{i-2}=z_i=z_{i+2}=0$ and taking $z'=z\oplus e_i$.
In each equation, we subtract the value at $h_Z=J_{ZZ}=0$, which depends only on the recovered off-diagonal coefficients.
For $z_{i-1}=0,z_{i+1}=1$, this gives
\begin{equation*}
\resizebox{1.0\textwidth}{!}{$\begin{aligned}
f(4,z,z\oplus e_i)-f(4,z,z\oplus e_i)\big|_{h_Z=J_{ZZ}=0}=8h_Z(J_{XZ}-J_{ZX}-h_X)\left(6J_{XX}J_{XZ}-2J_{XX}J_{ZX}-J_{XZ}h_Z+J_{ZX}h_Z+h_Xh_Z\right).
\end{aligned}$}
\end{equation*} 
For $z_{i-1}=1,z_{i+1}=0$, the same subtraction gives
\begin{equation*}
\resizebox{1.0\textwidth}{!}{$\begin{aligned}
f(4,z,z\oplus e_i)-f(4,z,z\oplus e_i)\big|_{h_Z=J_{ZZ}=0}=8h_Z(J_{XZ}-J_{ZX}+h_X)\left(2J_{XX}J_{XZ}-6J_{XX}J_{ZX}-J_{XZ}h_Z+J_{ZX}h_Z-h_Xh_Z\right).
\end{aligned}$}
\end{equation*} 
The two equations involve only $h_Z$ and $h_Z^2$.
Dividing them by $8(J_{XZ}-J_{ZX}-h_X)$ and $8(J_{XZ}-J_{ZX}+h_X)$, respectively, gives the right-hand sides
\begin{align*}
2J_{XX}(3J_{XZ}-J_{ZX})h_Z-(J_{XZ}-J_{ZX}-h_X)h_Z^2,\qquad
2J_{XX}(J_{XZ}-3J_{ZX})h_Z-(J_{XZ}-J_{ZX}+h_X)h_Z^2.
\end{align*}
The divisors and the determinant of this linear system are nonzero provided that
\begin{align*}
J_{XZ}-J_{ZX}-h_X\neq 0,\quad
J_{XZ}-J_{ZX}+h_X\neq 0,\quad
J_{XX}(J_{XZ}-J_{ZX})(J_{XZ}+J_{ZX}+2h_X)\neq 0,
\end{align*}
in which case the two equations uniquely determine $h_Z$.
We next use the two remaining neighbor patterns to recover $J_{ZZ}$.
For $z_{i-1}=0,z_{i+1}=0$, the same subtraction gives
\begin{equation*}
\resizebox{1.0\textwidth}{!}{$\begin{aligned}
&f(4,z,z\oplus e_i)-f(4,z,z\oplus e_i)\big|_{h_Z=J_{ZZ}=0}=8(h_X+J_{ZX}+J_{XZ})\Big[-4(h_X+J_{ZX}+J_{XZ})J_{ZZ}^2+4\big(J_{XX}(J_{XZ}+J_{ZX}+2h_X)\\
&\qquad\qquad-(h_X+J_{ZX}+J_{XZ})h_Z\big)J_{ZZ}+h_Z\left(2J_{XX}J_{XZ}+2J_{XX}J_{ZX}+8J_{XX}h_X-J_{XZ}h_Z-J_{ZX}h_Z-h_Xh_Z\right)\Big].
\end{aligned}$}
\end{equation*} 
For $z_{i-1}=1,z_{i+1}=1$, the same subtraction gives
\begin{equation*}
\resizebox{1.0\textwidth}{!}{$\begin{aligned}
&f(4,z,z\oplus e_i)-f(4,z,z\oplus e_i)\big|_{h_Z=J_{ZZ}=0}=8(J_{XZ}+J_{ZX}-h_X)\Big[-4(J_{XZ}+J_{ZX}-h_X)J_{ZZ}^2+4\big(J_{XX}(J_{XZ}+J_{ZX}+2h_X)\\
&\qquad\qquad+(J_{XZ}+J_{ZX}-h_X)h_Z\big)J_{ZZ}+h_Z\left(6J_{XX}J_{XZ}+6J_{XX}J_{ZX}+8J_{XX}h_X-J_{XZ}h_Z-J_{ZX}h_Z+h_Xh_Z\right)\Big].
\end{aligned}$}
\end{equation*} 
We divide these equations by $8(h_X+J_{ZX}+J_{XZ})$ and $8(J_{XZ}+J_{ZX}-h_X)$, respectively, and subtract the terms depending only on the known $h_Z$.
The resulting linear system in $J_{ZZ}$ and $J_{ZZ}^2$ has determinant
\begin{align*}
32\left[J_{XX}h_X(J_{XZ}+J_{ZX}+2h_X)+h_Z\left((J_{XZ}+J_{ZX})^2-h_X^2\right)\right].
\end{align*}
Thus the divisors and determinant are nonzero provided that
\begin{align*}
h_X+J_{ZX}+J_{XZ}\neq 0,\quad
J_{XZ}+J_{ZX}-h_X\neq 0,\quad
J_{XX}h_X(J_{XZ}+J_{ZX}+2h_X)+h_Z\left((J_{XZ}+J_{ZX})^2-h_X^2\right)\neq 0,
\end{align*}
in which case $J_{ZZ}$ is uniquely determined.
Combining both reconstruction steps, we obtain uniqueness up to the unavoidable gauges in \Cref{thm:unobservable_gauges} whenever
\begin{align*}
&h_XJ_{ZX}J_{XZ}\ne0,\quad
J_{XZ}-J_{ZX}-h_X\neq 0,\quad
J_{XZ}-J_{ZX}+h_X\neq 0,\quad
J_{XX}(J_{XZ}-J_{ZX})(J_{XZ}+J_{ZX}+2h_X)\neq 0,\\
&h_X+J_{ZX}+J_{XZ}\neq 0,\quad
J_{XZ}+J_{ZX}-h_X\neq 0,\quad
J_{XX}h_X(J_{XZ}+J_{ZX}+2h_X)+h_Z\left((J_{XZ}+J_{ZX})^2-h_X^2\right)\neq 0.
\end{align*}
The two off-diagonal sign choices give exactly the following four parameter configurations:
\begin{align*}
&(h_X,h_Z,J_{XX},J_{XZ},J_{ZX},J_{ZZ}),\qquad
(-h_X,h_Z,J_{XX},-J_{XZ},-J_{ZX},J_{ZZ}),\\
&(-h_X,-h_Z,-J_{XX},-J_{XZ},-J_{ZX},-J_{ZZ}),\qquad
(h_X,-h_Z,-J_{XX},J_{XZ},J_{ZX},-J_{ZZ}).
\end{align*}

\paragraph{Stability under Gaussian smoothing.}
The products used to recover the off-diagonal coefficients, the two linear-system determinants, and the displayed divisors are nonzero polynomials in the six independent parameters.
Lemmas~\ref{lem:gaussian_polynomial} and~\ref{lem:finite_stage_stability} bound all of them away from zero by inverse-polynomial quantities and control the propagation of measurement error, including errors in the matrices and right-hand sides.
We choose derivative precision $\epsilon/\poly(R,\varsigma^{-1},\delta^{-1})$, below the stability threshold, so that the final parameter error is at most $\epsilon$.
Appendix~\ref{sec:defer_derivative_estimation} then gives the stated $\widetilde O(\epsilon^{-2})$ dependence on the target precision.
\end{proof}

\begin{proof}[Proof in two and three dimensions]
As in one dimension, the Fourier coefficients of the second-order single-flip data give the pairwise products of the single-$X$ coefficients and hence determine them up to a common sign.
Adjacent-pair flips give the magnitudes of the $XX$ coefficients, and Eq.~\eqref{eq:fourthordertransition} determines their relative signs across the lattice axes.
We choose representatives for these two common signs and recover the diagonal coefficients from fourth-order data.

\paragraph{The 2D case.}
Fix an interior site $r$ with input bit $z_r=0$.
Write $s_{\pm\alpha}=(-1)^{z_{r\pm\hat e_\alpha}}$ for its four neighbor signs, with $\alpha=x,y$.
The single-flip amplitude is $A=h_X+J^x_{ZX}s_{-x}+J^x_{XZ}s_{+x}+J^y_{ZX}s_{-y}+J^y_{XZ}s_{+y}$, and half the diagonal energy difference is $E=h_Z+J^x_{ZZ}(s_{-x}+s_{+x})+J^y_{ZZ}(s_{-y}+s_{+y})$.
After subtracting the terms depending only on $K$ as in Eq.~\eqref{eq:local_transition_bridge}, the fourth derivative is $-8A^2E^2$ plus a term linear in the diagonal coefficients.
This latter term has degree at most two in the neighbor signs: each contributing two-step transition flips one neighbor and the edge joining it to $r$, in either order, and distinct neighbors of $r$ are not adjacent.

We isolate the quadratic part by fixing all other input bits in the radius-$2$ region and taking five weighted averages over the four neighbor bits.
For each average, multiply the fourth derivative by one of the following sign products and average uniformly over the sixteen inputs:
\begin{align*}
(-1)^{z_{r+\hat e_x}+z_{r-\hat e_y}+z_{r+\hat e_y}},\quad 
(-1)^{z_{r-\hat e_x}+z_{r-\hat e_y}+z_{r+\hat e_y}},\quad 
(-1)^{z_{r-\hat e_x}+z_{r+\hat e_x}+z_{r+\hat e_y}},\\ 
(-1)^{z_{r-\hat e_x}+z_{r+\hat e_x}+z_{r-\hat e_y}},\quad 
(-1)^{z_{r-\hat e_x}+z_{r+\hat e_x}+z_{r-\hat e_y}+z_{r+\hat e_y}}.
\end{align*} 
Each sign product involves three or four neighbors, so these averages eliminate every term of degree at most two in the neighbor signs.
After the subtraction above and division by $-32$, the five averages give
\begin{align*}
&E_1=J^y_{ZX}J^y_{XZ} h_ZJ^x_{ZZ}+J^x_{XZ}(J^y_{ZX}+J^y_{XZ})h_ZJ^y_{ZZ}+h_X(J^y_{ZX}+J^y_{XZ})J^x_{ZZ}J^y_{ZZ}+h_XJ^x_{XZ}(J^y_{ZZ})^2,\\
&E_2=J^y_{ZX}J^y_{XZ} h_ZJ^x_{ZZ}+J^x_{ZX}(J^y_{ZX}+J^y_{XZ})h_ZJ^y_{ZZ}+h_X(J^y_{ZX}+J^y_{XZ})J^x_{ZZ}J^y_{ZZ}+h_XJ^x_{ZX}(J^y_{ZZ})^2,\\
&E_3=J^y_{XZ}(J^x_{ZX}+J^x_{XZ})h_ZJ^x_{ZZ}+J^x_{ZX}J^x_{XZ} h_ZJ^y_{ZZ}+h_X(J^x_{ZX}+J^x_{XZ})J^x_{ZZ}J^y_{ZZ}+h_XJ^y_{XZ}(J^x_{ZZ})^2,\\
&E_4=J^y_{ZX}(J^x_{ZX}+J^x_{XZ})h_ZJ^x_{ZZ}+J^x_{ZX}J^x_{XZ} h_ZJ^y_{ZZ}+h_X(J^x_{ZX}+J^x_{XZ})J^x_{ZZ}J^y_{ZZ}+h_XJ^y_{ZX}(J^x_{ZZ})^2,\\
&E_5=(J^x_{ZX}+J^x_{XZ})(J^y_{ZX}+J^y_{XZ})J^x_{ZZ}J^y_{ZZ}+J^y_{ZX}J^y_{XZ}(J^x_{ZZ})^2+J^x_{ZX}J^x_{XZ}(J^y_{ZZ})^2.
\end{align*}
To state the determinant of this system, write
\begin{align*} 
a=h_X,\quad b=J^x_{ZX},\quad c=J^x_{XZ},\quad d=J^y_{ZX},\quad e=J^y_{XZ}. 
\end{align*}
The five equations are linear in $(h_ZJ^x_{ZZ},h_ZJ^y_{ZZ},J^x_{ZZ}J^y_{ZZ},(J^x_{ZZ})^2,(J^y_{ZZ})^2)$, with coefficient determinant
\begin{align*}
3a^2bcde(b-c)(b+c)(d-e)(d+e).
\end{align*}
Thus these five products are uniquely determined when $a b c d e(b-c)(b+c)(d-e)(d+e)\neq0$.
For generic nonzero $J^x_{ZZ}$ and $J^y_{ZZ}$, their squares and product fix their relative sign, and the products with $h_Z$ then determine the field.
The only remaining ambiguity in the diagonal coefficients is
\begin{align*}
(h_Z,J^x_{ZZ},J^y_{ZZ})\mapsto-(h_Z,J^x_{ZZ},J^y_{ZZ}).
\end{align*}
We resolve this ambiguity relative to the chosen common $XX$ sign using the single-flip fourth derivative for the all-zero input.
For a neighbor $j$ of $r$, let $g_j$ and $c_j$ be the coefficients of $X_rZ_j$ and $Z_rX_j$, respectively, and let $e_j=J^\alpha_{ZZ}$ and $k_j=J^\alpha_{XX}$ when the edge lies on axis $\alpha$.
Then $A=h_X+\sum_jg_j$ and $E=h_Z+2\sum_\alpha J^\alpha_{ZZ}$ for this input.
The part of the fourth derivative linear in the diagonal coefficients is
\begin{align*}
\mathcal L=16A\sum_jk_j\bigl[2(A-c_j)(E-e_j)+2c_je_j-c_jE\bigr].
\end{align*}
This expression follows from the two orders of the neighbor flip and edge flip; excluding their shared edge gives half the neighbor energy difference $E-e_j$ and the amplitude $A-c_j$.
At $(h_X,J^x_{ZX},J^x_{XZ},J^y_{ZX},J^y_{XZ})=(1,2,3,4,5)$, $(J^x_{XX},J^y_{XX})=(1,2)$, and $(h_Z,J^x_{ZZ},J^y_{ZZ})=(1,2,3)$, we obtain $\mathcal L=239280\ne0$.
Thus $\mathcal L$ is a nonzero polynomial in the Hamiltonian coefficients.
Negating only the diagonal coefficients changes the sign of $\mathcal L$ and leaves the quadratic part unchanged, so this measurement distinguishes the two choices generically.

\paragraph{The 3D case.}
The second-order single-flip equations now depend on six neighbor signs.
Their Fourier coefficients determine $(h_X,\{J^\alpha_{ZX},J^\alpha_{XZ}\}_{\alpha=x,y,z})$ up to a common sign by the same pairwise-product reconstruction.
Adjacent-pair flips give $|J^\alpha_{XX}|$ for each axis.
To determine their relative signs, we apply Eq.~\eqref{eq:fourthordertransition} to each pair of distinct axes $\alpha,\beta=x,y,z$:
  \begin{align*}
\frac{1}{4}\sum_{z_{r+2\hat e_\alpha},z_{r+2\hat e_\beta}\in\{0,1\}}(-1)^{z_{r+2\hat e_\alpha}+z_{r+2\hat e_\beta}}f(4,z,z\oplus e_r) =48J^\alpha_{XX}J^\beta_{XX}J^\alpha_{XZ}J^\beta_{XZ}.
\end{align*} 
All input bits outside this average are fixed to zero.
Since the single-$X$ coefficients are already known up to a common sign, these measurements give the pairwise products of the three $XX$ coefficients and hence fix them up to one common sign.

We next recover $(h_Z,J^x_{ZZ},J^y_{ZZ},J^z_{ZZ})$.
Order the six neighbors as $(-x,+x,-y,+y,-z,+z)$ and write their input signs as $s_1,\ldots,s_6$, where each sign is $(-1)$ raised to the corresponding bit.
The single-flip amplitude and half the diagonal energy difference are
\begin{align*}
A&=h_X+J^x_{ZX}s_1+J^x_{XZ}s_2+J^y_{ZX}s_3+J^y_{XZ}s_4+J^z_{ZX}s_5+J^z_{XZ}s_6,\\
E&=h_Z+J^x_{ZZ}(s_1+s_2)+J^y_{ZZ}(s_3+s_4)+J^z_{ZZ}(s_5+s_6).
\end{align*}
We fix $z_r=0$ and all input bits other than the six neighbors.
As in two dimensions, subtracting the terms depending only on $K$ leaves $-8A^2E^2$ plus a term of degree at most two in the neighbor signs.
For the linear part in the diagonal coefficients, the contributing two-step transitions again flip one neighbor and its edge to $r$, in either order.
Their sum is $A$ times an affine function of the neighbor signs: distinct neighbors in the cubic lattice are not adjacent, so the remaining input bits that enter these transitions are fixed.
Weighted averages using products of three or four neighbor signs therefore eliminate the linear part.

We use the nine sign products $123,124,125,234,135,235,1234,134,156$, where, for example, $123$ denotes $s_1s_2s_3$.
For each product, we average uniformly over the six neighbor bits and divide by $-32$.
The resulting nine equations are linear in
\begin{align*}
(h_ZJ^x_{ZZ},h_ZJ^y_{ZZ},h_ZJ^z_{ZZ},J^x_{ZZ}J^y_{ZZ},J^x_{ZZ}J^z_{ZZ},J^y_{ZZ}J^z_{ZZ},(J^x_{ZZ})^2,(J^y_{ZZ})^2,(J^z_{ZZ})^2).
\end{align*}
At $(h_X,J^x_{ZX},J^x_{XZ},J^y_{ZX},J^y_{XZ},J^z_{ZX},J^z_{XZ})=(x_1,x_2,x_3,x_4,x_5,x_6,x_7)$ with $x_j=j$, the coefficient matrix is
\begin{align*}
\begin{pmatrix}
20&6&0&5&0&0&4&0&0\\
25&6&0&5&0&0&5&0&0\\
30&0&6&0&5&0&6&0&0\\
20&27&0&9&0&0&0&3&0\\
24&12&8&6&4&2&0&0&0\\
24&18&12&6&4&3&0&0&0\\
0&0&0&45&0&0&20&6&0\\
20&18&0&9&0&0&0&2&0\\
42&0&26&0&13&0&0&0&2
\end{pmatrix},
\qquad \det=777600.
\end{align*}
The nonzero determinant shows that this system is invertible for generic off-diagonal coefficients.
Its solution gives the squares and pairwise products of the diagonal couplings, as well as their products with $h_Z$.
For generic nonzero diagonal couplings, these determine $(h_Z,J^x_{ZZ},J^y_{ZZ},J^z_{ZZ})$ up to a common sign.

Finally, we determine the diagonal sign relative to the chosen common $XX$ sign from the single-flip fourth derivative for the all-zero input.
To verify that the part linear in the diagonal coefficients is a nonzero polynomial, use the single-$X$ coefficients above and take
\begin{align*}
(J^x_{XX},J^y_{XX},J^z_{XX})=(1,2,3),\qquad
(h_Z,J^x_{ZZ},J^y_{ZZ},J^z_{ZZ})=(1,2,3,4).
\end{align*}
The part of the fourth derivative linear in the diagonal coefficients is then $3527552$. 
It changes sign when only the diagonal tuple is negated and is unchanged when both that tuple and all $XX$ coefficients are negated. 
Thus this measurement fixes the relative sign generically, leaving the two independent sign choices in Theorem~\ref{thm:2D3D_TI_z_to_zp}.

\paragraph{Stability under Gaussian smoothing.}
The two displayed matrix determinants, the products used to recover coefficients from squares, and the polynomials that distinguish the sign choices are all nonzero and have fixed degree.
We apply Lemmas~\ref{lem:gaussian_polynomial} and~\ref{lem:finite_stage_stability} to this finite collection, accounting also for errors in the subtracted terms and matrix entries.
Derivative precision $\epsilon/\poly(R,\varsigma^{-1},\delta^{-1})$, below the stability threshold, then gives parameter error at most $\epsilon$.
The reconstruction uses a constant number of local transition probabilities, linear solves, and sign choices, so Appendix~\ref{sec:defer_derivative_estimation} gives the claimed $\widetilde O(\epsilon^{-2})$ dependence on precision.
\end{proof}

\subsubsection{Learning TI \texorpdfstring{$3$}{3}-local IXZ Hamiltonians on chains with \texorpdfstring{$Z$}{Z}-basis access}\label{sec:proof_TI_3_local}
\begin{proof}[Proof of \Cref{thm:TI_3_local}]
We first recover the off-diagonal coefficients from second-order data, then use fourth-order data to resolve the remaining choices and recover the diagonal coefficients.
For a flip set $\sigma$, write $z'=z\oplus 1_\sigma$.
The second-order coefficient $[t^2]p_{z\to z'}(t)=\left|\langle z|K|z'\rangle\right|^2$ gives the squared transition amplitude.

\paragraph{Recovering the off-diagonal coefficients.}
For a single-site flip $\sigma=\{i\}$, the contributing Pauli strings are $X_i$, $Z_{i-1}X_i$, $X_iZ_{i+1}$, $Z_{i-2}X_i$, $X_iZ_{i+2}$, $Z_{i-2}Z_{i-1}X_i$, $Z_{i-1}X_iZ_{i+1}$, and $X_iZ_{i+1}Z_{i+2}$.
Their sum gives the amplitude
\begin{align*}
\langle z|K|z'\rangle&=h_X+J_{ZIX}(-1)^{z_{i-2}}+J_{ZX}(-1)^{z_{i-1}}+J_{ZZX}(-1)^{z_{i-2}}(-1)^{z_{i-1}}+J_{ZXZ}(-1)^{z_{i-1}}(-1)^{z_{i+1}}\\
&\quad+J_{XZZ}(-1)^{z_{i+1}}(-1)^{z_{i+2}}+J_{XZ}(-1)^{z_{i+1}} +J_{XIZ}(-1)^{z_{i+2}}.
\end{align*}
We vary the four neighboring input signs $(-1)^{z_{i-2}},(-1)^{z_{i-1}},(-1)^{z_{i+1}},(-1)^{z_{i+2}}$.
Lemma~\ref{lem:boolean_amplitude} then determines the eight coefficients $(h_X,J_{ZIX},J_{ZX},J_{ZZX},J_{ZXZ},J_{XZZ},J_{XZ},J_{XIZ})$ generically up to a common sign.

For an adjacent-pair flip $\sigma=\{i,i+1\}$, the contributing strings are $X_iX_{i+1}$, $Z_{i-1}X_iX_{i+1}$, and $X_iX_{i+1}Z_{i+2}$, giving
\begin{align*}
\langle z|K|z'\rangle=J_{XX}+J_{ZXX}(-1)^{z_{i-1}}+J_{XXZ}(-1)^{z_{i+2}},\qquad
[t^2]p_{z\to z'}(t)=\langle z|K|z'\rangle^2.
\end{align*}
Varying the two input signs and applying Lemma~\ref{lem:boolean_amplitude} determines $(J_{XX},J_{ZXX},J_{XXZ})$ generically up to a common sign.
The character index sets are $\{\varnothing,\{-2\},\{-1\},\{1\},\{2\},\{-2,-1\},\{-1,1\},\{1,2\}\}$ for the single-flip amplitude and $\{\varnothing,\{-1\},\{2\}\}$ for the adjacent-pair amplitude.
Neither family is preserved by taking the symmetric difference of every index set with the same nonempty set.
This verifies the hypothesis of Lemma~\ref{lem:boolean_amplitude} in both applications.

For a distance-two pair flip $\sigma=\{i,i+2\}$, the contributing strings are $X_iX_{i+2}$ and $X_iZ_{i+1}X_{i+2}$, giving
\begin{align*}
\langle z|K|z'\rangle=J_{XIX}+J_{XZX}(-1)^{z_{i+1}},\qquad
[t^2]p_{z\to z'}(t)=\langle z|K|z'\rangle^2.
\end{align*}
The two Fourier coefficients of this squared amplitude are $J_{XIX}^2+J_{XZX}^2$ and $2J_{XIX}J_{XZX}$.
Equivalently, they determine $(J_{XIX}+J_{XZX})^2$ and $(J_{XIX}-J_{XZX})^2$, leaving a common sign and an interchange of the two coefficients undetermined.

Finally, a triple flip $\sigma=\{i,i+1,i+2\}$ receives only the direct contribution from $X_iX_{i+1}X_{i+2}$, so
\begin{align*}
[t^2]p_{z\to z'}(t)=J_{XXX}^2,
\end{align*}
which determines $J_{XXX}$ up to sign.

\paragraph{Resolving the remaining off-diagonal choices.}
We fix the common signs of the single-flip and adjacent-pair coefficients, as the two unavoidable gauges allow these choices independently.
There are then eight candidates, corresponding to the four choices for $(J_{XIX},J_{XZX})$ and the two signs of $J_{XXX}$.
To distinguish them, we use the all-zero input and the four-site flip sets $\{i,i+1,i+2,i+3\}$ and $\{i,i+1,i+2,i+4\}$.
No single term in $H$ flips either of these sets, so Eq.~\eqref{eq:fourth-order-expansion} reduces to
\begin{align*}
f(4,0^n,1_\sigma)=6\langle 1_\sigma|K^2|0^n\rangle^2.
\end{align*}
We verify that these two measurements distinguish the candidates at the parameter point
\begin{align*}
&(h_X,J_{XX},J_{XZ},J_{ZX},J_{XIX},J_{XIZ},J_{XXX})=(1,2,3,4,5,6,7),\\
&(J_{XXZ},J_{XZX},J_{XZZ},J_{ZIX},J_{ZXX},J_{ZXZ},J_{ZZX})=(8,9,10,11,12,13,14).
\end{align*}
In each row of the table, we interchange $(J_{XIX},J_{XZX})$ if indicated and then apply the indicated signs to this pair and to $J_{XXX}$, keeping all other coefficients fixed.
The last two columns give the corresponding fourth derivatives:
\begin{center}
\begin{tabular}{c|c|c|r|r}
interchange & pair sign & $J_{XXX}$ sign & $\{i,i+1,i+2,i+3\}$ & $\{i,i+1,i+2,i+4\}$\\
\hline
no & $+$ & $+$ & $6144864$ & $4968600$\\
no & $+$ & $-$ & $6741600$ & $95256$\\
no & $-$ & $+$ & $6144864$ & $95256$\\
no & $-$ & $-$ & $6741600$ & $4968600$\\
yes & $+$ & $+$ & $9166176$ & $4968600$\\
yes & $+$ & $-$ & $4193376$ & $95256$\\
yes & $-$ & $+$ & $9166176$ & $95256$\\
yes & $-$ & $-$ & $4193376$ & $4968600$
\end{tabular}
\end{center}
The eight pairs of values are distinct.
For every pair of candidates, at least one of the two measurement differences is therefore a nonzero polynomial in the Hamiltonian coefficients.
The two measurements distinguish all eight candidates generically, determining the off-diagonal part up to one common sign on the odd-$X$ terms and one on the even-$X$ terms.

\paragraph{Recovering the diagonal coefficients.}
The remaining parameters are $(h_Z,J_{ZZ},J_{ZIZ},J_{ZZZ})$.
For a flip set $\sigma$, let $\Delta_\sigma(z):=E_z-E_{z'}$, where $E_z:=\langle z|D|z\rangle$ and $z'=z\oplus1_\sigma$.
For every allowed flip set $\tau$, let $A_\tau(z)=\langle z\oplus\mathbf 1_\tau|K|z\rangle$.
Equation~\eqref{eq:fourth-order-expansion} gives
\begin{align*}
[t^4]p_{z\to z'}(t)&=\frac{1}{4!}f(4,z,z')=-\frac{1}{12}\langle z|K|z'\rangle^2\Delta_\sigma(z)^2
+\frac{1}{6}\langle z|K|z'\rangle\sum_\tau A_\tau(z)A_{\sigma\triangle \tau}(z\oplus\mathbf 1_\tau)
\left(\Delta_\tau(z)-\Delta_{\sigma\triangle\tau}(z\oplus\mathbf{1}_\tau)\right) \\
&\quad+\frac{1}{4} \langle z|K^2|z'\rangle^2-\frac{1}{3} \langle z|K|z'\rangle(K^3)_{z,z'}.
\end{align*} 
The sum runs over nonempty sets $\tau$ for which both $\tau$ and $\sigma\triangle\tau$ are $X$-supports of allowed $3$-\gl terms.
We subtract the last two terms, which depend only on the recovered off-diagonal coefficients.
The remaining expression is quadratic in the diagonal energy differences.
For the four flip sets $\{i\}$, $\{i,i+1\}$, $\{i,i+2\}$, and $\{i,i+1,i+2\}$, the all-zero-input energy differences determine the diagonal parameters through
\begin{align*}
\begin{cases}
\Delta_1=\Delta_{\{i\}}(0^n)=2h_Z+4J_{ZZ}+4J_{ZIZ}+6J_{ZZZ},\\
\Delta_2=\Delta_{\{i,i+1\}}(0^n)=4h_Z+4J_{ZZ}+8J_{ZIZ}+4J_{ZZZ},\\
\Delta_3=\Delta_{\{i,i+2\}}(0^n)=4h_Z+8J_{ZZ}+4J_{ZIZ}+8J_{ZZZ},\\
\Delta_4=\Delta_{\{i,i+1,i+2\}}(0^n)=6h_Z+4J_{ZZ}+8J_{ZIZ}+6J_{ZZZ}.
\end{cases} \quad\Rightarrow\quad
\begin{pmatrix}
h_Z\\
J_{ZZ}\\
J_{ZIZ}\\
J_{ZZZ}
\end{pmatrix}=\begin{pmatrix}
-\frac{3}{8}&-\frac{1}{4}&\frac{1}{8}&\frac{3}{8}\\
-\frac{1}{4}&\frac{1}{4}&\frac{1}{4}&-\frac{1}{4}\\
\frac{1}{8}&\frac{1}{4}&-\frac{1}{8}&-\frac{1}{8}\\
\frac{3}{8}&-\frac{1}{4}&-\frac{1}{8}&\frac{1}{8}
\end{pmatrix}\begin{pmatrix}
\Delta_1\\
\Delta_2\\
\Delta_3\\
\Delta_4
\end{pmatrix}.
\end{align*}
We therefore only need to recover $\Delta_1,\ldots,\Delta_4$.
For any input, each diagonal energy difference is linear in these four variables, so the fourth-order equations are linear in the following fourteen quantities:
\begin{align*}
(\Delta_1^2,\Delta_1\Delta_2,\Delta_1\Delta_3,\Delta_1\Delta_4,\Delta_2^2,\Delta_2\Delta_3,\Delta_2\Delta_4,\Delta_3^2,\Delta_3\Delta_4,\Delta_4^2,\Delta_1,\Delta_2,\Delta_3,\Delta_4).
\end{align*}
The coefficient matrix depends only on the known off-diagonal coefficients.
To show that it has full rank generically, we select fourteen equations as follows.
Call the four flip sets types $1,2,3,4$ in the order above, and set every input bit to zero except possibly $z_{i+3}$ and $z_{i+4}$.
We use all four types for each input pair $(z_{i+3},z_{i+4})=(0,0),(0,1),(1,0)$, followed by types $3,4$ for $(1,1)$.
We write these equations for $f(4,z,z')$, subtract the known off-diagonal terms, and order the columns by the fourteen quantities displayed above.
At the parameter point used in the sign table, the resulting $14\times14$ matrix has determinant
\begin{align*}
2^{57}3^2 7^{16}11^4 31^4\cdot68258543\neq0.
\end{align*}
Thus the system determines all fourteen quantities generically, including $\Delta_1,\ldots,\Delta_4$ themselves.
The displayed inverse matrix then gives $(h_Z,J_{ZZ},J_{ZIZ},J_{ZZZ})$.
The determinant is homogeneous of degree $32$ in the off-diagonal coefficients: the ten columns for quadratic terms each have degree two, and the four columns for linear terms each have degree three.
Changing the common odd-$X$ sign leaves the diagonal equations unchanged.
Changing the common even-$X$ sign negates their linear part, so it also negates the recovered diagonal coefficients.
These are the two signs on $\Xi_o$ and $\Xi_e$ in the theorem.

\paragraph{Stability under Gaussian smoothing.}
We apply Gaussian small-ball bounds to the finite collection of nonzero polynomials used in the reconstruction.
For the single-site step, there are finitely many candidate sign patterns.
Every pattern incompatible with the single-flip equations gives a nonzero Fourier residual that is linear in the true coefficients, and its absolute value bounds the distance from compatibility up to a fixed factor.
We also include the single-site amplitudes, the products used in the adjacent-pair reconstruction, and $J_{XIX}\pm J_{XZX}$.
For the fourth-order steps, we include the nonzero measurement-difference polynomials verified by the sign table and the determinant of the diagonal reconstruction matrix.
All these polynomials have fixed degree.
Their simultaneous inverse-polynomial lower bounds in $\varsigma$ and $\delta$, together with an upper bound on the coefficient radius, control the errors from taking square roots, choosing candidates, and solving the linear system.
We choose derivative precision $\epsilon$ divided by the resulting polynomial conditioning factor, so that the recovered parameters have error at most $\epsilon$.
There are only constantly many local input choices and candidates, so the reconstruction preserves the $\widetilde O(\epsilon^{-2})$ dependence on precision from derivative estimation.
\end{proof}

\subsection{Spatially varying chains}
\label{sec:proof_1D_local_Z_basis}

\begin{proof}[Proof of Theorem~\ref{thm:local_NN_z_to_zp} for chains]

Here and in the next subsection, $\eta_{\rm data}$ denotes a uniform additive error bound on all estimated second- and fourth-order local derivatives (before we take a Fourier transform or subtract the reconstructed $K$-only terms). On the conditioning event, the off-diagonal reconstruction in Appendix~\ref{sec:local_z_common} has coefficient error at most $\poly(n,R,\varsigma^{-1},\delta^{-1})\eta_{\rm data}$.

Now consider an open chain of length at least $4$, or a periodic chain as in Theorem~\ref{thm:1D_local_z_to_zp}. We first recover and globally align the off-diagonal coefficients as in Appendix~\ref{sec:local_z_common}; for an open chain, this includes recovering the off-diagonal single-flip coefficients at both endpoints using the endpoint tests. Throughout the open-chain analysis, coefficients on absent outside edges are set to zero.

\paragraph{Diagonal reconstruction.}
We now reconstruct the diagonal part of the Hamiltonian in this fixed gauge representative. The reconstruction is local: we consider overlapping four-site windows and use fourth-order transition data to determine the $Z$ fields and $ZZ$ couplings associated with each window. For an interior window with the input bits outside the window fixed to zero, the field at either end appears only through its sum with the adjacent outside $ZZ$ coupling. These coefficients therefore enter through two endpoint combinations, which neighboring windows subsequently disentangle.

Consider a four-site window $\{i,i+1,i+2,i+3\}$. For the already known off-diagonal coefficients that enter the fourth-order equations on this window, we write
\begin{align*}
a_r &= h_X^{i+r}\,, \qquad &&r = 0,1,2,3\,,\\
b_r &= J_{ZX}^{i+r-1}\,, \qquad &&r = 0,1,2,3\,,\\
c_r &= J_{XZ}^{i+r}\,, \qquad &&r = 0,1,2,3\,,\\
k_r &= J_{XX}^{i+r}\,, \qquad &&r = 0,1,2\,.
\end{align*}
Here $a_r$ is the transverse field at site $i+r$, while $b_r$ and $c_r$ are the two mixed couplings that can contribute to a single flip at that site. The coefficients $k_0,k_1,k_2$ are the $XX$ couplings on the three edges inside the window. In particular,
\begin{align*}
b_0 &= J_{ZX}^{i-1}\,, \qquad c_3 = J_{XZ}^{i+3}\,,
\end{align*}
belong to the two edges immediately outside the window.

For an interior window, the fourth-order equations depend on the nine diagonal coefficients
\begin{align*}
h_Z^i\,,\ h_Z^{i+1}\,,\ h_Z^{i+2}\,,\ h_Z^{i+3}\,,\ 
J_{ZZ}^{i-1}\,,\ J_{ZZ}^{i}\,,\ J_{ZZ}^{i+1}\,,\ J_{ZZ}^{i+2}\,,\ J_{ZZ}^{i+3}\,.
\end{align*}
However, the two coefficients at each end of the window appear only through the combinations
\begin{align*}
d_0 &= h_Z^i + J_{ZZ}^{i-1}\,, \qquad
d_3 = h_Z^{i+3} + J_{ZZ}^{i+3}\,.
\end{align*}
So we see that the data depend on only seven effective diagonal variables. We collect the five variables intrinsic to the window as
\begin{align*}
\zeta &= (x,y,\ell_0,\ell_1,\ell_2) = \bigl(h_Z^{i+1},h_Z^{i+2},J_{ZZ}^{i}, J_{ZZ}^{i+1},J_{ZZ}^{i+2}\bigr)\,.
\end{align*}
The remaining two effective variables are
the endpoint combinations $d_0$ and $d_3$.

The equations below are arranged so that $d_0$ and $d_3$ can be eliminated first, leaving a polynomial system for the five components of $\zeta$. Once $\zeta$ has been reconstructed, two of the original equations recover $d_0$ and $d_3$ by division. A single interior window does not by itself separate $h_Z^i$ from $J_{ZZ}^{i-1}$, or $h_Z^{i+3}$ from $J_{ZZ}^{i+3}$, but these endpoint fields appear as middle-site fields in neighboring windows. Overlapping windows then recover the individual coefficients. At an open boundary, the corresponding outside coupling is absent and is set to zero, and so the endpoint combination directly gives the boundary field.

For each middle site $r\in\{i+1,i+2\}$, we set all input bits to zero except $z_{r-1}$ and $z_{r+1}$, and write $L=(-1)^{z_{r-1}}$, $R=(-1)^{z_{r+1}}$, and $z'=z\oplus\mathbf{1}_{\{r\}}$.  We use $\emptyset,L,R,LR$ to label the four Fourier characters $1,L,R,LR$, respectively.  The four subtracted Fourier coefficients are
\begin{align*}
\begin{pmatrix}Y_{r,\emptyset}\\Y_{r,L}\\Y_{r,R}\\Y_{r,LR}\end{pmatrix}
=\frac{1}{4}\sum_{L,R\in\{\pm1\}}
\begin{pmatrix}1\\L\\R\\LR\end{pmatrix}
\left[\frac{f(4,z,z')}{24}
-\frac{1}{4}\bigl((K^2)_{z',z}\bigr)^2
+\frac{1}{3}K_{z',z}(K^3)_{z',z}\right].
\end{align*}
Here $f(4,z,z')$ denotes the fourth derivative at $t=0$, so $f(4,z,z')/24$ is the coefficient of $t^4$ in the Taylor expansion. By Eq.~\eqref{eq:local_transition_bridge}, these Fourier coefficients are obtained from $[t^4](p_H^B-p_K^B)$, with the $K$ contribution computed from the globally aligned off-diagonal estimates. The derivative data and recovered off-diagonal coefficients therefore determine $Y_{r,M}$. For $M\in\{\emptyset,L,R,LR\}$, the single-flip equations at $i+1$ are
\begin{equation*}
    3Y_{i+1,M} = \alpha_M d_0 + Q_{1,M}(\zeta)
\end{equation*}
where we have
\begin{equation*}
    (\alpha_\emptyset, \alpha_L, \alpha_R, \alpha_{LR}) = 4k_0(a_0 + b_0)\cdot (b_1, a_1, 0, c_1)
\end{equation*}
and
  \begin{align*}
Q_{1,\emptyset}(\zeta)&\coloneqq 2b_1(c_0k_0-b_2k_1)\,\ell_0-2a_1(c_0k_0+b_2k_1)\,x-2c_1(c_0k_0-b_2k_1)\,\ell_1 \\
&\qquad+4c_1k_1(a_2+c_2)(y+\ell_2)-(a_1^2+b_1^2+c_1^2)(x^2+\ell_0^2+\ell_1^2)-4a_1b_1\,x\ell_0-4a_1c_1\,x\ell_1-4b_1c_1\,\ell_0 \ell_1,\\
Q_{1,L}(\zeta)&\coloneqq 2a_1(c_0k_0-b_2k_1)\,\ell_0-2b_1(c_0k_0+b_2k_1)\,x\\
&\qquad-2a_1b_1(x^2+\ell_0^2+\ell_1^2)-2(a_1^2+b_1^2+c_1^2)\,x\ell_0-4b_1c_1\,x\ell_1-4a_1c_1\,\ell_0 \ell_1,\\
Q_{1,R}(\zeta)&\coloneqq-2c_1(c_0k_0+b_2k_1)\,x-2a_1(c_0k_0-b_2k_1)\,\ell_1\\
&\qquad+4a_1k_1(a_2+c_2)(y+\ell_2)-2a_1c_1(x^2+\ell_0^2+\ell_1^2)-4b_1c_1\,x\ell_0-2(a_1^2+b_1^2+c_1^2)\,x\ell_1-4a_1b_1\,\ell_0 \ell_1,\\
Q_{1,LR}(\zeta)&\coloneqq 2c_1(c_0k_0-b_2k_1)\,\ell_0-2b_1(c_0k_0-b_2k_1)\,\ell_1\\
&\qquad+4b_1k_1(a_2+c_2)(y+\ell_2)-2b_1c_1(x^2+\ell_0^2+\ell_1^2)-4a_1c_1\,x\ell_0-4a_1b_1\,x\ell_1-2(a_1^2+b_1^2+c_1^2)\,\ell_0 \ell_1\,.
\end{align*} 
For $M\in\{\emptyset,L,R,LR\}$, the single-flip equations at $i+2$ are
\begin{equation*}
    3Y_{i+2,M} = \beta_M d_3 + Q_{2,M}(\zeta)
\end{equation*}
with
\begin{equation*}
    (\beta_\emptyset, \beta_L, \beta_R, \beta_{LR}) = 4k_2(a_3 + c_3)\cdot (c_2, 0, a_2, b_2)
\end{equation*}
and
  \begin{align*}
Q_{2,\emptyset}(\zeta)&\coloneqq 4b_2k_1(a_1+b_1)(x+\ell_0)-2b_2(b_3k_2-c_1k_1)\,\ell_1-2a_2(b_3k_2+c_1k_1)\,y\\
&\qquad+2c_2(b_3k_2-c_1k_1)\,\ell_2-(a_2^2+b_2^2+c_2^2)(y^2+\ell_1^2+\ell_2^2)-4a_2b_2\,y\ell_1-4a_2c_2\,y\ell_2-4b_2c_2\,\ell_1\ell_2,\\
Q_{2,L}(\zeta)&\coloneqq 4a_2k_1(a_1+b_1)(x+\ell_0)-2a_2(b_3k_2-c_1k_1)\,\ell_1-2b_2(b_3k_2+c_1k_1)\,y\\
&\qquad-2a_2b_2(y^2+\ell_1^2+\ell_2^2)-2(a_2^2+b_2^2+c_2^2)\,y\ell_1-4b_2c_2\,y\ell_2-4a_2c_2\,\ell_1\ell_2,\\
Q_{2,R}(\zeta)&\coloneqq-2c_2(b_3k_2+c_1k_1)\,y+2a_2(b_3k_2-c_1k_1)\,\ell_2\\
&\qquad-2a_2c_2(y^2+\ell_1^2+\ell_2^2)-4b_2c_2\,y\ell_1-2(a_2^2+b_2^2+c_2^2)\,y\ell_2-4a_2b_2\,\ell_1\ell_2,\\
Q_{2,LR}(\zeta)&\coloneqq 4c_2k_1(a_1+b_1)(x+\ell_0)-2c_2(b_3k_2-c_1k_1)\,\ell_1+2b_2(b_3k_2-c_1k_1)\,\ell_2\\
&\qquad-2b_2c_2(y^2+\ell_1^2+\ell_2^2)-4a_2c_2\,y\ell_1-4a_2b_2\,y\ell_2-2(a_2^2+b_2^2+c_2^2)\,\ell_1\ell_2\,.
\end{align*} 

We first use the constant-character equations to solve for the two endpoint combinations $d_0$ and $d_3$ in terms of $\zeta$. This requires $\alpha_\emptyset\beta_\emptyset\ne0$, which holds on the conditioning event introduced below. Explicitly,
  \begin{equation*}
    d_0 = \frac{3Y_{i+1,\emptyset} - Q_{1,\emptyset}(\zeta)}{\alpha_\emptyset}\,, \qquad d_3 = \frac{3Y_{i+2,\emptyset} - Q_{2,\emptyset}(\zeta)}{\beta_\emptyset}\,.
\end{equation*} 

Substituting these expressions for $d_0$ and $d_3$ into the remaining single-flip equations leaves equations involving only the five components of $\zeta$. For $M\in\{L,R,LR\}$, it is convenient to write these equations as
\begin{align}
\begin{split}
G_{1,M}(\zeta) &\coloneqq \alpha_\emptyset Q_{1,M}(\zeta)-\alpha_M Q_{1,\emptyset}(\zeta)=3\alpha_\emptyset Y_{i+1,M} - 3\alpha_M Y_{i+1,\emptyset} \label{eq:Gs}\\
G_{2,M}(\zeta) &\coloneqq \beta_\emptyset Q_{2,M}(\zeta)-\beta_M Q_{2,\emptyset}(\zeta)=3\beta_\emptyset Y_{i+2,M} - 3\beta_M Y_{i+2,\emptyset}\,, \qquad M\in\{L,R,LR\}\,. 
\end{split}
\end{align} 

We next use transitions that flip the two middle sites of the window. Set $z_{i+1} = z_{i+2}=0$, vary the two outer bits $z_i$ and $z_{i+3}$ over their four possible values, and set all remaining input bits to zero. For each such input, let $z'=z\oplus\mathbf 1_{\{i+1,i+2\}}$. As above, subtract the $K$-only contribution from the fourth-order transition coefficient, and take its four Fourier coefficients with respect to the signs $L=(-1)^{z_i}$ and $R=(-1)^{z_{i+3}}$. We denote these coefficients by $Y_{\{i+1,i+2\},M}$ for $M\in\{\emptyset,L,R,LR\}$, where $M=\emptyset$ denotes the constant Fourier character. The resulting equations are
  \begin{align}
&3Y_{\{i+1,i+2\},\emptyset}=-k_1\Bigl(k_1\bigl((x+y)^2+\ell_0^2+\ell_2^2\bigr)+2b_1b_2\,\ell_0+2(a_1b_2-a_2c_1)(x-y)+4(b_2c_1-a_1a_2)\,\ell_1+2c_1c_2\,\ell_2\Bigr),\nonumber\\
&3Y_{\{i+1,i+2\},L}=-2k_1\Bigl(k_1\,\ell_0(x+y)+(a_1b_2-a_2c_1)\,\ell_0+b_1b_2(x-y)-2a_2b_1\,\ell_1\Bigr),\nonumber\\
&3Y_{\{i+1,i+2\},R}=-2k_1\Bigl(k_1\,\ell_2(x+y)-c_1c_2(x-y)-2a_1c_2\,\ell_1-(a_1b_2-a_2c_1)\,\ell_2\Bigr),\nonumber\\
&3Y_{\{i+1,i+2\},LR}=-2k_1\Bigl(k_1\,\ell_0 \ell_2-c_1c_2\,\ell_0-2b_1c_2\,\ell_1-b_1b_2\,\ell_2\Bigr)\,. \label{eq:Y12s}
\end{align} 

The six single-flip equations in \eqref{eq:Gs} and the four pair-flip equations in \eqref{eq:Y12s} therefore give ten polynomial constraints on the five unknown diagonal coefficients collected in $\zeta$. All coefficients and right-hand sides of these equations are determined by the measured fourth-order data and the off-diagonal parameters reconstructed in the previous stage. It remains to show that, for generic Hamiltonian coefficients, these ten constraints determine $\zeta$ uniquely. We establish this using the polynomial-system certificate of Lemma~\ref{lem:exactcertificate}.

To apply Lemma~\ref{lem:exactcertificate}, we linearize these ten polynomial constraints by treating the relevant monomials in $\zeta=(x,y,\ell_0,\ell_1,\ell_2)$ as independent coordinates. Let $\mathcal S$ be the set of all monomials of total degree at most $3$ in these five variables, so $|\mathcal S|=\binom{5+3}{3}=56$.

Let $P_1,\ldots,P_{10}$ denote the ten constraint polynomials, ordered as $G_{1,L},G_{1,R},G_{1,LR},G_{2,L},G_{2,R},G_{2,LR}$, followed by the four pair-flip constraints in the order in which they appear in \eqref{eq:Y12s}. If $\zeta^*$ denotes the true value of the five unknown coefficients, define $E_j(\zeta)=P_j(\zeta)-P_j(\zeta^*)$. Each $E_j$ vanishes at the true solution.

Following Appendix~\ref{sec:certificate}, for each $j\in[9]$ we use the six polynomials $E_j,xE_j,yE_j,\ell_0E_j,\ell_1E_j,\ell_2E_j$. Together with $E_{10}$, these give $55$ polynomial equations. Appending the normalization equation for the constant monomial gives $56$ linear equations in the $56$ monomials in $\mathcal S$. Their coefficient matrix is the $56\times56$ matrix $\mathcal M$ appearing in Lemma~\ref{lem:exactcertificate}.

It remains to show that $\det\mathcal M$ is not the zero polynomial in the underlying Hamiltonian coefficients. For this purpose, we evaluate $\mathcal M$ at the following explicit choice of the already reconstructed off-diagonal coefficients and of the true diagonal tuple $\zeta^*$:
\begin{align*}
\begin{split}
(a_0,a_1,a_2,a_3)=(1,2,3,4),\quad (b_0,b_1,b_2,b_3)=(5,6,7,8),\\
(c_0,c_1,c_2,c_3)=(9,10,11,12),\quad (k_0,k_1,k_2)=(2,3,5),\\
\zeta^* = (1,2,3,4,5)\,.
\end{split}
\end{align*}
With columns ordered by total degree then lexicographically in the exponent tuple $(x,y,\ell_0,\ell_1,\ell_2)$, the resulting $56\times56$ coefficient matrix has determinant $765893$ modulo $1000003$ at this specialization. This nonzero residue proves that the determinant is a nonzero polynomial in the original Hamiltonian coefficients. Whenever the determinant is nonzero at the true coefficients, the certificate uniquely identifies $\zeta$, including among complex candidates.

The same certificate also applies to windows that meet an open boundary. To see this, modify the witness above only by setting $b_0=c_3=0$, corresponding to the absence of the two outside edges. At this specialization, the determinant of $\mathcal M$ is $181025$ modulo $1000003$, and is therefore nonzero. The same specialization is valid when only the left outside edge is absent, only the right outside edge is absent, or both are absent, so the certificate determinant is a nonzero polynomial for each type of boundary window as well. Notice also that the middle-site transition equations do not involve either outside $XX$ coupling. The missing outside diagonal couplings are set to zero, so for the first window of an open chain $d_0=h_Z^1$, while for the last window $d_3=h_Z^n$. In particular, when the chain has length four, both outside edges are absent and the same argument applies without modification.

Once $\zeta$ has been reconstructed, the constant-character equations determine the two endpoint combinations $d_0$ and $d_3$. The corresponding divisors are $\alpha_\emptyset=4k_0(a_0+b_0)b_1$ and $\beta_\emptyset=4k_2(a_3+c_3)c_2$. Both are nonzero degree-three polynomials in the Hamiltonian coefficients, so we include their nonvanishing in the conditioning event. Suppose that the reconstructed off-diagonal coefficients, the subtracted fourth-order data, and the reconstructed value of $\zeta$ each have error at most $\eta$. The numerators and divisors in the formulas for $d_0$ and $d_3$ then have error at most $\poly(R)\eta$. If these errors are at most $\rho/2$, where $\rho$ is a lower bound on the magnitudes of both divisors, standard stability of division gives an error at most $\poly(R,\rho^{-1})\eta$ in $d_0$ and $d_3$, using that their true values are bounded by $2R$. We have therefore reconstructed all seven effective diagonal variables associated with the window.

We use the same globally aligned representative of the off-diagonal coefficients from Appendix~\ref{sec:local_z_common} in every window, in particular when computing the $K$-only subtractions. On the conditioning event, the polynomial system for each window has a unique diagonal solution in this fixed representative, so there is no additional local time-reversal ambiguity that must be aligned between different windows.

For an open chain, apply the reconstruction to every window $i=1,\ldots,n-3$. Every nonendpoint field appears as one of the two middle-site fields in at least one window, and every $ZZ$ coupling appears as an internal coupling in at least one window. The two endpoint fields are obtained from $d_0$ in the first window and $d_3$ in the last window, since the corresponding outside couplings are absent. For a periodic chain, every field appears as a middle-site field in some window. When a coefficient is reconstructed in more than one window, we retain any one of the local estimates. In particular, numerical estimates are not propagated from one window to the next, so the reconstruction error does not accumulate along the chain.

It remains to control all of these local reconstructions simultaneously. For each of the $O(n)$ windows, include in the conditioning event the off-diagonal conditioning polynomials from Appendix~\ref{sec:local_z_common}, the appropriate interior or boundary certificate determinant, and the two divisors $\alpha_\emptyset$ and $\beta_\emptyset$. Each of these quantities is a nonzero polynomial in a fixed-size set of the original independent Hamiltonian coefficients. Applying Lemma~\ref{lem:gaussian_polynomial} with failure probability of order $\delta/n$ to each such polynomial, and then taking a union bound over all windows and their finitely many local gauge choices, gives simultaneous lower bounds on all required conditioning quantities. On the intersection of this event with the coefficient-radius event, we may take $\rho^{-1}\le\poly(n,R,\varsigma^{-1},\delta^{-1})$.

Corollary~\ref{cor:certificate_partial_reconstruction}, Lemma~\ref{lem:finite_stage_stability}, and the stability estimate for the final divisions then imply that the error in every reconstructed coefficient is at most $\poly(n,R,\varsigma^{-1},\delta^{-1})\eta_{\rm data}$. We therefore choose $\eta_{\rm data}$ to be $\epsilon$ divided by this polynomial, and small enough to satisfy the stability threshold at every stage. For $0<\epsilon<1$, these additional threshold requirements change the precision only by another conditioning polynomial. Finally, applying the coarse search and refinement procedure of Appendix~\ref{sec:certificate_refinement} to each constant-size polynomial system, together with the derivative-estimation bounds of Appendix~\ref{sec:defer_derivative_estimation}, gives the stated soft $\epsilon^{-2}$ bounds on the number of measurements, total evolution time, and classical post-processing time.

\end{proof}

\subsection{Spatially varying square lattices}\label{sec:proof_2D_local_Z_basis}

\begin{proof}[Proof of Theorem~\ref{thm:local_NN_z_to_zp} for square lattices]

Consider an open rectangle with both side lengths at least $3$, or the periodic square lattice of Theorem~\ref{thm:2D_local_z_to_zp}. As in the one-dimensional case, we first recover and globally align the off-diagonal coefficients using the nonadjacent-pair tests of Appendix~\ref{sec:local_z_common}. It remains to reconstruct the diagonal $Z$ fields and $ZZ$ couplings.

The reconstruction is local. Fix a vertex $r$ with $q$ actual neighbors, where $q\in\{2,3,4\}$ on an open rectangle and $q=4$ on a periodic lattice. We label these neighbors by $r_\bullet$, with $\bullet$ ordered according to $(\rightarrow,\leftarrow,\uparrow,\downarrow)$ and absent directions omitted. Since distinct neighbors of a vertex on the square lattice are not adjacent to one another, the single-flip equations around $r$ have a particularly simple local structure.

For the already reconstructed off-diagonal coefficients, write $a=h_X^r$, let $b_\bullet$ be the coefficient of $X_rZ_{r_\bullet}$, $c_\bullet$ the coefficient of $Z_rX_{r_\bullet}$, and $k_\bullet$ the coefficient of $X_rX_{r_\bullet}$. We also define
\begin{align*}
A_\bullet=h_X^{r_\bullet}+
\sum_{v\sim r_\bullet:\,v\ne r}
\text{coefficient of }X_{r_\bullet}Z_v\,.
\end{align*}
All of these quantities are known from the off-diagonal reconstruction. The quantity $A_\bullet$ collects the transverse field at the neighboring site $r_\bullet$ together with the mixed couplings from $r_\bullet$ to its other neighbors, whose input $Z$ signs will be fixed to $+1$ below.

For a degree-four interior vertex, as many as $21$ diagonal coefficients can enter the single-flip data: the field at $r$, the four $ZZ$ couplings incident to $r$, the four fields at its neighbors, and the $ZZ$ couplings joining those neighbors to sites outside this local star. However, the coefficients on the neighbor side enter only through one sum for each neighbor. The data therefore depend on only nine effective variables: the center field, four incident $ZZ$ couplings, and four neighbor-side sums. More generally, a degree-$q$ center has $2q+1$ effective variables.

Relative to the center $r$, write $d=h_Z^r$, let $e_\bullet$ denote the coefficient of $Z_rZ_{r_\bullet}$, and define
\begin{align*}
S_\bullet=h_Z^{r_\bullet}+
\sum_{v\sim r_\bullet:\,v\ne r}
\text{coefficient of }Z_{r_\bullet}Z_v\,.
\end{align*}
We collect the center field and incident couplings as $\zeta=(d,(e_\bullet)_{\bullet\in N(r)})$, and the neighbor-side sums as $S=(S_\bullet)_{\bullet\in N(r)}$. The goal is first to determine $\zeta$ while eliminating the auxiliary variables $S_\bullet$, and then to recover the latter if needed. We apply this reconstruction first at interior and straight-boundary vertices. Once their fields and incident couplings are known, each corner has only its own field left to determine.

\paragraph{Diagonal reconstruction.}
Consider the transition that flips only the center site, $z'=z\oplus\mathbf 1_{\{r\}}$. We fix $z_r=0$, vary the $q$ input bits on the actual neighbors of $r$, and set every other input bit in the local region to zero. Thus the $Z$ signs outside the varied neighbor set are all $+1$, which is why the diagonal coefficients on the far side of $r_\bullet$ enter only through $S_\bullet$.

Using Eq.~\eqref{eq:fourth-order-expansion}, we subtract the part of the fourth-order transition coefficient that depends only on the already known off-diagonal Hamiltonian $K$. The resulting quantity is
\begin{align*}
[t^4]p_{z\to z'}(t)-\frac{1}{4}\left((K^2)_{z',z}\right)^2+\frac{1}{3}\langle z'|K|z\rangle (K^3)_{z',z}\,.
\end{align*}
Here $[t^4]p_{z\to z'}(t)=f(4,z,z')/24$. The derivative data together with the reconstructed off-diagonal coefficients determine this subtracted quantity for every assignment of the neighboring input bits.

We extract the dependence of this subtracted fourth-order transition coefficient on the neighboring input bits by taking its Boolean Fourier coefficients with respect to those bits. For a subset $M$ of the actual neighbors of $r$, let $Y_{r,M}$ denote the average over all $2^q$ assignments of the neighboring bits, weighted by the character $(-1)^{\sum_{v\in M}z_v}$ and normalized by $2^{-q}$. The local implementation is justified by Eq.~\eqref{eq:local_transition_bridge}. We write $Y_{r,r_\bullet}$ for a singleton mask $M=\{r_\bullet\}$ and $Y_{r,r_\bullet,r_\circ}$ for a pair mask $M=\{r_\bullet,r_\circ\}$.

For the constant Fourier character $M=\emptyset$, we obtain
\begin{align*}
3Y_{r,\emptyset}&=-\Bigl(a^2+\sum_\bullet b_\bullet^2\Bigr)\Bigl(d^2+\sum_\bullet e_\bullet^2\Bigr)+2\Bigl(a^2d^2+\sum_\bullet b_\bullet^2e_\bullet^2\Bigr)-2\Bigl(ad+\sum_\bullet b_\bullet e_\bullet\Bigr)^2\\
&\quad-2\Bigl(\sum_\bullet k_\bullet c_\bullet\Bigr)\Bigl(ad+\sum_\bullet b_\bullet e_\bullet\Bigr)+4\sum_\bullet k_\bullet b_\bullet\bigl(c_\bullet e_\bullet+A_\bullet S_\bullet\bigr)\,.
\end{align*}
For a singleton mask $M=\{r_\bullet\}$, we obtain
\begin{align*}
3Y_{r,r_\bullet}&=-2\Bigl(a^2+\sum_{\bullet'}b_{\bullet'}^2\Bigr)de_\bullet-2ab_\bullet\Bigl(d^2+\sum_{\bullet'}e_{\bullet'}^2\Bigr)\\
&\quad-2\bigl(ae_\bullet+b_\bullet d\bigr)\Bigl(2\sum_{\bullet'\neq\bullet}b_{\bullet'}e_{\bullet'}+\sum_{\bullet'}k_{\bullet'}c_{\bullet'}\Bigr)+4ak_\bullet\bigl(c_\bullet e_\bullet+A_\bullet S_\bullet\bigr)\,.
\end{align*}
Finally for a pair mask $M=\{r_\bullet,r_\circ\}$ with $\bullet\neq\circ$, we obtain
\begin{align*}
3Y_{r,r_\bullet,r_\circ}&=-2\Bigl(a^2+\sum_{\bullet'}b_{\bullet'}^2\Bigr)e_\bullet e_\circ-2b_\bullet b_\circ\Bigl(d^2+\sum_{\bullet'}e_{\bullet'}^2\Bigr)\\
&\quad-2\bigl(b_\bullet e_\circ+b_\circ e_\bullet\bigr)\Bigl(2ad+2\sum_{\bullet'\neq\bullet,\circ}b_{\bullet'}e_{\bullet'}+\sum_{\bullet'}k_{\bullet'}c_{\bullet'}\Bigr)\\
&\quad+4k_\bullet b_\circ\bigl(c_\bullet e_\bullet+A_\bullet S_\bullet\bigr)+4k_\circ b_\bullet\bigl(c_\circ e_\circ+A_\circ S_\circ\bigr)\,.
\end{align*}

All sums in these equations run over the actual neighbors of $r$. The measured quantities $Y_{r,M}$ and all coefficients $a,b_\bullet,c_\bullet,k_\bullet,A_\bullet$ are known. The remaining unknowns are the center field $d$, the incident couplings $e_\bullet$, and the neighbor-side sums $S_\bullet$. We next combine these Fourier equations so that the $S_\bullet$ cancel, leaving a polynomial system only in the center field and its incident $ZZ$ couplings.

First define
\begin{equation*}
\begin{aligned}
    G_\emptyset &= 3aY_{r,\emptyset}-3\sum_\bullet b_\bullet Y_{r,r_\bullet}\\
    &=\Bigl(\sum_\bullet b_\bullet^2-a^2\Bigr)\Bigl(a\bigl(d^2+\sum_\bullet e_\bullet^2\bigr)+2d\sum_\bullet\bigl(b_\bullet e_\bullet+k_\bullet c_\bullet\bigr)\Bigr)+2\sum_{\bullet\neq\circ}b_\bullet b_\circ e_\bullet\bigl(ae_\circ+2b_\circ d\bigr).
\end{aligned}
\end{equation*}
Here the final sum is over ordered pairs of distinct neighbors $\bullet\ne\circ$. For each unordered pair $\{\bullet,\circ\}$ of distinct neighbors, we similarly form
\begin{equation*}
\begin{aligned}
    G_{\bullet,\circ} &= 3aY_{r,r_\bullet,r_\circ}-3b_\bullet Y_{r,r_\circ}-3b_\circ Y_{r,r_\bullet}\\
    &=2ab_\bullet b_\circ\Bigl(d^2+\sum_{\bullet'}e_{\bullet'}^2\Bigr)+2a\Bigl(b_\bullet^2+b_\circ^2-a^2-\sum_{\bullet'\neq\bullet,\circ}b_{\bullet'}^2\Bigr)e_\bullet e_\circ\\
    &\quad+2\Bigl(\sum_{\bullet'}b_{\bullet'}^2-a^2\Bigr)d\bigl(b_\bullet e_\circ+b_\circ e_\bullet\bigr)+4b_\bullet b_\circ d\Bigl(b_\bullet e_\bullet+b_\circ e_\circ+2\sum_{\bullet'\neq\bullet,\circ}b_{\bullet'}e_{\bullet'}+\sum_{\bullet'}k_{\bullet'}c_{\bullet'}\Bigr).
\end{aligned}
\end{equation*}
By construction, these combinations no longer depend on the auxiliary variables $S_\bullet$. They therefore give $1+\binom q2$ quadratic equations involving only the $q+1$ unknowns in $\zeta=(d,(e_\bullet)_{\bullet\in N(r)})$. For an interior vertex, where $q=4$, this gives seven equations for the five unknowns $\zeta=(d,e_\rightarrow,e_\leftarrow,e_\uparrow,e_\downarrow)$. It remains to show that these equations determine the five unknowns uniquely for generic Hamiltonian coefficients.

To apply Lemma~\ref{lem:exactcertificate}, let $P_1,\ldots,P_7$ denote the seven constraint polynomials, ordered as $G_\emptyset$ followed by the six $G_{\bullet,\circ}$ for unordered pairs of neighbors. If $\zeta^*$ denotes the true value of the five unknown coefficients, define $E_j(\zeta)=P_j(\zeta)-P_j(\zeta^*)$. We linearize the resulting polynomial system by treating monomials in $\zeta$ as independent coordinates. Let $\mathcal S$ consist of all $\binom{5+4}{4}=126$ monomials of total degree at most $4$ in the five variables. For each $j\in[7]$, multiply $E_j$ by every monomial $\zeta^\alpha$ of degree $|\alpha|\le2$. There are $21=\binom{5+2}{2}$ such multipliers, giving $147$ polynomial rows in total. Appending the normalization row for the constant monomial gives a $148\times126$ coefficient matrix.

To show that this matrix has full column rank for generic coefficients, it suffices to exhibit one specialization with a nonzero $126\times126$ minor. We use
\begin{equation*}
    a = 1\,, \qquad b = (1,2,3,4)\,, \qquad \sum_\bullet k_\bullet c_\bullet = 5\,,
\end{equation*}
together with
\begin{equation*}
    \zeta^* = (1,2,3,4,5)\,.
\end{equation*}
Order the seven constraints as $G_\emptyset$ followed by the six unordered pairs in lexicographic order of $(\rightarrow,\leftarrow,\uparrow,\downarrow)$. Also order columns by increasing total degree, breaking ties by increasing lexicographic order of the exponent tuple $(d,e_\rightarrow,e_\leftarrow,e_\uparrow,e_\downarrow)$. Further order the rows first by constraint and then by multiplier in the same monomial order, with the constant row last. Perform Gaussian elimination modulo $1000003$, scanning columns from left to right. At each pivot position, swap into that position the first remaining row with a nonzero entry in the current column, and eliminate below it. Track the original row labels through all swaps. The $126$ original pivot rows, restored to increasing original-label order, give a determinant $604465$ modulo $1000003$. This is sufficient to specify a nonzero certificate minor. The specialization $\sum_\bullet k_\bullet c_\bullet=5$ is realizable by the original independent Hamiltonian coefficients, so after substituting back into those coefficients the determinant is still a nonzero polynomial. Lemma~\ref{lem:exactcertificate} therefore implies that, whenever this determinant is nonzero at the true Hamiltonian, the seven equations uniquely determine the center field and its four incident $ZZ$ couplings.

Once $\zeta$ is known, the original singleton-mask equations recover the auxiliary sums $S_\bullet$. Indeed, the coefficient of $S_\bullet$ in the corresponding singleton equation is $4ak_\bullet A_\bullet$, so this recovery is possible whenever $4ak_\bullet A_\bullet\ne0$. We include these divisors in the conditioning event below.

\paragraph{Straight sides: degree three.}
At a straight boundary site, the center has three neighbors, so the combinations above give the four equations $G_\emptyset,G_{12},G_{13},G_{23}$ in the four variables $\zeta=(d,e_1,e_2,e_3)$. For the algebraic certificate used here, we supplement these equations with one additional Fourier coefficient involving all three neighboring signs. Define
\begin{equation}\label{eq:side_triple_mask}
H(d,e)=\sum_{s\in\{\pm1\}^3}s_1s_2s_3
\left(a+\sum_{j=1}^3b_js_j\right)^2
\left(d+\sum_{j=1}^3e_js_j\right)^2
=-24Y_{r,N(r)}.
\end{equation}
To see why this quantity is available from the measured data, write $\mathcal A(s)=a+\sum_{j=1}^3b_js_j$ and $\mathcal E(s)=d+\sum_{j=1}^3e_js_j$. After subtracting the $K$-only contribution, the fourth derivative is $-8\mathcal A(s)^2\mathcal E(s)^2$ plus a term that is linear in the diagonal parameters and has degree at most two in the three neighbor signs. The latter degree bound follows because each contributing two-step path uses one neighbor and its incident edge, while distinct neighbors of $r$ are not adjacent. Taking the Fourier coefficient of degree three therefore removes the entire term linear in the diagonal parameters, leaving Eq.~\eqref{eq:side_triple_mask}. The factor $-24$ comes from converting the fourth derivative to its $t^4$ coefficient and averaging over the eight assignments of the three neighboring bits.

The five polynomials $G_\emptyset,G_{12},G_{13},G_{23},H$ are quadratic in $\zeta=(d,e_1,e_2,e_3)$. Let $P_j$ denote these five constraint polynomials and define $E_j(\zeta)=P_j(\zeta)-P_j(\zeta^*)$. To certify uniqueness, evaluate the resulting system at
\begin{align*}
a=3\,,\qquad b=(2,4,7)\,,\qquad \sum_{j=1}^3k_jc_j=5\,,\qquad \zeta^*=(1,2,3,4)\,.
\end{align*}
Use all $126=\binom{4+5}{5}$ monomials of degree at most $5$ as columns and the $175=5\binom{4+3}{3}$ rows $\zeta^\alpha E_j$ with $|\alpha|\le3$, followed by the constant row. Order the constraints as $G_\emptyset,G_{12},G_{13},G_{23},H$, and use the same monomial ordering and row-swap selection rule, with exponent tuple $(d,e_1,e_2,e_3)$. The resulting $176\times126$ matrix has a selected full-size minor equal to $37124$ modulo $1000003$. Hence the determinant of this submatrix is a nonzero polynomial in the independent Hamiltonian coefficients at a degree-three boundary center. Whenever it is nonzero at the true coefficients, Lemma~\ref{lem:exactcertificate} uniquely determines the center field and its three incident $ZZ$ couplings.

\paragraph{Corners: degree two.}
We treat the corners only after reconstructing every interior and straight-boundary center. Every $ZZ$ edge incident to a corner also has a noncorner endpoint, and each neighbor of a corner lies on a straight side and has degree three. Consequently, by the time we reach a corner, its two incident couplings $e_1,e_2$ are already known. The neighbor-side sums $S_1,S_2$ are also known from the reconstructed neighbor fields and their remaining incident couplings. Thus the only diagonal coefficient still unknown at the corner is its field $d$.

Write $B=b_1^2+b_2^2$, $E=b_1e_1+b_2e_2$, $C=k_1c_1+k_2c_2$, and denote the polynomial right-hand sides of $3Y_{r,\emptyset}$ and $3Y_{r,r_1}$ by $P_0(d)$ and $P_1(d)$. These are quadratic polynomials in the single remaining unknown $d$, with quadratic coefficients $-(a^2+B)$ and $-2ab_1$, respectively. Taking the following linear combination cancels the quadratic term:
\begin{equation*}
\begin{aligned}
\mathcal R&=3(a^2+B)Y_{r,r_1}-6ab_1Y_{r,\emptyset},\\
(a^2+B)P_1(d)-2ab_1P_0(d)&=L_1d+\mathcal R_0,\\
L_1&=-2(a^2+B)^2e_1-2(a^2+B)b_1(2b_2e_2+C)\\
   &\qquad+8a^2b_1E+4a^2b_1C.
\end{aligned}
\end{equation*}
Here $\mathcal R_0$ is the known value of $(a^2+B)P_1(d)-2ab_1P_0(d)$ at $d=0$. Thus the corner field is determined linearly provided $L_1\ne0$. To verify that this condition is generic, take the realizable assignment $a=2$, $b=(1,0)$, $e=(1,0)$, and $C=0$, for which $L_1=-18$. Hence $L_1$ is a nonzero polynomial in the original Hamiltonian coefficients, and generically $d=(\mathcal R-\mathcal R_0)/L_1$.

We use the same globally aligned off-diagonal representative throughout the reconstruction. On a periodic lattice, every center is degree four and is handled by the interior certificate. On an open rectangle, we first reconstruct all interior and straight-boundary centers and retain one estimate for each shared $ZZ$ coupling; we then recover the four corner fields using the linear equation above. The corner step adds only one further substitution stage, independent of the lattice size. In this way every $Z$ field and $ZZ$ coupling is recovered, with no additional diagonal gauge that must be aligned between different local reconstructions.

It remains to control the effect of noisy derivative data and previously reconstructed off-diagonal coefficients. If the off-diagonal estimates have error at most $\eta_{\rm off} \le 1$, the quartic $K$-only subtractions have error at most $C_0R^3\eta_{\rm off}$ for a fixed constant $C_0$, while the Fourier combinations and coefficients of the local polynomial systems have error at most $\poly(R)(\eta_{\rm off}+\eta_{\rm data})$. The two certificate determinants above are fixed nonzero polynomials after substitution into the original independent Hamiltonian coefficients, even though $\sum_\bullet k_\bullet c_\bullet$ itself is not an independent Gaussian coordinate. We also include in the conditioning event the auxiliary divisors $4ak_\bullet A_\bullet$, the corner coefficients $L_1$, and the off-diagonal conditioning polynomials from Appendix~\ref{sec:local_z_common}.

Corollary~\ref{cor:certificate_partial_reconstruction} controls the degree-four and degree-three polynomial solves, including perturbations in both the reconstructed off-diagonal parameters and the subtracted right-hand sides. Lemmas~\ref{lem:gaussian_polynomial} and~\ref{lem:finite_stage_stability} control the conditioning and error propagation in the auxiliary and corner divisions. Taking a union bound over all centers and their finitely many local gauge choices gives, simultaneously for every retained field and coupling estimate, an error of at most $\poly(n,R,\varsigma^{-1},\delta^{-1})\eta_{\rm data}$. Choosing $\eta_{\rm data}$ to be $\epsilon$ divided by this polynomial, and also below the required conditioning thresholds, gives final error at most $\epsilon$. There are only $O(n)$ constant-size local systems, so the coarse search and refinement procedure preserves the soft $\epsilon^{-2}$ cost of derivative estimation.

\end{proof}

\section{Certification of identifiability of polynomial systems}
\label{sec:certificate}

In several versions of the $Z$-basis reconstruction procedure we can solve part of a polynomial system in the Hamiltonian coefficients \textit{by hand}, but are left with a residual polynomial system in the remaining unknowns.  Various coefficients of the residual polynomial system can depend on solved coefficients.  

Because the solved coefficients and the measured right-hand sides of the residual system are known only approximately, we need to know that (i) the true values of the remaining unknowns are its only solution, (ii) every approximate solution of the noisy system lies close to these true values, and (iii) an approximate solution can be computed without the classical cost growing faster than the $\epsilon^{-2}$ measurement cost.  A full-rank Jacobian at the true solution gives neither (i) nor (ii), since it rules out other solutions nearby but says nothing about distant ones.  In this Appendix, we see how the three desiderata can be satisfied.  Below, the certificate of \Cref{lem:exactcertificate} establishes (i), \Cref{thm:quantitative} and \Cref{cor:certificate_partial_reconstruction} establish (ii), and \Cref{prop:coarse_refinement} establishes (iii).

\subsection{Exact certification}

Let $\theta \in \mathbb R^{\mathsf U}$ collect the quantities solved in earlier stages, and let $t \in \mathbb R^{\mathsf V}$ collect the true values of the remaining unknowns. In this Appendix, $t$ denotes a parameter tuple as opposed to an evolution time. In the isotropic models of \Cref{sec:equation_system_TIRI_allzero}, for example, $\theta = (S_{\mathfrak D},J_{XX})$ and $t = (h_Z,J_{ZZ},J_{XZ})$.

In the noiseless problem, the fourth-order data give us the values $y = P_\theta(t)$ of $\mathsf W$ polynomials evaluated at the true unknowns $t$, where each entry $P_{\theta,j}(x)$ of $P_\theta : \mathbb C^{\mathsf V} \to \mathbb C^{\mathsf W}$ is a polynomial in $\theta$ and $x$ with rational coefficients. To reconstruct $t$, we solve $P_\theta(x) = y$ for a candidate $x$, which we allow to be complex so that the uniqueness statements below hold over $\mathbb C^{\mathsf V}$. We write $\deg(P)$ for the maximum total degree of the entries of $P$ in $(\theta,x)$, and $\deg_x$ for the degree in $x$ alone. It is useful to define the residuals
\begin{equation*}
E_j(\theta,t,x) \triangleq P_{\theta,j}(x) - P_{\theta,j}(t)\,,\qquad 1 \le j \le \mathsf W\,,
\end{equation*}
which vanish at $x = t$.

To show that $x = t$ is the only solution of $P_\theta(x) = y$, we are effectively turning this polynomial system into a linear one. Each residual $E_j$ is a linear combination of monomials $x^\alpha$ with coefficients that are polynomials in $\theta$ and $t$. If we treat each monomial as an independent unknown, the equation $E_j = 0$ becomes a linear equation in the vector of monomials. There are typically more monomials than equations, so we generate further equations by multiplication. If $E_j(x) = 0$, then $x^\alpha E_j(x) = 0$ for every monomial $x^\alpha$, and so each such product is again linear in the monomials even though it involves monomials of higher degree.

To write these products as a matrix equation, fix a set $\mathcal S$ of monomials in $x$ that contains $1,x_1,\ldots,x_{\mathsf V}$, let $\mathsf S = |\mathcal S|$, and let $v(x) \in \mathbb C^{\mathsf S}$ be the vector of these monomials evaluated at $x$. Then choose $\mathsf S - 1$ products $x^{\alpha_r}E_{j_r}$, so that together with the normalization there are as many equations as unknowns.\footnote{In applications we will sometimes first form a rectangular matrix containing additional product rows and the normalization row. We then select a nonzero full-size minor $\mathcal M$ at a fixed witness, retain all columns, and fix the selected original row indices once and for all.   Every product row annihilates $v(t)$, so a nonzero full-size minor necessarily contains the normalization row, which we put last.} Each product must involve only monomials in $\mathcal S$, so that it is a linear function of $v(x)$. In practice we take $\mathcal S$ to be all monomials of degree at most some $L$ and initially allow every multiplier with $|\alpha_r| \le L - \deg_x E_{j_r}$. The indices $j_r \in [\mathsf W]$ may repeat (see e.g.~the worked example below), and we write $\mathsf D = \max_r |\alpha_r|$ for the largest multiplier degree. Let $\mathcal M(\theta,t) \in \mathbb Q[\theta,t]^{\mathsf S \times \mathsf S}$ be the matrix whose $r$th row, for $r < \mathsf S$, contains the coefficients of $x^{\alpha_r}E_{j_r}$, so that the product of this row with $v(x)$ is $x^{\alpha_r}E_{j_r}(\theta,t,x)$. The last row has a $1$ in the column of the constant monomial and zeros elsewhere, and its product with $v(x)$ is $1$. Then
\begin{equation*}
\mathcal M(\theta,t)\,v(x) = \begin{pmatrix} x^{\alpha_1}E_{j_1}(\theta,t,x) \\ \vdots \\ x^{\alpha_{\mathsf S - 1}}E_{j_{\mathsf S - 1}}(\theta,t,x) \\ 1 \end{pmatrix}
\end{equation*}
for every $x$. If $x$ solves $P_\theta(x) = y$, every residual vanishes at $x$, and the right-hand side reduces to the standard basis vector $e_{\mathsf S} = (0\,0\,\cdots\,0\,1)$. The residuals vanish at $x = t$ by construction, so the same holds there. Hence $v(x)$ and $v(t)$ both solve the linear system
\begin{equation*}
\mathcal M(\theta,t)\,w = e_{\mathsf S}
\end{equation*}
for $w \in \mathbb C^{\mathsf S}$. This system has a unique solution exactly when
\begin{equation*}
\Delta(\theta,t) \triangleq \det\mathcal M(\theta,t)
\end{equation*}
is nonzero. In that case $v(x) = v(t)$, and comparing the entries for $x_1,\ldots,x_{\mathsf V}$ gives $x = t$. We call $\Delta$ a \emph{certificate} when it is not the zero polynomial.

Let us work through an example to make things concrete.  Suppose there is a single remaining unknown, so that $\mathsf V = 1$ and both $t$ and the candidate $x$ are numbers.  Further suppose there are no $\theta$'s, and take $P(x) = (x^2,x^3)$, so that the data are $y = (t^2,t^3)$. Now take $\mathcal S = \{1,x,x^2,x^3\}$ so that $v(x) = (1,x,x^2,x^3)^\top$. The residuals $E_1 = x^2 - t^2$ and $E_2 = x^3 - t^3$, together with the normalization, give three linear equations in the four entries of $v(x)$. The product $xE_1 = x^3 - t^2 x$ supplies a fourth, and it is the only row that involves the entry $x$. With rows ordered as $E_1$, $E_2$, $xE_1$, and taking into account the normalization, we have
\begin{equation*}
\mathcal M(t) = \begin{pmatrix} -t^2 & 0 & 1 & 0 \\ -t^3 & 0 & 0 & 1 \\ 0 & -t^2 & 0 & 1 \\ 1 & 0 & 0 & 0 \end{pmatrix}\,,
\end{equation*}
which has determinant $t^2$. For $t \ne 0$ the linear system therefore determines $v(x)$ uniquely, and with it $x = t$. We can make this explicit by combining rows so as to isolate the entry $x$. Subtracting $xE_1$ from $E_2$ cancels the $x^3$ terms, leaving only the entries $1$ and $x$, and gives the polynomial identity
\begin{equation*}
t^2(x - t) = E_2 - xE_1\,,
\end{equation*}
which is valid for every $x$. If $x$ solves $P(x) = y$, then $E_1 = E_2 = 0$, so $t^2(x - t) = 0$ and $x = t$. If $x$ solves $P(x) = y$ only approximately, so that $E_1$ and $E_2$ are small but nonzero, the identity gives $|x - t| \le (|E_2| + |x|\,|E_1|)/t^2$. The factor $t^2$ multiplying $x - t$ is $\det\mathcal M(t)$, and when it is small, a small residual permits a large error. \Cref{lem:exactcertificate} below shows that every linearized system yields an identity of this form.

\begin{lemma}\label{lem:exactcertificate}
There is a matrix $C(\theta,t,x) \in \mathbb Q[\theta,t,x]^{\mathsf V \times \mathsf W}$ such that
\begin{equation}\label{eq:certificate_identity}
\Delta(\theta,t)\,(x - t) = C(\theta,t,x)\,\bigl(P_\theta(x) - P_\theta(t)\bigr)
\end{equation}
as polynomials in $(\theta,t,x)$. The entries of $C$ have degree at most $\mathsf D$ in $x$ and total degree at most $\mathsf D + (\mathsf S - 1)\deg(P)$. In particular, if $\Delta(\theta,t) \ne 0$, then $x = t$ is the only solution of $P_\theta(x) = P_\theta(t)$ in $\mathbb C^{\mathsf V}$.
\end{lemma}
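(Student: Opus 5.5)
The plan is to use the adjugate (Cramer's rule) on the identity $\mathcal M(\theta,t)\,v(x) = (x^{\alpha_1}E_{j_1},\ldots,x^{\alpha_{\mathsf S-1}}E_{j_{\mathsf S-1}},1)^\top$, which holds for every $x$. The trouble is the last entry: it is $1$ rather than something vanishing at $x=t$. So I would first subtract the same identity evaluated at $x=t$. There all product rows vanish, so $\mathcal M v(t) = e_{\mathsf S}$, and subtracting gives
\[
\mathcal M(\theta,t)\bigl(v(x)-v(t)\bigr) = \bigl(x^{\alpha_1}E_{j_1},\ldots,x^{\alpha_{\mathsf S-1}}E_{j_{\mathsf S-1}},0\bigr)^\top \eqqcolon \mathcal E(\theta,t,x).
\]
Multiplying on the left by $\operatorname{adj}\mathcal M(\theta,t)$ yields $\Delta(\theta,t)\,(v(x)-v(t)) = \operatorname{adj}\mathcal M\,\mathcal E$ as a polynomial identity in $(\theta,t,x)$.

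Next I would read off the rows of $v$ indexed by the monomials $x_1,\ldots,x_{\mathsf V}$, which lie in $\mathcal S$ by assumption. This gives $\Delta\,(x_i-t_i) = \sum_{r<\mathsf S}(\operatorname{adj}\mathcal M)_{x_i,r}\,x^{\alpha_r}E_{j_r}$. Grouping terms by the residual index $j$ (indices may repeat) defines
\[
C_{ij}(\theta,t,x)=\sum_{r:\,j_r=j}(\operatorname{adj}\mathcal M)_{x_i,r}(\theta,t)\,x^{\alpha_r}.
\]
This $C$ has rational coefficients because $\mathcal M$ does, and it satisfies \eqref{eq:certificate_identity}, since $E_j = P_{\theta,j}(x)-P_{\theta,j}(t)$. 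The normalization row contributes nothing, because the last entry of $\mathcal E$ is zero.

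For the degree bounds, the adjugate entries do not depend on $x$. So the degree of $C$ in $x$ is at most $\max_r|\alpha_r|=\mathsf D$. Each adjugate entry is a signed $(\mathsf S-1)\times(\mathsf S-1)$ minor, a sum of products of $\mathsf S-1$ entries of $\mathcal M$. Every entry of $\mathcal M$ is either a constant (from the normalization row) or a coefficient of a monomial in $x^{\alpha_r}E_{j_r}$. Such a coefficient is a polynomial in $(\theta,t)$ of total degree at most $\deg(P)$, because $P_{\theta,j}(t)$ and the $\theta$-coefficients of $P_{\theta,j}(x)$ both have degree at most $\deg(P)$. The minor therefore has total degree at most $(\mathsf S-1)\deg(P)$, and multiplying by $x^{\alpha_r}$ gives total degree at most $\mathsf D+(\mathsf S-1)\deg(P)$.

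The uniqueness claim then follows directly. If $\Delta(\theta,t)\neq0$ and $P_\theta(x)=P_\theta(t)$ for some $x\in\mathbb C^{\mathsf V}$, the right side of \eqref{eq:certificate_identity} vanishes, so $x=t$. No step here is a real obstacle. The one point needing care is the subtraction of the $x=t$ identity, which removes the normalization row and makes the right-hand side a combination of residuals. I also need to check the bookkeeping in the degree count, namely that coefficients of $E_j$ in $(\theta,t)$ are bounded by $\deg(P)$ rather than by $\deg(P)$ minus $|\alpha|$; the weaker bound suffices for the statement.
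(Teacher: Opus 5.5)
Your proposal is correct and follows the paper's proof essentially step for step. The paper likewise subtracts the identity at $x=t$, multiplies by $\operatorname{adj}(\mathcal M)$, and selects the rows for $x_1,\ldots,x_{\mathsf V}$; it writes $C=\Pi\operatorname{adj}(\mathcal M)B(x)$, which is exactly your grouping by residual index, and it proves the degree bounds by the same minor-counting argument.
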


\begin{proof}
Let $v(x) \in \mathbb C^{\mathsf S}$ list the monomials of $\mathcal S$ evaluated at $x$, in the column order of $\mathcal M$. Let $E = (E_1,\ldots,E_{\mathsf W})^\top$, and let $B(x)$ be the $\mathsf S \times \mathsf W$ matrix whose $r$th row, for $r < \mathsf S$, has $x^{\alpha_r}$ in column $j_r$ and zeros elsewhere, and whose last row vanishes. For $r < \mathsf S$, the $r$th entry of $\mathcal M v(x)$ is $x^{\alpha_r}E_{j_r}$, and the last entry is $1$, so
\begin{equation*}
\mathcal M(\theta,t)\,v(x) = B(x)\,E(\theta,t,x) + e_{\mathsf S}\,.
\end{equation*}
Every residual vanishes at $x = t$, so $\mathcal M v(t) = e_{\mathsf S}$. We subtract the two equations and multiply by the adjugate, which satisfies $\operatorname{adj}(\mathcal M)\,\mathcal M = \Delta\, I$. This gives
\begin{equation*}
\Delta(\theta,t)\bigl(v(x) - v(t)\bigr) = \operatorname{adj}(\mathcal M)\,B(x)\,E(\theta,t,x)\,.
\end{equation*}
Let $\Pi \in \{0,1\}^{\mathsf V \times \mathsf S}$ select the entries of $v$ corresponding to $x_1,\ldots,x_{\mathsf V}$, so that $\Pi v(x) = x$. Applying $\Pi$ proves \eqref{eq:certificate_identity} with $C = \Pi\operatorname{adj}(\mathcal M)B$.

For the degree bounds, the coefficient of a nonconstant monomial in $E_j$ is a coefficient of $P_{\theta,j}$, of degree at most $\deg(P)$ in $\theta$. The constant coefficient $P_{\theta,j}(0) - P_{\theta,j}(t)$ has degree at most $\deg(P)$ in $(\theta,t)$. Multiplication by $x^{\alpha_r}$ shifts monomials without changing coefficients, so every entry of $\mathcal M$ has degree at most $\deg(P)$ in $(\theta,t)$. Each entry of $\operatorname{adj}(\mathcal M)$ is a signed minor of size $\mathsf S - 1$, of degree at most $(\mathsf S - 1)\deg(P)$, and the entries of $B(x)$ are monomials of degree at most $\mathsf D$. The bounds on $C$ follow. Finally, if $\Delta(\theta,t) \ne 0$ and $P_\theta(x) = P_\theta(t)$, the right-hand side of \eqref{eq:certificate_identity} vanishes, so $x = t$.
\end{proof}

\Cref{lem:exactcertificate} is useful only when $\Delta$ is not the zero polynomial. In our applications $\mathcal M$ has up to $165$ columns with polynomial entries, so computing $\Delta$ symbolically is impractical. We instead evaluate $\mathcal M(\theta,t)$ at a single rational point $(\theta,t)$, which we call a witness. If $\Delta$ is nonzero at the witness, then $\Delta$ is not the zero polynomial. We compute the determinant at the witness modulo the prime $p = 1000003$. If $p$ divides none of the denominators of the entries, a nonzero residue modulo $p$ implies a nonzero rational determinant, and hence $\Delta \not\equiv 0$.

When $\Delta(\theta,t) \ne 0$, dividing \eqref{eq:certificate_identity} by it gives $x - t = C(\theta,t,x)\,\bigl(P_\theta(x) - y\bigr)/\Delta(\theta,t)$. The error of a candidate $x$ is therefore controlled by how well it satisfies the equations, through the factor $\|C(\theta,t,x)\|/|\Delta(\theta,t)|$. This holds for every candidate, regardless of how close it is to $t$. To have a useful error bound, $|\Delta(\theta,t)|$ must not be too small, and $C(\theta,t,x)$ must stay bounded over the candidates $x$ we consider. The next subsection shows that $C(\theta,t,x)$ is bounded when $\theta$, $t$, and $x$ lie in a bounded box, and that $|\Delta(\theta,t)|$ is bounded below with high probability when the Hamiltonian coefficients are smoothed as in \Cref{def:smoothed_hamiltonians}.

\subsection{Quantitative bounds}

In practice we know $\theta$ and $y$ only approximately, so we must distinguish their true values from our estimates. From now on we write $\theta^*$ and $t^*$ for the true values of the solved quantities and of the remaining unknowns, and $y^* = P_{\theta^*}(t^*)$ for the noiseless data. So $\theta$ and $t$ now denote the variables of \Cref{lem:exactcertificate}. We make the estimates $\hat\theta$ of $\theta^*$ and $\hat y$ of $y^*$, and we accept a candidate $\hat x$ when the computable residual $P_{\hat\theta}(\hat x) - \hat y$ is small. We now show that every accepted candidate $\hat x$ lies close to $t^*$. The proof uses the certificate $\Delta$ and the matrix $C$ evaluated at the unknown true values $(\theta^*,t^*)$. The reconstruction algorithm cannot compute $\Delta(\theta^*,t^*)$ or $C(\theta^*,t^*,\hat x)$, and it does not need to. To find and accept a candidate, the algorithm only evaluates $P_{\hat\theta}$ and, for the refinement in \Cref{sec:certificate_refinement}, the Jacobian of $P_{\hat\theta}$.

Our argument has a deterministic part and a probabilistic part. \Cref{thm:quantitative} assumes that $|\Delta(\theta^*,t^*)|$ is bounded below and that $\theta^*$, $t^*$, and the candidates lie in a known box. Under these assumptions, the identity \eqref{eq:certificate_identity} converts the computable residual of any candidate into a bound on its distance from $t^*$. \Cref{cor:certificate_partial_reconstruction} then shows that $|\Delta(\theta^*,t^*)|$ is bounded below with high probability for Hamiltonians smoothed as in \Cref{def:smoothed_hamiltonians}. Proving this lower bound takes a step beyond \Cref{lem:gaussian_polynomial}. That lemma applies to a polynomial whose variables are independent Gaussians, but the entries of $\theta^*$ and $t^*$ need not be independent, because they can be combinations of the same smoothed Hamiltonian coefficients. For example in the isotropic models, one entry of $\theta^*$ is $S_{\mathfrak D} = h_X + 2\mathfrak D J_{XZ}$ and one entry of $t^*$ is $J_{XZ}$, so the two share the Gaussian perturbation of $J_{XZ}$. Instead of applying \Cref{lem:gaussian_polynomial} to $\Delta$ as a polynomial in the dependent entries of $\theta^*$ and $t^*$, the corollary writes $\Delta(\theta^*,t^*)$ as a polynomial in the original smoothed coefficients, which \textit{are} independent, and applies the lemma to that polynomial.

\begin{theorem}\label{thm:quantitative}
Let $\bar R \ge 1$ and $\Delta_0 > 0$. Suppose the true values satisfy $\|\theta^*\|_\infty \le \bar R$, $\|t^*\|_\infty \le \bar R$, and $|\Delta(\theta^*,t^*)| \ge \Delta_0$, and let $y^* = P_{\theta^*}(t^*)$. Let $\hat\theta$ and $\hat y$ be estimates with
\begin{equation*}
\|\hat\theta - \theta^*\|_\infty \le \eta_\theta \le 1\,,\qquad \|\hat y - y^*\|_\infty \le \eta_y\,.
\end{equation*}
There is a constant $K \ge 1$, depending only on $P$ and the chosen products, such that every $\hat x \in \mathbb C^{\mathsf V}$ with $\|\hat x\|_\infty \le 2\bar R$ satisfies
\begin{equation}\label{eq:certificate_error_bound}
\|\hat x - t^*\|_\infty \le \frac{K\bar R^{\mathsf D + \mathsf S\deg(P)}}{\Delta_0}\Bigl(\|P_{\hat\theta}(\hat x) - \hat y\|_\infty + \eta_y + \eta_\theta\Bigr)\,.
\end{equation}
Moreover, for any $\eta_{\rm res}>0$ satisfying $\eta_{\rm res} \ge 2(\eta_y + K\bar R^{\deg(P)}\eta_\theta)$, a grid search over $[-\bar R,\bar R]^{\mathsf V}$ that evaluates $P_{\hat\theta}$ at no more than $\bigl(2 + 2K\bar R^{\deg(P) + 1}/\eta_{\rm res}\bigr)^{\mathsf V}$ points finds a real $\hat x$ with $\|\hat x\|_\infty \le \bar R$ and $\|P_{\hat\theta}(\hat x) - \hat y\|_\infty \le \eta_{\rm res}$.
\end{theorem}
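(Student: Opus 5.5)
The plan is to evaluate the polynomial identity \eqref{eq:certificate_identity} of \Cref{lem:exactcertificate} at $(\theta^*,t^*,\hat x)$ and then bound each factor on the right. Writing $y^*=P_{\theta^*}(t^*)$, the identity gives
\begin{equation*}
\Delta(\theta^*,t^*)(\hat x-t^*)=C(\theta^*,t^*,\hat x)\bigl(P_{\theta^*}(\hat x)-y^*\bigr).
\end{equation*}
Dividing by $|\Delta(\theta^*,t^*)|\ge\Delta_0$ leaves two tasks. First, bound $\|C(\theta^*,t^*,\hat x)\|_{\infty\to\infty}$ on the box. Second, relate the true residual $P_{\theta^*}(\hat x)-y^*$ to the computable residual $P_{\hat\theta}(\hat x)-\hat y$.

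For the first task I use the degree bounds in \Cref{lem:exactcertificate}. The entries of $C$ have total degree at most $\mathsf D+(\mathsf S-1)\deg(P)$ in $(\theta,t,x)$ and rational coefficients that depend only on $P$ and the chosen products. On the box $\|\theta^*\|_\infty,\|t^*\|_\infty\le\bar R$ and $\|\hat x\|_\infty\le2\bar R$, a polynomial of degree $k$ with fixed coefficients is bounded by a constant times $\bar R^k$, since $\bar R\ge1$. Summing over the $\mathsf W$ columns gives
\begin{equation*}
\|C\|_{\infty\to\infty}\le K_1\bar R^{\mathsf D+(\mathsf S-1)\deg(P)}.
\end{equation*}

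For the second task I use the triangle inequality:
\begin{equation*}
\|P_{\theta^*}(\hat x)-y^*\|_\infty\le\|P_{\hat\theta}(\hat x)-\hat y\|_\infty+\|\hat y-y^*\|_\infty+\|P_{\hat\theta}(\hat x)-P_{\theta^*}(\hat x)\|_\infty.
\end{equation*}
The last term is where some care is needed. Since $\eta_\theta\le1$, the segment from $\theta^*$ to $\hat\theta$ stays in the box of radius $2\bar R$. The mean value theorem, applied to each coefficient polynomial in $\theta$, then bounds this term by $K_2\bar R^{\deg(P)-1}\eta_\theta\le K_2\bar R^{\deg(P)}\eta_\theta$. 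Combining the three pieces, the total power of $\bar R$ multiplying $\eta_\theta$ is at most $\bar R^{\mathsf D+\mathsf S\deg(P)}$, because $\mathsf S\deg(P)\ge(\mathsf S-1)\deg(P)+\deg(P)$. Taking $K=\max(K_1,K_1K_2)$ yields \eqref{eq:certificate_error_bound}.

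For the grid statement, I use that $t^*$ is real and lies in $[-\bar R,\bar R]^{\mathsf V}$. Applying the same mean value estimate to $x\mapsto P_{\hat\theta}(x)$ on the box of radius $2\bar R$ shows it is Lipschitz in $\|\cdot\|_\infty$ with constant at most $K_3\bar R^{\deg(P)-1}$. Take a uniform grid of spacing $h$, with at most $2\bar R/h+2$ points per coordinate. The nearest grid point $\hat x$ to $t^*$ satisfies $\|\hat x-t^*\|_\infty\le h/2$, so
\begin{equation*}
\|P_{\hat\theta}(\hat x)-\hat y\|_\infty\le K_3\bar R^{\deg(P)-1}h/2+\|P_{\hat\theta}(t^*)-P_{\theta^*}(t^*)\|_\infty+\eta_y.
\end{equation*}
The last two terms together are at most $\eta_y+K\bar R^{\deg(P)}\eta_\theta\le\eta_{\rm res}/2$ by hypothesis. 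I then choose $h=\eta_{\rm res}/(K\bar R^{\deg(P)-1})$, enlarging $K$ if necessary, so the first term is also at most $\eta_{\rm res}/2$. The number of points is then at most $(2+2K\bar R^{\deg(P)}/\eta_{\rm res})^{\mathsf V}$, which is within the stated bound since $\bar R\ge1$.

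Exhaustively checking all grid points finds such an $\hat x$. The main obstacle is simply bookkeeping: making sure every constant depends only on $P$ and the product set and not on $\bar R$. This works because the coefficients of $C$ are fixed rationals by \Cref{lem:exactcertificate}.
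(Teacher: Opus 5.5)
Your proof is correct and follows the paper's argument. You evaluate the certificate identity at $(\theta^*,t^*,\hat x)$, bound $C$ on the $2\bar R$ box via the degree bound of \Cref{lem:exactcertificate}, and split the true residual into the computable residual, $\eta_y$, and a $\theta$-Lipschitz term; for the grid you use a point within the covering radius of the real $t^*$. Your sharper exponent $\deg(P)-1$ in the Lipschitz constants is a harmless refinement that yields a slightly smaller grid count than the one stated.
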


\begin{proof}
Every point at which we evaluate $P$ or $C$ below has all coordinates of magnitude at most $2\bar R$. For $\theta^*$, $t^*$, and $\hat x$ this bound is part of the hypotheses. For $\hat\theta$ it follows from $\|\hat\theta\|_\infty \le \|\theta^*\|_\infty + \eta_\theta \le \bar R + 1 \le 2\bar R$. We call the set of such points `the box'. In the box, every monomial of degree at most $d$ has magnitude at most $(2\bar R)^d$, since $2\bar R \ge 1$. A fixed polynomial of degree $d$ is therefore bounded on the box by a constant times $\bar R^d$, where the constant depends only on the polynomial, and the same holds for its first derivatives. The box is convex, so any two points in it are joined by a segment that stays in the box. Bounding the derivative along that segment gives
\begin{equation}\label{eq:certificate_lipschitz}
\|P_\theta(x) - P_{\theta'}(x)\|_\infty \le K\bar R^{\deg(P)}\|\theta - \theta'\|_\infty\,,\qquad
\|P_\theta(x) - P_\theta(x')\|_\infty \le K\bar R^{\deg(P)}\|x - x'\|_\infty\,,
\end{equation}
and the degree bound of \Cref{lem:exactcertificate} gives $\|C(\theta^*,t^*,\hat x)\|_{\infty\to\infty} \le K\bar R^{\mathsf D + (\mathsf S - 1)\deg(P)}$, where $\|\cdot\|_{\infty\to\infty}$ is the maximum absolute row sum. We enlarge $K$ finitely many times below, each time by a factor that depends only on $P$ and the chosen products.

Decompose the residual of the true system as
\begin{equation*}
P_{\theta^*}(\hat x) - P_{\theta^*}(t^*) = \bigl(P_{\theta^*}(\hat x) - P_{\hat\theta}(\hat x)\bigr) + \bigl(P_{\hat\theta}(\hat x) - \hat y\bigr) + \bigl(\hat y - y^*\bigr)\,.
\end{equation*}
By \eqref{eq:certificate_lipschitz} and the hypotheses, the three terms on the right are bounded in norm by $K\bar R^{\deg(P)}\eta_\theta$, $\|P_{\hat\theta}(\hat x) - \hat y\|_\infty$, and $\eta_y$, respectively. Thus we have
\begin{equation*}
\|P_{\theta^*}(\hat x) - P_{\theta^*}(t^*)\|_\infty \le K\bar R^{\deg(P)}\eta_\theta + \|P_{\hat\theta}(\hat x) - \hat y\|_\infty + \eta_y\,.
\end{equation*}
Evaluating \eqref{eq:certificate_identity} at $(\theta^*,t^*,\hat x)$ and dividing by $\Delta(\theta^*,t^*)$ gives
\begin{equation*}
\hat x - t^* = \frac{C(\theta^*,t^*,\hat x)\bigl(P_{\theta^*}(\hat x) - P_{\theta^*}(t^*)\bigr)}{\Delta(\theta^*,t^*)}\,.
\end{equation*}
The bound on $C$ and $|\Delta(\theta^*,t^*)| \ge \Delta_0$ then give \eqref{eq:certificate_error_bound}.

For the grid search, $t^*$ is real and lies in $[-\bar R,\bar R]^{\mathsf V}$. Take a grid in this cube with $\ell^\infty$ covering radius $\gamma = \eta_{\rm res}/(2K\bar R^{\deg(P)})$, which needs at most $2 + \bar R/\gamma$ points per coordinate. Some grid point $x'$ satisfies $\|x' - t^*\|_\infty \le \gamma$. By \eqref{eq:certificate_lipschitz},
\begin{equation*}
\|P_{\hat\theta}(x') - \hat y\|_\infty \le K\bar R^{\deg(P)}\gamma + K\bar R^{\deg(P)}\eta_\theta + \eta_y \le \eta_{\rm res}\,,
\end{equation*}
where the three terms come from moving $x'$ to $t^*$, $\hat\theta$ to $\theta^*$, and $y^*$ to $\hat y$. The search evaluates $\|P_{\hat\theta}(x) - \hat y\|_\infty$ at every grid point $x$ and returns any grid point at which this residual is at most $\eta_{\rm res}$. The grid point $x'$ above shows that one exists.
\end{proof}

\Cref{thm:quantitative} takes the box radius $\bar R$ and the lower bound $\Delta_0$ as given. For smoothed Hamiltonians, both follow from properties of the original coefficients. A bound on their size bounds $\theta^*$ and $t^*$, and \Cref{lem:gaussian_polynomial}, applied to $\Delta$ written as a polynomial in these coefficients, bounds $|\Delta(\theta^*,t^*)|$ from below.

\begin{corollary}[Certification after partial reconstruction]\label{cor:certificate_partial_reconstruction}
Let $\params^* = \overline\params + \varsigma Z$, with $Z \sim \mathcal N(0,I)$, be the vector of smoothed Hamiltonian coefficients of Definition~\ref{def:smoothed_hamiltonians}. Suppose the solved quantities and the remaining unknowns are fixed polynomial functions of these coefficients, $\theta^* = \Theta(\params^*)$ and $t^* = T(\params^*)$, and that the certificate written as a function of the coefficients,
\begin{equation*}
D(\params) \triangleq \Delta\bigl(\Theta(\params),T(\params)\bigr)\,,
\end{equation*}
is not the zero polynomial. Let $s = \deg D$. Choose $R \ge 1$ with $\Pr[\|\params^*\|_\infty > R] \le \delta/2$, for example the effective radius \eqref{eq:effective_radius}, and let $\bar R(R) \ge 1$ be a polynomial in $R$ that bounds $\|\Theta(\params)\|_\infty$ and $\|T(\params)\|_\infty$ whenever $\|\params\|_\infty \le R$. Then there is an event of probability at least $1 - \delta$, depending only on $Z$, on which the hypotheses of \Cref{thm:quantitative} on $\theta^*$ and $t^*$ hold with
\begin{equation*}
\bar R = \bar R(R)\,,\qquad \Delta_0 = c_D\Bigl(\frac{\delta\varsigma}{2}\Bigr)^s\,,
\end{equation*}
where $c_D > 0$ is the constant of \Cref{lem:gaussian_polynomial} for $D$. Consequently all conclusions of \Cref{thm:quantitative} hold on this event. In particular, for all estimates $\hat\theta$ and $\hat y$ satisfying the error bounds of \Cref{thm:quantitative} and every $\hat x$ with $\|\hat x\|_\infty \le 2\bar R$,
\begin{equation*}
\|\hat x - t^*\|_\infty \le \frac{Q(R)}{(\delta\varsigma)^s}\Bigl(\|P_{\hat\theta}(\hat x) - \hat y\|_\infty + \eta_y + \eta_\theta\Bigr)\,,
\end{equation*}
where $Q$ is a polynomial that depends only on $P$, the chosen products, $\Theta$, and $T$. If $s = 0$, then $D$ is a nonzero constant, $\Delta_0 = |D|$, and the factor $(\delta\varsigma)^{-s}$ is absent.
\end{corollary}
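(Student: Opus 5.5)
The plan is to assemble the corollary from three ingredients already available: the radius event, \Cref{lem:gaussian_polynomial} applied to the composite polynomial $D$, and the deterministic bound of \Cref{thm:quantitative}. The key observation is that although the entries of $\theta^*=\Theta(\params^*)$ and $t^*=T(\params^*)$ may be correlated, the quantity $\Delta(\theta^*,t^*)$ equals $D(\params^*)$. This is a fixed polynomial in the independently smoothed coordinates $\params^*=\overline\params+\varsigma Z$, so \Cref{lem:gaussian_polynomial} applies directly to $D$ without any independence assumption on $(\theta^*,t^*)$.

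First, define the event $\mathcal E_1=\{\|\params^*\|_\infty\le R\}$, which has probability at least $1-\delta/2$ by the choice of $R$. On $\mathcal E_1$, the defining property of $\bar R(R)$ gives $\|\theta^*\|_\infty,\|t^*\|_\infty\le\bar R$.

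Second, if $s\ge1$, apply \Cref{lem:gaussian_polynomial} to $D$ with $\delta'=\delta/2$. This yields an event $\mathcal E_2$ of probability at least $1-\delta/2$ on which
\[
|D(\params^*)|\ge c_D(\delta\varsigma/2)^s=\Delta_0 .
\]
The constant $c_D$ comes from the top-degree part of $D$ and so is independent of $\overline\params$. If $s=0$, take $\mathcal E_2$ to be the whole space with $\Delta_0=|D|$. Both events are determined by $Z$ alone, and a union bound gives $\Pr[\mathcal E_1\cap\mathcal E_2]\ge1-\delta$.

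On $\mathcal E_1\cap\mathcal E_2$, all hypotheses of \Cref{thm:quantitative} concerning $\theta^*$ and $t^*$ hold. Its conclusions, namely \eqref{eq:certificate_error_bound} and the grid-search statement, therefore follow for any estimates meeting its error bounds. It remains to put the constant into the stated form. Substitute $\bar R=\bar R(R)$ and $\Delta_0=c_D(\delta\varsigma/2)^s$ into $K\bar R^{\mathsf D+\mathsf S\deg(P)}/\Delta_0$. This gives
\[
\frac{K\,2^s\,\bar R(R)^{\mathsf D+\mathsf S\deg(P)}}{c_D\,(\delta\varsigma)^s},
\]
so one can take $Q(R)=K2^sc_D^{-1}\bar R(R)^{\mathsf D+\mathsf S\deg(P)}$. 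This is a polynomial in $R$ that depends only on $P$, the chosen products (through $K$, $\mathsf D$, $\mathsf S$), and on $\Theta$ and $T$ (through $\bar R(\cdot)$, $D$, $c_D$).

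No step is a real obstacle. The one point needing care is the one just addressed: applying the anti-concentration lemma to $D$ in the original coordinates rather than to $\Delta$ in the correlated variables, and checking that $c_D$ does not depend on the mean $\overline\params$. The proof of \Cref{lem:gaussian_polynomial} already guarantees that independence.
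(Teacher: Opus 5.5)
Your proposal is correct and matches the paper's proof step by step. You intersect the radius event $\|\params^*\|_\infty\le R$ with the event $|D(\params^*)|\ge c_D(\delta\varsigma/2)^s$, obtained from \Cref{lem:gaussian_polynomial} applied to $D$ in the original coordinates with $\delta'=\delta/2$; both events depend only on $Z$, and feeding them into \Cref{thm:quantitative} gives the same constant $Q(R)=2^sK\bar R(R)^{\mathsf D+\mathsf S\deg(P)}/c_D$.
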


\begin{proof}
We construct the event in the statement as the intersection of two events, each determined by $Z$. Event~1 is the radius event $\|\params^*\|_\infty \le R$, which has probability at least $1 - \delta/2$ by the choice of $R$. When Event~1 holds, $\|\theta^*\|_\infty \le \bar R(R)$ and $\|t^*\|_\infty \le \bar R(R)$. Event~2 is $|D(\params^*)| \ge \Delta_0$. Since $D$ is a nonzero polynomial in the independently smoothed coordinates $\params^*$, \Cref{lem:gaussian_polynomial} with $\delta' = \delta/2$ shows that Event~2 has probability at least $1 - \delta/2$. If $s = 0$, Event~2 holds with probability one. When both events hold, which happens with probability at least $1 - \delta$, we have $|\Delta(\theta^*,t^*)| = |D(\params^*)| \ge \Delta_0$, so the hypotheses of \Cref{thm:quantitative} on $\theta^*$ and $t^*$ hold. The error bound \eqref{eq:certificate_error_bound} of \Cref{thm:quantitative} then gives the displayed bound with $Q(R) = 2^s K\bar R(R)^{\mathsf D + \mathsf S\deg(P)}/c_D$. Neither event depends on the estimates, so the bound holds for every pair of estimates obeying the error bounds, including estimates computed adaptively in earlier noisy stages.
\end{proof}

We note that two special cases clarify the hypothesis $D \not\equiv 0$. If each entry of $\theta^*$ and $t^*$ is itself one of the smoothed coefficients, then $\Theta$ and $T$ are coordinate projections, and $D$ is simply $\Delta$ regarded as a function of $\params$. More generally, if $(\Theta,T)$ is an invertible linear change of variables, as in the isotropic models, then $D \not\equiv 0$ is equivalent to $\Delta \not\equiv 0$. Without such structure, $D \not\equiv 0$ can fail even when $\Delta \not\equiv 0$, because $\Delta$ might vanish at every point $(\Theta(\params),T(\params))$ that the Hamiltonian family actually produces. In the other applications we therefore choose the witness in the original coefficient space, evaluating $\mathcal M$ at $(\Theta(\params),T(\params))$ for a rational $\params$.

The $Z$-basis data determine the Hamiltonian only up to the sign flips of \Cref{thm:unobservable_gauges}, so the earlier stages fix one representative, for example by making each field $h_X^v$ positive. This choice requires one more check. The true values that the reconstruction targets are then $\Theta(g\params^*)$ and $T(g\params^*)$, where $g$ flips the signs of some coefficients. The relevant certificate is then $D \circ g$. Each sign flip is an invertible linear map, so $D \circ g \not\equiv 0$ whenever $D \not\equiv 0$. Which $g$ the reconstruction selects depends on $\params^*$, but each local patch admits only finitely many choices, so a union bound over them controls the conditioning of every representative (see \Cref{sec:constant_size_stability}).

\subsection{Coarse search and local refinement}\label{sec:certificate_refinement}

\Cref{thm:quantitative} finds a candidate $\hat x$ for the remaining unknowns $t^*$ by searching a grid for a point whose computable residual $\|P_{\hat\theta}(\hat x) - \hat y\|_\infty$ is at most $\eta_{\rm res}$. The cost of this search grows quickly with the precision. In particular, reaching parameter error $\epsilon$ through the error bound \eqref{eq:certificate_error_bound} requires $\eta_{\rm res}$ proportional to $\epsilon$, which requires a grid with spacing proportional to $\epsilon$, and hence $O(\epsilon^{-\mathsf V})$ grid points. The systems in our applications have up to five unknowns, so this classical cost can reach order $\epsilon^{-5}$, which exceeds the $\widetilde O(\epsilon^{-2})$ measurement cost. We therefore replace the fine grid by a coarse grid search followed by Newton's method. The coarse grid, whose spacing does not depend on $\epsilon$, locates a point within a fixed distance of $t^*$. Newton's method then refines this point to precision $\epsilon$. The certificate identity \eqref{eq:certificate_identity} justifies both steps. Its global error bound guarantees that the best coarse grid point lies close to $t^*$. Differentiating the identity at $t^*$ shows that the Jacobian of $P_{\theta^*}$ is well conditioned there, which makes Newton's method contract near $t^*$.

Throughout this subsection we assume the hypotheses of \Cref{thm:quantitative}, and we let $K$ be the constant of that theorem. Our analysis of Newton's method uses the smallest singular value of the Jacobian of $P_\theta$, which is defined with respect to Euclidean norms, so we use Euclidean norms instead of the maximum norms of \Cref{thm:quantitative}. Converting between the two norms costs factors that depend only on the dimensions $\mathsf U$, $\mathsf V$, and $\mathsf W$, and we absorb these into $K$. Define
\begin{equation*}
\kappa \triangleq \max\Bigl\{1,\frac{K\bar R^{\mathsf D + \mathsf S\deg(P)}}{\Delta_0}\Bigr\}\,,\qquad L \triangleq K\bar R^{\deg(P)}\,.
\end{equation*}
The constant $\kappa$ converts residuals into parameter errors, and $L$ bounds the derivatives of $P$. Applying \Cref{thm:quantitative} with exact data, $\hat\theta = \theta^*$ and $\hat y = y^*$, gives
\begin{equation}\label{eq:refinement_global}
\|x - t^*\|_2 \le \kappa\,\|P_{\theta^*}(x) - P_{\theta^*}(t^*)\|_2 \qquad \text{for } \|x\|_\infty \le 2\bar R\,.
\end{equation}

We write $DP_\theta(x)$ for the $\mathsf W \times \mathsf V$ Jacobian matrix of $P_\theta$ with respect to $x$. Differentiating \eqref{eq:certificate_identity} with respect to $x$ gives one term in which the derivative falls on $P_\theta(x) - P_\theta(t)$ and one in which it falls on $C$. At $x = t$ the second term vanishes, because $P_\theta(t) - P_\theta(t) = 0$. Evaluating at $(\theta^*,t^*)$ therefore gives $\Delta(\theta^*,t^*)\,I = C(\theta^*,t^*,t^*)\,DP_{\theta^*}(t^*)$. Thus $C(\theta^*,t^*,t^*)/\Delta(\theta^*,t^*)$ is a left inverse of $DP_{\theta^*}(t^*)$, and the bounds on $|\Delta|$ and $C$ from \Cref{thm:quantitative} give
\begin{equation}\label{eq:refinement_jacobian}
\sigma_{\min}\bigl(DP_{\theta^*}(t^*)\bigr) \ge \frac{|\Delta(\theta^*,t^*)|}{\|C(\theta^*,t^*,t^*)\|_2} \ge \frac{1}{\kappa}\,.
\end{equation}

Finally, in the box specified in \Cref{thm:quantitative}, i.e.~where every coordinate of $\theta$ and $x$ has magnitude at most $2\bar R$, the first and second derivatives of $P$ are bounded by a constant times $\bar R^{\deg(P)}$. After enlarging $K$, the constant $L$ therefore bounds the norm of $DP_\theta(x)$, the Lipschitz constant of $DP_\theta(x)$ in $x$, and the Lipschitz constants of $P_\theta(x)$ and $DP_\theta(x)$ in $\theta$.

The algorithm cannot evaluate the true residual map $F^*(x) \triangleq P_{\theta^*}(x) - y^*$, since $\theta^*$ and $y^*$ are unknown. It evaluates the computable residual map $F(x) \triangleq P_{\hat\theta}(x) - \hat y$ instead. For $x$ in the box, the difference $F(x) - F^*(x) = \bigl(P_{\hat\theta}(x) - P_{\theta^*}(x)\bigr) - (\hat y - y^*)$ has Euclidean norm at most
\begin{equation*}
\eta \triangleq \sqrt{\mathsf W}\,\eta_y + L\sqrt{\mathsf U}\,\eta_\theta\,,
\end{equation*}
by the Lipschitz bound in $\theta$ and the conversion from maximum norms to Euclidean norms. We will show that Newton's method contracts on the ball of radius
\begin{equation*}
r \triangleq \min\{1/2,\, 1/(16\kappa L)\}
\end{equation*}
around $t^*$. The bound $r \le 1/2$ keeps this ball inside the box, since $\|t^*\|_\infty \le \bar R$ and $\bar R \ge 1$. The bound $r \le 1/(16\kappa L)$ makes the quadratic term in the Newton error at most one sixteenth of the current error.

The following proposition carries out the coarse search and the Newton refinement. Since our systems have more equations than unknowns (i.e.~$\mathsf W > \mathsf V$), we use the Gauss--Newton form of Newton's method, which replaces the inverse of the Jacobian by its pseudoinverse.

\begin{proposition}\label{prop:coarse_refinement}
Assume the hypotheses of \Cref{thm:quantitative}, let $0 < \epsilon < 1$, and suppose that
\begin{equation}\label{eq:refinement_precision}
\eta \le \min\Bigl\{\frac{r}{16\kappa},\, \frac{\epsilon}{8\kappa}\Bigr\}\,.
\end{equation}
Let $x_0$ minimize $\|F\|_2$ over a grid in $[-\bar R,\bar R]^{\mathsf V}$ with Euclidean covering radius $r/(8\kappa L)$, and define the Gauss--Newton iterates
\begin{equation*}
x_{j+1} = x_j - DF(x_j)^+F(x_j)\,,
\end{equation*}
where $^+$ denotes the pseudoinverse. Then $\|x_j - t^*\|_2 \le \epsilon$ for every $j \ge \log_{16}(1/\epsilon)$. The grid has at most $\bigl(2 + 8\sqrt{\mathsf V}\,\kappa L\bar R/r\bigr)^{\mathsf V}$ points, independently of $\epsilon$.
\end{proposition}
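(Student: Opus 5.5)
The argument splits into a coarse stage, showing that $x_0$ lies inside the ball $\|x-t^*\|_2\le r$, and a Gauss--Newton stage, showing that the iterates contract on that ball toward an $O(\kappa\eta)$ neighborhood of $t^*$. The tools are the global bound \eqref{eq:refinement_global}, the Jacobian bound \eqref{eq:refinement_jacobian}, the Lipschitz constant $L$, and the perturbation $\|F-F^*\|_2\le\eta$.

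\emph{Coarse stage.} Since $t^*\in[-\bar R,\bar R]^{\mathsf V}$, some grid point $x'$ satisfies $\|x'-t^*\|_2\le r/(8\kappa L)$. Because $F^*(t^*)=0$, this gives
\[
\|F(x')\|_2\le L\,r/(8\kappa L)+\eta .
\]
The minimizer $x_0$ therefore has $\|F(x_0)\|_2\le r/(8\kappa)+\eta$, and hence $\|F^*(x_0)\|_2\le r/(8\kappa)+2\eta$. The grid lies in the box, so \eqref{eq:refinement_global} applies and yields $\|x_0-t^*\|_2\le r/8+2\kappa\eta\le r/4$ by \eqref{eq:refinement_precision}. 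For the grid-size count, use spacing $2r/(8\sqrt{\mathsf V}\kappa L)$ per coordinate. This gives the stated $(2+8\sqrt{\mathsf V}\kappa L\bar R/r)^{\mathsf V}$ points, with no dependence on $\epsilon$.

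\emph{Newton stage.} On the ball $B=\{\|x-t^*\|_2\le r\}$, first control the computable Jacobian. Its distance from $DP_{\theta^*}(t^*)$ is at most $Lr+L\sqrt{\mathsf U}\eta_\theta$. The first term is at most $1/(16\kappa)$ by the choice of $r$. The second needs $\eta$, and hence $\eta_\theta$, small enough that it too is at most $1/(16\kappa)$; this comes from \eqref{eq:refinement_precision} after possibly enlarging $K$. Weyl's inequality together with \eqref{eq:refinement_jacobian} then gives $\sigma_{\min}(DF(x))\ge 1/(2\kappa)$, so $\|DF(x)^+\|\le 2\kappa$ and $DF(x)^+DF(x)=I$ because the Jacobian has full column rank. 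Write $e_j=x_j-t^*$. Using $DF^+DF=I$,
\[
e_{j+1}=DF(x_j)^+\bigl(DF(x_j)e_j-F(x_j)\bigr).
\]
Taylor's theorem with the Lipschitz Jacobian bounds the bracket by $\tfrac{L}{2}\|e_j\|^2+\|F(t^*)\|$. Here $\|F(t^*)\|=\|F(t^*)-F^*(t^*)\|\le\eta$, and the Lipschitz constant of $DF$ in $x$ equals that of $DP_{\hat\theta}$, which is at most $L$. Hence
\[
\|e_{j+1}\|\le \kappa L\|e_j\|^2+2\kappa\eta\le \|e_j\|/16+2\kappa\eta ,
\]
using $\|e_j\|\le r\le 1/(16\kappa L)$. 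By induction the iterates stay in $B$: the bound $2\kappa\eta\le r/8$ from \eqref{eq:refinement_precision} keeps $\|e_j\|\le r/4$. Unrolling the recursion gives $\|e_j\|\le 16^{-j}(r/4)+\tfrac{32}{15}\kappa\eta$.

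\emph{Conclusion.} For $j\ge\log_{16}(1/\epsilon)$ the first term is at most $\epsilon r/4\le\epsilon/8$, and the second is at most $\tfrac{32}{15}\cdot\epsilon/8<0.3\,\epsilon$. Together they give $\|e_j\|_2\le\epsilon$. The step I expect to be most delicate is the uniform Jacobian perturbation on $B$: it must absorb the error in $\hat\theta$ into $\eta$ at the correct scale, and it must account for the norm conversions already folded into $K$. The remaining steps are routine bookkeeping with the constants $r$, $\kappa$, and $L$.
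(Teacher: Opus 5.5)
Your proposal is correct and follows the paper's proof essentially step for step. You bound $e_0\le r/4$ via a nearby grid point and the global certificate bound, then use Weyl's inequality and the Gauss--Newton recursion $e_{j+1}\le \kappa L e_j^2+2\kappa\eta$. The step you flag as delicate is immediate: $\eta$ is defined to include $L\sqrt{\mathsf U}\,\eta_\theta$, so $L\sqrt{\mathsf U}\,\eta_\theta\le\eta\le r/(16\kappa)\le 1/(32\kappa)$ with no enlargement of $K$.
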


\begin{proof}
Write $e_j = \|x_j - t^*\|_2$, and recall that the ball of radius $r$ around $t^*$ lies in the box.

Some grid point $x'$ satisfies $\|x' - t^*\|_2 \le r/(8\kappa L)$. Since $L$ bounds the Jacobian and $F^*(t^*) = 0$, we have $\|F^*(x')\|_2 \le r/(8\kappa)$, and therefore $\|F(x')\|_2 \le r/(8\kappa) + \eta$. The minimizer $x_0$ satisfies the same bound, so $\|F^*(x_0)\|_2 \le r/(8\kappa) + 2\eta$. By \eqref{eq:refinement_global} and \eqref{eq:refinement_precision},
\begin{equation*}
e_0 \le \kappa\Bigl(\frac{r}{8\kappa} + 2\eta\Bigr) \le \frac{r}{4}\,.
\end{equation*}
The grid count follows from using at most $2 + \sqrt{\mathsf V}\,\bar R\cdot 8\kappa L/r$ points per coordinate.

Now suppose $e_j \le r$, and let $A = DF(x_j) = DP_{\hat\theta}(x_j)$. It differs from $DP_{\theta^*}(t^*)$ by at most $L\sqrt{\mathsf U}\,\eta_\theta + Lr$. The first term is at most $\eta \le 1/(32\kappa)$ and the second is at most $1/(16\kappa)$. By \eqref{eq:refinement_jacobian} and Weyl's inequality, $\sigma_{\min}(A) \ge 1/(2\kappa)$, so $A$ has full column rank, $A^+A = I$, and $\|A^+\|_2 \le 2\kappa$. Taylor expansion around $x_j$ gives
\begin{equation*}
F(x_j) = F(t^*) + A(x_j - t^*) - \rho_j\,,\qquad \|\rho_j\|_2 \le \tfrac12 L e_j^2\,,
\end{equation*}
and $\|F(t^*)\|_2 = \|F(t^*) - F^*(t^*)\|_2 \le \eta$. Substituting into the Newton step gives $x_{j+1} - t^* = A^+\bigl(\rho_j - F(t^*)\bigr)$, so
\begin{equation*}
e_{j+1} \le \kappa L e_j^2 + 2\kappa\eta \le \frac{e_j}{16} + 2\kappa\eta\,,
\end{equation*}
where the second inequality uses $\kappa L e_j \le \kappa L r \le 1/16$. Since $2\kappa\eta \le r/8$, this gives $e_{j+1} \le r/16 + r/8 < r$, and induction from $e_0 \le r/4$ keeps every iterate in the ball. Iterating the recursion,
\begin{equation*}
e_j \le 16^{-j}e_0 + \tfrac{32}{15}\kappa\eta \le \tfrac18\,16^{-j} + \tfrac{4}{15}\epsilon\,,
\end{equation*}
which is at most $\epsilon$ once $16^{-j} \le \epsilon$.
\end{proof}

The number of coarse grid points is independent of $\epsilon$, so the precision enters the classical cost of solving the system only through the $\log_{16}(1/\epsilon)$ Newton steps. Each step solves a least-squares problem of fixed size whose matrix $A$ has condition number at most $2\kappa L$, since $\sigma_{\min}(A) \ge 1/(2\kappa)$ and $\|A\|_2 \le L$. We now check that these steps can be carried out in finite precision. Set $\nu = \min\{r/(16\kappa),\,\epsilon/(8\kappa)\}$. Then $\eta \le \nu$ by \eqref{eq:refinement_precision}, and $\nu > 0$ even when the data are exact, so $\kappa\nu$ is a positive rounding tolerance. If each Newton step is computed with absolute error at most $\kappa\nu$, the recursion $e_{j+1} \le e_j/16 + 2\kappa\eta$ from the proof of \Cref{prop:coarse_refinement} becomes $e_{j+1} \le e_j/16 + 3\kappa\nu$. Since $3\kappa\nu \le 3r/16$, the iterates remain in the ball of radius $r$. Iterating gives $e_j \le 16^{-j}e_0 + 16\kappa\nu/5$, and $16\kappa\nu/5 \le 2\epsilon/5$, so the same number of steps reaches precision $\epsilon$. Standard stability bounds for least squares show that $O(\log(\kappa L\bar R/\nu))$ bits of precision per step achieve the tolerance $\kappa\nu$. The coarse search and refinement therefore use a number of arithmetic operations polynomial in $\kappa$, $L$, and $\bar R$, times $1 + \log(1/\epsilon)$. The rest of the classical processing consists of reading and averaging the experimental samples, which takes time linear in their number, and the interpolation of \Cref{thm:main_complexity4derivative_estimation}, which takes $O(d^2)$ operations. Using this solver in the applications therefore preserves the $\widetilde O(\epsilon^{-2})$ dependence of the sample count, the total evolution time, and the classical processing time.

In our applications, we use the single coefficient-radius event $\|\params^*\|_\infty\le R$ together with the individual determinant budgets $\beta$ specified below. With $\bar R=\bar R(R)$ and $\Delta_0=c_D(\beta\varsigma)^s$, both $\kappa$ and $L$ are bounded by fixed polynomials in $R$ and $(\beta\varsigma)^{-1}$. It therefore suffices to estimate the data to errors $\eta_\theta,\eta_y\le \epsilon/\poly(R,(\beta\varsigma)^{-1})$ in order to satisfy \eqref{eq:refinement_precision}.

These polynomial bounds can be made explicit without knowing $t^*$ or expanding the certificate determinant symbolically. An upper bound for $K$ follows from the coefficient sums of $P$ and elementary cofactor bounds for $\mathcal M$. In our applications, $\Theta$ and $T$ have rational coefficients. Let $q \ge 1$ be an integer clearing the coefficient denominators of $\mathcal M(\Theta(\params),T(\params))$, let $m \ge 1$ be the number of coefficient variables appearing in this matrix, and choose an integer $h\ge\max\{1,\deg D\}$. We can then take
\begin{align}
\Delta_0=q^{-\mathsf S}
\left(\frac{\min\{1,\beta\varsigma\}}{2h^2\sqrt m}\right)^h.
\end{align}
Here $q^{\mathsf S}D$ is a nonzero polynomial with integer coefficients, and the proof of Lemma~\ref{lem:gaussian_polynomial} gives this lower bound with failure probability at most $\beta$. These bounds determine $\kappa$, $L$, $r$, and the coarse grid without us needing to know $t^*$. Their dependence on $R$, $\beta^{-1}$, and $\varsigma^{-1}$ is polynomial.

One minor complication is that the isotropic reconstruction has four possible sign choices for $S_{\mathfrak D}$ and $J_{XX}$, each with its own residual system and certificate. We impose the determinant bound for all four choices, so a union bound covers whichever choice is used by the reconstruction.

The same bounds must hold simultaneously across all of the local systems in a spatially varying Hamiltonian. The reconstruction solves one constant-size local system for each window or vertex. Some of these systems are linear, while the others have the polynomial form $P_\theta(x) = y$ treated above. All of them use the radius $R$ from \eqref{eq:effective_radius}, where $N$ is the number of independently smoothed Hamiltonian coefficients, and share the total failure probability $\delta$. With probability at least $1-\delta/4$, every coefficient of $H$ has magnitude at most $R$. On this event, every coefficient of $H/R$ has magnitude at most one. Since our lattices have fixed degree, $H/R$ has a known local energy bound $g=O(1)$, so we apply \Cref{thm:main_complexity4derivative_estimation} to $H/R$.

Applying the derivative-estimation theorem to $H/R$ rescales the smoothing strength to $\varsigma/R$ and the target coefficient error to $\epsilon/R$. Evolution under $H/R$ for time $u$ is equivalent to evolution under $H$ for time $u/R$, and an $r$th time derivative for $H$ is $R^r$ times the corresponding derivative for $H/R$. Since every local reconstruction system has fixed size and degree, this normalization introduces only fixed powers of $R$ into the residual, derivative, and determinant bounds. For base coefficients of magnitude $O(1)$ and $0 < \varsigma \le 1$, we have $R = O(\sqrt{\log(N/\delta)})$, so these additional factors are polylogarithmic in $N/\delta$.

It remains to distribute the total failure probability among the radius, conditioning, and sampling events. Reserve $\delta/4$ for the common coefficient-radius event. Let $B \ge 1$ be a deterministic upper bound on the total number of conditioning tests, including all local systems and their finitely many gauge choices, and assign each test failure probability $\beta = \delta/(4B)$. Applying Lemma~\ref{lem:gaussian_polynomial} separately to these tests and taking a union bound gives all of the required conditioning bounds simultaneously except with probability at most $\delta/4$.

The remaining budget $\delta/2$ is assigned to derivative estimation. At stage $j \ge 0$, let $A_j \ge 1$ be a deterministic upper bound on the number of such calls, and assign each call failure probability at most $\frac{\delta}{2 A_j}\frac{6}{\pi^2(j+1)^2}$. The total failure probability at stage $j$ is then at most $\frac{\delta}{2}\frac{6}{\pi^2(j+1)^2}$, and summing over all stages gives at most $\delta/2$, since $\sum_{j \ge 0} \frac{6}{\pi^2(j+1)^2} =1$. At later stages, the requested precision may depend on the outcomes of earlier stages. In this case, we use fresh samples and condition on all previous outcomes. Once these outcomes are fixed, the requested precision is fixed as well, so the derivative-estimation guarantee applies with the failure probability assigned above. No independence between different local systems is required. Together with the $\delta/4$ budget for the coefficient-radius event and the $\delta/4$ budget for the conditioning events, this gives total failure probability at most $\delta$. The same allocation applies to the TI protocols; in the isotropic case, the four possible sign choices contribute only a constant number of additional conditioning tests, so $B$ remains constant.

\end{document}